\documentclass[12pt,a4paper,oneside]{book}

\newif\ifsetCustomMargin
\setCustomMargintrue

\newif\ifuseCustomBib
\useCustomBibtrue

\ifsetCustomMargin
  \RequirePackage[left=37mm,right=30mm,top=35mm,bottom=30mm]{geometry}
\fi

\usepackage{setspace}
\usepackage{amsmath,amssymb,amsfonts,amsthm,bm}
\newtheorem{prop}{Proposition}
\usepackage{graphicx}
\usepackage{subcaption}
\usepackage{booktabs}
\usepackage{multirow}
\usepackage{siunitx}
\usepackage[labelsep=space,tableposition=top]{caption}

\usepackage{fancyhdr}
\ifuseCustomBib
   \RequirePackage[round, authoryear, sort]{natbib}
\fi

\usepackage{xcolor}
\definecolor{thesislinkgreen}{rgb}{0.131,1.0,0.024}
\definecolor{thesislinkred}{rgb}{0.986,0.007,0.027}
\usepackage[colorlinks,linktocpage,bookmarksnumbered,pdfpagelabels]{hyperref}
\hypersetup{
  pdftitle={Regularized Estimation of Spatial Patterns},
  pdfauthor={Wen-Ting Wang},
  linkcolor=thesislinkgreen,
  citecolor=thesislinkgreen,
  urlcolor=thesislinkgreen,
  filecolor=thesislinkgreen,
}
\makeatletter
\let\thesis@eqref\eqref
\renewcommand{\eqref}[1]{%
  \begingroup
  \hypersetup{linkcolor=thesislinkred}%
  \thesis@eqref{#1}%
  \endgroup
}
\makeatother

\newenvironment{romanpages}{
  \setcounter{page}{1}
  \renewcommand{\thepage}{\roman{page}}}
{\newpage\renewcommand{\thepage}{\arabic{page}}}

\begin{document}


\frontmatter

\begin{titlepage}
  \thispagestyle{empty}
  \centering
  \vspace*{1.0cm}
  {\fontsize{20}{28}\selectfont\bfseries National Chiao Tung University\par}
  \vspace{0.9cm}
  {\fontsize{16}{22}\selectfont Institute of Statistics\par}
  \vspace{2.2cm}
  {\fontsize{18}{26}\selectfont Doctoral Dissertation\par}
  \vspace{2.4cm}
  {\fontsize{18}{26}\selectfont\bfseries Regularized Estimation of Spatial Patterns\par}
  \vspace{3.6cm}
  {\fontsize{13}{20}\selectfont by\par}
  \vspace{0.6cm}
  {\fontsize{16}{22}\selectfont Wen-Ting Wang\par}
  \vspace{1.6cm}
  {\fontsize{13}{20}\selectfont Advisor: Hsin-Cheng Huang, Ph.D.\par}
  \vfill
  {\fontsize{13}{20}\selectfont July 2016\par}
  \vspace*{1.0cm}
\end{titlepage}
\cleardoublepage


\begin{romanpages}

\chapter*{Abstract}

Global warming and the El Niño phenomenon cause various climate changes and anomalies worldwide. Recently, extreme weather events such as heatwaves, droughts, and floods have become increasingly frequent. Researchers have studied atmospheric dynamics to reduce potential damage and increase safety. Statistical methods such as principal component analysis and maximum covariance analysis have been widely used to analyze spatial patterns of atmospheric variables. However, when the signal-to-noise ratio is low, the patterns obtained from these methods are often too noisy to be physically meaningful.

This dissertation proposes regularization methods that simultaneously incorporate smoothness and sparseness penalties to estimate spatial patterns more interpretably. The proposed methods, called SpatPCA and SpatMCA, can be applied to both regularly and irregularly spaced data. An efficient algorithm based on the alternating direction method of multipliers (ADMM) is developed for computation. Through analysis of sea surface temperature data in the Indian Ocean and precipitation data in East Africa, we demonstrate the effectiveness of the proposed methods in revealing the spatial structure and studying how temperature variations in the Indian Ocean influence precipitation in East Africa.



\chapter*{Acknowledgments}

I am grateful to my advisor, Professor Hsin-Cheng Huang, for his guidance and suggestions throughout my doctoral studies. I would also like to thank the examination committee members and all those who have supported my research.


\tableofcontents
\listoffigures

\end{romanpages}


\mainmatter


\chapter{Introduction}\label{ch:ch1}

\section{Motivation}

Climate changes are associated with atmospheric dynamics and inevitably affect human life. Recent developments in atmospheric and oceanographic sciences have shown that these dynamics can be studied through spatial patterns of variables, such as sea level pressure and sea surface temperature (SST) (see e.g., \citealp{vonstorch1999analysis}). Hence, extracting important spatial patterns from these variables has drawn much attention in the past couple of decades. For example, SSTs in the Indian Ocean are known to influence the precipitations in Africa \citep{omondi2013influence} through two spatial patterns called basin and dipole modes \citep{chu,sst}. In fact, the basin mode can be used to predict the El Ni\~{n}o Southern Oscillation, which is associated to the extreme weather \citep{siedler2013ocean}.

There are several methods available to find spatial patterns. For instance, principal component analysis (PCA) has been widely used to identify dominant spatial patterns by utilizing the eigen-decomposition of a spatial covariance matrix \citep{hotelling1933analysis,pearson1901liii}. It is also called empirical orthogonal function analysis in atmospheric science. In addition, maximum covariance analysis (MCA) has been used to capture coupled spatial patterns by applying the singular value decomposition to a cross-covariance matrix between two spatial processes \citep{tucker1958inter}. However, the patterns estimated by both methods may be too noisy to be physically interpretable when the signal-to-noise ratio is not high enough.

In this thesis, we consider a regularization approach by incorporating two forms of regularization through (1) a smoothness penalty and (2) a sparseness penalty, to enhance the interpretability. While smoothness can be achieved by applying smoothing splines to PCA \citep{fda} and MCA \citep{salim2007model,salim2005modelling}, they mainly produce global features rather than local ones. On the other hand, sparseness penalties, such as the $L_1$ penalty \citep{jolliffe2002eofs}, the elastic-net penalty \citep{lee2011sparse}, and the SCAD penalty \citep{lee2011sparse}, can be utilized to capture localized patterns. Nonetheless, they may produce isolated patterns, or not even be applicable to continuous spatial domains with data observed at irregularly spaced locations.

\section{Contributions}

The main contributions of this thesis are summarized below:

\begin{itemize}
  \item We developed two regularization methods that simultaneously incorporate smoothness and localized features in dominant spatial patterns for PCA and MCA. Both methods can be directly applied to data measured irregularly in space. We call these two methods spatial PCA (SpatPCA) and spatial MCA (SpatMCA), respectively.
  
  \item We derived efficient algorithms for computing the resulting SpatPCA and SpatMCA estimates using the alternating direction method of multipliers (ADMM) \citep{admm}. In particular, two R packages SpatPCA and SpatMCA were developed. Both are available on the Comprehensive R Archive Network (CRAN).
\end{itemize}

\section{Thesis Overview}

This thesis is organized as follows. Chapter 2 briefly describes PCA and MCA. Chapter 3 gives an overview of two regularization methods used in this thesis: smoothing splines and Lasso. The proposed regularization approach for PCA is introduced in Chapter 4, and that for MCA is given in Chapter 5, including an application to study how precipitations in East Africa are affected by sea surface temperatures in the Indian Ocean. Finally, the thesis closes with a summary and an outlook on future directions in Chapter 6.


\chapter{Principal Component Analysis and Maximum Covariance Analysis}\label{ch:ch2}

In this chapter, we introduce two commonly used dimension reduction methods in multivariate analysis: principal component analysis (PCA) and maximum covariance analysis (MCA).

\section{Principal Component Analysis}

Principal component analysis was first introduced by \citet{pearson1901liii} and further developed by \citet{hotelling1933analysis}. The key idea behind PCA is to reduce the dimensionality of a dataset for a large number of correlated variables, while preserving as much of the variation as possible. It can be achieved by projecting the original data to a lower dimensional space. Let $\bm{y} = (y_1,\dots,y_p)'$ be a $p$-dimensional zero-mean random vector with covariance matrix $\bm{\Sigma}$. PCA tries to find an orthogonal transformation of $\bm{y}$ as principal components (PCs) such that the resulting components are uncorrelated and have maximum variances. Specifically, the first PC is given by $y_1^* = \bm{\phi}_1'\bm{y}$, where $\bm{\phi}_1 = (\phi_{11},\dots,\phi_{1p})'$ is determined by
\[
\bm{\phi}_1 = \arg\max_{\bm{\phi}\in\mathbb{R}^p} \bm{\phi}'\bm{\Sigma}\bm{\phi},
\]
subject to $\bm{\phi}'\bm{\phi}=1$. The $k$-th PC for $k=2,\dots,K\leq p$, can be successively obtained by $y_k^* = \bm{\phi}_k'\bm{y}$, where
\[
\bm{\phi}_k = \arg\max_{\bm{\phi}\in\mathbb{R}^p} \bm{\phi}'\bm{\Sigma}\bm{\phi},
\]
subject to $\bm{\phi}'\bm{\phi}=1$ and $\bm{\phi}_j'\bm{\phi}=0$, $j=1,\dots,k-1$.

The maximizers, $\bm{\phi}_1,\dots,\bm{\phi}_K$, can be solved by the Lagrange multipliers technique. For $K=1$, the corresponding Lagrange function is
\[
\bm{\phi}'\bm{\Sigma}\bm{\phi}-\lambda(\bm{\phi}'\bm{\phi}-1),
\]
where $\lambda$ is the Lagrange multiplier. Differentiating the above expression, equating it to 0, and evaluating it at $\bm{\phi}=\bm{\phi}_1$, give
\[
\bm{\Sigma}\bm{\phi}_1=\lambda_1\bm{\phi}_1,
\]
for some $\lambda_1\in\mathbb{R}$. It follows that $\bm{\phi}_1$ is an eigenvector of $\bm{\Sigma}$ with the corresponding eigenvalue $\lambda_1$. Since $\bm{\phi}'\bm{\Sigma}\bm{\phi}$ is maximized at $\bm{\phi}_1'\bm{\Sigma}\bm{\phi}_1=\bm{\phi}_1'(\lambda_1\bm{\phi}_1)=\lambda_1$, $\lambda_1$ is the largest eigenvalue of $\bm{\Sigma}$. Analogously, $\bm{\phi}_k$ is the eigenvector of $\bm{\Sigma}$ with the eigenvalue $\lambda_k$, which is the $k$-th largest. Thus, PCs can be directly obtained by the spectral decomposition of $\bm{\Sigma}=\sum_{k=1}^{p}\lambda_k\bm{\phi}_k\bm{\phi}_k'$, where $\bm{\phi}_1,\dots,\bm{\phi}_p$ are the eigenvectors, and $\lambda_1,\dots,\lambda_p$ are the corresponding eigenvalues with $\lambda_1\geq\cdots\geq\lambda_p\geq 0$.

We summarize some properties of PCs as follows. The mean of the $k$-th PC is zero, and its variance is
\[
\mathrm{var}(y_k^*)=\bm{\phi}_k'(\bm{\Sigma}\bm{\phi}_k)=\bm{\phi}_k'(\lambda_k\bm{\phi}_k)=\lambda_k.
\]
The covariance between $k$-th and $j$-th PCs for $j\neq k$ is
\[
\mathrm{cov}(y_k^*,y_j^*)=\bm{\phi}_k'(\bm{\Sigma}\bm{\phi}_j)=\bm{\phi}_k'(\lambda_j\bm{\phi}_j)=0.
\]
In addition, the total population variance is
\[
\sum_{k=1}^{p}\mathrm{var}(y_k)=\mathrm{tr}(\bm{\Sigma})=\lambda_1+\cdots+\lambda_p=\sum_{k=1}^{p}\mathrm{var}(y_k^*).
\]
Consequently, the proportion of total variance explained by the $k$-th PC is
\[
\frac{\lambda_k}{\lambda_1+\cdots+\lambda_p},
\]
for $k=1,\dots,K$.

In practice, the covariance matrix $\bm{\Sigma}$ is usually unknown, and hence the eigenvalues $\{\lambda_1,\dots,\lambda_p\}$ of $\bm{\Sigma}$ and the corresponding eigenvectors $\{\bm{\phi}_1,\dots,\bm{\phi}_p\}$ must be estimated from observed data. Suppose that $\bm{Y}=(\bm{y}_1,\dots,\bm{y}_n)'$ is an $n\times p$ data matrix where $\bm{y}_1,\dots,\bm{y}_n\sim(\bm{0},\bm{\Sigma})$ are a random sample of size $n$ with mean 0 and covariance matrix $\bm{\Sigma}$. Then, we can estimate the covariance matrix $\bm{\Sigma}$ by the sample covariance matrix $\bm{S}=\bm{Y}'\bm{Y}/n$, and use the $k$-th eigenvalue and eigenvector of $\bm{S}$, denoted by $\tilde{\lambda}_k$ and $\tilde{\bm{\phi}}_k$, as the estimates of $\lambda_k$ and $\bm{\phi}_k$, where $\tilde{\lambda}_1\geq\cdots\geq\tilde{\lambda}_p\geq 0$. Thus, the $k$-th PC estimate is $\tilde{y}_k^*=\bm{Y}\tilde{\bm{\phi}}_k$; $k=1,\dots,K$.

\section{Maximum Covariance Analysis}

Maximum covariance analysis has been known as inter-battery factor analysis in psychology \citep{tucker1958inter}. The purpose of MCA is to investigate the relationship between two groups of variables. Specifically, MCA can be defined as the problem of finding a set of ordered pairs of basis vectors such that the covariance between the projections of the variables onto these basis vectors is successively maximized.

Let $\bm{y}_1=(y_{11},\dots,y_{1p_1})'$ and $\bm{y}_2=(y_{21},\dots,y_{2p_2})'$ be $p_1$-dimensional and $p_2$-dimensional zero-mean random vectors with covariance matrices $\bm{\Sigma}_{11}$ and $\bm{\Sigma}_{22}$ and cross-covariance matrix $\bm{\Sigma}_{12}=\mathrm{cov}(\bm{y}_1,\bm{y}_2)$. Without loss of generality, we assume $p_1\leq p_2$. MCA is concerned with finding orthogonal transformations of $\bm{y}_1$ and $\bm{y}_2$ such that the resulting pairs of components have the largest covariances. Specifically, the first pair of components is given by $y_{11}^*=\bm{u}_1'\bm{y}_1$ and $y_{21}^*=\bm{v}_1'\bm{y}_2$, where $\bm{u}_1=(u_{11},\dots,u_{1p_1})'$ and $\bm{v}_1=(v_{11},\dots,v_{1p_2})'$ are determined by
\[
(\bm{u}_1,\bm{v}_1)=\arg\max_{\bm{u}\in\mathbb{R}^{p_1},\,\bm{v}\in\mathbb{R}^{p_2}} \bm{u}'\bm{\Sigma}_{12}\bm{v},
\]
subject to $\bm{u}'\bm{u}=\bm{v}'\bm{v}=1$. The $k$-th pair of components for $k=2,\dots,K\leq p_1$ is successively obtained by $y_{1k}^*=\bm{u}_k'\bm{y}_1$ and $y_{2k}^*=\bm{v}_k'\bm{y}_2$, where
\[
(\bm{u}_k,\bm{v}_k)=\arg\max_{\bm{u}\in\mathbb{R}^{p_1},\,\bm{v}\in\mathbb{R}^{p_2}} \bm{u}'\bm{\Sigma}_{12}\bm{v},
\]
subject to $\bm{u}'\bm{u}=\bm{v}'\bm{v}=1$ and $\bm{u}_j'\bm{u}=\bm{v}_j'\bm{v}=0$, $j=1,\dots,k-1$.

The maximizers $(\bm{u}_1,\bm{v}_1),\dots,(\bm{u}_K,\bm{v}_K)$ can be solved by the Lagrange multipliers technique. For $K=1$, the Lagrange function is
\[
\bm{u}'\bm{\Sigma}_{12}\bm{v}-\lambda(\bm{u}'\bm{u}-1)-\theta(\bm{v}'\bm{v}-1),
\]
where $\lambda$ and $\theta$ are the Lagrange multipliers. Differentiating the above expression, equating it to 0, and evaluating it at $(\bm{u},\bm{v})=(\bm{u}_1,\bm{v}_1)$, give
\begin{align}
\bm{\Sigma}_{12}\bm{v}_1 &= \lambda_1\bm{u}_1, \label{eq:ch2mca1}\\
\bm{\Sigma}_{12}'\bm{u}_1 &= \theta_1\bm{v}_1, \label{eq:ch2mca2}
\end{align}
for some $\lambda_1,\theta_1\in\mathbb{R}$. Multiplying the left-hand side of \eqref{eq:ch2mca1} by $\bm{u}_1'$ and the left-hand side of \eqref{eq:ch2mca2} by $\bm{v}_1'$ gives
\[
\bm{u}_1'\bm{\Sigma}_{12}\bm{v}_1=\lambda_1=\theta_1=d_1,
\]
since $\bm{u}_1'\bm{u}_1=\bm{v}_1'\bm{v}_1=1$. Then, \eqref{eq:ch2mca1} and \eqref{eq:ch2mca2} become
\[
\bm{\Sigma}_{12}\bm{\Sigma}_{12}'\bm{u}_1=d_1^2\bm{u}_1,
\]
\[
\bm{\Sigma}_{12}'\bm{\Sigma}_{12}\bm{v}_1=d_1^2\bm{v}_1.
\]
It follows that $\bm{u}_1$ and $\bm{v}_1$ are the eigenvectors of $\bm{\Sigma}_{12}\bm{\Sigma}_{12}'$ and $\bm{\Sigma}_{12}'\bm{\Sigma}_{12}$ sharing the same eigenvalue $d_1^2$. Since $\bm{u}'\bm{\Sigma}_{12}\bm{v}$ is maximized at $\bm{u}_1'\bm{\Sigma}_{12}\bm{v}_1=d_1$, $d_1^2$ is the largest eigenvalue of $\bm{\Sigma}_{12}\bm{\Sigma}_{12}'$ and $\bm{\Sigma}_{12}'\bm{\Sigma}_{12}$. Similarly, $\bm{u}_k$ and $\bm{v}_k$ are the eigenvectors of $\bm{\Sigma}_{12}\bm{\Sigma}_{12}'$ and $\bm{\Sigma}_{12}'\bm{\Sigma}_{12}$ with the same eigenvalue $d_k^2$, which is the $k$-th largest. Thus, we obtain
\[
\bm{\Sigma}_{12}'\bm{U}=\bm{V}\bm{D},
\]
where $\bm{U}=(\bm{u}_1,\dots,\bm{u}_{p_1})$ is a $p_1\times p_1$ orthogonal matrix formed by the $p_1$ eigenvectors of $\bm{\Sigma}_{12}\bm{\Sigma}_{12}'$, $\bm{V}=(\bm{v}_1,\dots,\bm{v}_{p_1})$ is a $p_2\times p_1$ matrix formed by the first $p_1$ eigenvectors of $\bm{\Sigma}_{12}'\bm{\Sigma}_{12}$, and $\bm{D}=\mathrm{diag}(d_1,\dots,d_{p_1})$ is a diagonal matrix with $d_1\geq\cdots\geq d_{p_1}$. It follows that
\[
\bm{\Sigma}_{12}=\bm{U}\bm{D}\bm{V}'=\sum_{k=1}^{p_1}d_k\bm{u}_k\bm{v}_k',
\]
which is known as the singular value decomposition (SVD) of $\bm{\Sigma}_{12}$, where the triple $\{d_k,\bm{u}_k,\bm{v}_k\}$ consist of the $k$-th singular value, the left singular vector and the right singular vector with $d_1\geq\cdots\geq d_{p_1}\geq 0$.

Therefore, the MCA pairs of components $(y_{11}^*,y_{21}^*),\dots,(y_{1K}^*,y_{2K}^*)$ can be directly found by the SVD of $\bm{\Sigma}_{12}$. The covariance between $y_{1j}^*$ and $y_{2k}^*$ is $\mathrm{cov}(y_{1j}^*,y_{2k}^*)=\bm{u}_j'\bm{\Sigma}_{12}\bm{v}_k=d_k$ if $j=k$, and 0 if $j\neq k$. Note that since maximum covariance analysis can be solved by SVD, it has also been called SVD analysis \citep{bretherton1992intercomparison}.

In practice, the cross-covariance matrix $\bm{\Sigma}_{12}$ is usually unknown, and hence $\bm{U}$, $\bm{D}$, and $\bm{V}$ have to be estimated from observed data. Let $\bm{Y}_j=(\bm{y}_{j1},\dots,\bm{y}_{jn})'$ be $n\times p_j$ data matrices for $j=1,2$, where
\[
\begin{pmatrix}\bm{y}_{11}\\\bm{y}_{21}\end{pmatrix},\dots,
\begin{pmatrix}\bm{y}_{1n}\\\bm{y}_{2n}\end{pmatrix}
\sim
\left(
\begin{pmatrix}\bm{0}\\\bm{0}\end{pmatrix},
\begin{pmatrix}\bm{\Sigma}_{11}&\bm{\Sigma}_{12}\\\bm{\Sigma}_{12}'&\bm{\Sigma}_{22}\end{pmatrix}
\right)
\]
are a random sample of size $n$ with mean 0, covariance matrices $\bm{\Sigma}_{11}$ and $\bm{\Sigma}_{22}$, and cross-covariance matrix $\bm{\Sigma}_{12}$. Then, we can estimate the cross-covariance matrix by the sample cross-covariance matrix $\bm{S}_{12}=\bm{Y}_1'\bm{Y}_2/n$, and the SVD of $\bm{S}_{12}=\tilde{\bm{U}}\tilde{\bm{D}}\tilde{\bm{V}}'$ provides estimates of $\bm{U}$, $\bm{D}$, and $\bm{V}$.


\chapter{Smoothing Splines and Lasso}\label{ch:ch3}

\section{Smoothing Splines}
The smoothing spline method is commonly applied in nonparametric regression problems. The main idea of smoothing splines is to find a smooth function that fits the data appropriately on finite data locations. It can be achieved by using a roughness penalty to promote smoothness.

Consider a nonparametric regression model:
\[
y_i = \varphi(\bm{s}_i)+\epsilon_i; \mbox{~ $i=1,\dots,n$},
\]
where $\varphi(\cdot)$ is an unknown smooth function, $\bm{s}_1,\dots,\bm{s}_n\in D\subset \mathbb{R}^d$ are $n$ given data locations, and $\{\epsilon_i\}$ are zero-mean independent random errors with a common variance $\sigma^2$. The smoothing spline estimate $\hat{\varphi}_{\tau}(\cdot)$ of $\varphi(\cdot)$ is given by minimizing the following penalized least squares:
\[
\sum_{i=1}^n (y_i-\varphi(\bm{s}_i))^2+\tau J(\varphi),
\]
over all twice differentiable functions of $\varphi(\cdot)$, where
\[
J(\varphi)=\sum_{z_1+\cdots+z_d=2}\int_{\mathcal{R}^d}\left(
\frac{\partial^2 \varphi(\bm{s})}{\partial x_1^{z_1}\dots\partial x_d^{z_d}}\right)^2 d\bm{s}
\]
is a roughness penalty, $\bm{s}=(x_1,\dots,x_d)'$, and $\tau\geq 0$ is a smoothness parameter controlling a tradeoff between the goodness of fit and smoothness of $\varphi(\cdot)$. For example, when $\tau$ is larger, $\hat{\varphi}_\tau(\cdot)$ tends to be smoother and vice versa. Two special cases are: (i) when $\tau = 0$, $\hat{\varphi}_\tau(\cdot)$ can be any interpolating function; (ii) when $\tau = \infty$, $\hat{\varphi}_\tau(\cdot)$ is a linear line (or surface) with zero curvature.

According to \citet{nonparametric}, $\hat{\varphi}_\tau(\cdot)$ is a natural cubic spline when $d=1$,
and a thin-plate spline when
$d\in\{2,3\}$ with locations at $\{\bm{s}_1,\dots,\bm{s}_n\}$. Specifically,
\begin{equation}\label{eq:ch3_basis}
\hat{\varphi}_\tau(\bm{s})=\sum_{i=1}^n {a}_i g(\|\bm{s}-\bm{s}_i\|)+b_0+\sum_{j=1}^{d} {b}_j x_j\:,
\end{equation}
\noindent where $\bm{s}=(x_1,\dots,x_d)'$,
\[
g(r) = \left\{
\begin{array}{ll}
\displaystyle\frac{1}{16\pi}r^{2}\log{r};  & \mbox{if $d=2$,}\smallskip\\
\displaystyle\frac{\Gamma(d/2-2)}{16\pi^{d/2}}r^{4-d}; & \mbox{if }d=1,3,\\
\end{array}\right.
\]
and the coefficients ${\bm{a}}=\left({a}_1,\dots,{a}_n\right)'$
and ${\bm{b}}=\left({b}_0,b_1,\dots,{b}_{d}\right)'$ satisfy
\[
{\left(\begin{array}{cc}
	\bm{G} & \bm{E} \\
	\bm{E}' & \bm{0} \\
	\end{array}\right) \left(\begin{array}{c}
	{\bm{a}}\\
	{\bm{b}}
	\end{array}\right)=\left(\begin{array}{c}
	\hat{\bm{\phi}}_\tau\\
	\bm{0}
	\end{array}\right)}.
\]
Here $\bm{G}$ is an $n\times n$ matrix with the $(i,j)$-th element $g(\|\bm{s}_i-\bm{s}_j\|)$,
$\bm{E}$ is an $n\times (d+1)$ matrix with the $i$-th row $(1,\bm{s}'_{i})$, and $\hat{\bm{\phi}}_\tau = (\hat{\varphi}_{\tau}(\bm{s}_1),\dots,\hat{\varphi}_{\tau}(\bm{s}_n))'$.
Consequently, $\hat{\varphi}_\tau(\cdot)$ in (\ref{eq:ch3_basis}) can be expressed in terms of $\hat{\bm{\phi}}_\tau$.
Since the roughness penalty can also be written as
\begin{equation*}
\label{eq:smoothness}
J(\varphi) =\bm{\phi}' \bm\Omega \bm{\phi},
\end{equation*}
\noindent where ${\bm{\phi}} = ({\varphi}(\bm{s}_1),\dots,{\varphi}(\bm{s}_n))'$, and $\bm\Omega = \bm{G}^{-1}- \bm{G}^{-1}\bm{E}'(\bm{E}\bm{G}^{-1}\bm{E}')^{-1}\bm{E}\bm{G}^{-1}$ \citep{nonparametric}, the resulting estimate $\hat{\bm{\phi}}_\tau$ of $\bm{\phi}$ is given by
\begin{equation}
\hat{\bm{\phi}}_\tau=\mathop{\arg\min}_{\bm{\phi} \in\mathbb{R}^n} \sum_{i=1}^n (y_i-\varphi(\bm{s}_i))^2+\tau \bm{\phi}' \bm\Omega \bm{\phi} = {\bm{A}(\tau)}\bm{y},
\label{eq:ch3objective2}
\end{equation}
where {$\bm{A}(\tau) = (\bm{I}+\tau\bm{\Omega})^{-1}$ and} $\bm{y}= (y_1,\dots,y_n)'$.

We introduce two methods to select the smoothness parameter $\tau$: $M$-fold cross-validation (CV) and generalized cross-validation. The $M$-fold CV randomly partitions the index set $\{1,\dots,n\}$ into $M$ parts that are as close to the same size as possible.
For $m=1,\dots,M$, let $\bm{y}^{(m)}$ be the sub-vector of $\bm{y}$ corresponding to the $m$-th part, and let $\hat{\bm{\phi}}^{(-m)}_{\tau}$ be the estimate of $\bm{\phi}$ obtained from \eqref{eq:ch3objective2} based on the remaining data $\bm{y}^{(-m)}$. The CV criterion is
\begin{equation}\label{eq:ch3mfoldcv}
\mathrm{CV}(\tau) = \frac{1}{M}\sum_{m=1}^M\left\|\bm{y}^{(m)}-\hat{\bm{\phi}}^{(-m)}_{\tau}\right\|^2_2,
\end{equation}
which selects $\hat{\tau}=\displaystyle\mathop{\arg\min}_{\tau{\geq 0}}\mathrm{CV}(\tau)$. When $M=n$, the criterion \eqref{eq:ch3mfoldcv} is called leave-one-out CV:
\begin{equation}\label{eq:ch3leaveoneoutcv}
\mathrm{CV}(\tau) = \frac{1}{n}\sum_{m=1}^n\left\|\bm{y}^{(m)}-\hat{\bm{\phi}}^{(-m)}_{\tau}\right\|^2_2 =  \frac{1}{n}\sum_{{i}=1}^n\left(\frac{y_i-{\hat{\varphi}_\tau(\bm{s}_i)}}{1-\bm{A}_{ii}(\tau)} \right)^2,
\end{equation}
where $\bm{A}_{ii}(\tau)$ is the $(i,i)$-th entry of $\bm{A}(\tau)$, and the second equality can be derived by an argument in \citet[page 31-33]{nonparametric} to facilitate computation.

\citet{craven1978smoothing} proposed a modified version of \eqref{eq:ch3leaveoneoutcv} called
generalized cross-validation (GCV). The idea of GCV is to replace the denominators $1-\bm{A}_{ii}(\tau)$ in the last term of \eqref{eq:ch3leaveoneoutcv} with their average  {$1-\mathrm{tr}{(\bm{A}(\tau))}/n$}. Specifically, the GCV criterion is
\begin{equation*}
\mathrm{GCV}(\tau) = \frac{1}{n}\sum_{{i}=1}^n\left(\frac{y_i-{\hat{\varphi}_\tau(\bm{s}_i)}}{1-\mathrm{tr}{(\bm{A}(\tau))}/n} \right)^2,
\end{equation*}
and the smoothness parameter selected by GCV is $\tilde{\tau}=\displaystyle\mathop{\arg\min}_{\tau{\geq 0}}\mathrm{GCV}(\tau)$.

\section{Lasso}
Lasso was developed by \citet{lasso} in a linear regression model. The goal of Lasso is to offer interpretable, stable, and sparse regression estimation, particularly for high-dimensional data. The idea is to reduce the estimation variability by applying the $L_1$ penalty to least squares, forcing certain regression coefficients to be exactly zeros.
Consider a linear regression model:
\begin{equation}\label{eq:ch3lassoreg}
y_i = \bm{x}'_i\bm{\beta}+{\epsilon}_i; \mbox{~$i=1,\dots,n$,}
\end{equation}
where $\bm{x}_i = (x_{i1},\dots,x_{ip})'$ is a $p\times 1$ covariate vector, $\bm{\beta} = (\beta_1,\dots,\beta_p)'$ is a $p\times 1$ unknown regression parameter vector, and $\epsilon_1,\dots,\epsilon_n$ are independent and identically distributed random errors with mean zero and variance $\sigma^2$. Assume that  $y_i$ and $\bm{x}_i$ are centered, so the intercept term is zero. The Lasso estimate $\hat{\bm{\beta}}_\lambda$ of $\bm{\beta}$ is given by
\begin{equation}
\hat{\bm{\beta}}_\lambda=\mathop{\arg\min}_{\bm{\beta}\in \mathbb{R}^p}\left\{\sum_{i=1}^n(y_i-\bm{x}'_i\bm{\beta})^2+\lambda \sum_{j=1}^p |\beta_j|\right\},
\label{eq:ch3lasso}
\end{equation}
where the second term is the $L_1$ penalty, and $\lambda\geq 0$ is a sparseness parameter controlling the size of $\bm{\beta}$. For example, $\hat{\bm{\beta}}_{{0}}$ is the ordinary least squares estimate. For larger $\lambda$, more elements of $\hat{\bm{\beta}}_\lambda$ are shrunk to be exactly zeros, and eventually  $\hat{\bm{\beta}}_{{\infty}} = \bm{0}$.

Let $\bm{y}=(y_1,\dots,y_n)'$ and $\bm{X} = (\bm{x}_1,\dots,\bm{x}_n)'$. Assume that $\bm{X}$ is orthonormal (i.e., $\bm{X}'\bm{X} = \bm{I}_p$). Then \eqref{eq:ch3lasso} has a closed form solution given by soft-thresholding
\[
\hat{\bm{\beta}}_\lambda= \left\{\mathrm{sign}\left(\tilde{\beta}_j\right)\max\left(\left|\tilde{\beta}_j\right|-\lambda,0\right)\right\}_{j=1,\dots,p},
\]
where $\tilde{\bm{\beta}} =\bm{X}'\bm{y} =(\tilde{\beta}_1,\dots,\tilde{\beta}_p)'$ is the ordinary least squares estimate. While $\hat{\bm{\beta}}_\lambda$ generally does not have an explicit form, it can be computed by one of the following two fast algorithms: the LARS algorithm \citep{efron2004least} and the coordinate descent algorithm \citep{coord}.

The $M$-fold CV can be used to select $\lambda$. First, the index set $\{1,\dots,n\}$ is randomly decomposed into $M$ parts of roughly equal size. For $m=1,\dots,M$, let $\bm{y}^{(m)}$ and $\bm{X}^{(m)}$ be the sub-matrices of $\bm{y}$ and $\bm{X}$ corresponding to the $m$-th part, respectively, and let $\hat{\bm{\beta}}^{(-m)}_\lambda$ be the estimate of $\bm{\beta}$ obtained from \eqref{eq:ch3lasso} based on the remaining data $\left(\bm{y}^{(-m)}, \bm{X}^{(-m)}\right)$. The CV criterion is
\begin{equation*}
\mathrm{CV}({\lambda}) = \frac{1}{M}\sum_{m=1}^M\left\|\bm{y}^{(m)}-\bm{X}^{(m)}\hat{\bm{\beta}}^{(-m)}_{\lambda}\right\|^2_2,
\end{equation*}
which selects
$\hat{\lambda}=\displaystyle\mathop{\arg\min}_{\lambda\geq 0}\mathrm{CV}(\lambda)$.

In addition, \citet{zou2007degrees} proposed another method for selecting $\lambda$, which is based on the information criteria with the  degrees of freedom of the Lasso fit  $\bm{X}\hat{\bm{\beta}}_\lambda$ estimated by the Stein's unbiased risk estimate \citep{stein1981estimation}. Specifically, selecting $\lambda$ is given by minimizing the following criterion:
\begin{equation}\label{eq:ch3ic}
\frac{\|\bm{y}-\bm{X}\hat{\bm{\beta}}_\lambda\|^2_2}{n\sigma^2} + \frac{\omega_n}{n}\hat{\mathrm{df}}(\bm{X}\hat{\bm{\beta}}_\lambda),
\end{equation}
where $\hat{\mathrm{df}}(\bm{X}\hat{\bm{\beta}}_\lambda)$ is equal to the number of non-zero elements of $\hat{\bm{\beta}}_\lambda$, and $\epsilon_i\sim N(0,\sigma^2)$ in \eqref{eq:ch3lassoreg} are assumed to follow a normal distribution. When $\omega_n = 2$, the criterion \eqref{eq:ch3ic} is called Akaike's information criterion \citep{akaike1974new}. When $\omega_n = \log(n)$, it is called Bayesian information criterion \citep{schwarz1978estimating}. Both criteria have different asymptotic properties under different model assumptions.
More details of \eqref{eq:ch3ic} can be found in \citet{zou2007degrees}.


\chapter{Regularized Principal Component Analysis for Spatial Data}\label{ch:ch4}

\graphicspath{{Chapter4/plot/}}

\section{Introduction}

In many atmospheric and earth sciences, it is of interest to identify dominant spatial patterns of variation
based on data observed at $p$ locations with $n$ repeated measurements, where $p$ may be larger than $n$.
The dominant patterns are the eigenimages of the underlying (nonstationary) spatial covariance function
with large eigenvalues. A commonly used approach for estimating the eigenimages is the principal component analysis (PCA),
also known as the empirical orthogonal function analysis in atmospheric science.
However, when $p$ is large relative to $n$, the leading eigenimages produced from PCA may be noisy with high estimation variability,
or exhibit some bizarre patterns that are not physically meaningful.
To enhance the interpretability, a few approaches, such as rotation of components according to some criteria (see e.g., \citet{rotate},
\citet{jolliffe1987rotation}, \citet{richman1987rotation}),
have been proposed to form more desirable patterns. However, how to obtain a desired rotation in practice is not completely clear;
see some discussion in \citet{eofreview}.

Another approach to aid interpretation is to seek sparse or spatially localized patterns, which can be done by imposing an $L_1$ constraint
or adding an $L_1$ penalty to an original PCA optimization formulation (\citet{jolliffe2002eofs}, \citet{spca},  \citet{Shen}, \citet{spca2}, and \citet{spca3}).
However, except \citet{jolliffe2002eofs} and \citet{spca3}, the PC estimates produced from these approaches may not produce orthogonal PC loadings.
Consequently, a larger number of patterns may be needed in order to account for the total variation of the signal in a dataset.

For continuous spatial domains, the problem becomes even more challenging.
Instead of looking for eigenimages, we need to find eigenfunctions by essentially solving an infinite dimensional problem
based on data observed at possibly sparse and irregularly spaced locations.
Although some approaches have been developed using functional principal component analysis
(see e.g., \citet{bfda}, \citet{fpca_long} and \citet{fda}), they typically focus on one-dimensional processes, or require data observed at dense locations.
Reviews of PCA on spatial data can be found in \citet{eofreview} and \citet{spatial_pca}.

In this chapter, we propose a regularization approach for estimation of dominant patterns,
taking into account smoothness and localized features that are expected in real-world spatial processes.
The proposed dominant pattern estimates are directly obtained by solving a minimization problem. We call our method SpatPCA,
which not only gives effective estimates of dominant patterns,
but also provides an ideal set of basis functions for estimating the underlying (nonstationary) spatial covariance function,
even when data are irregularly located in space.
In addition, we develop an algorithm for solving the resulting nonconvex optimization problem
using the alternating direction method of multipliers (ADMM) (see \citet{admm}). The algorithm is fast and easy to implement.
An R package called SpatPCA is available on the Comprehensive R Archive Network (CRAN).

The rest of this chapter is organized as follows. In Section~\ref{sec:ch4proposal}, we introduce the proposed SpatPCA method,
including dominant patterns estimation and spatial covariance function estimation.
Our ADMM algorithm for computing the SpatPCA estimates is provided in Section~\ref{sec:ch4algorithm}.
Some simulation experiments that illustrate the superiority of SpatPCA and an application of SpatPCA to a global sea surface
temperature dataset are presented in Section~\ref{sec:ch4numerical}.

\section{The Proposed Method}
\label{sec:ch4proposal}

Consider a sequence of zero-mean $L^2$-continuous spatial processes, $\{\eta_i(\bm{s}); \bm{s} \in D\}$; $i=1,\dots,n$,
defined on a spatial domain $D \subset \mathbb{R}^d$,
which are mutually uncorrelated, and have a common spatial covariance function,
$C_\eta(\bm{s},\bm{s}^*)=\mathrm{cov}(\eta_i(\bm{s}),\eta_i(\bm{s}^*))$.
We consider a rank-$K$ spatial random-effects model for $\eta_i(\cdot)$:
\begin{equation*}
\eta_i(\bm{s})=(\varphi_1(\bm{s}),\dots,\varphi_{K}(\bm{s}))\bm{\xi}_i=
\sum_{k=1}^{K}\xi_{ik}\varphi_k(\bm{s});\quad\bm{s}\in D,\quad i=1,\dots,n,
\end{equation*}
where $\{\varphi_k(\cdot)\}$ are unknown orthonormal basis functions,
$\bm{\xi}_i=(\xi_{i1},\dots,\xi_{iK})'\sim(0,\bm{\Lambda})$; $i=1,\dots,n$, are uncorrelated random variables,
and $\bm{\Lambda}$ is an unknown nonnegative-definite matrix, denoted by $\bm{\Lambda}\succeq\bm{0}$.
A similar model based on given $\{\varphi_k(\cdot)\}$ was introduced by \citet{fixedrank}
and in a Bayesian framework by \citet{bayes}.

Let $\lambda_{kk'}$ be the $(k,k')$-th entry of $\bm{\Lambda}$. Then the spatial covariance function
of $\eta_i(\cdot)$ is:
\begin{equation}
C_\eta(\bm{s},\bm{s}^*)=\mathrm{cov}(\eta_i(\bm{s}),\eta_i(\bm{s}^*))=
\sum_{k=1}^{K}\sum_{k'=1}^{K}	\lambda_{kk'}\varphi_k(\bm{s})\varphi_{k' }(\bm{s}^*).
\label{eq:ch4eta}
\end{equation}

Note that $\bm{\Lambda}$ is not restricted to be a diagonal matrix.

Let $\bm{\Lambda}=\bm{V}\bm{\Lambda^*}\bm{V}'$ be the eigen-decomposition of $\bm{\Lambda}$,
where $\bm{V}$ consists of $K$ orthonormal vectors, and  $\bm{\Lambda}^*=\mathrm{diag}(\lambda^*_1,\dots,\lambda^*_{K})$ and
$\lambda^*_1\geq\cdots\geq\lambda^*_{K}$.
Let $\bm{\xi}^*_i=\bm{V}'\bm{\xi}_i$ and
\[
(\varphi^*_1(\bm{s}),\dots,\varphi^*_{K}(\bm{s}))
=(\varphi_1(\bm{s}),\dots,\varphi_{K}(\bm{s}))\bm{V};\quad\bm{s}\in D.
\]
Then $\{\varphi^*_k\}$ are also orthonormal,
and $\xi^*_{ik}\sim(0,\lambda^*_k)$; $i=1,\dots,n,\,k=1,\dots,K$, are mutually uncorrelated.
Therefore, we can rewrite $\eta_i(\cdot)$ in terms of $\varphi^*_k(\cdot)$'s:
\begin{equation}
\label{eq:ch4kl}
\eta_i(\bm{s})=(\varphi^*_1(\bm{s}),\dots,\varphi^*_{K}(\bm{s}))\bm{\xi}^*_i=
\sum_{k=1}^{K}\xi^*_{ik}\varphi^*_k(\bm{s});\quad\bm{s}\in D.
\end{equation}

The above expansion is known as the Karhunen-Lo\'{e}ve expansion
of $\eta_i(\cdot)$ (\citet{karhunen}; \citet{loeve}) with $K$ nonzero eigenvalues,
where $\varphi^*_k(\cdot)$ is the $k$-th eigenfunction of $C_\eta(\cdot,\cdot)$ with $\lambda^*_k$ the corresponding eigenvalue.

Suppose that we observe data $\bm{Y}_i=(Y_i(\bm{s}_1),\dots,Y_i(\bm{s}_p))'$ with
added white noise $\bm{\epsilon}_i\sim(\bm{0},\sigma^2\bm{I})$ at $p$ spatial locations, $\bm{s}_1,\dots,\bm{s}_p \in D$,  according to
\begin{equation}
\bm{Y}_i=\bm{\eta}_i+\bm{\epsilon}_i=\bm{\Phi}\bm{\xi}_i+\bm{\epsilon}_i;\quad i=1,\dots,n,
\label{eq:ch4measurement}
\end{equation}
where $\bm{\eta}_i=(\eta_{i}(\bm{s}_1),\dots,\eta_{i}(\bm{s}_p))'$,
$\bm{\Phi}=(\bm{\phi}_1,\dots,\bm{\phi}_K)$ is a $p\times K$ matrix with the $(j,k)$-th entry $\varphi_k(\bm{s}_j)$, and
$\bm{\epsilon}_i$'s and $\bm{\xi}_{i}$'s are uncorrelated. Our goal is to identify
the first $L\leq K$ dominant patterns, $\varphi_1(\cdot),\dots,\varphi_{L}(\cdot)$, with relatively large
$\lambda^*_1,\dots,\lambda^*_{L}$.
Additionally,  we are interested in estimating $C_\eta(\cdot,\cdot)$, which is essential for spatial prediction.

Let $\bm{Y}=(\bm{Y}_1,\dots,\bm{Y}_n)'$ be the $n\times p$ data matrix. Throughout the chapter, we assume that the mean of $\bm{Y}$ is known as zero.
So the sample covariance matrix of $\bm{Y}$ is $\bm{S}=\bm{Y}'\bm{Y}/n$.
A popular approach for estimating $\{\varphi^*_k(\cdot)\}$ is PCA, which
estimates $(\varphi^*_k(\bm{s}_1),\dots,\varphi^*_k(\bm{s}_p))'$ by $\tilde{\bm{\phi}}_k$,
the $k$-th eigenvector of $\bm{S}$, for $k=1,\dots,K$.
Let $\tilde{\bm{\Phi}}=\big(\tilde{\bm{\phi}}_1,\dots,\tilde{\bm{\phi}}_K\big)$ be a $p\times K$ matrix
formed by the first $K$ principal component loadings.
Then $\tilde{\bm{\Phi}}$ satisfies the following constrained optimization problem:
\[
\min_{\bm{\Phi}}\|\bm{Y}-\bm{Y}\bm{\Phi}\bm{\Phi}'\|^2_F\quad\text{subject to }
\bm{\Phi}'\bm{\Phi}=\bm{I}_K,
\]
where $\bm{\Phi}=(\bm{\phi}_1,\dots,\bm{\phi}_K)$ and $\|\bm{M}\|_F=
\Big(\displaystyle\sum_{i,j}m^2_{ij}\Big)^{1/2}$ is the Frobenius norm of a matrix $\bm{M}$.
Unfortunately, $\tilde{\bm{\Phi}}$ tends to have high estimation variability when $p$ is large, $n$ is small,
or $\sigma^2$ is large, due to excessive number of parameters.
Consequently, the patterns of $\tilde{\bm{\Phi}}$ may be too noisy to be physically interpretable.
In addition, for a continuous spatial domain $D$,
we also need to estimate $\varphi_k^*(\bm{s})$'s for locations with no data observed (i.e., $\bm{s}\notin\{\bm{s}_1,\dots,\bm{s}_p\}$),
which requires applying some interpolation and extrapolation methods; see e.g., Section 12.4 and 13.6 of \citet{jolliffe2002principal}.

\subsection{Regularized Spatial PCA}

To prevent high estimation variability of PCA, we adopt a regularization approach by minimizing the following objective function:
\begin{equation}\label{eq:ch4objective}
\|\bm{Y}-\bm{Y}\bm{\Phi}\bm{\Phi}'\|^2_F +\tau_1\sum_{k=1}^K
J({\varphi}_k)+\tau_2\sum_{k=1}^K \sum_{j=1}^{p}\big|\varphi_k(\bm{s}_j)\big|,
\end{equation}
over $\varphi_1(\cdot),\dots,\varphi_K(\cdot)$, subject to $\bm{\Phi}'\bm{\Phi}=\bm{I}_K$
and $\bm{\phi}'_1\bm{S}\bm{\phi}_1\geq\bm{\phi}'_2\bm{S}\bm{\phi}_2\geq\cdots\geq\bm{\phi}'_K
\bm{S}\bm{\phi}_K$, where
\[
J(\varphi)=\sum_{z_1+\cdots+z_d=2}\int_{\mathbb{R}^d}\left(
\frac{\partial^2 \varphi(\bm{s})}{\partial x_1^{z_1}\dots\partial x_d^{z_d}}\right)^2 d\bm{s},
\]
is a roughness penalty, $\bm{s}=(x_1,\dots,x_d)'$, $\tau_1\geq 0$ is a smoothness parameter, and $\tau_2\geq 0$ is a sparseness parameter.
The function \eqref{eq:ch4objective} consists of two penalty terms for control of estimation variability.
The first one is designed to enhance smoothness of $\varphi_k(\cdot)$ through the smoothing spline penalty $J(\varphi_k)$,
while the second one is the $L_1$ Lasso penalty (\citet{lasso}), used to promote sparse and localized patterns.
When $\tau_1$ is larger, $\hat{\varphi}_k(\cdot)$'s tend to be smoother and \textit{vice versa}.
When $\tau_2$ is larger, $\hat{\varphi}_k(\cdot)$'s are forced to become more localized, because $\hat{\varphi}_k(\bm{s})$'s are forced
to be zero at some $\bm{s}\in D$.
On the other hand, when both $\tau_1$ and $\tau_2$ are close to zero,
the estimates are very close to those obtained from PCA.
By suitably choosing $\tau_1$ and $\tau_2$, we can obtain a good compromise among
goodness of fit, smoothness of the eigenfunctions, and sparseness of the eigenfunctions, leading to more interpretable results.
Note that the orthogonal constraint, which causes some computational difficulty,
is not considered by many PCA regularization methods (e.g., \citet{spca}, \citet{Shen}, \citet{fused}, \citet{hong}).

Although $J(\varphi)$ involves integration, it is well known from the theory of smoothing splines (\citet{nonparametric}) that
for each $k=1,\dots,K$, $\hat{\varphi}_k(\cdot)$ has to be a natural cubic spline when $d=1$,
and a thin-plate spline when $d= 2,3$, with nodes at $\{\bm{s}_1,\dots,\bm{s}_p\}$. Specifically,
\begin{equation}\label{eq:ch4basis}
\hat{\varphi}_k(\bm{s})=\sum_{i=1}^p {a}_i g(\|\bm{s}-\bm{s}_i\|)+b_0+\sum_{j=1}^{d} {b}_j x_j\:,
\end{equation}
where $\bm{s}=(x_1,\dots,x_d)'$,
\[
g(r) = \left\{
\begin{array}{ll}
\displaystyle\frac{1}{16\pi}r^{2}\log{r};  & \mbox{if $d=2$,}\smallskip\\
\displaystyle\frac{\Gamma(d/2-2)}{16\pi^{d/2}}r^{4-d}; & \mbox{if }d=1,3,\\
\end{array}\right.
\]
and the coefficients ${\bm{a}}=\left({a}_1,\dots,{a}_p\right)'$
and ${\bm{b}}=\left({b}_0,b_1,\dots,{b}_{d}\right)'$ satisfy
\[
\begin{bmatrix}
\bm{G} & \bm{E} \\
\bm{E}' & \bm{0} \\
\end{bmatrix}
\begin{bmatrix}
{\bm{a}}\\
{\bm{b}}
\end{bmatrix}
=\begin{bmatrix}
\hat{\bm{\phi}}_k\\
\bm{0}
\end{bmatrix}.
\]
Here $\bm{G}$ is a $p\times p$ matrix with the $(i,j)$-th element $g(\|\bm{s}_i-\bm{s}_j\|)$,
and $\bm{E}$ is a $p\times (d+1)$ matrix with the $i$-th row $(1,\bm{s}'_{i})$.
Consequently, $\hat{\varphi}_k(\cdot)$ in (\ref{eq:ch4basis}) can be expressed in terms of $\hat{\bm{\phi}}_k$.
Additionally, the roughness penalty can also be written as
\begin{equation}
\label{eq:ch4smoothness}
J(\varphi_k) =\bm{\phi}'_k \bm{\Omega} \bm{\phi}_k,
\end{equation}
with $\bm{\Omega}$ a known $p\times p$ matrix determined only by $\bm{s}_1,\dots,\bm{s}_p$.
The readers are referred to \citet{nonparametric} for more details regarding smoothing splines.

From (\ref{eq:ch4objective}) and (\ref{eq:ch4smoothness}), the proposed SpatPCA estimate of $\bm{\Phi}$ can be written as:
\begin{equation}
\hat{\bm{\Phi}}_{\tau_1,\tau_2}=\mathop{\arg\min}_{\bm{\Phi}:\bm{\Phi}'\bm{\Phi} = \bm{I}_K}
\|\bm{Y} - \bm{Y}\bm{\Phi}\bm{\Phi} '\|^2_F +
\tau_1 \sum_{k=1}^K \bm{\phi}'_k\bm{\Omega}\bm{\phi}_k+\tau_2 \sum_{k=1}^K\sum_{j=1}^p\left|{\phi}_{jk}\right|,
\label{eq:ch4objective2}
\end{equation}
subject to $\bm{\phi}'_1\bm{S}\bm{\phi}_1\geq\bm{\phi}'_2\bm{S}\bm{\phi}_2\geq\cdots\geq\bm{\phi}'_K
\bm{S}\bm{\phi}_K$.
The resulting estimates of $\varphi_1(\cdot),\dots,\varphi_K(\cdot)$ then follow directly from (\ref{eq:ch4basis}).
When no confusion may arise, we shall simply write $\hat{\bm{\Phi}}_{\tau_1,\tau_2}$ as $\hat{\bm{\Phi}}$.
Note that the SpatPCA estimate of (\ref{eq:ch4objective2}) reduces to a sparse PCA estimate of \citet{spca}
if the orthogonal constraint is dropped and $\bm{\Omega}=\bm{I}$, where no spatial structure is considered.

The tuning parameters $\tau_1$ and $\tau_2$ are selected using $M$-fold cross-validation (CV).
First, we partition $\{1,\dots,n\}$ into $M$ parts with as close to the same size as possible.
Let $\bm{Y}^{(m)}$ be the sub-matrix of $\bm{Y}$ corresponding to the $m$-th part, for $m=1,\dots,M$.
For each part, we treat $\bm{Y}^{(m)}$ as the validation data,
and obtain the estimate $\hat{\bm{\Phi}}^{(-m)}_{\tau_1,\tau_2}$ of $\bm{\Phi}$ for $(\tau_1,\tau_2)\in\mathcal{A}$
based on the remaining data $\bm{Y}^{(-m)}$ using the proposed method,
where $\mathcal{A}\subset[0,\infty)^2$ is a candidate index set. The proposed CV criterion based on the
residual sum of squares is:
\begin{equation}
\label{eq:ch4cv}
\mathrm{CV}_1(\tau_1,\tau_2)=\frac{1}{M}\sum_{m=1}^M\big\|\bm{Y}^{(m)}-\bm{Y}^{(m)}\hat{\bm{\Phi}}_{\tau_1, \tau_2}^{(-m)}
(\hat{\bm{\Phi}}^{(-m)}_{\tau_1, \tau_2})'  \big\|^2_F\:,
\end{equation}
where $\bm{Y}^{(m)}\hat{\bm{\Phi}}_{\tau_1, \tau_2}^{(-m)}
(\hat{\bm{\Phi}}^{(-m)}_{\tau_1, \tau_2})' $ is the projection of $\bm{Y}^{(m)}$
onto the column space of $\hat{\bm{\Phi}}^{(-m)}_{\tau_1,\tau_2}$. The final $\tau_1$ and $\tau_2$ values are
$(\hat{\tau}_1, \hat{\tau}_2)=\displaystyle\mathop{\arg\min}_{(\tau_1,\tau_2)\in\mathcal{A}}\mathrm{CV}_1(\tau_1,\tau_2)$.

\subsection{Estimation of Spatial Covariance Function}
\label{subsec:ch4cov_fn}

For estimation of $C_\eta(\cdot,\cdot)$ in (\ref{eq:ch4eta}), we also need to estimate the spatial covariance parameters,
$\sigma^2$ and $\bm{\Lambda}$.
We apply the regularized least squares method of \citet{regularized_covariance}:
\begin{equation}
\big(\hat{\sigma}^2,\hat{\bm{\Lambda}}\big)=\mathop{\arg\min}_{(\sigma^2,\bm{\Lambda}):  \sigma^2\geq 0,\,\bm{\Lambda}\succeq\bm{0}}
\bigg\{\frac{1}{2}\big\|\bm{S}-\hat{\bm{\Phi}}\bm{\Lambda}\hat{\bm{\Phi}}'-\sigma^2\bm{I}\big\|^2_F+\gamma\|\hat{\bm{\Phi}}\bm{\Lambda}
\hat{\bm{\Phi}}'\|_{*}\bigg\},
\label{eq:ch4covariance.estimate}
\end{equation}
where $\gamma \geq 0$ is a tuning parameter, and  $\|\bm{M}\|_{*}=\mathrm{tr}((\bm{M}'\bm{M})^{1/2})$
is the nuclear norm of $\bm{M}$.
The first term of (\ref{eq:ch4covariance.estimate}) is a goodness-of-fit term, which is based on
$\mathrm{var}(\bm{Y}_i)=\bm{\Phi\Lambda\Phi}'+\sigma^2\bm{I}$. The second term of (\ref{eq:ch4covariance.estimate}) is a penalty term,
penalizing the eigenvalues of $\hat{\bm{\Phi}}\bm{\Lambda}\hat{\bm{\Phi}}'$ to avoid the eigenvalues being overestimated. By suitably choosing
a tuning parameter $\gamma$, we can control the bias, while reducing the estimation variability. This is particularly
effective when $K$ is large.

\citet{regularized_covariance} provides a closed-form solution for $\hat{\bm{\Lambda}}$, but requires an iterative procedure
for solving $\hat{\sigma}^2$.
We are able to derive closed-form expressions for both $\hat{\sigma}^2$ and $\hat{\bm{\Lambda}}$, which are given in the following proposition
with its proof given in Appendix~\ref{app:appendix_A}.
\begin{prop}
The solutions of (\ref{eq:ch4covariance.estimate}) are given by
\begin{align}
\hat{\bm{\Lambda}}
=&~ \hat{\bm{V}}\mathrm{diag}\big(\hat{\lambda}_1^*,\dots,\hat{\lambda}_K^*\big)\hat{\bm{V}}',
\label{eq:ch4Lambda.hat}\\
\hat{\sigma}^2
=&~	\left\{
 	\begin{array}{ll}
 	\displaystyle\frac{1}{p-\hat{L}} \bigg(\mathrm{tr}(\bm{S})-\sum_{k=1}^{\hat{L}} \big(\hat{d}_k-\gamma\big)\bigg);
	& \mbox{if $\hat{d}_1 > \gamma$,}\\
 	\displaystyle \frac{1}{p} \left(\mathrm{tr}(\bm{S})\right);
	& \mbox{if $\hat{d}_1 \leq \gamma$ ,}\\
 	\end{array}\right.
	\label{eq:ch4sigma.hat}
\end{align}
where $\hat{\bm{V}}\mathrm{diag}(\hat{d}_1,\dots,\hat{d}_K)\hat{\bm{V}}'$ is the
eigen-decomposition of $\hat{\bm{\Phi}}'\bm{S}\hat{\bm{\Phi}}$ with $\hat{d}_1\geq\cdots\geq\hat{d}_K$,
 \[
 \hat{L} =\max \bigg\{ L:  \hat{d}_L-\gamma >\frac{1}{p-L} \bigg(\mathrm{tr}(\bm{S})-\sum_{k=1}^L (\hat{d}_k-\gamma)\bigg), L=1,\dots,K \bigg\},
 \]
 and $\hat{\lambda}^*_k=\max(\hat{d}_k-\hat{\sigma}^2-\gamma,0)$; $k=1,\dots,K$.
\label{prop:sigma_Lambda_hat}
\end{prop}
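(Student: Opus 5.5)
We reduce \eqref{eq:ch4covariance.estimate} to a separable problem in the eigenvalues of $\bm{\Lambda}$ and in $\sigma^2$, solve in $\bm{\Lambda}$ for fixed $\sigma^2$, then minimize the resulting profile function of $\sigma^2$ alone.

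\textbf{Reduction to $K$ dimensions.} Since $\hat{\bm{\Phi}}'\hat{\bm{\Phi}}=\bm{I}_K$, complete $\hat{\bm{\Phi}}$ to an orthogonal matrix $(\hat{\bm{\Phi}},\hat{\bm{\Phi}}_\perp)$. The Frobenius norm is invariant under this rotation, so
\[
\big\|\bm{S}-\hat{\bm{\Phi}}\bm{\Lambda}\hat{\bm{\Phi}}'-\sigma^2\bm{I}\big\|_F^2
=\big\|\hat{\bm{\Phi}}'\bm{S}\hat{\bm{\Phi}}-\bm{\Lambda}-\sigma^2\bm{I}_K\big\|_F^2+c(\sigma^2),
\]
where $c(\sigma^2)=2\|\hat{\bm{\Phi}}'\bm{S}\hat{\bm{\Phi}}_\perp\|_F^2+\|\hat{\bm{\Phi}}_\perp'\bm{S}\hat{\bm{\Phi}}_\perp-\sigma^2\bm{I}_{p-K}\|_F^2$. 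Also $\|\hat{\bm{\Phi}}\bm{\Lambda}\hat{\bm{\Phi}}'\|_*=\mathrm{tr}(\bm{\Lambda})$ because $\bm{\Lambda}\succeq\bm{0}$.

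\textbf{Minimizing over $\bm{\Lambda}$ for fixed $\sigma^2$.} Write $\hat{\bm{\Phi}}'\bm{S}\hat{\bm{\Phi}}=\hat{\bm{V}}\hat{\bm{D}}\hat{\bm{V}}'$. The problem becomes $\min_{\bm{\Lambda}\succeq\bm{0}}\frac12\|\hat{\bm{V}}(\hat{\bm{D}}-\sigma^2\bm{I})\hat{\bm{V}}'-\bm{\Lambda}\|_F^2+\gamma\,\mathrm{tr}(\bm{\Lambda})$. This is the proximal operator of $\gamma\,\mathrm{tr}(\cdot)$ plus the PSD indicator, a spectral function, so it is solved by soft-thresholding the eigenvalues with the same eigenvectors:
\[
\bm{\Lambda}(\sigma^2)=\hat{\bm{V}}\,\mathrm{diag}\big(\max(\hat d_k-\sigma^2-\gamma,0)\big)\hat{\bm{V}}'.
\]
To justify this, complete the square as $\frac12\|\bm{A}-\gamma\bm{I}-\bm{\Lambda}\|_F^2$ plus a constant, and use the fact that the PSD projection keeps the positive parts of the eigenvalues. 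This gives \eqref{eq:ch4Lambda.hat} once $\hat\sigma^2$ is found.

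\textbf{Profile in $\sigma^2$.} Substituting $\bm{\Lambda}(\sigma^2)$ and dropping constants, the profile objective is
\[
h(\sigma^2)=\tfrac12\sum_{k=1}^K f_k(\sigma^2)+\tfrac12\|\hat{\bm{\Phi}}_\perp'\bm{S}\hat{\bm{\Phi}}_\perp-\sigma^2\bm{I}\|_F^2,
\]
where $f_k(\sigma^2)=(\hat d_k-\sigma^2)^2$ if $\hat d_k-\sigma^2\le\gamma$, and $f_k(\sigma^2)=\gamma^2+2\gamma(\hat d_k-\sigma^2-\gamma)$ otherwise. Each $f_k$ is a Huber-type function, so it is convex and $C^1$. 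Hence $h$ is convex and $C^1$ on $\sigma^2\ge0$, and its minimizer is characterized by $h'(\sigma^2)=0$ (or the boundary). Let $L(\sigma^2)=\#\{k:\hat d_k-\gamma>\sigma^2\}$. The relevant terms of $-h'$ are:
\begin{itemize}
\item $\hat d_k-\sigma^2$ from each $k>L$;
\item $\gamma$ from each $k\le L$;
\item $\mathrm{tr}(\hat{\bm{\Phi}}_\perp'\bm{S}\hat{\bm{\Phi}}_\perp)-(p-K)\sigma^2$ from the complement.
\end{itemize}
Since $\mathrm{tr}(\bm{S})=\sum_k\hat d_k+\mathrm{tr}(\hat{\bm{\Phi}}_\perp'\bm{S}\hat{\bm{\Phi}}_\perp)$, the stationarity equation reads $\mathrm{tr}(\bm{S})-\sum_{k\le L}(\hat d_k-\gamma)=(p-L)\sigma^2$. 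This is exactly \eqref{eq:ch4sigma.hat} with $L$ in place of $\hat L$.

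\textbf{Main obstacle: consistency of $L$.} The difficult step is checking that the active index count $L(\hat\sigma^2)$ equals $\hat L$ as defined. Define $g(L)=\frac{1}{p-L}\big(\mathrm{tr}(\bm{S})-\sum_{k\le L}(\hat d_k-\gamma)\big)$. The plan is to show the following interlacing property:
\begin{itemize}
\item $\hat d_L-\gamma>g(L)$ implies $g(L)<g(L-1)$;
\item conversely, $\hat d_L-\gamma\le g(L-1)$ implies $g(L)\ge g(L-1)$.
\end{itemize}
Both follow from the identity $(p-L)g(L)=(p-L+1)g(L-1)-(\hat d_L-\gamma)$. This makes the set of $L$ satisfying the condition an initial segment $\{1,\dots,\hat L\}$. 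It follows that $\hat d_{\hat L}-\gamma>g(\hat L)\ge\hat d_{\hat L+1}-\gamma$, so $L(g(\hat L))=\hat L$ and $g(\hat L)$ is a stationary point, hence the global minimizer by convexity.

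When $\hat d_1\le\gamma$, no index is active for any $\sigma^2\ge0$, which gives $\hat\sigma^2=\mathrm{tr}(\bm{S})/p$. Nonnegativity of $\hat\sigma^2$ holds since $\mathrm{tr}(\bm{S})-\sum_{k\le\hat L}\hat d_k\ge0$.
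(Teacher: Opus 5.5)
Your proof is correct and follows the paper's skeleton: profile out $\bm{\Lambda}$ by soft-thresholding the eigenvalues of $\hat{\bm{\Phi}}'\bm{S}\hat{\bm{\Phi}}$, then minimize the piecewise quadratic $p\sigma^4-2\sigma^2\mathrm{tr}(\bm{S})-\sum_k(\hat d_k-\sigma^2-\gamma)_+^2$. (The paper cites Corollary 1 of Tzeng and Huang for the first step, whereas you derive it from the PSD projection.) Where you differ is in how you match the active count with $\hat L$, which you do in the opposite direction. The paper starts from the minimizer, defines $\hat L^*=\max\{L:\hat d_L-\gamma>\hat\sigma^2\}$, asserts $\hat\sigma^2=g(\hat L^*)$, and argues $\hat L^*=\hat L$ by contradiction. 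You instead show that the candidate $g(\hat L)$ satisfies the first-order condition and conclude by convexity of the Huber-type profile.

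Your route is the sounder one. The paper's contradiction relies on ``$\hat L\ge\hat L^*$ implies $\hat d_{\hat L}\ge\hat d_{\hat L^*}$'', which is backwards for nonincreasing $\hat d_k$. Excluding $\hat L>\hat L^*$ therefore really needs your recursion $(p-L)g(L)=(p-L+1)g(L-1)-(\hat d_L-\gamma)$: it gives $\hat d_{\hat L^*+1}-\gamma>g(\hat L^*+1)$ if and only if $\hat d_{\hat L^*+1}-\gamma>g(\hat L^*)=\hat\sigma^2$, contradicting the maximality of $\hat L^*$. Your convexity and $C^1$ argument also supplies a justification the paper omits, namely that the constrained minimizer solves the stationarity equation with $\sigma^2\ge 0$.

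One edge case affects both arguments. By the same recursion, $\hat d_1-\gamma>g(1)$ if and only if $\hat d_1-\gamma>\mathrm{tr}(\bm{S})/p$. So when $\gamma<\hat d_1\le\gamma+\mathrm{tr}(\bm{S})/p$, the set defining $\hat L$ is empty. Your initial-segment argument handles this unchanged if you allow $\hat L=0$, which gives $\hat\sigma^2=\mathrm{tr}(\bm{S})/p$. The statement's case split on $\hat d_1>\gamma$, by contrast, leaves $\hat L$ undefined in that range.
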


With $\bm{\Lambda}$ estimated by $\hat{\bm{\Lambda}}=\big(\hat{\lambda}_{kk'}\big)_{K\times K}$ in (\ref{eq:ch4covariance.estimate}),
the proposed estimate of $C_\eta(\bm{s},\bm{s}^*)$ is
\begin{equation}
\hat{C}_\eta(\bm{s},\bm{s}^*)
=\sum_{k=1}^K\sum_{k'=1}^K \hat{\lambda}_{kk'}\,\hat{\varphi}_k(\bm{s})\hat{\varphi}_{k'}(\bm{s}^*),
\label{eq:ch4covariance.hat}
\end{equation}
where $\hat{\varphi}_k(\bm{s})$ is given in \eqref{eq:ch4basis}, and
the proposed estimate of $(\varphi^*_1(\bm{s}),\dots,\varphi^*_K(\bm{s}))$ is
\[
(\hat{\varphi}^*_1(\bm{s}),\dots,\hat{\varphi}^*_K(\bm{s}))=
(\hat{\varphi}_1(\bm{s}),\dots,\hat{\varphi}_K(\bm{s}))\hat{\bm{V}};\quad \bm{s}\in D.
\]
We consider $M$-fold CV to select $\gamma$.
As in the previous section, we partition the data
into $M$ parts, $\bm{Y}^{(1)},\dots,\bm{Y}^{(M)}$. For $m=1,\dots,M$, we estimate
$\mathrm{var}\big(\bm{Y}^{(-m)}\big)$ by $\hat{\bm{\Sigma}}^{(-m)}=\hat{\bm{\Phi}}^{(-m)}\hat{\bm{\Lambda}}^{(-m)}
\big(\hat{\bm{\Phi}}^{(-m)}\big)'+\big(\hat{\sigma}^2\big)^{(-m)}\bm{I}$ based on the remaining data $\bm{Y}^{(-m)}$ by removing $\bm{Y}^{(m)}$
from $\bm{Y}$, where $\hat{\bm{\Lambda}}^{(-m)}$, $\big(\hat{\sigma}^2\big)^{(-m)}$ and $\hat{\bm{\Phi}}^{(-m)}$ are the estimates of $\bm{\Lambda}$, $\sigma^2$ and $\bm{\Phi}$ based on $\bm{Y}^{(-m)}$,
and for notational simplicity, their dependences on the selected $(\tau_1,\tau_2)$ and $K$ are suppressed.
The proposed CV criterion is given by
\begin{equation}
\label{eq:ch4cv.gamma}
\mathrm{CV}_2(K,\gamma)= \frac{1}{M}\sum_{m=1}^{M}\big\|\bm{S}^{(m)} -\hat{\bm{\Phi}}^{(-m)}\hat{\bm{\Lambda}}^{(-m)}\big(\hat{\bm{\Phi}}^{(-m)}\big)'-(\hat{\sigma}^2)^{(-m)}\bm{I} \big\|^2_F\:,
\end{equation}
where $\bm{S}^{(m)}=(\bm{Y}^{(m)})'\bm{Y}^{(m)}/n$. Then the selected $\gamma$ by $\mathrm{CV}_2$ based on $K$ is
$\hat{\gamma}_K=\displaystyle\mathop{\arg\min}_{\gamma\geq 0}\mathrm{CV}_2(K,\gamma)$.

The dimension of eigen-space $K$, corresponding to the maximum rank of $\bm{\Lambda}\succeq\bm{0}$, could be
selected by traditional approaches based on a given proportion of total variation explained or
the scree plot of the sample eigenvalues. However, these approaches tend to be more subjective
and may not be effective for the covariance estimation purpose. We propose selecting $K$ using
$\mathrm{CV}_2$ of \eqref{eq:ch4cv.gamma} by subsequently increasing the value of $K$ from $K=1,2,\dots$, until no further
reduction of the $\mathrm{CV}_2$ value. Specifically, we select
\begin{equation}
\label{eq:ch4K.hat}
\hat{K}=\min\big\{K:\mathrm{CV}_2(K,\hat{\gamma}_K)\leq\mathrm{CV}_2(K+1,\hat{\gamma}_{K+1}),\; K=1,2,\dots\big\}.
\end{equation}

\section{Computation Algorithm}
\label{sec:ch4algorithm}

Solving \eqref{eq:ch4objective2} is a challenging problem especially when both the orthogonal constraint and
the $L_1$ penalty appear simultaneously.
Consequently, many regularized PCA approaches, such as sparse PCA \citep{spca}, do not cope with the orthogonal constraint.
We adopt the ADMM algorithm by decomposing the original constrained optimization problem
into small subproblems that can be easily and efficiently handled through an iterative procedure.
This type of algorithm was developed early in \citet{admm_2}, and was systematically studied by \citet{admm} more recently.

First, the optimization problem of \eqref{eq:ch4objective2} is transferred into the following equivalent problem by adding a $p\times K$
parameter matrix $\bm{Q}$:
\begin{align}\label{eq:ch4ADMM1}
& \min_{\bm{\Phi},\bm{Q}\in\mathbb{R}^{p\times K}} \|\bm{Y} - \bm{Y}\bm{\Phi}\bm{\Phi} '\|^2_F +\tau_1 \sum_{k=1}^K \bm{\phi}_{k}'\bm{\Omega}\bm{\phi}_k+
\tau_2 \sum_{k=1}^K\sum_{j=1}^{p}\left|\phi_{jk}\right|,
\end{align}
subject to $\bm{Q}'\bm{Q}=\bm{I}_K$, $\bm{\phi}'_1\bm{S}\bm{\phi}_1\geq\bm{\phi}'_2\bm{S}\bm{\phi}_2
\geq\cdots\geq\bm{\phi}'_K\bm{S}\bm{\phi}_K$, and a new constraint, $\bm{\Phi}=\bm{Q}$. Then the resulting
constrained optimization problem of \eqref{eq:ch4ADMM1} is solved using the augmented Lagrangian method with its Lagrangian given by
\begin{align*}
L(\bm{\Phi}, \bm{Q},\bm{\Gamma})
=&~ \|\bm{Y}-\bm{Y}\bm{\Phi}\bm{\Phi}'\|^2_F+\tau_1\sum_{k=1}^K \bm{\phi}_k'\bm{\Omega}\bm{\phi}_k
+\tau_2 \sum_{k=1}^K\sum_{j=1}^{p}|\phi_{jk}|\\
&~ + \mathrm{tr}(\bm{\Gamma}'(\bm{\Phi}-\bm{Q}))+\frac{\rho}{2}\|\bm{\Phi}-\bm{Q}\|^2_F\:,
\end{align*}
subject to $\bm{Q}'\bm{Q}=\bm{I}_K$ and
$\bm{\phi}'_1\bm{S}\bm{\phi}_1\geq\bm{\phi}'_2\bm{S}\bm{\phi}_2\geq\cdots\geq\bm{\phi}'_K\bm{S}\bm{\phi}_K$,
where $\bm{\Gamma}$ is a $p\times K$ matrix of the Lagrange multipliers, and
$\rho>0$ is a penalty parameter to facilitate convergence. Note that the value of $\rho$ does not affect the original optimization problem.
The ADMM algorithm iteratively updates one group of parameters at a time in both the primal and the dual spaces until convergence. Given the initial estimates, $\bm{Q}^{(0)}$ and $\bm{\Gamma}^{(0)}$ of $\bm{Q}$ and $\bm{\Gamma}$,
our ADMM algorithm consists of the following steps at the $\ell$-th iteration:
\begin{align}
\bm{\Phi}^{(\ell+1)}
=&~ \mathop{\arg\min}_{\bm{\Phi}}L\big(\bm{\Phi}, \bm{Q}^{(\ell)},\bm{\Gamma}^{(\ell)}\big)
\notag\\
=&~ \mathop{\arg\min}_{\bm{\Phi}} \sum_{k=1}^K \bigg\{\|\bm{z}_k^{(\ell)} - \bm{X}\bm{\phi}_k \|^2 + \sum_{j=1}^p \tau_2|\phi_{jk}|\bigg\}
\label{eq:ch4phi_1},\\
\bm{Q}^{(\ell+1)}
=&~ \mathop{\arg\min}_{\bm{Q}: \bm{Q}'\bm{Q}=\bm{I}_K}L\big(\bm{\Phi}^{(\ell+1)},\bm{Q},
\bm{\Gamma}^{(\ell)}\big)\,=\,\bm{U}^{(\ell)}\big(\bm{V}^{(\ell)}\big)',
\label{eq:ch4Q_1}\\
\bm{\Gamma}^{(\ell+1)}
=&~ \bm{\Gamma}^{(\ell) }+ \rho\left(\bm{\Phi}^{(\ell+1)}-\bm{Q}^{(\ell+1)}\right), \label{eq:ch4Gamma}
\end{align}
where $\bm{X} =(\tau_1\bm{\Omega}-\bm{Y}'\bm{Y}+\rho\bm{I}_p/2)^{1/2}$, $\bm{z}^{(\ell)}_k$ is the $k$-th column of $\bm{X}^{-1}(\rho\bm{Q}^{(\ell)}-\bm{\Gamma}^{(\ell)})/2$, $\bm{U}^{(\ell)}\bm{D}^{(\ell)}\big(\bm{V}^{(\ell)}\big)'$ is the singular value decomposition of
$\bm{\Phi}^{(\ell+1)}+\rho^{-1}\bm{\Gamma}^{(\ell)}$, and
$\rho$ must be chosen large enough (e.g., twice the maximum eigenvalue of $\bm{Y}'\bm{Y}$) to ensure that $\bm{X}$ is positive-definite.
Note that \eqref{eq:ch4phi_1} is simply a
Lasso problem (\citet{lasso}), which can be solved effectively using the coordinate descent algorithm \citep{coord}.

Except (\ref{eq:ch4phi_1}), the ADMM steps given by (\ref{eq:ch4phi_1})--(\ref{eq:ch4Gamma}) have closed-form solutions.
In fact, it is possible to develop an ADMM algorithm with complete closed-form updates by further incorporating (\ref{eq:ch4phi_1}) into another ADMM step.
Specifically, we can replace $\phi_{jk}$'s in the last term of \eqref{eq:ch4ADMM1} by new parameters $r_{jk}$'s, and then
add the constraint, $\phi_{jk}=r_{jk}$ for $j=1,\dots,p$ and $k=1,\dots,K$, to form an equivalent problem:
\begin{equation*}
\min_{\bm{\Phi},\bm{Q},\bm{R}} \|\bm{Y} - \bm{Y}\bm{\Phi}\bm{\Phi} '\|^2_F +\tau_1 \sum_{k=1}^K \bm{\phi}_k'\bm{\Omega}\bm{\phi}_k+
\tau_2 \sum_{k=1}^K\sum_{j=1}^{p}\left|r_{jk}\right|,
\end{equation*}
subject to $\bm{Q}'\bm{Q}=\bm{I}_K$, $\bm{\Phi}=\bm{Q}=\bm{R}$, and $\bm{\phi}'_1\bm{S}\bm{\phi}_1\geq\bm{\phi}'_2\bm{S}\bm{\phi}_2\geq\cdots\geq\bm{\phi}'_K\bm{S}\bm{\phi}_K$,  where
$r_{jk}$ is the $(j,k)$-th element of $\bm{R}$. Then the corresponding augmented Lagrangian is
\begin{align*}
L(\bm{\Phi}, \bm{Q},\bm{R},\bm{\Gamma}_1,\bm{\Gamma}_2)
=&~ \|\bm{Y}-\bm{Y}\bm{\Phi}\bm{\Phi}'\|^2_F+\tau_1\sum_{k=1}^K \bm{\phi}_k'\bm{\Omega}\bm{\phi}_k+
\tau_2 \sum_{k=1}^K\sum_{j=1}^{p}|r_{jk}|\nonumber\\
&~ + \mathrm{tr}(\bm{\Gamma}_1'(\bm{\Phi}-\bm{Q}))+ \mathrm{tr}(\bm{\Gamma}_2'(\bm{\Phi}-\bm{R}))\nonumber\\
&~ +\frac{\rho}{2}(\|\bm{\Phi}-\bm{Q}\|^2_F+\|\bm{\Phi}-\bm{R}\|^2_F),
\end{align*}
subject to $\bm{Q}'\bm{Q}=\bm{I}_K$ and
$\bm{\phi}'_1\bm{S}\bm{\phi}_1\geq\bm{\phi}'_2\bm{S}\bm{\phi}_2\geq\cdots\geq\bm{\phi}'_K\bm{S}\bm{\phi}_K$,
where $\bm{\Gamma}_1$ and $\bm{\Gamma}_2$ are $p\times K$ matrices of the Lagrange multipliers.
Then the ADMM steps at the $\ell$-th iteration are given by
\begin{align}
\bm{\Phi}^{(\ell+1)}
=&~ \mathop{\arg\min}_{\bm{\Phi}}L\big(\bm{\Phi}, \bm{Q}^{(\ell)},\bm{R}^{(\ell)},\bm{\Gamma}^{(\ell)}_1,\bm{\Gamma}^{(\ell)}_2\big)
\notag\\
=&~ \frac{1}{2}(\tau_1\bm{\Omega}+\rho\bm{I}_p-\bm{Y}'\bm{Y})^{-1}\big\{\rho\big(\bm{Q}^{(\ell)}+\bm{R}^{(\ell)}\big)-
\bm{\Gamma}_1^{(\ell)}-\bm{\Gamma}_2^{(\ell)}\big\},\label{eq:ch4phi}\\
\bm{Q}^{(\ell+1)}
=&~ \mathop{\arg\min}_{\bm{Q}: \bm{Q}'\bm{Q}=\bm{I}_K}L\big(\bm{\Phi}^{(\ell+1)},\bm{Q},\bm{R}^{(\ell)},
\bm{\Gamma}^{(\ell)}_1,\bm{\Gamma}^{(\ell)}_2\big)\notag\\
=&~ \bm{U}^{(\ell)}\big(\bm{V}^{(\ell)}\big)',
\label{eq:ch4Q}\\
\bm{R}^{(\ell+1)}
=&~ \mathop{\arg\min}_{\bm{R}} L\big(\bm{\Phi}^{(\ell+1)},\bm{Q}^{(\ell+1)},\bm{R},\bm{\Gamma}^{(\ell)}_1,\bm{\Gamma}^{(\ell)}_2\big)
\notag\\
=&~ \frac{1}{\rho}\mathcal{S}_{\tau_2}\big(\rho\bm{\Phi}^{(\ell+1)}+\bm{\Gamma}_{2}^{(\ell)}\big),
\label{eq:ch4R}\\
\bm{\Gamma}^{(\ell+1)}_1
=&~ \bm{\Gamma}^{(\ell) }_1+ \rho\left(\bm{\Phi}^{(\ell+1)}-\bm{Q}^{(\ell+1)}\right), \label{eq:ch4Gamma1}\\
\bm{\Gamma}^{(\ell+1)}_2
=&~ \bm{\Gamma}^{(\ell) }_2+ \rho\left(\bm{\Phi}^{(\ell+1)}-\bm{R}^{(\ell+1)}\right),\label{eq:ch4Gamma2}
\end{align}
where $\bm{R}^{(0)}$, $\bm{\Gamma}_1^{(0)}$ and $\bm{\Gamma}_2^{(0)}$ are initial
estimates of $\bm{R}$, $\bm{\Gamma}_1$ and $\bm{\Gamma}_2$, respectively,
$\bm{U}^{(\ell)}\bm{D}^{(\ell)}\big(\bm{V}^{(\ell)}\big)'$ is the singular value decomposition of
$\bm{\Phi}^{(\ell+1)}+\rho^{-1}\bm{\Gamma}_{1}^{(\ell)}$, and
$\mathcal{S}_{\tau_2}(\cdot)$ is the element-wise soft-thresholding operator with a threshold $\tau_2$
(i.e., the $(j,k)$-th element of $\mathcal{S}_{\tau_2}(\bm{M})$ is
$\mathrm{sign}(m_{jk})\max(|m_{jk}|-\tau_2,0)$, where $m_{jk}$ is the $(j,k)$-th element of $\bm{M}$).
Similarly to (\ref{eq:ch4phi_1}), $\rho$ must be chosen large enough to ensure that $\tau_1\bm{\Omega}+\rho\bm{I}_p-\bm{Y}'\bm{Y}$ in \eqref{eq:ch4phi}
is positive definite.

\section{Numerical Examples}
\label{sec:ch4numerical}

In this section, we conducted some simulation experiments in one-dimensional and two-dimensional spatial domains,
and applied SpatPCA to a real-world dataset. We compared the proposed SpatPCA with three methods: (1) PCA ($\tau_1=\tau_2=0$);
(2) SpatPCA with the smoothness penalty only ($\tau_2=0$);
(3) SpatPCA with the sparseness penalty only ($\tau_1=0$), based on
two loss functions. The first one measures the prediction ability in terms of an average squared
prediction error:
\begin{equation}
\label{eq:ch4loss_sim}
\mathrm{Loss}(\hat{\bm{\Phi}}) = \frac{1}{n}\sum_{i=1}^n\big\|\hat{\bm{\xi}}_i - \bm{\xi}_i\big\|^2,
\end{equation}
where $\bm{\Phi}$ is the true eigenvector matrix formed by the first $K$ eigenvectors and
\begin{equation}
\label{eq:ch4xi.hat}
\hat{\bm{\xi}}_i=\hat{\bm{V}}\,\mathrm{diag}\!\left(\frac{\hat{\lambda}^*_1}{\hat{\lambda}^*_1+\hat{\sigma}^2},\dots,\frac{\hat{\lambda}^*_K}{\hat{\lambda}^*_K+\hat{\sigma}^2}\right)\hat{\bm{V}}'\hat{\bm{\Phi}}'\bm{Y}_i
\end{equation}
is the empirical best linear unbiased predictor of $\bm{\xi}_i$ with the estimated parameters plugged in.
The second one concerns the goodness of covariance function estimation in terms of an average
squared estimation error:
\begin{equation}
\label{eq:ch4loss2_sim}
\mathrm{Loss}(\hat{C}_\eta) =\frac{1}{p^2} \sum_{i=1}^{p}\sum_{j=1}^{p}\big(\hat{C}_\eta(\bm{s}_i,\bm{s}_j)-C_\eta(\bm{s}_i,\bm{s}_j)\big)^2\:.
\end{equation}

We applied the ADMM algorithm given by (\ref{eq:ch4phi})--(\ref{eq:ch4Gamma2}) to compute the SpatPCA estimates. We chose $\rho$
equal to ten times the maximum eigenvalue of $\bm{Y}'\bm{Y}$.
The stopping criterion for the ADMM algorithm is
\[
\frac{1}{\sqrt{p}} \max\left( \|\bm{\Phi}^{(\ell+1)}-\bm{\Phi}^{(\ell)}\|_F,\|\bm{\Phi}^{(\ell+1)}-\bm{R}^{(\ell+1)}\|_F,
\|\bm{\Phi}^{(\ell+1)}-\bm{Q}^{(\ell+1)}\|_F \right)\leq 10^{-4}\:.
\]
An R package to carry out SpatPCA is available on the Comprehensive R Archive Network (CRAN).

\subsection{One-Dimensional Experiment}

In the first experiment,  we generated data according to (\ref{eq:ch4measurement}) with $K=2$,
$\bm{\xi}_i\sim N(\bm{0}, \mathrm{diag}(\lambda_1,\lambda_2))$, $\bm{\epsilon}_{i}\sim N(\bm{0}, \bm{I})$,
$n=100$, $p=50$, $\bm{s}_1,\dots,\bm{s}_{50}$ equally spaced in $D=[-5,5]$,
and
\begin{align}
\varphi_1(\bm{s})
=&~ \frac{1}{c_1}\exp(-(x_1^2+\cdots+x_d^2)),
\label{eq:ch4phi1_sim}\\
\varphi_2(\bm{s})
=&~ \frac{1}{c_2}x_1\cdots x_d\exp(-(x_1^2+\cdots+x_d^2)),
\label{eq:ch4phi2_sim}
\end{align}
where $\bm{s}=(x_1,\dots,x_d)'$, $c_1$ and $c_2$ are normalization constants such that
$\|\bm{\phi}_1\|_2=\|\bm{\phi}_2\|_2=1$, and $d=1$.
We considered three pairs of $(\lambda_1,\lambda_2)\in\{(9,0),(1,0),(9,4)\}$
with different strengths of signals, and applied the proposed
SpatPCA with $K\in\{1,2,5\}$ and $\hat{K}$ selected from \eqref{eq:ch4K.hat}, resulting in 12 different combinations.
For each combination, we considered $11$ values of $\tau_1$
(including $0$, and 10 values from $1$ to $10^3$ equally spaced on the log scale) and $31$ values of $\tau_2$
(including $0$, and 30 values from $1$ to $10^3$ equally spaced on the log scale).
But instead of performing a two-dimensional optimization by selecting among all possible pairs of $(\tau_1,\tau_2)$,
we applied a more efficient two-step procedure involving only one-dimensional optimization.
First, we selected among $11$ values of $\tau_1$ by fixing $\tau_2 = 0$ using 5-fold CV of \eqref{eq:ch4cv}
with the initial estimate of $\hat{\bm{\Phi}}_{\tau_1,0}$ given by the first $K$ eigenvectors of $\bm{Y}'\bm{Y}-\tau_1\bm{\Omega}$ as its columns.
Note that this initial estimate is actually the true estimate $\hat{\bm{\Phi}}_{\tau_1,0}$ when $\bm{Y}'\bm{Y}-\tau_1\bm{\Omega}\succeq\bm{0}$.
Then we selected among $31$ values of $\tau_2$ with the selected $\tau_1$ using 5-fold CV of \eqref{eq:ch4cv}.
For covariance function estimation, we selected the tuning parameter $\gamma$
among $11$ values of $\gamma$ using 5-fold CV of \eqref{eq:ch4cv.gamma}, including $0$, and 10 values
from $1$ to $\hat{d}_1$ equally spaced on the log scale, where $\hat{d}_1$ is the largest eigenvalue of $\hat{\bm{\Phi}}'\bm{S}\hat{\bm{\Phi}}$.

Figure~\ref{fig:ch4est_d1} shows the estimates of $\varphi_1(\cdot)$ and $\varphi_2(\cdot)$ for the four methods
based on three different combinations of eigenvalues. Each case contains four estimated functions based on four randomly generated datasets.
As expected, the PCA estimates, which consider no spatial structure, are very noisy, particularly when the signal-to-noise ratio is small.
By considering only the smoothness penalty with $\tau_2=0$,
the resulting estimates are much less noisy. But we can still see some bias, particularly around the two ends,
at which the true values are approximate zeros.
On the other hand, by considering only the sparseness penalty with $\tau_1=0$,
the resulting estimates, while not as noisy as those from PCA, are still noisy despite that
the estimates are shrunk to zeros at some locations.
Overall, our SpatPCA estimates are very close to the targets for all cases
even when the signal-to-noise ratio is small, indicating the effectiveness of regularization.

\begin{figure}[p]
\centering
\small $\hat{\varphi}_1(\cdot)$ based on $(\lambda_1,\lambda_2)=(9,0)$ and $K=1$\\[0.25em]
\includegraphics[width=\textwidth]{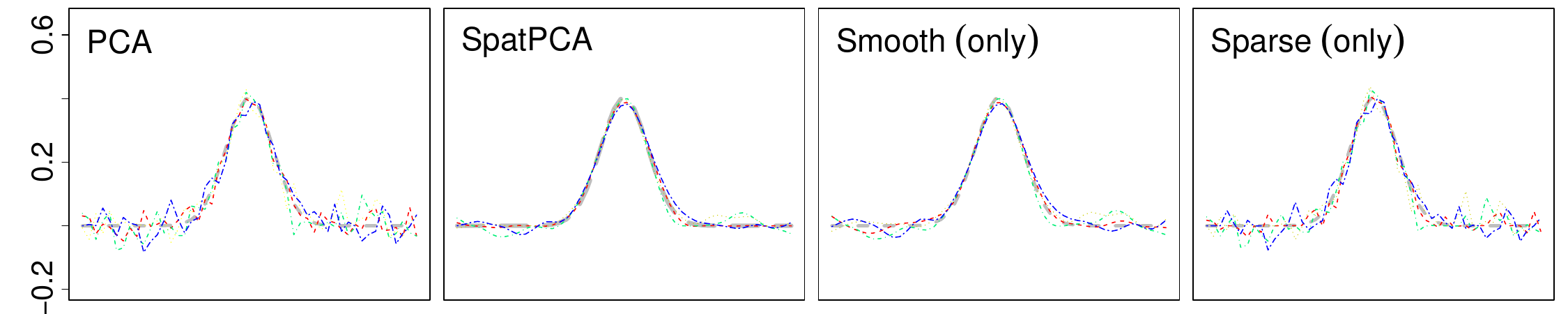}\\[1.0em]
\small $\hat{\varphi}_1(\cdot)$ based on $(\lambda_1,\lambda_2)=(1,0)$ and $K=1$\\[0.25em]
\includegraphics[width=\textwidth]{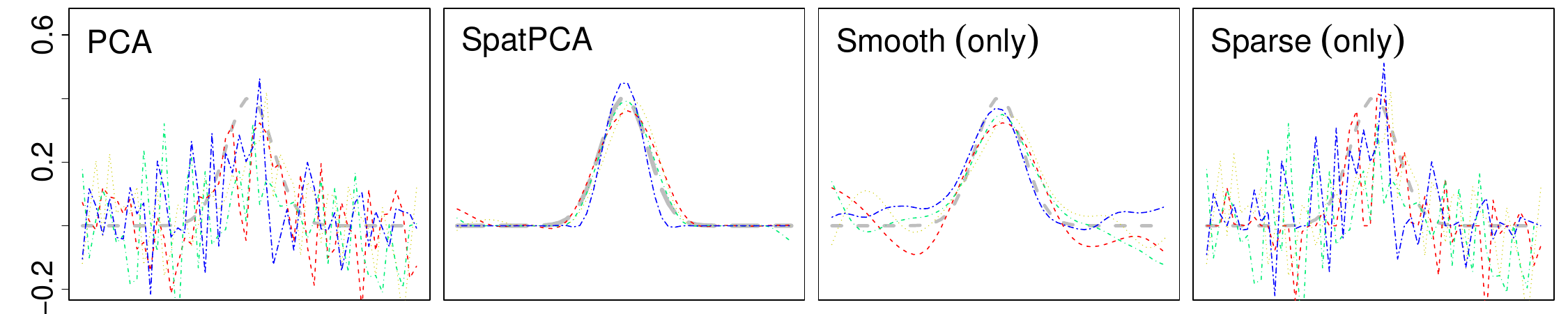}\\[1.0em]
\small $\hat{\varphi}_1(\cdot)$ based on $(\lambda_1,\lambda_2)=(9,4)$ and $K=2$\\[0.25em]
\includegraphics[width=\textwidth]{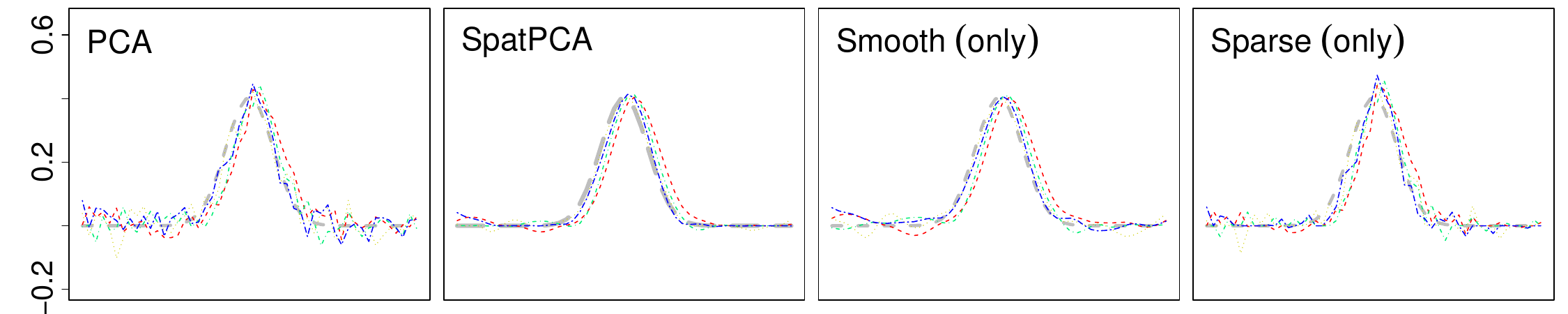}\\[1.0em]
\small $\hat{\varphi}_2(\cdot)$ based on $(\lambda_1,\lambda_2)=(9,4)$ and $K=2$\\[0.25em]
\includegraphics[width=\textwidth]{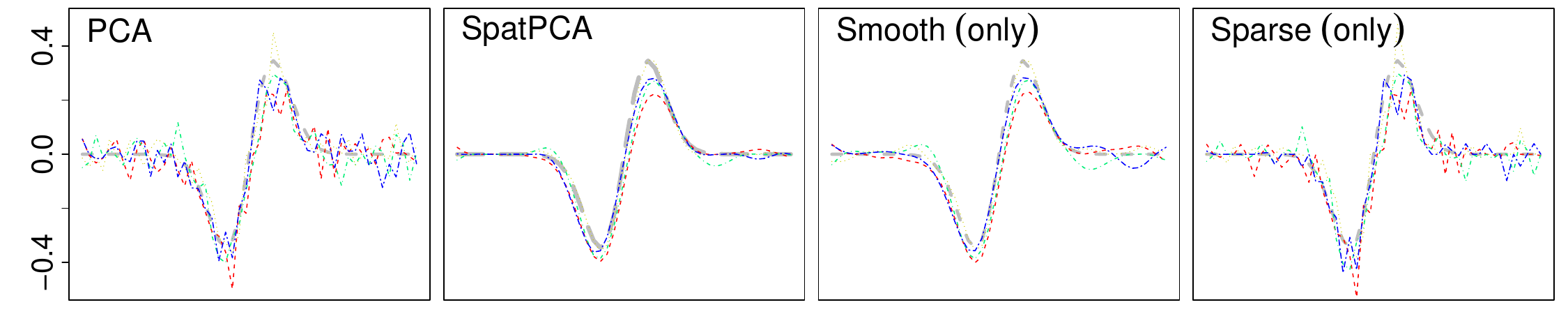}
\caption{Estimates of $\varphi_1(\cdot)$ and $\varphi_2(\cdot)$ from various methods for three different combinations of eigenvalues. Each panel consists of four estimates (in four different colors) corresponding to four randomly generated datasets, where the dotted grey line is the true eigenfunction.}
\label{fig:ch4est_d1}
\end{figure}

Figure~\ref{fig:ch4cov_d1} shows the covariance function estimates for the four methods based on a randomly generated dataset.
The proposed SpatPCA can be seen to perform considerably better than the other methods for all cases by being able to
reconstruct the underlying nonstationary spatial covariance functions without having noticeable visual artifacts.

\begin{figure}[tbhp]
\centering
\begin{tabular}{@{}c@{}}
\small $(\lambda_1,\lambda_2)=(9,0)$ and $K=1$ \\
\includegraphics[width=0.95\textwidth]{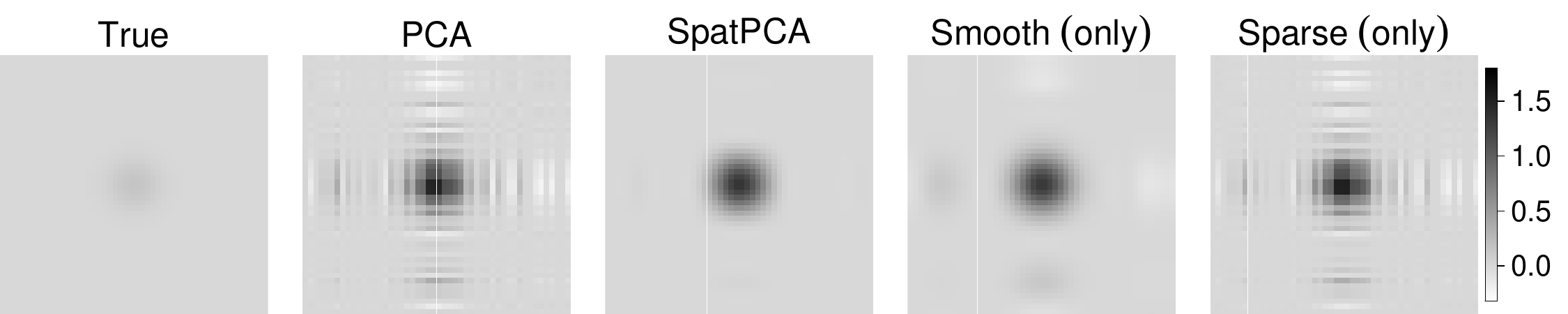} \\[0.4em]
\small $(\lambda_1,\lambda_2)=(1,0)$ and $K=1$ \\
\includegraphics[width=0.95\textwidth]{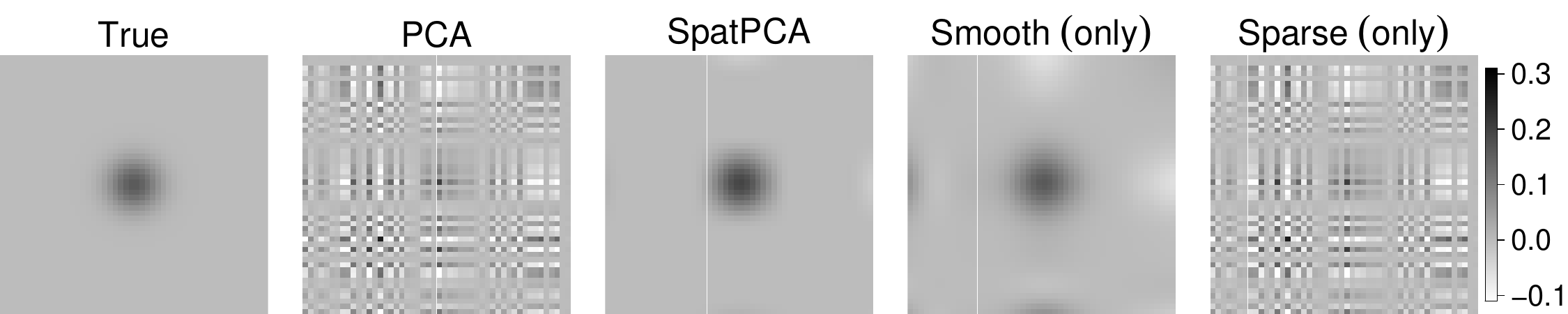} \\[0.4em]
\small $(\lambda_1,\lambda_2)=(9,4)$ and $K=2$ \\
\includegraphics[width=0.95\textwidth]{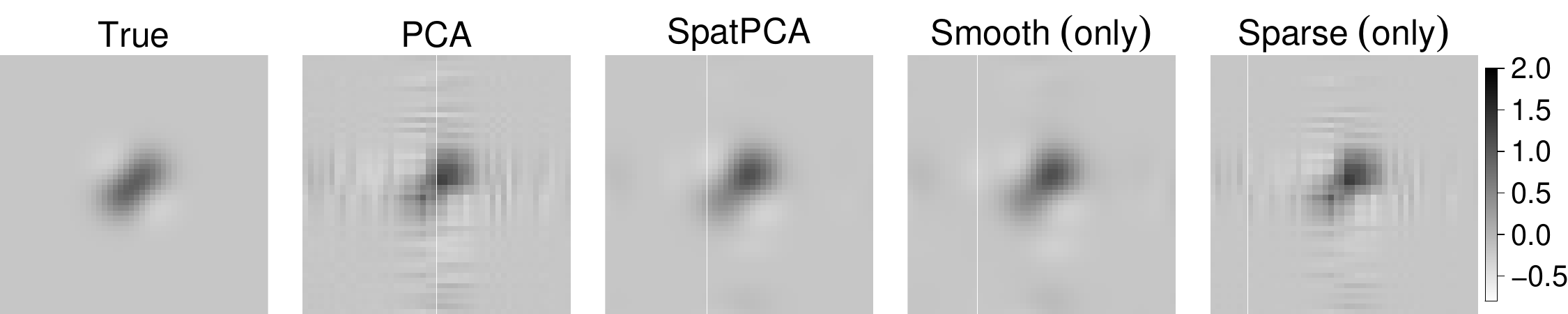}
\end{tabular}
\caption{True covariance functions and their estimates from various methods for three different combinations of eigenvalues.}
\label{fig:ch4cov_d1}
\end{figure}

Finally, the performance of the four methods in terms of the loss functions \eqref{eq:ch4loss_sim} and \eqref{eq:ch4loss2_sim} is shown in Figures~\ref{fig:ch4box_d1_loss1} and \ref{fig:ch4box_d1} respectively based on $50$
simulation replicates. Once again, SpatPCA outperforms all the other methods in all cases.

\begin{figure}[tbhp]
\centering
{\footnotesize
\begin{tabular}{@{}ccc@{}}
$(\lambda_1,\lambda_2)=(9,0)$, $K=1$ & $(\lambda_1,\lambda_2)=(9,0)$, $K=2$ & $(\lambda_1,\lambda_2)=(9,0)$, $K=5$ \\
\includegraphics[width=0.32\textwidth]{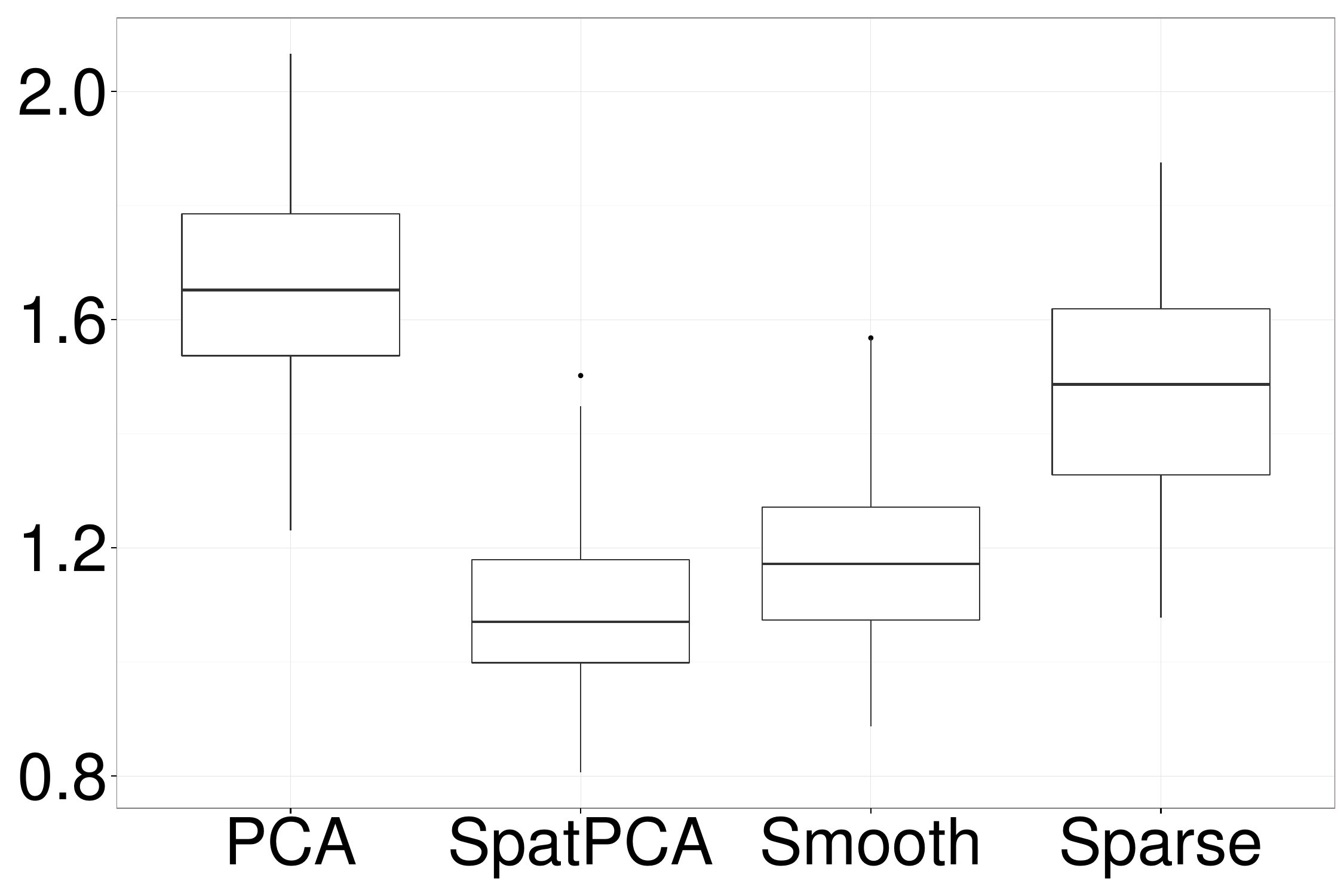} &
\includegraphics[width=0.32\textwidth]{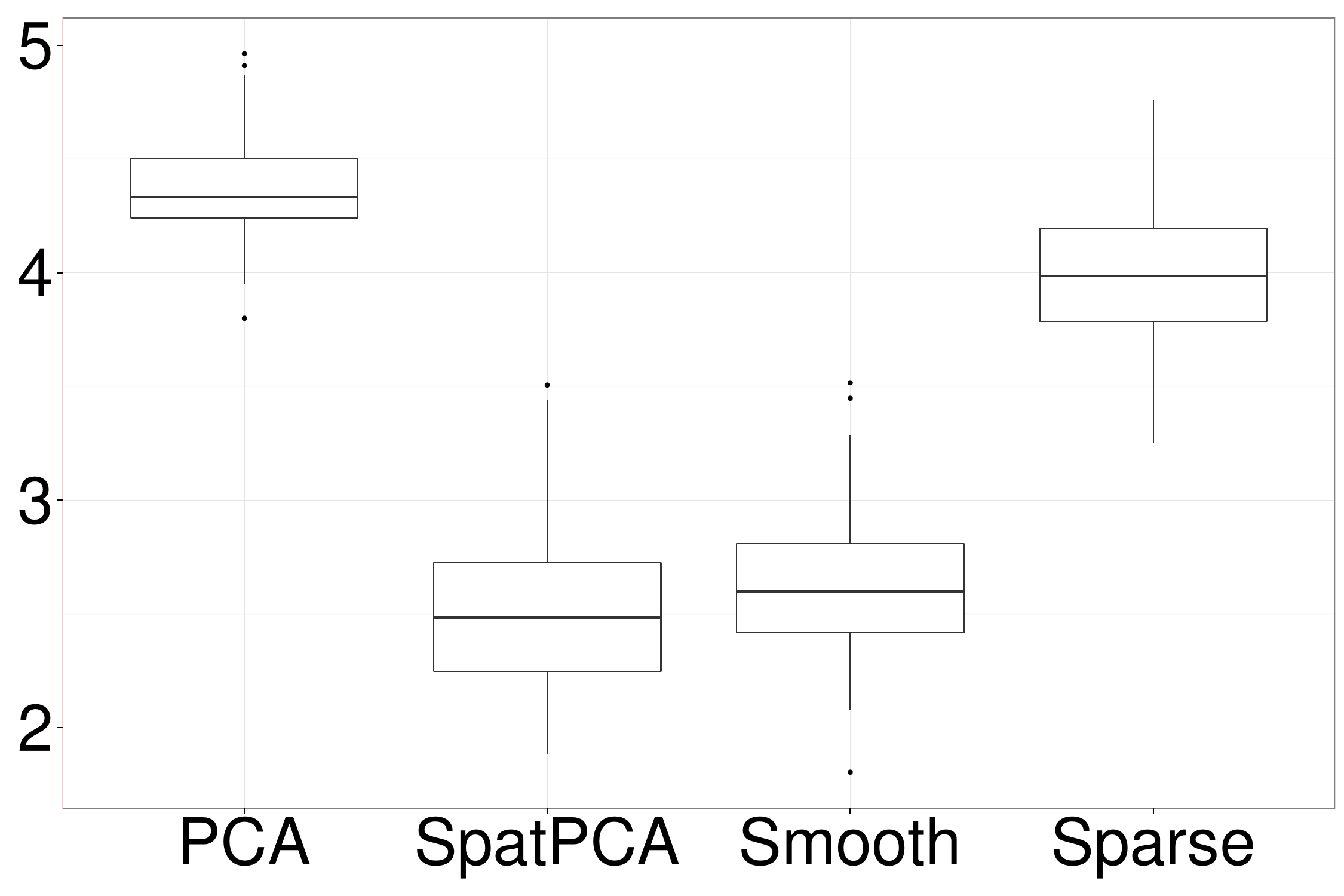} &
\includegraphics[width=0.32\textwidth]{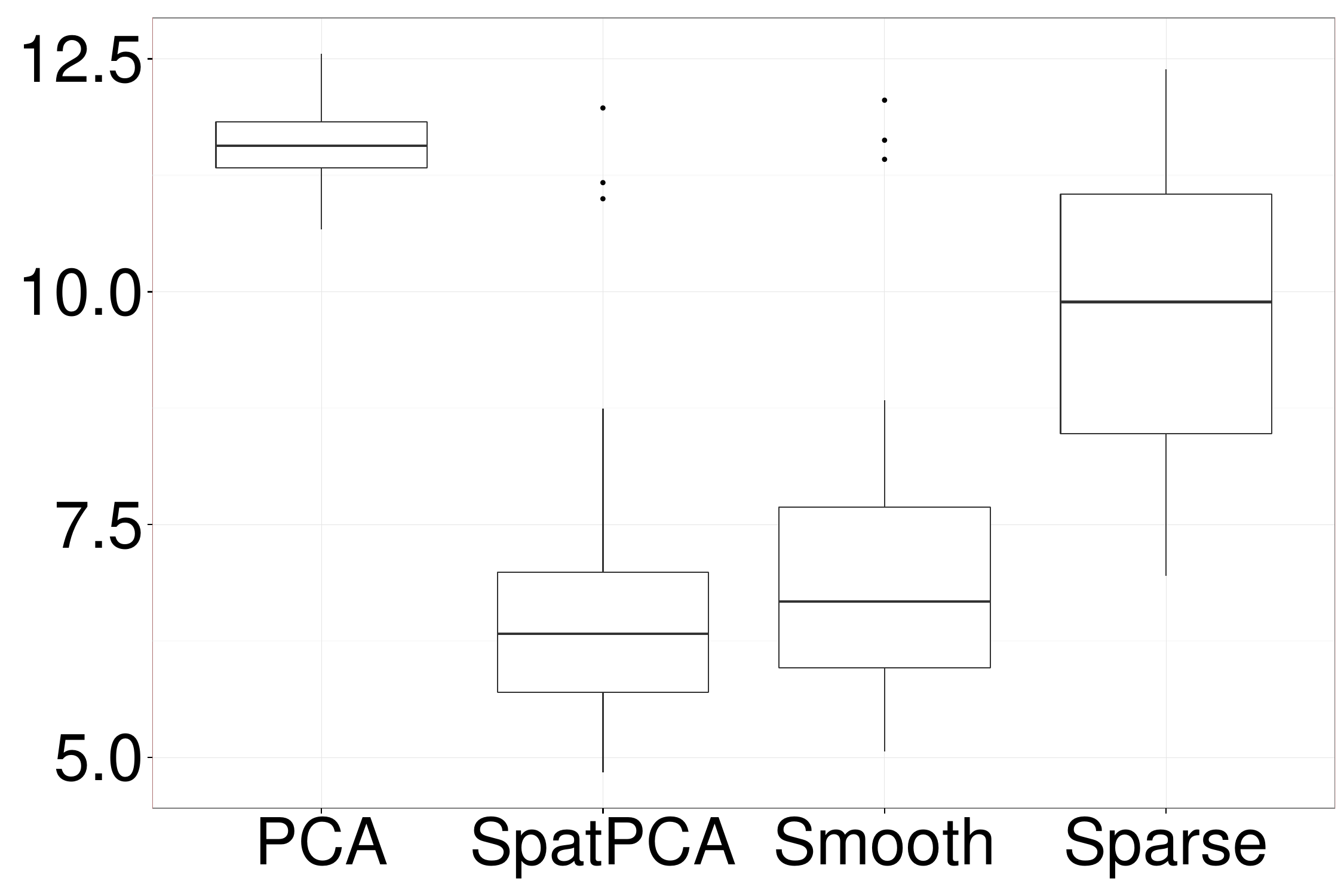} \\[0.4em]
$(\lambda_1,\lambda_2)=(1,0)$, $K=1$ & $(\lambda_1,\lambda_2)=(1,0)$, $K=2$ & $(\lambda_1,\lambda_2)=(1,0)$, $K=5$ \\
\includegraphics[width=0.32\textwidth]{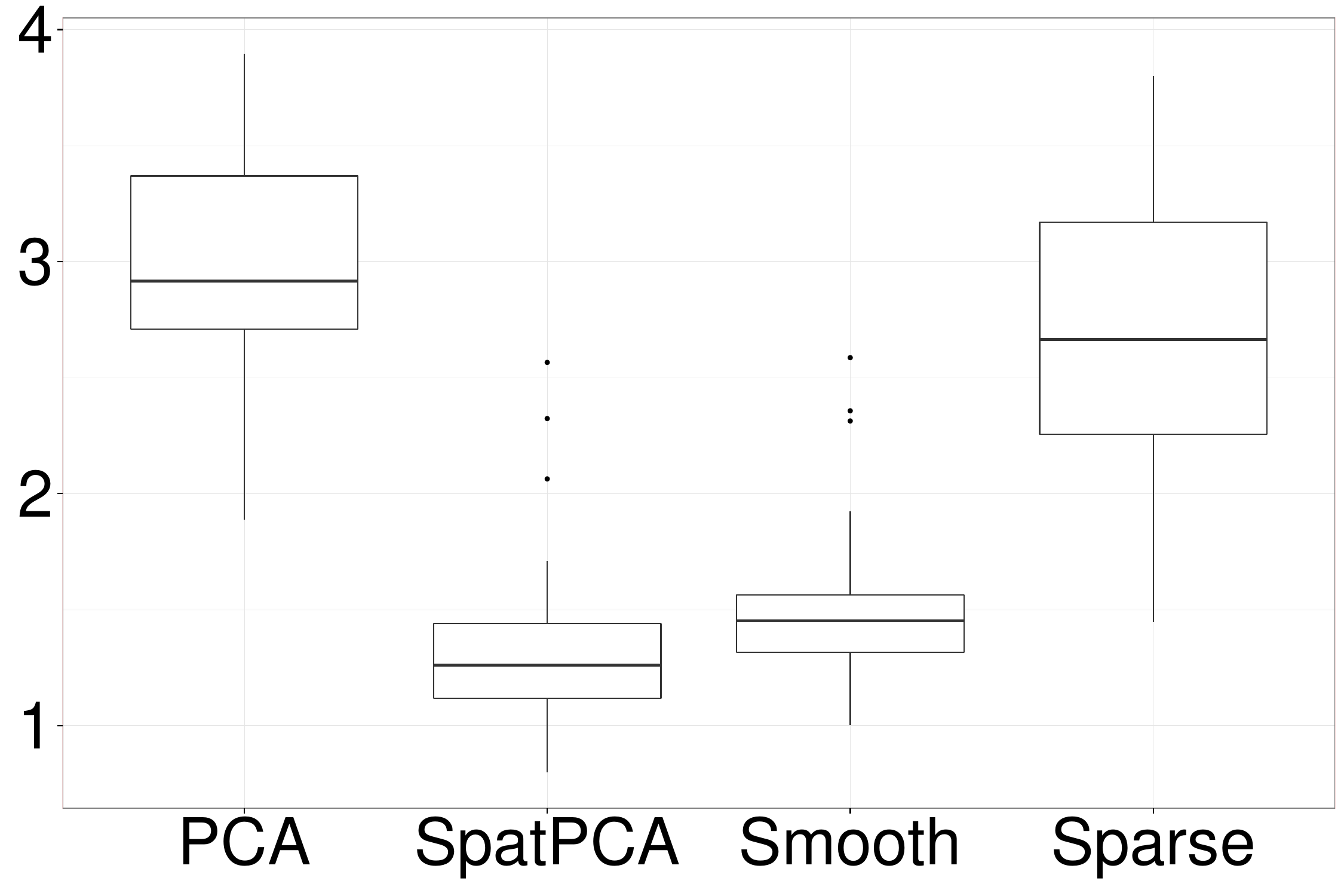} &
\includegraphics[width=0.32\textwidth]{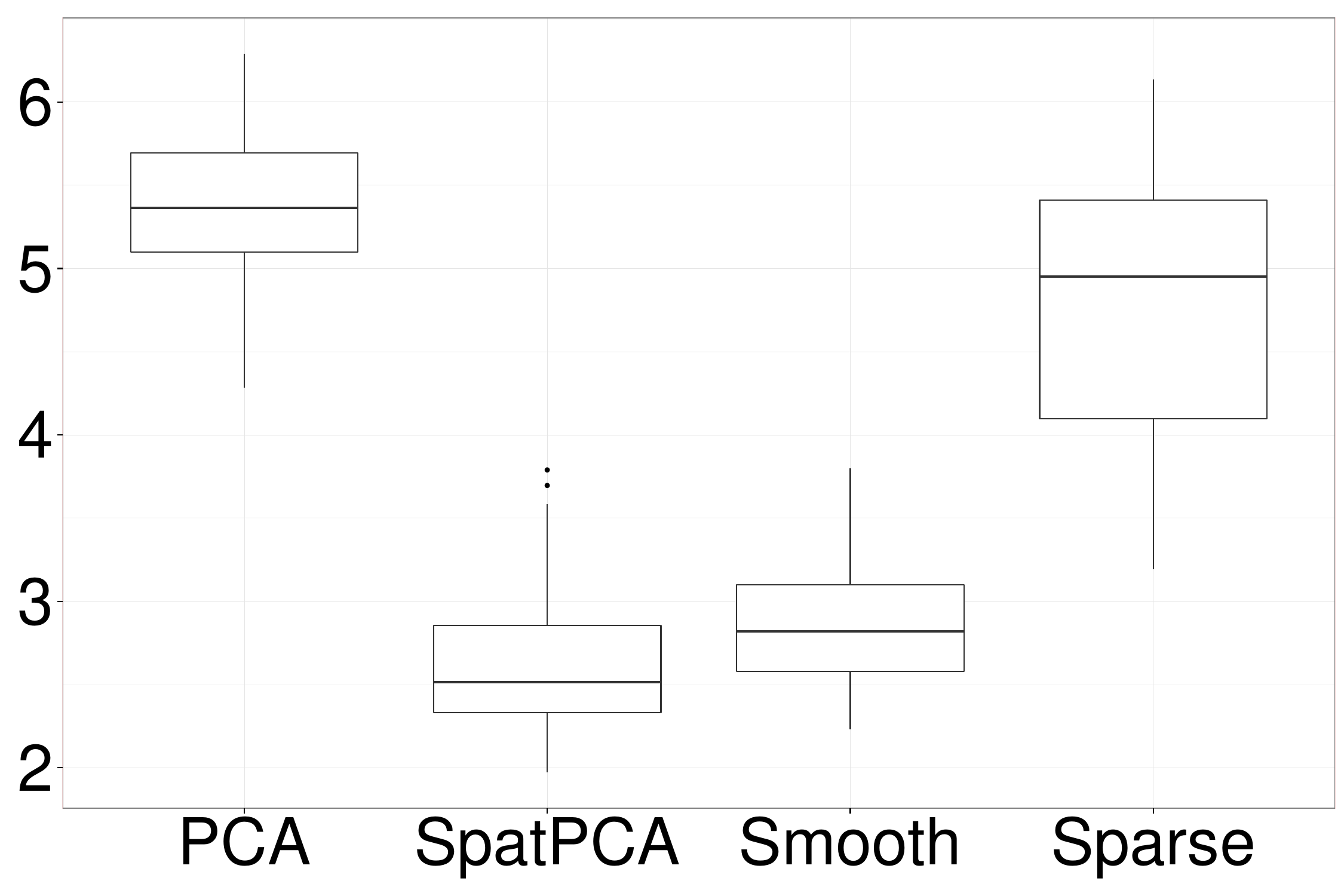} &
\includegraphics[width=0.32\textwidth]{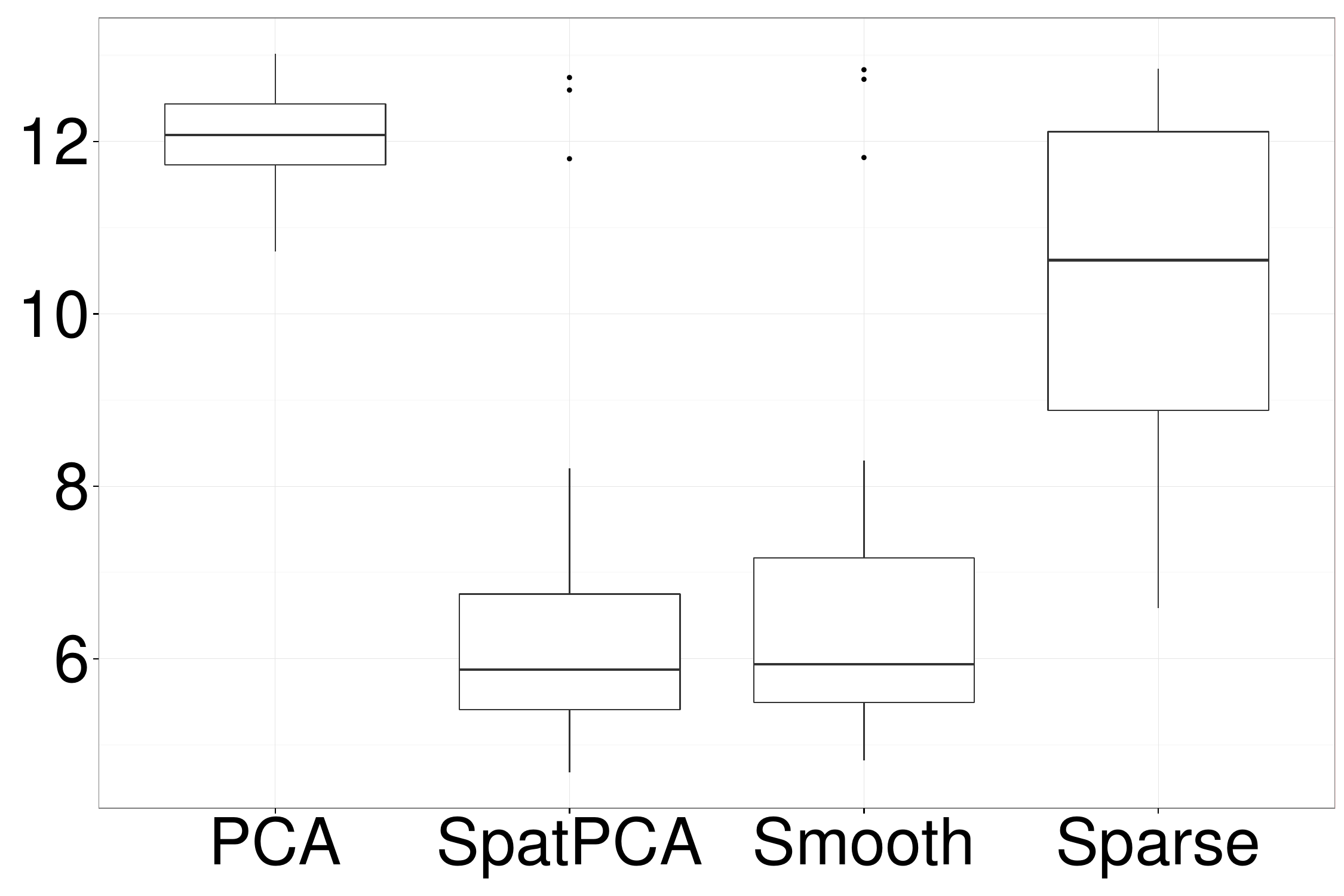} \\[0.4em]
$(\lambda_1,\lambda_2)=(9,4)$, $K=1$ & $(\lambda_1,\lambda_2)=(9,4)$, $K=2$ & $(\lambda_1,\lambda_2)=(9,4)$, $K=5$ \\
\includegraphics[width=0.32\textwidth]{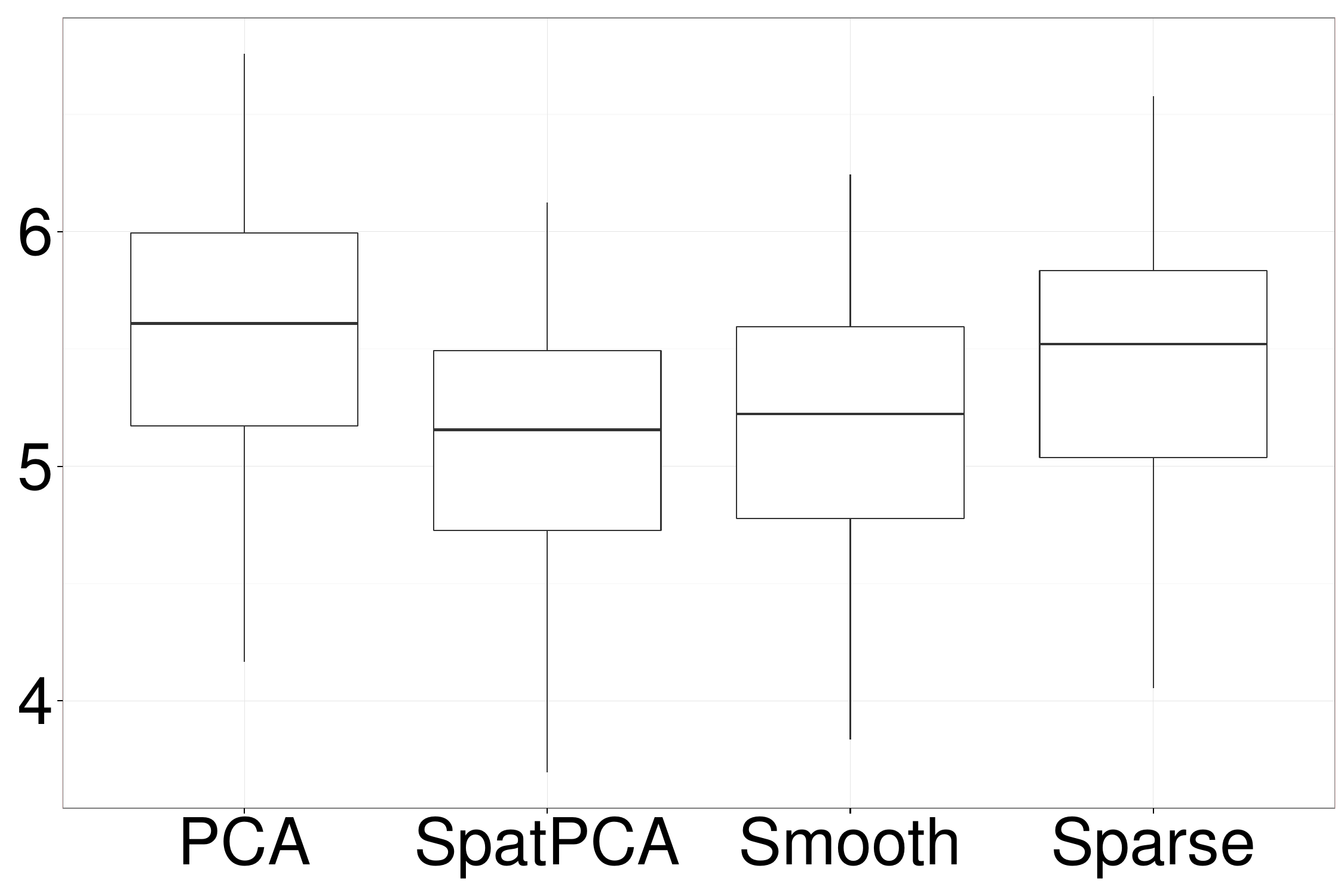} &
\includegraphics[width=0.32\textwidth]{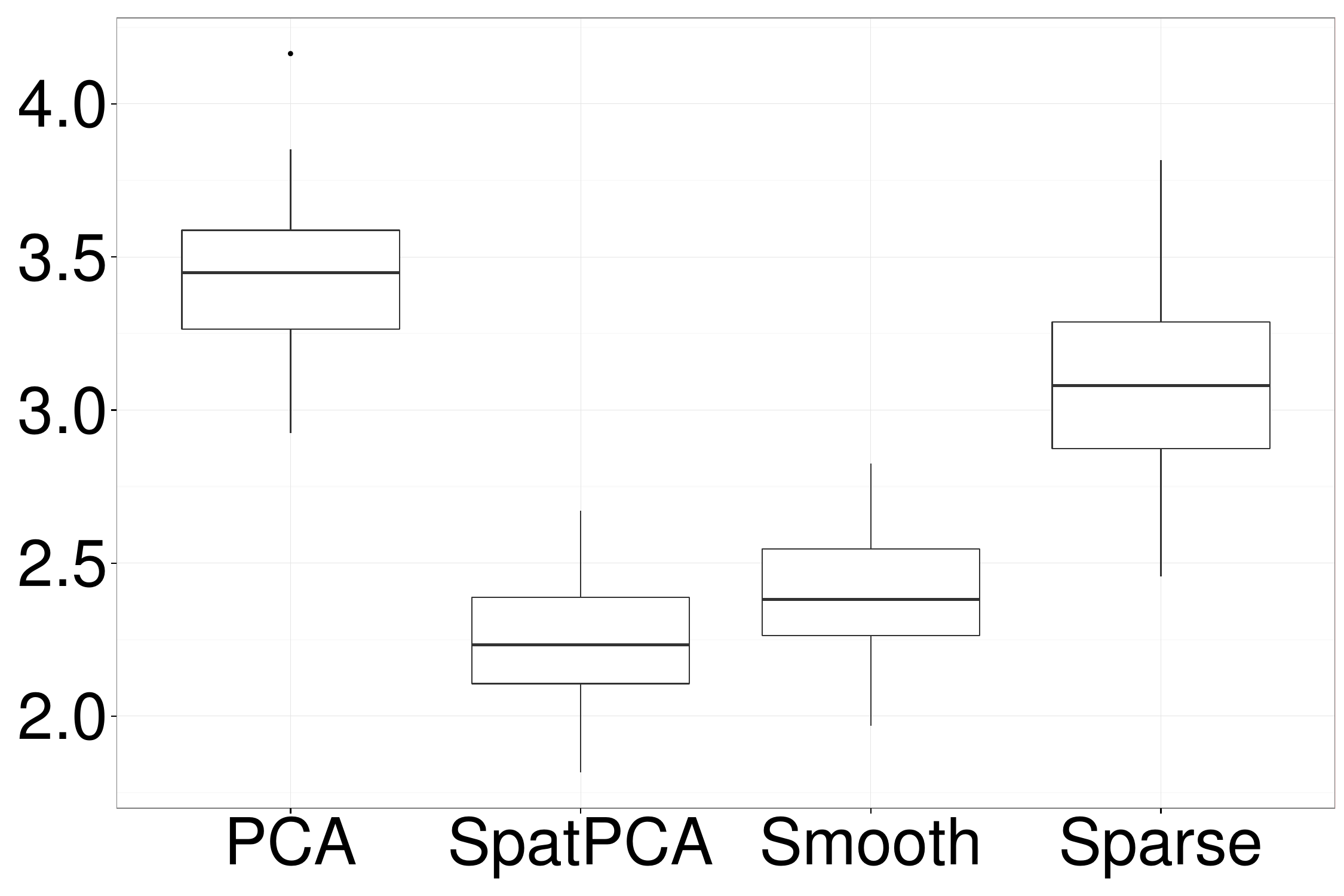} &
\includegraphics[width=0.32\textwidth]{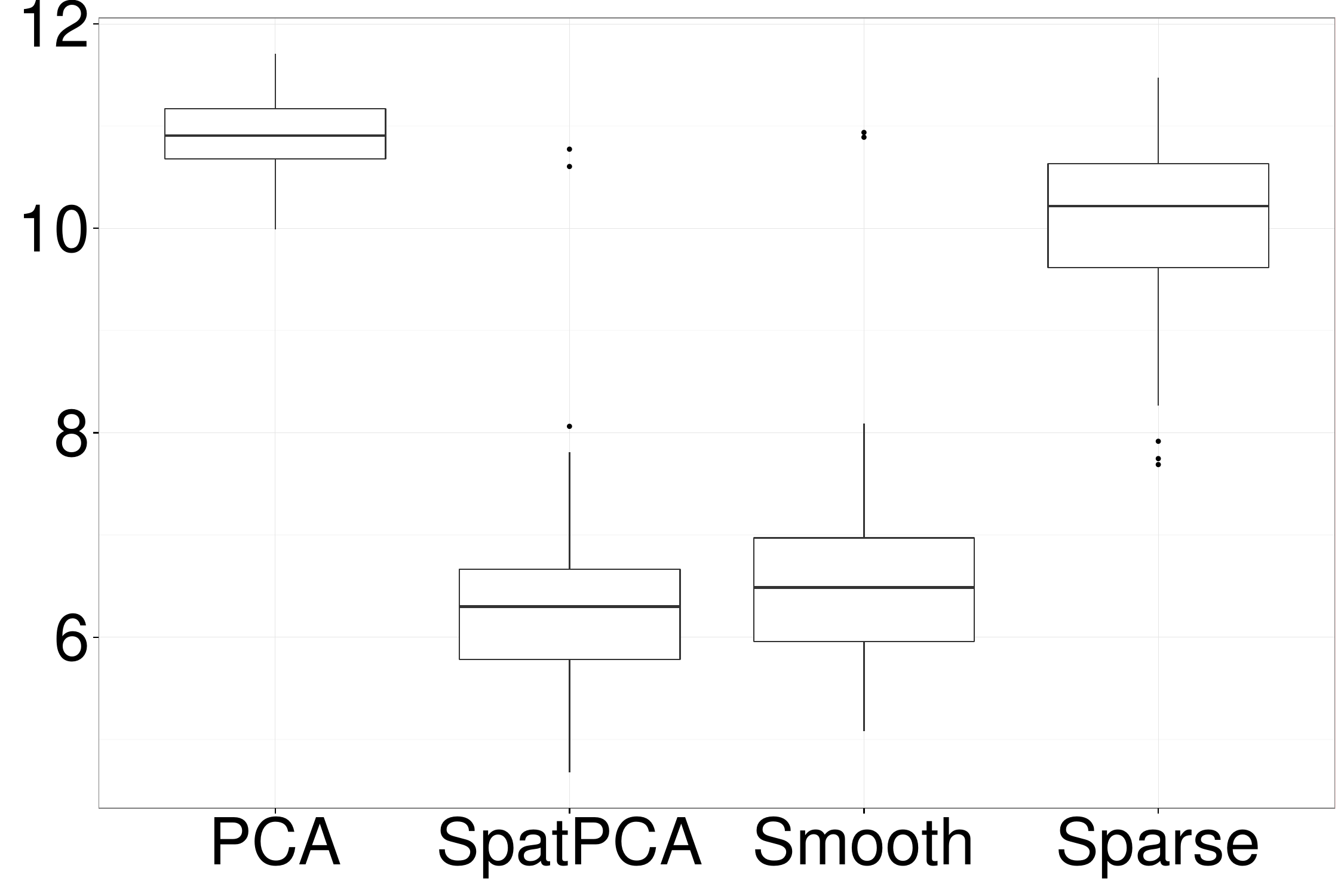}
\end{tabular}}
\caption{Boxplots of average squared prediction errors of (\ref{eq:ch4loss_sim}) for various methods based on 50 simulation replicates.}
\label{fig:ch4box_d1_loss1}
\end{figure}

\begin{figure}[tbhp]
\centering
{\footnotesize
\begin{tabular}{@{}ccc@{}}
$(\lambda_1,\lambda_2)=(9,0)$, $K=1$ & $(\lambda_1,\lambda_2)=(9,0)$, $K=2$ & $(\lambda_1,\lambda_2)=(9,0)$, $K=5$ \\
\includegraphics[width=0.32\textwidth]{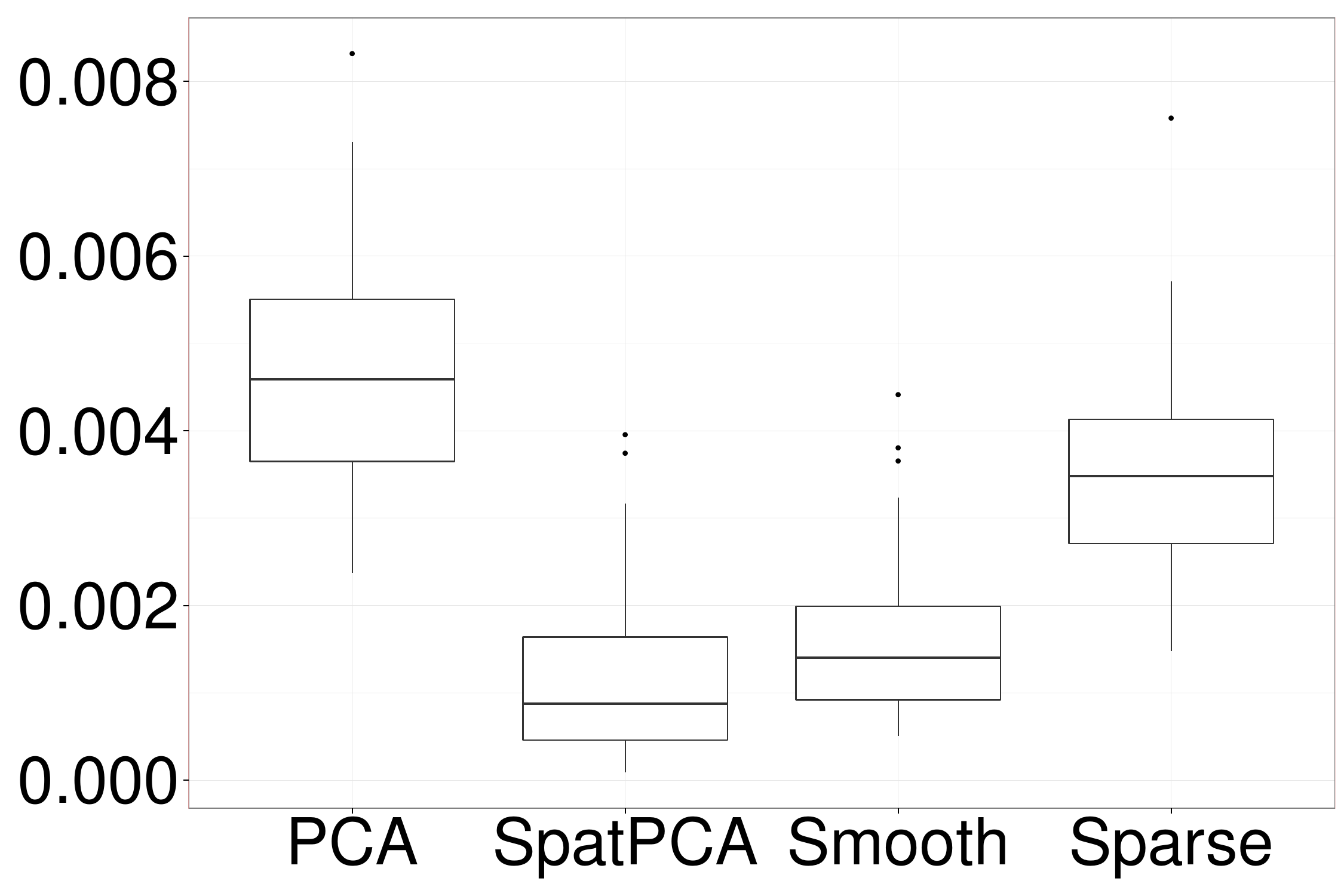} &
\includegraphics[width=0.32\textwidth]{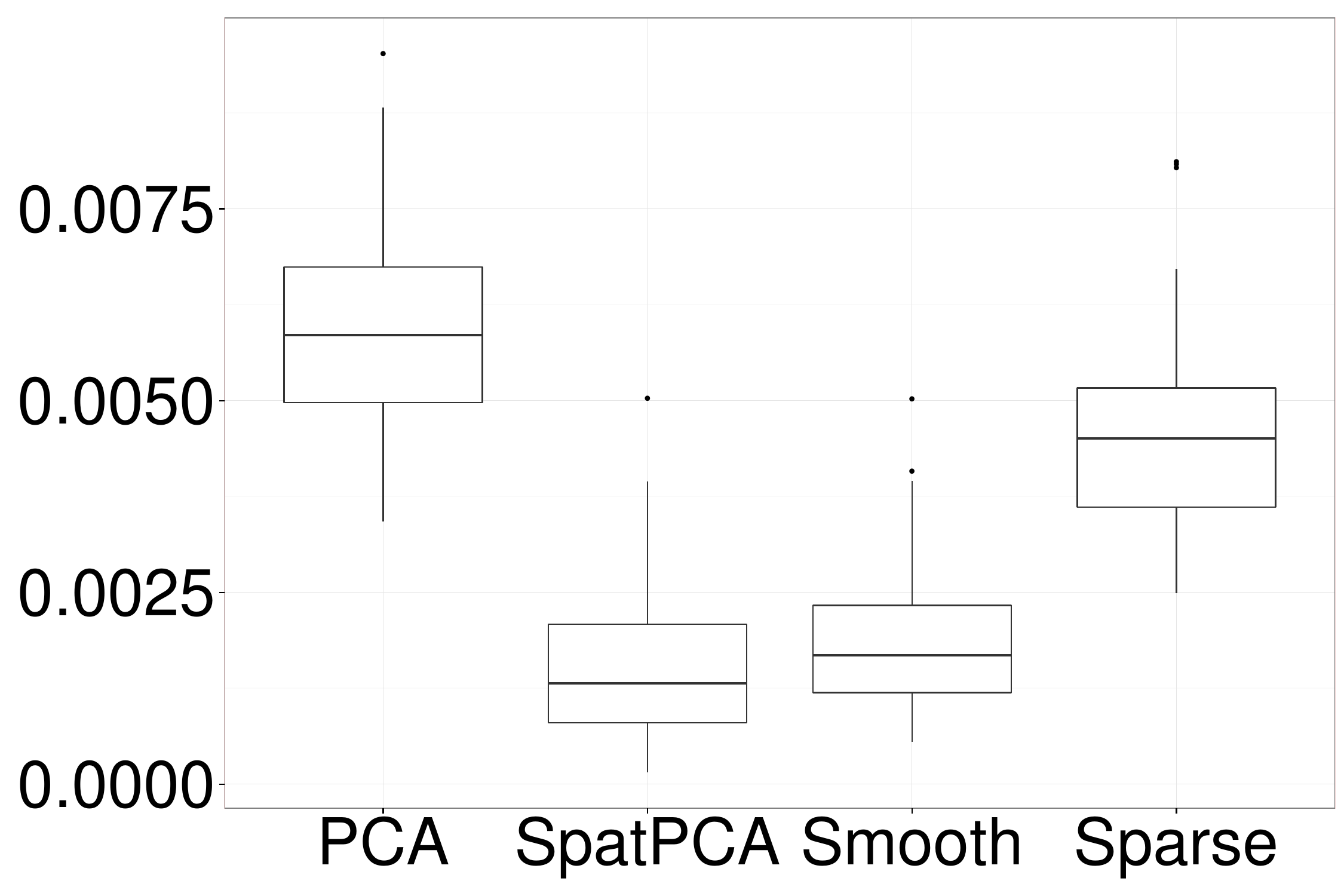} &
\includegraphics[width=0.32\textwidth]{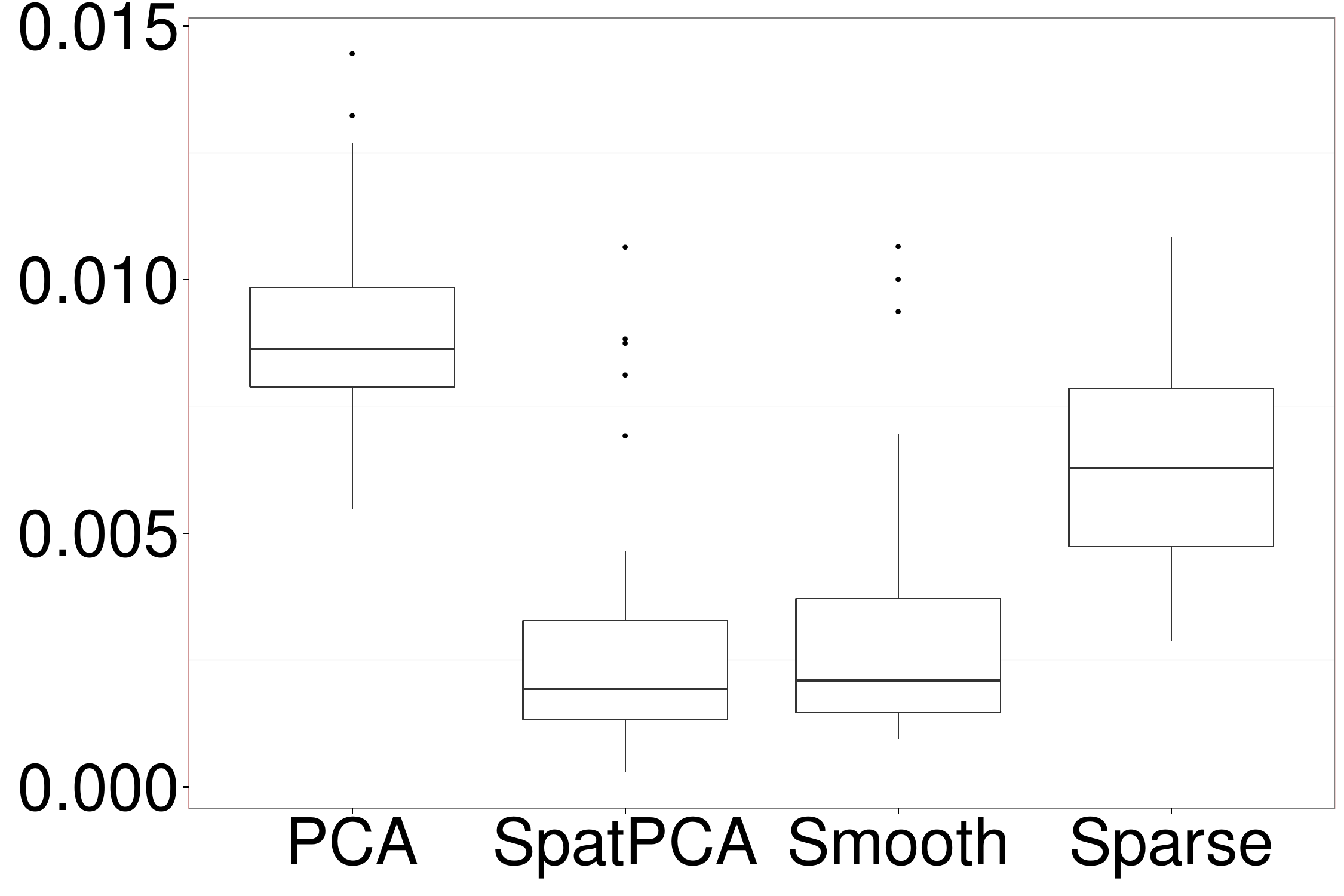} \\[0.4em]
$(\lambda_1,\lambda_2)=(1,0)$, $K=1$ & $(\lambda_1,\lambda_2)=(1,0)$, $K=2$ & $(\lambda_1,\lambda_2)=(1,0)$, $K=5$ \\
\includegraphics[width=0.32\textwidth]{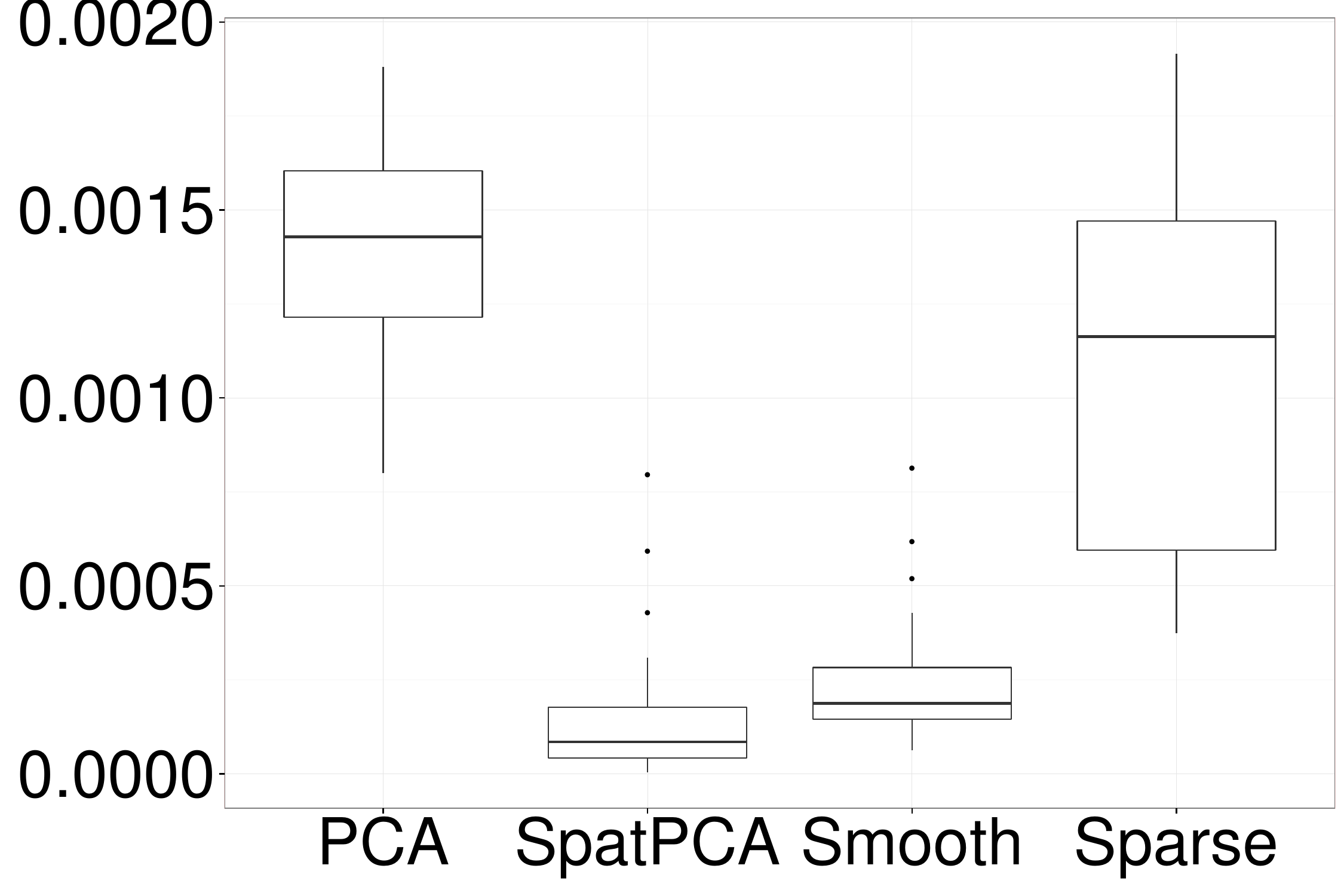} &
\includegraphics[width=0.32\textwidth]{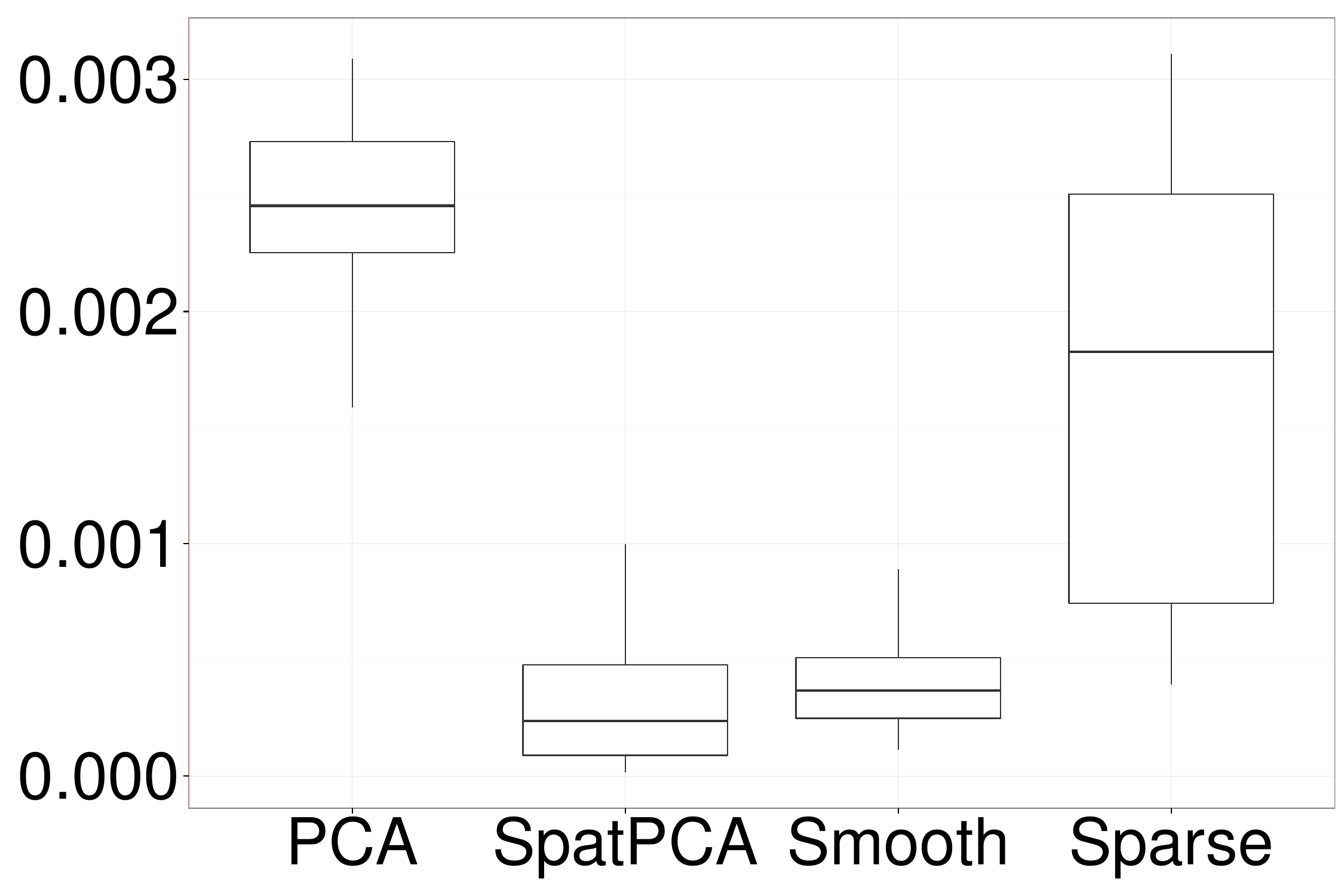} &
\includegraphics[width=0.32\textwidth]{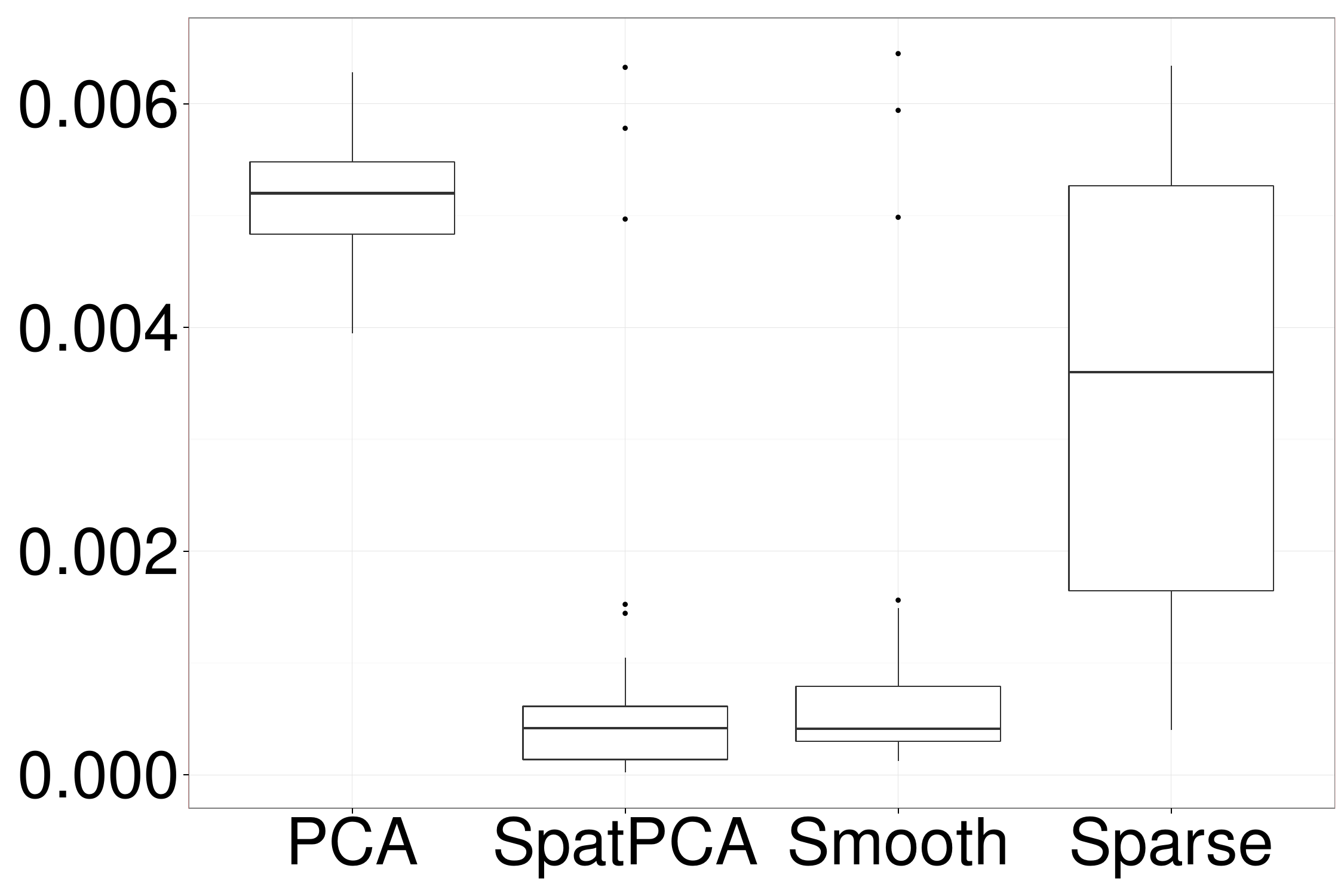} \\[0.4em]
$(\lambda_1,\lambda_2)=(9,4)$, $K=1$ & $(\lambda_1,\lambda_2)=(9,4)$, $K=2$ & $(\lambda_1,\lambda_2)=(9,4)$, $K=5$ \\
\includegraphics[width=0.32\textwidth]{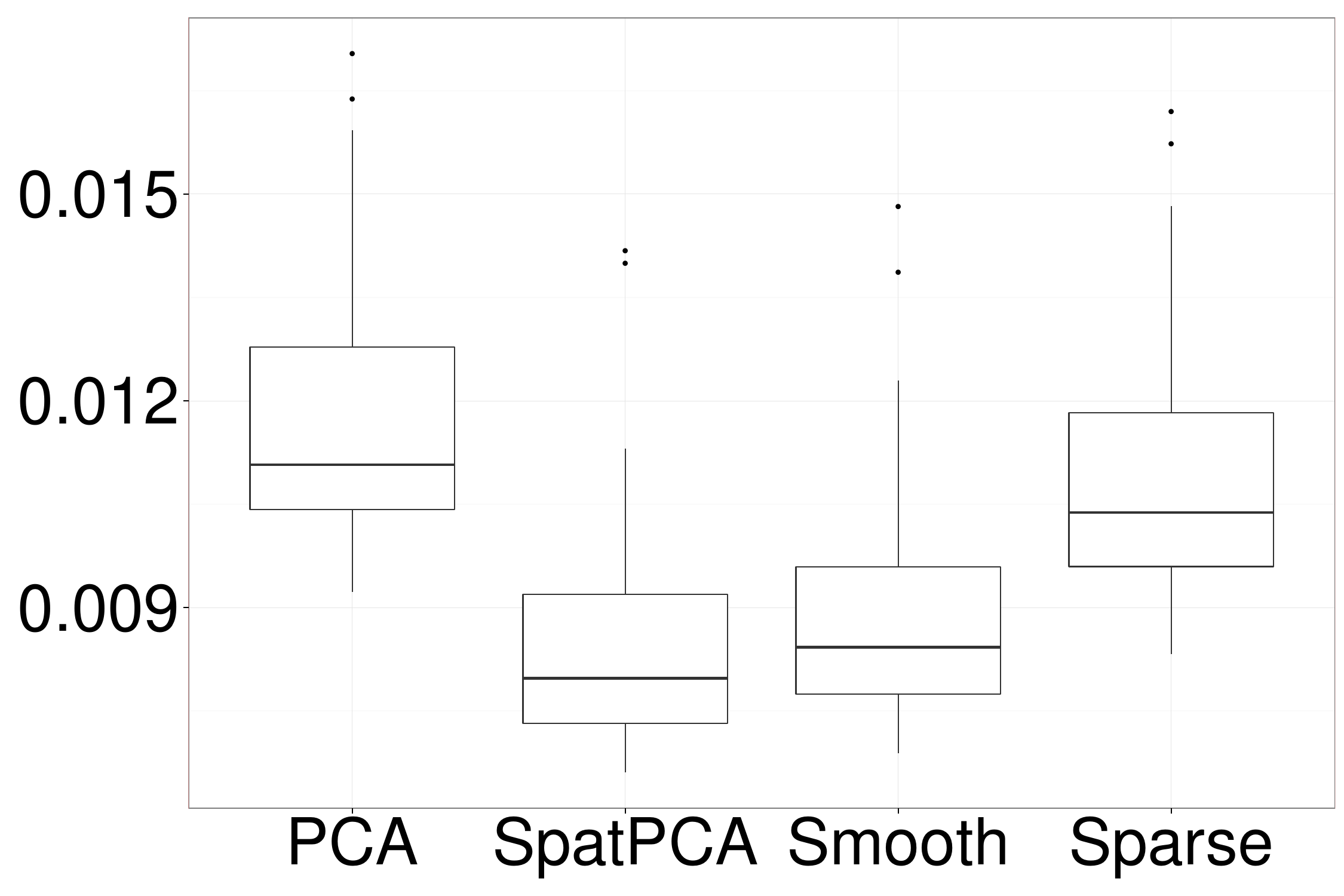} &
\includegraphics[width=0.32\textwidth]{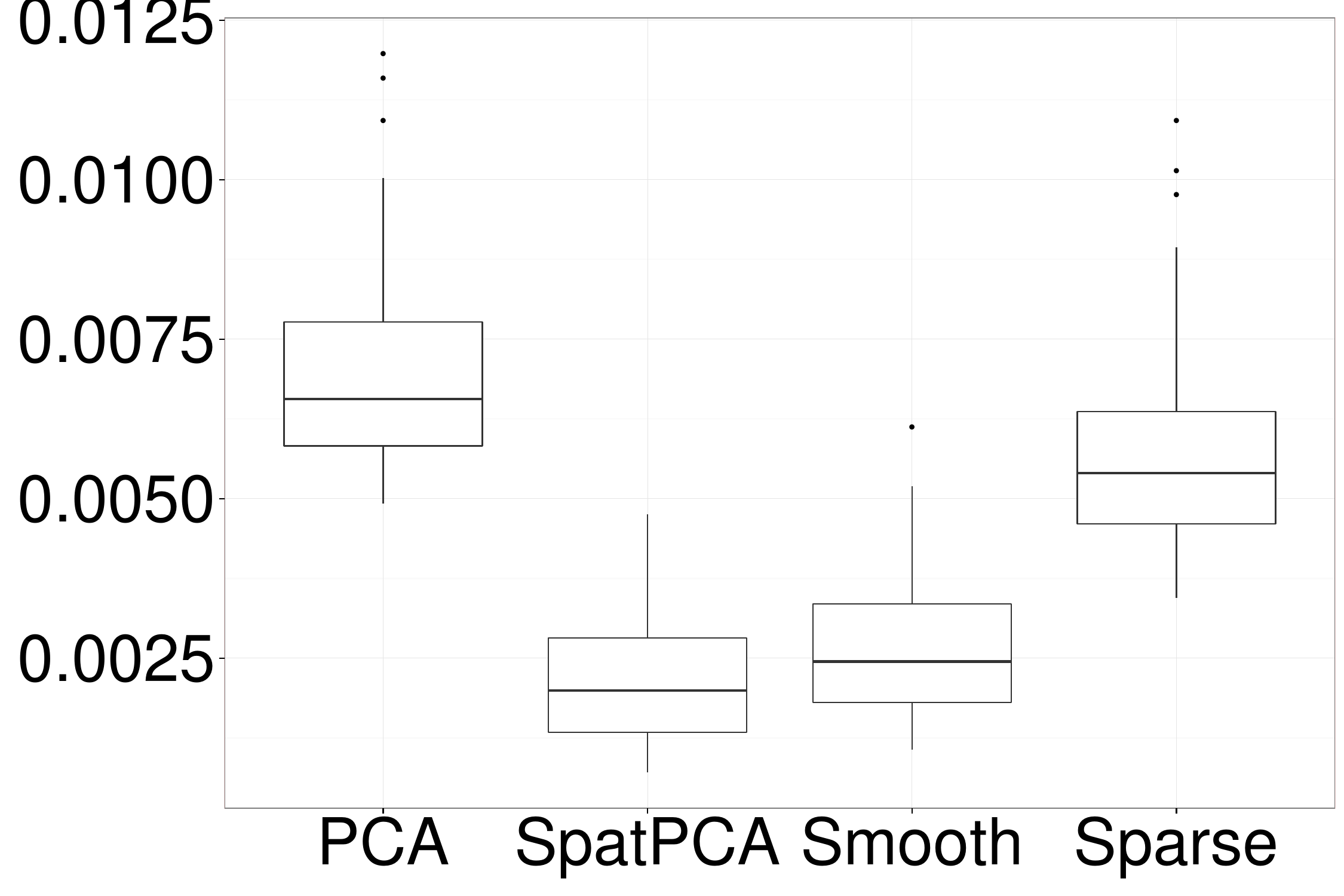} &
\includegraphics[width=0.32\textwidth]{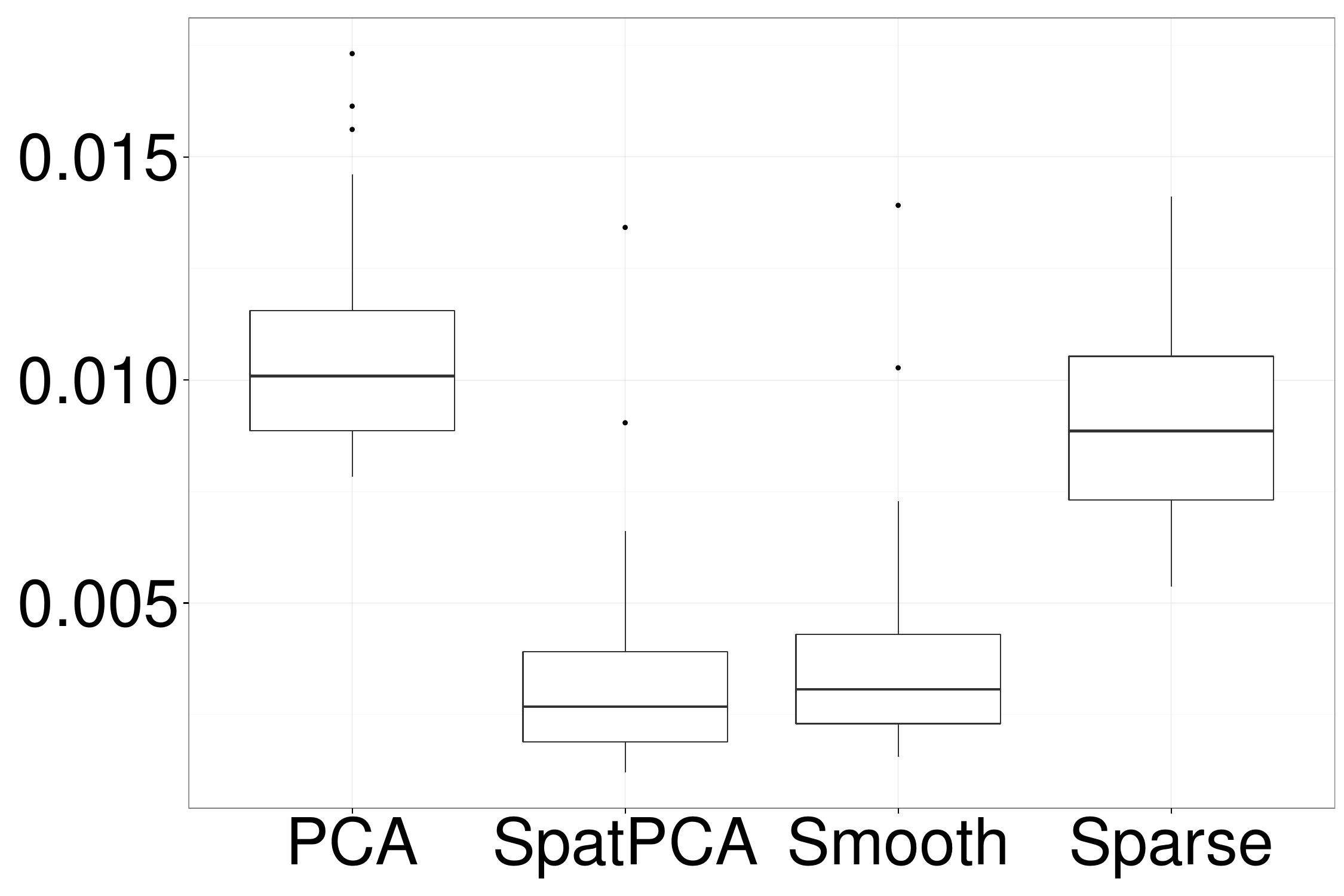}
\end{tabular}}
\caption{Boxplots of average squared estimation errors of (\ref{eq:ch4loss2_sim}) for various methods based on 50 simulation replicates.}
\label{fig:ch4box_d1}
\end{figure}

\subsection{Two-Dimensional Experiment I}

We considered a two-dimensional experiment by generating data according to (\ref{eq:ch4measurement}) with $K=2$,
$\bm{\xi}_i\sim N(\bm{0}, \mathrm{diag}(\lambda_1,\lambda_2))$, $\bm{\epsilon}_i\sim N(\bm{0},\bm{I})$, $n=500$, $\bm{s}_1,\dots,\bm{s}_p$
regularly spaced at $p=20^2$ locations in $D=[-5,5]^2$.
Here $\varphi_1(\cdot)$ and $\varphi_2(\cdot)$ are given by (\ref{eq:ch4phi1_sim}) and (\ref{eq:ch4phi2_sim}) with $d=2$
(see the left panels in Figure~\ref{fig:ch4est_d2}).

As in the one-dimensional experiment,
we considered three pairs of $(\lambda_1,\lambda_2)\in\{(9,0),(1,0),(9,4)\}$, and applied the proposed
SpatPCA with $K\in\{1,2,5\}$ and $\hat{K}$ selected from \eqref{eq:ch4K.hat}.
We used 5-fold CV of \eqref{eq:ch4cv} and a two-step procedure to select among the same 11 values of $\tau_1$
and $31$ values of $\tau_2$.
Similarly, we used 5-fold CV of \eqref{eq:ch4cv.gamma} to select among the same 11 values of $\gamma$
for covariance function estimation.

Figure~\ref{fig:ch4est_d2} shows the estimates of $\varphi_1(\cdot)$ and $\varphi_2(\cdot)$ from the four methods
for various cases based on a randomly generated dataset.
The performance of the four methods in terms of the loss functions \eqref{eq:ch4loss_sim} and \eqref{eq:ch4loss2_sim} is summarized in
Figure~\ref{fig:ch4box_d2_loss1} and Figure~\ref{fig:ch4box_d2} respectively.
Similarly to the one-dimensional examples, SpatPCA performs significantly better than all the other methods in all cases.

\begin{figure}[tbhp]
\centering
\begin{tabular}{@{}c@{}}
\small $\hat{\varphi}_1(\cdot)$ based on $(\lambda_1,\lambda_2)=(9,0)$ and $K=1$ \\
\includegraphics[width=0.95\textwidth]{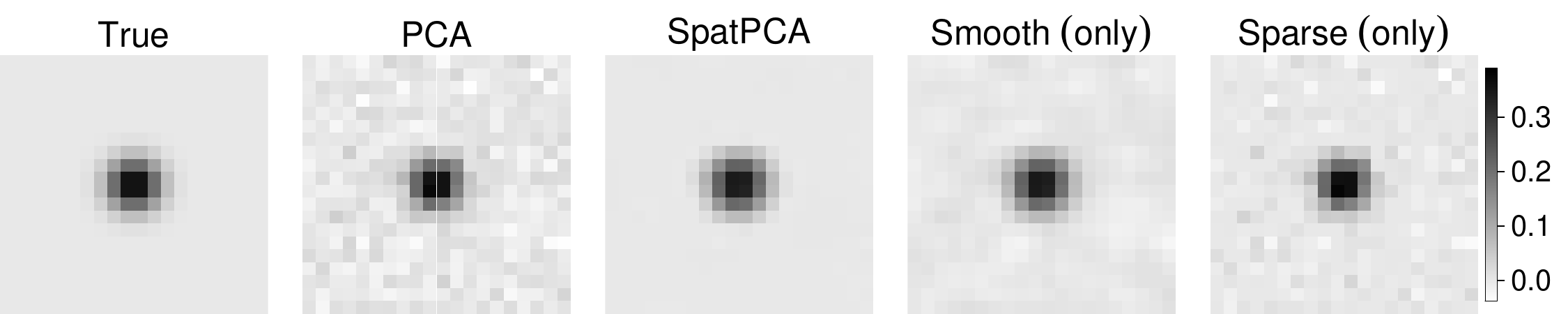} \\[0.4em]
\small $\hat{\varphi}_1(\cdot)$ based on $(\lambda_1,\lambda_2)=(1,0)$ and $K=1$ \\
\includegraphics[width=0.95\textwidth]{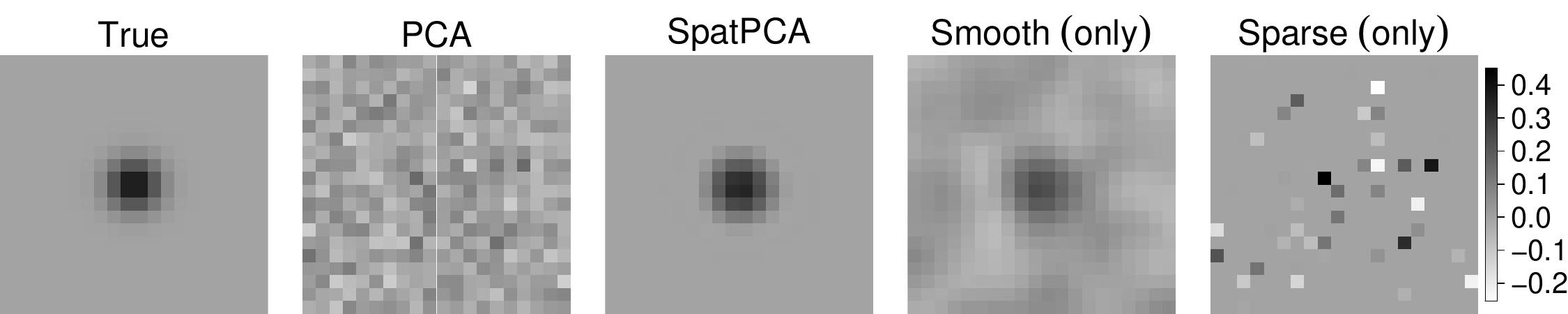} \\[0.4em]
\small $\hat{\varphi}_1(\cdot)$ based on $(\lambda_1,\lambda_2)=(9,4)$ and $K=2$ \\
\includegraphics[width=0.95\textwidth]{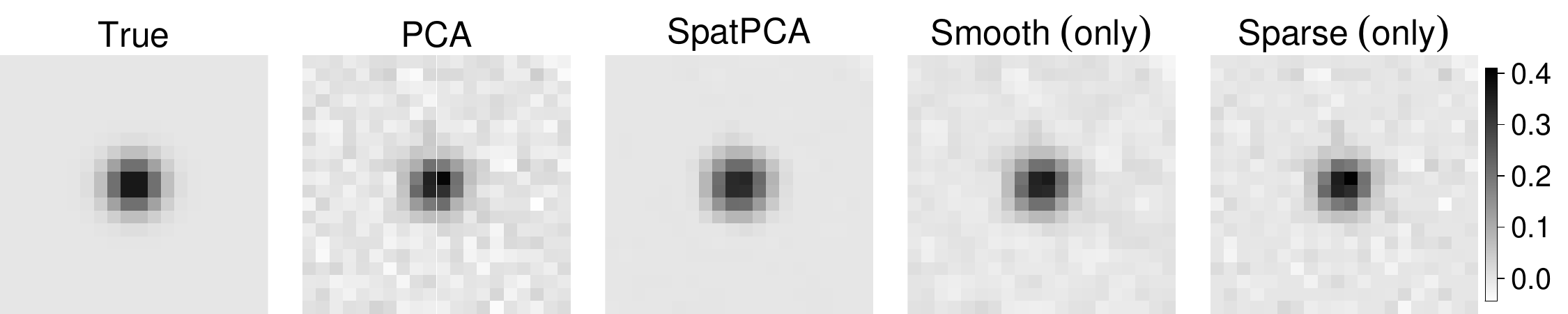} \\[0.4em]
\small $\hat{\varphi}_2(\cdot)$ based on $(\lambda_1,\lambda_2)=(9,4)$ and $K=2$ \\
\includegraphics[width=0.95\textwidth]{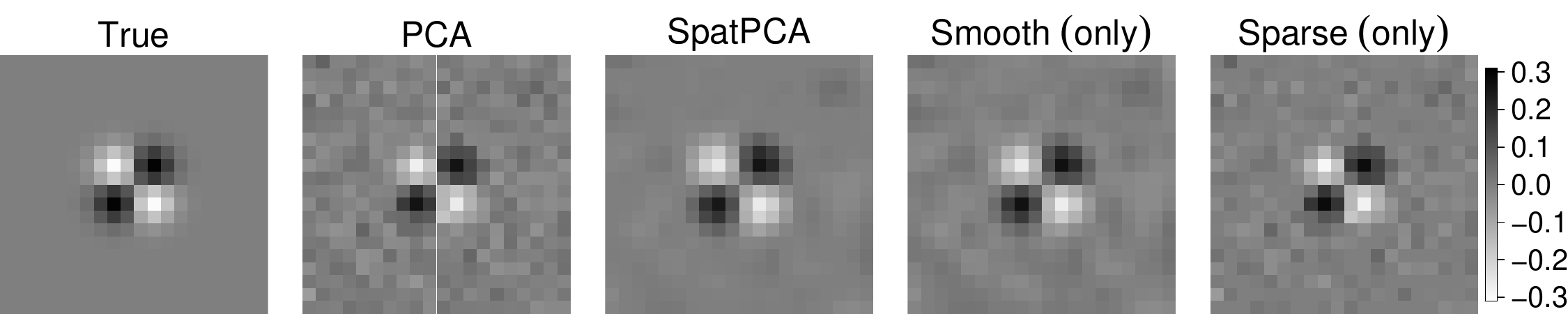}
\end{tabular}
\caption{Estimates of $\varphi_1(\cdot)$ and $\varphi_2(\cdot)$ from various methods for three different combinations of eigenvalues.}
\label{fig:ch4est_d2}
\end{figure}

\begin{figure}[tbhp]
\centering
{\footnotesize
\begin{tabular}{@{}ccc@{}}
$(\lambda_1,\lambda_2)=(9,0)$, $K=1$ & $(\lambda_1,\lambda_2)=(9,0)$, $K=2$ & $(\lambda_1,\lambda_2)=(9,0)$, $K=5$ \\
\includegraphics[width=0.32\textwidth]{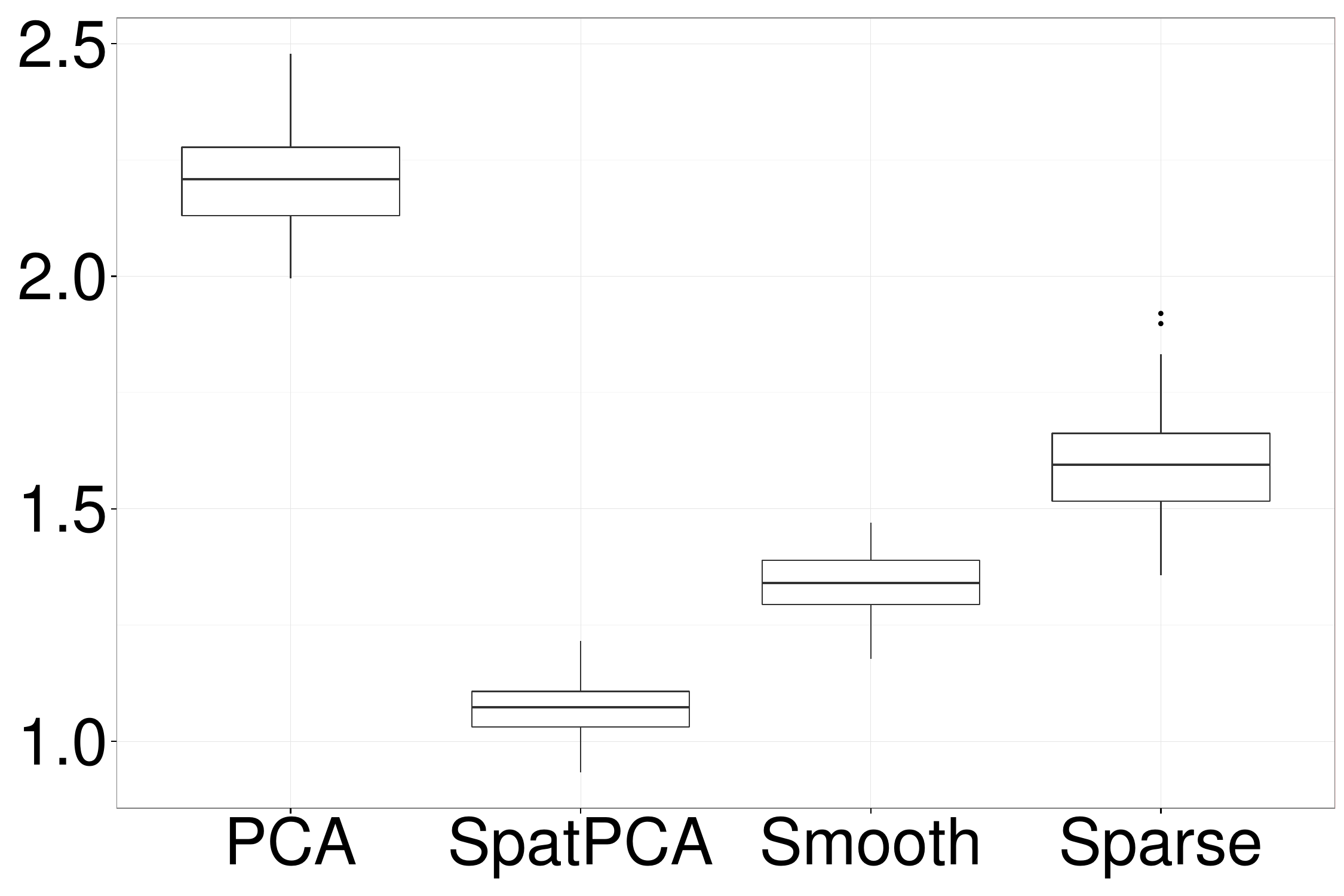} &
\includegraphics[width=0.32\textwidth]{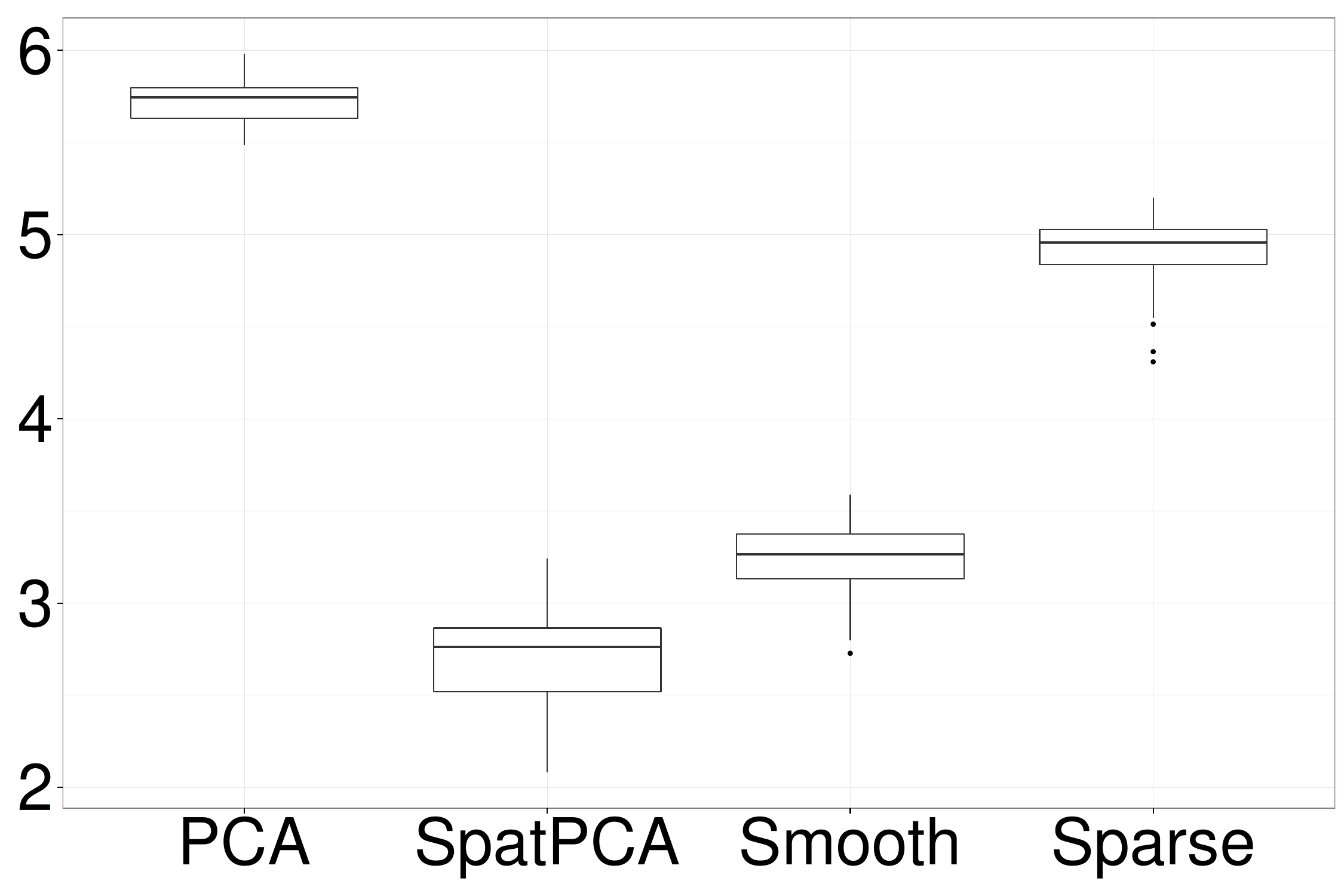} &
\includegraphics[width=0.32\textwidth]{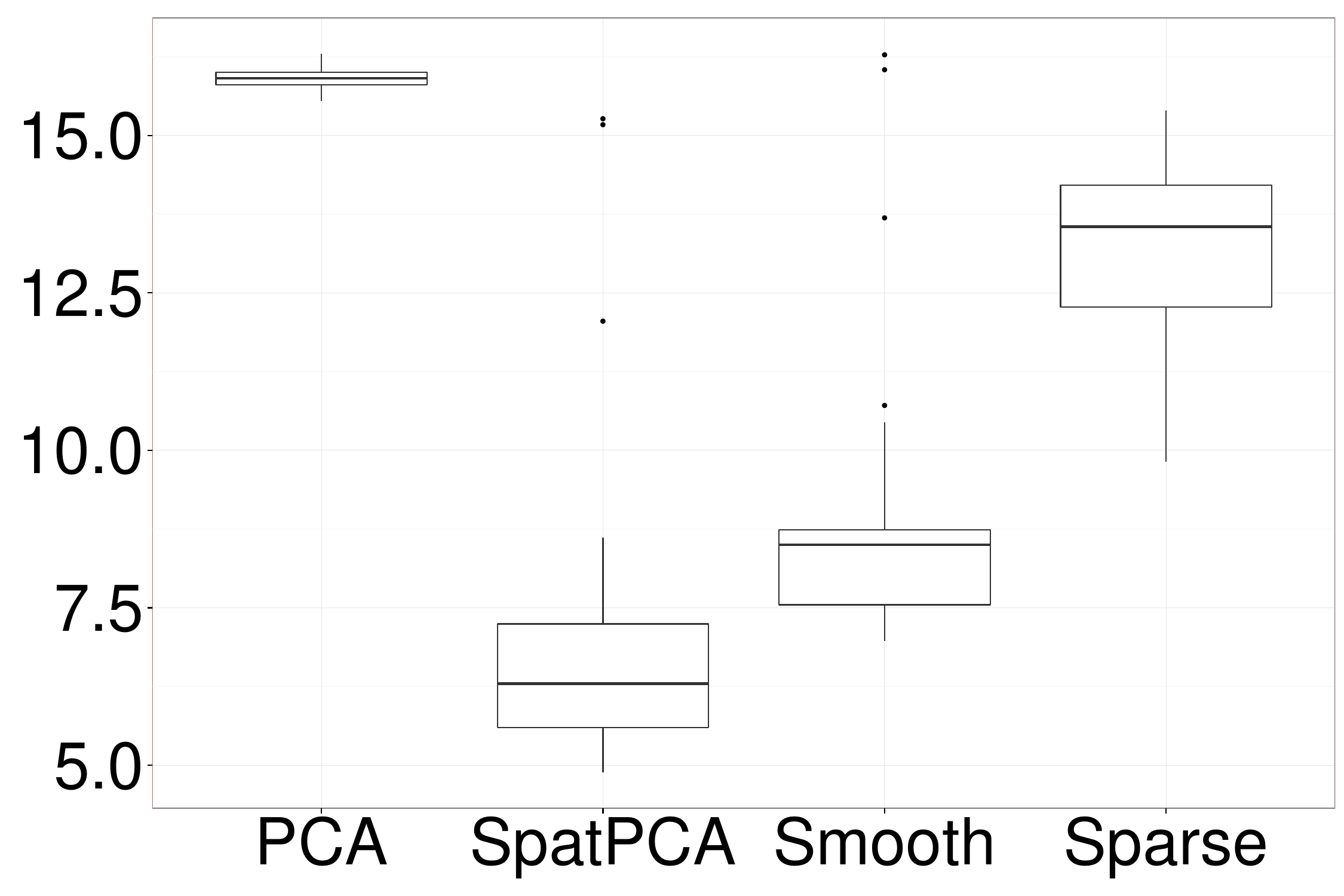} \\[0.4em]
$(\lambda_1,\lambda_2)=(1,0)$, $K=1$ & $(\lambda_1,\lambda_2)=(1,0)$, $K=2$ & $(\lambda_1,\lambda_2)=(1,0)$, $K=5$ \\
\includegraphics[width=0.32\textwidth]{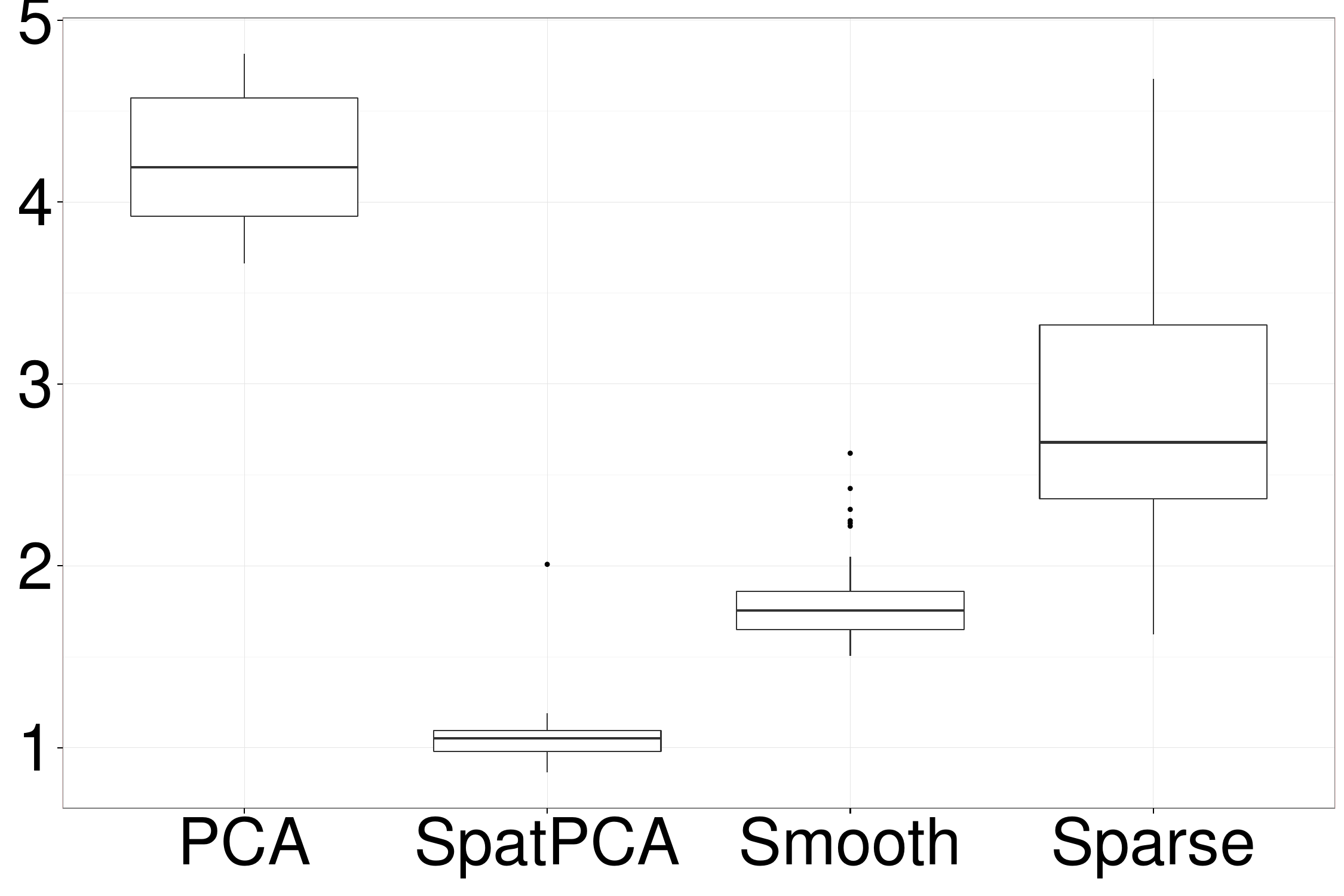} &
\includegraphics[width=0.32\textwidth]{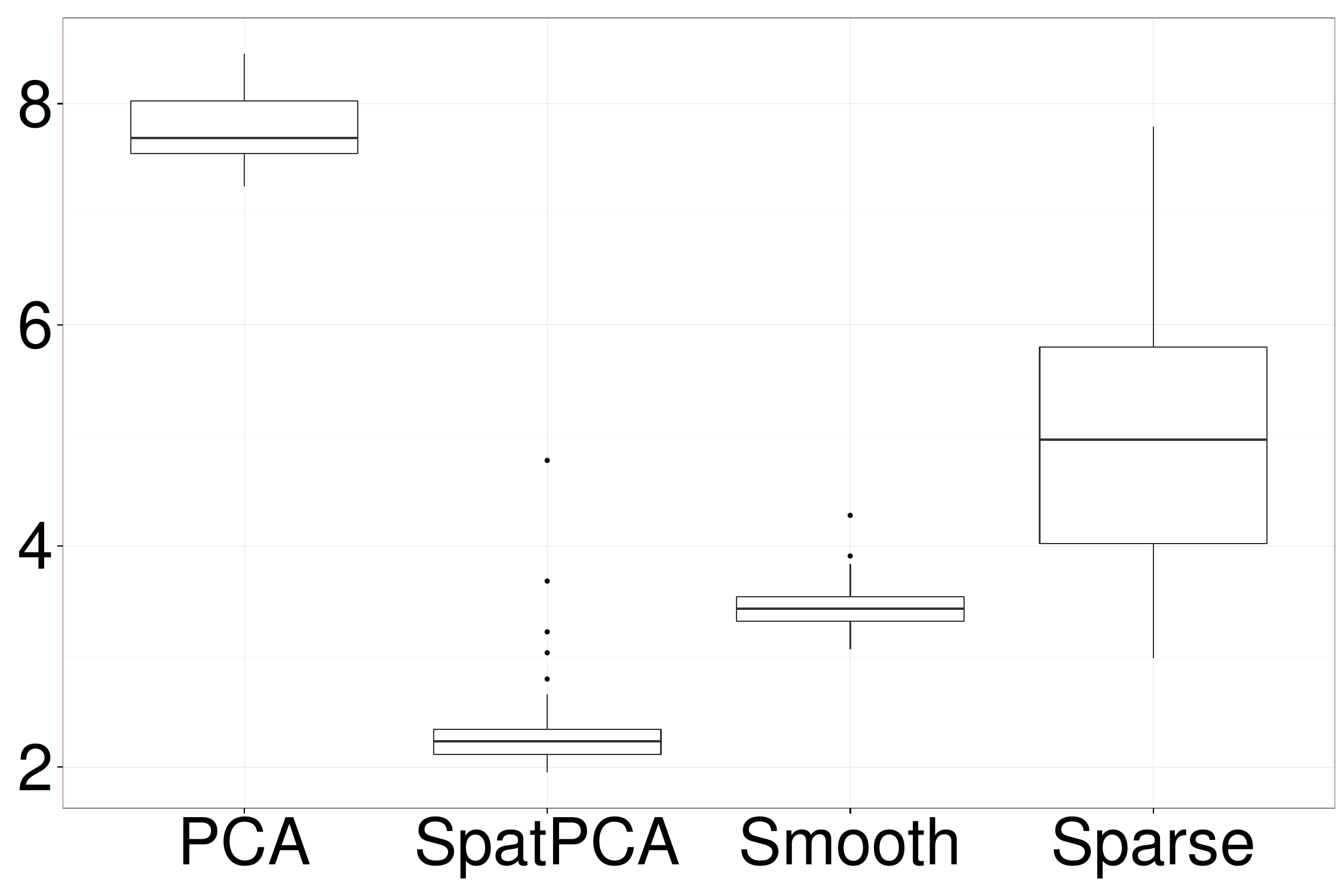} &
\includegraphics[width=0.32\textwidth]{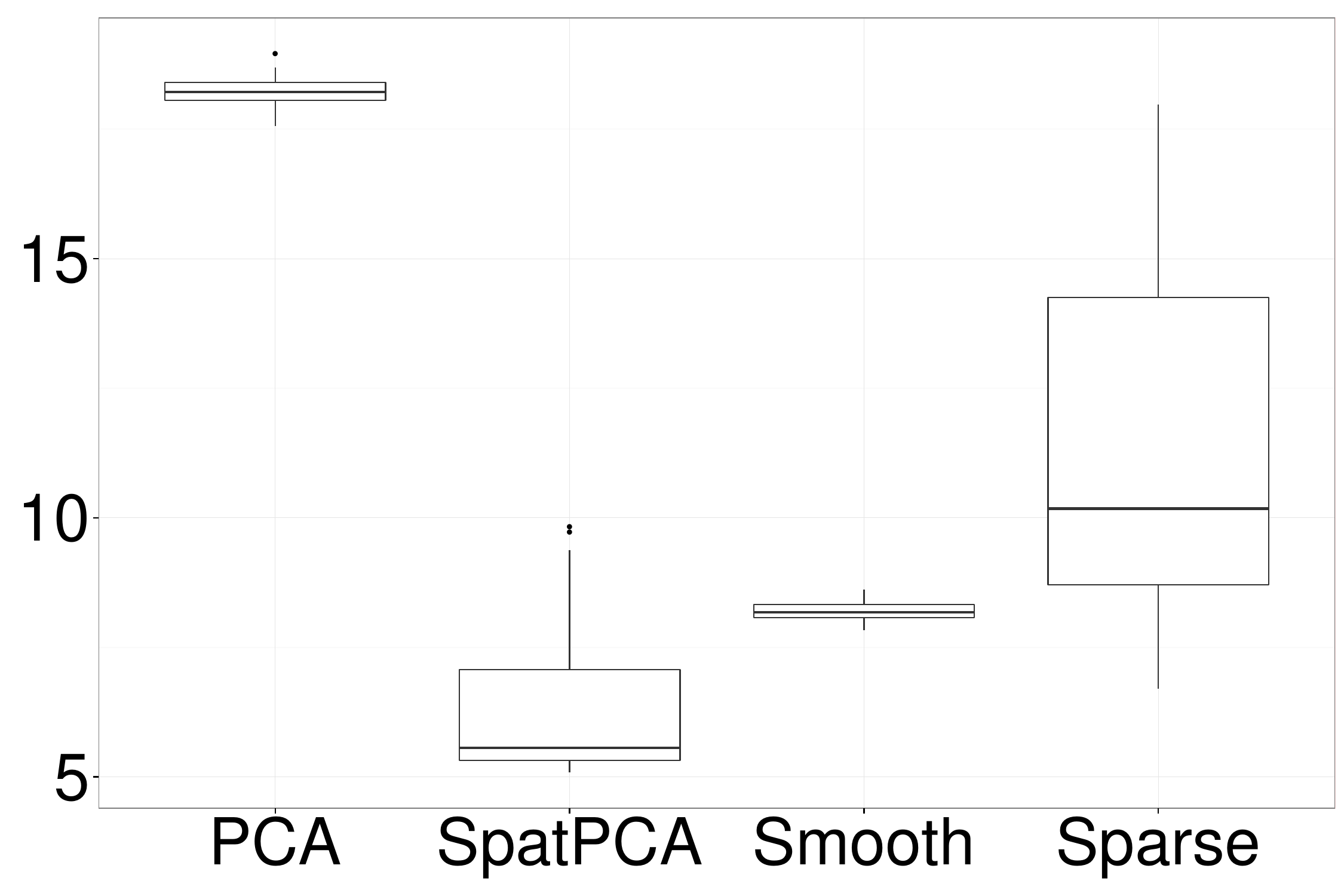} \\[0.4em]
$(\lambda_1,\lambda_2)=(9,4)$, $K=1$ & $(\lambda_1,\lambda_2)=(9,4)$, $K=2$ & $(\lambda_1,\lambda_2)=(9,4)$, $K=5$ \\
\includegraphics[width=0.32\textwidth]{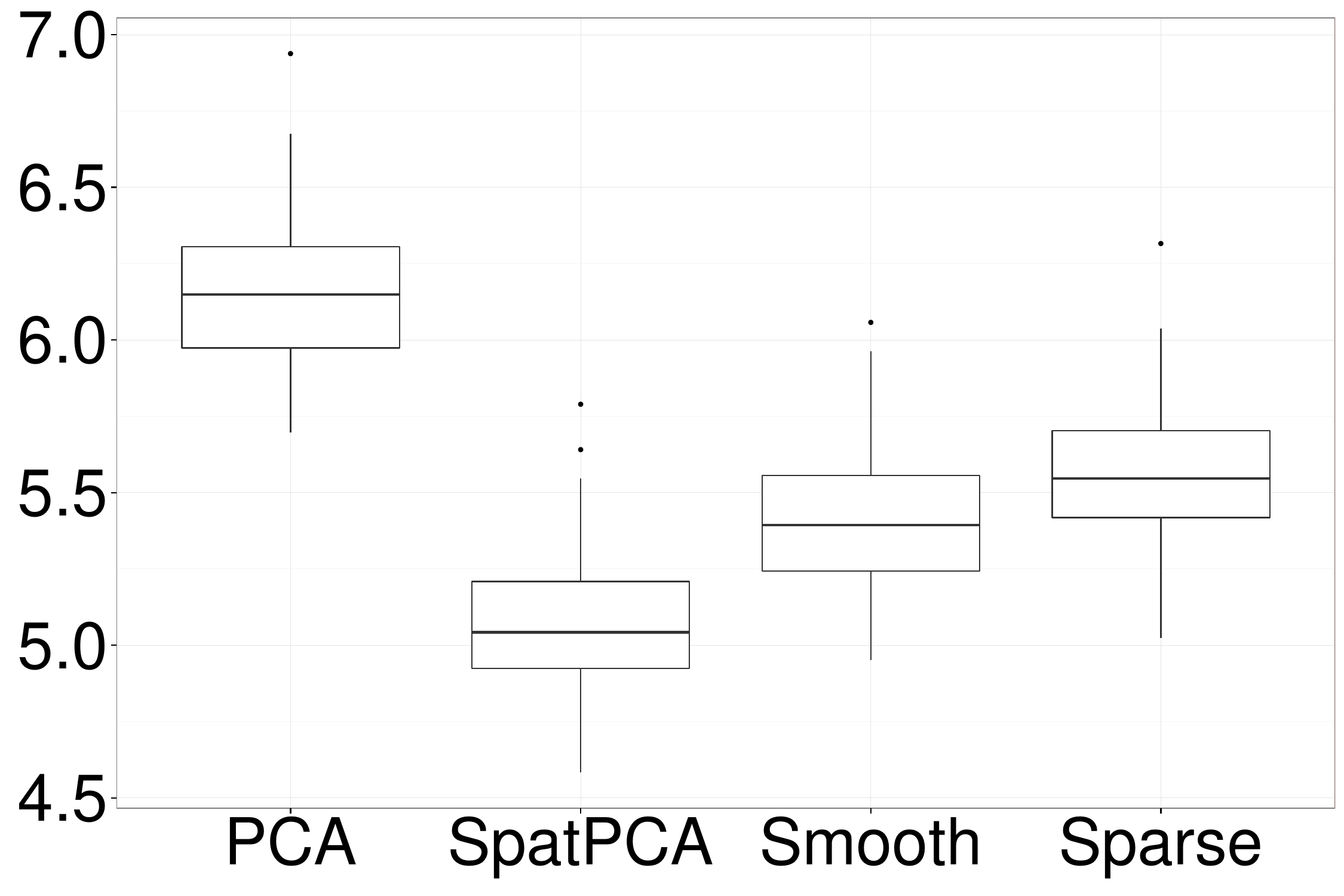} &
\includegraphics[width=0.32\textwidth]{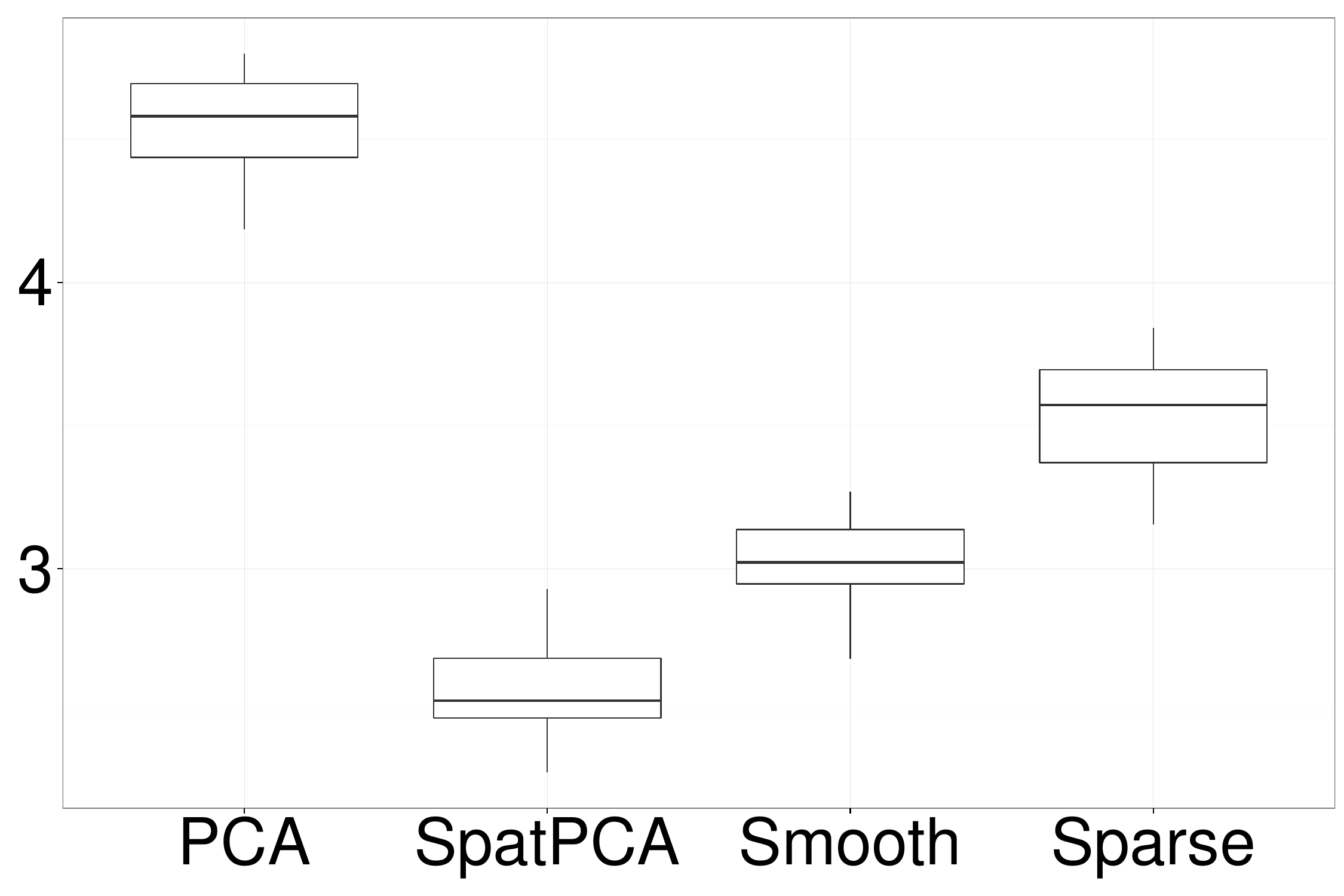} &
\includegraphics[width=0.32\textwidth]{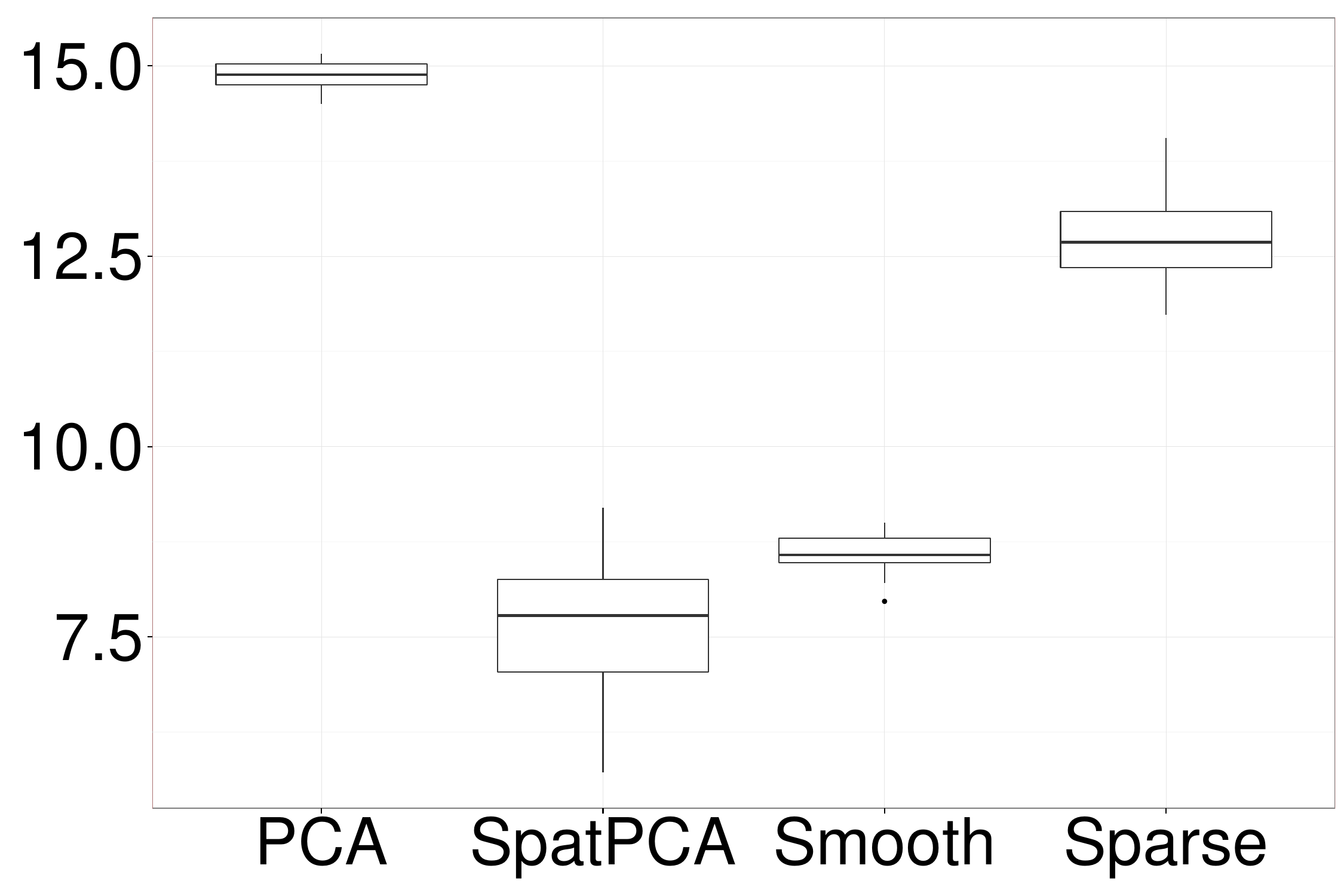}
\end{tabular}}
\caption{Boxplots of average squared prediction errors of (\ref{eq:ch4loss_sim}) for various methods based on 50 simulation replicates.}
\label{fig:ch4box_d2_loss1}
\end{figure}

\begin{figure}[tbhp]
\centering
{\footnotesize
\begin{tabular}{@{}ccc@{}}
$(\lambda_1,\lambda_2)=(9,0)$, $K=1$ & $(\lambda_1,\lambda_2)=(9,0)$, $K=2$ & $(\lambda_1,\lambda_2)=(9,0)$, $K=5$ \\
\includegraphics[width=0.32\textwidth]{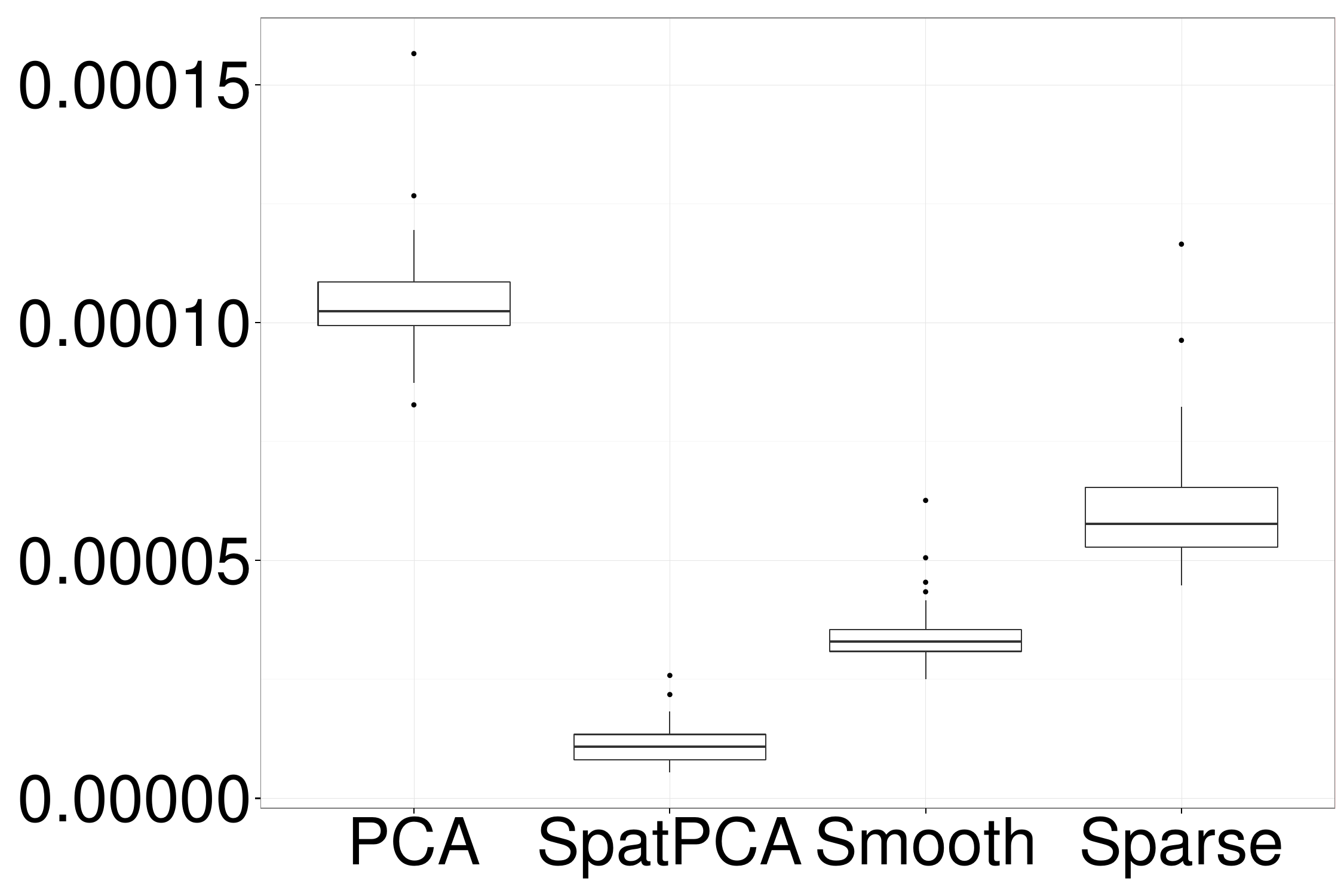} &
\includegraphics[width=0.32\textwidth]{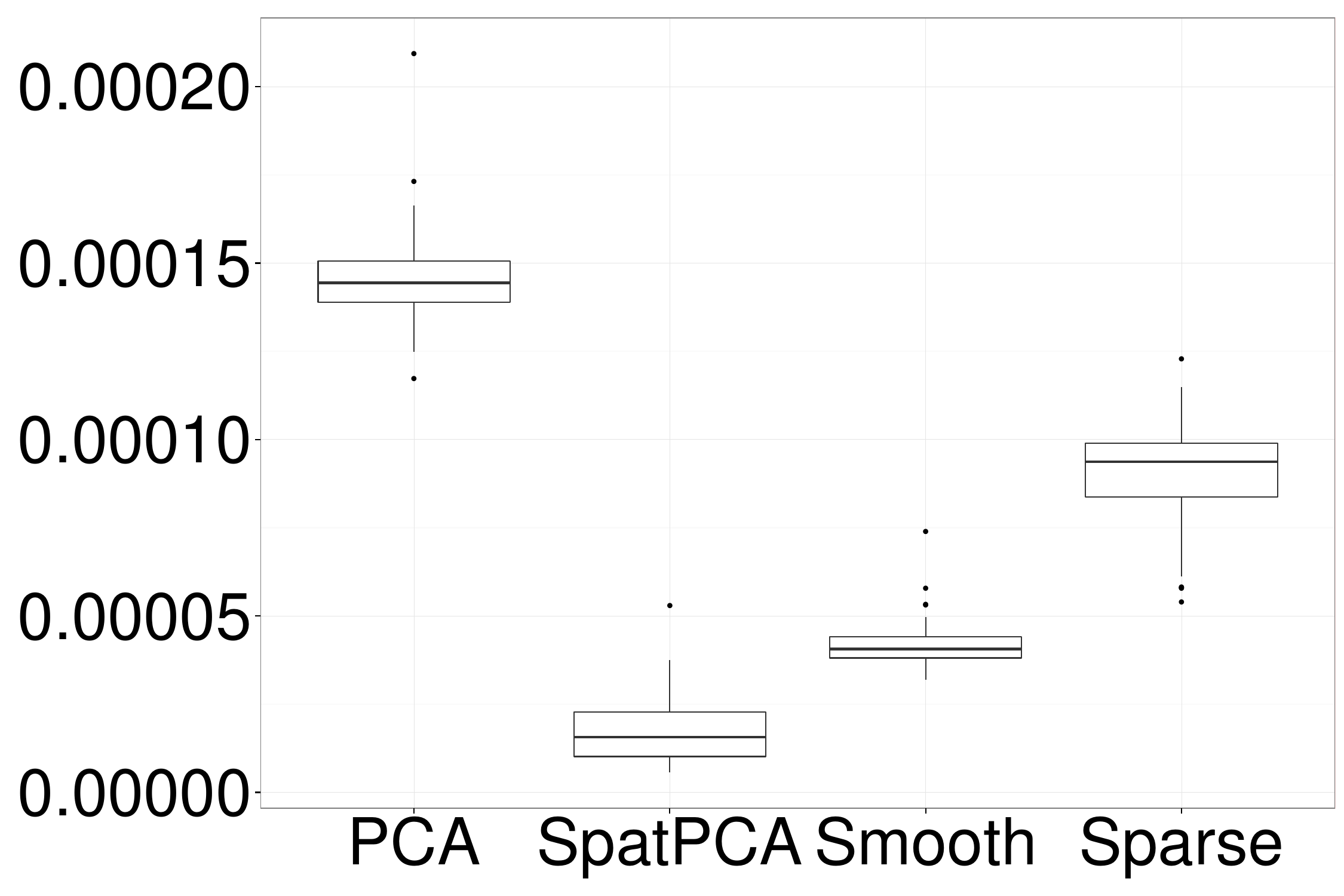} &
\includegraphics[width=0.32\textwidth]{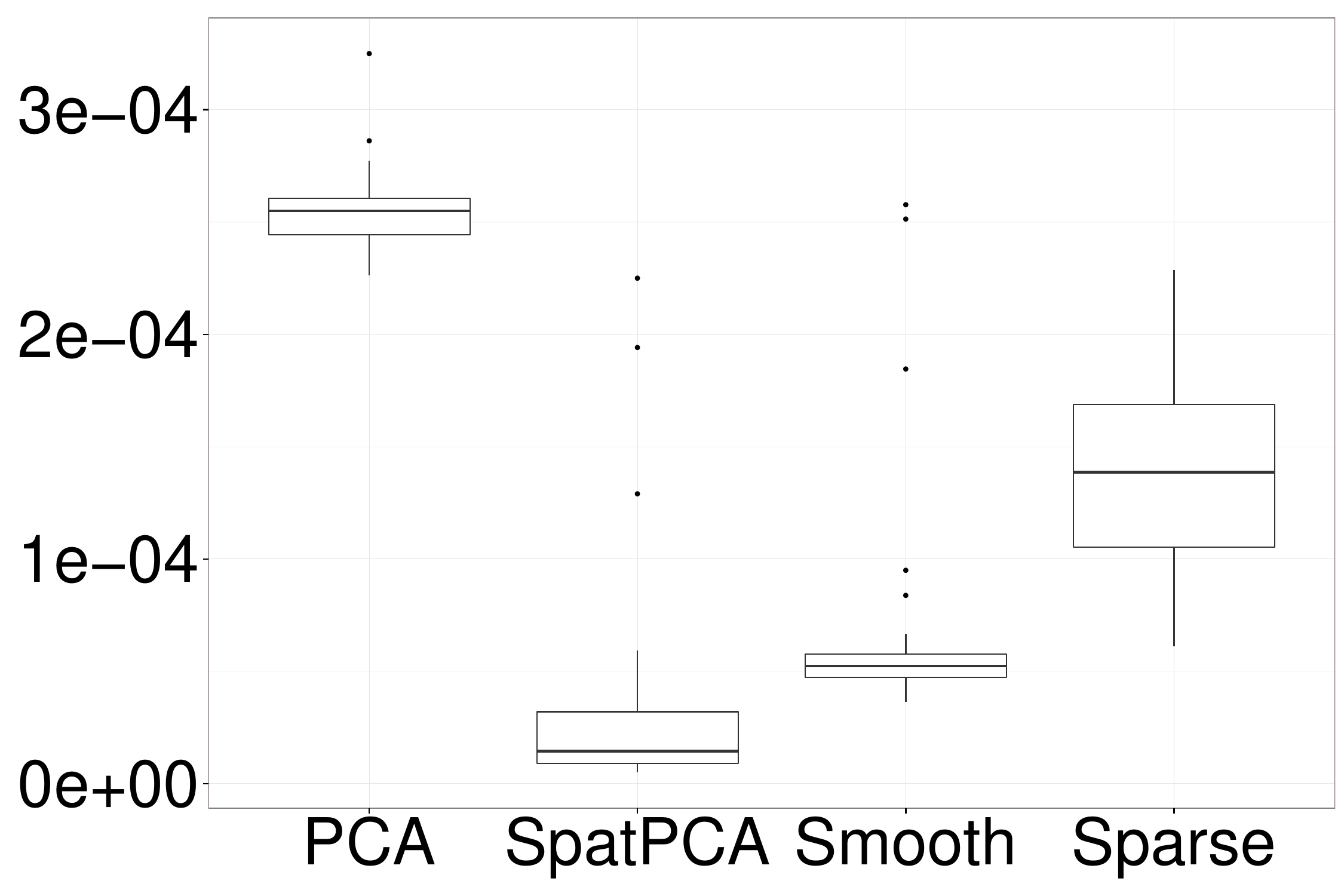} \\[0.4em]
$(\lambda_1,\lambda_2)=(1,0)$, $K=1$ & $(\lambda_1,\lambda_2)=(1,0)$, $K=2$ & $(\lambda_1,\lambda_2)=(1,0)$, $K=5$ \\
\includegraphics[width=0.32\textwidth]{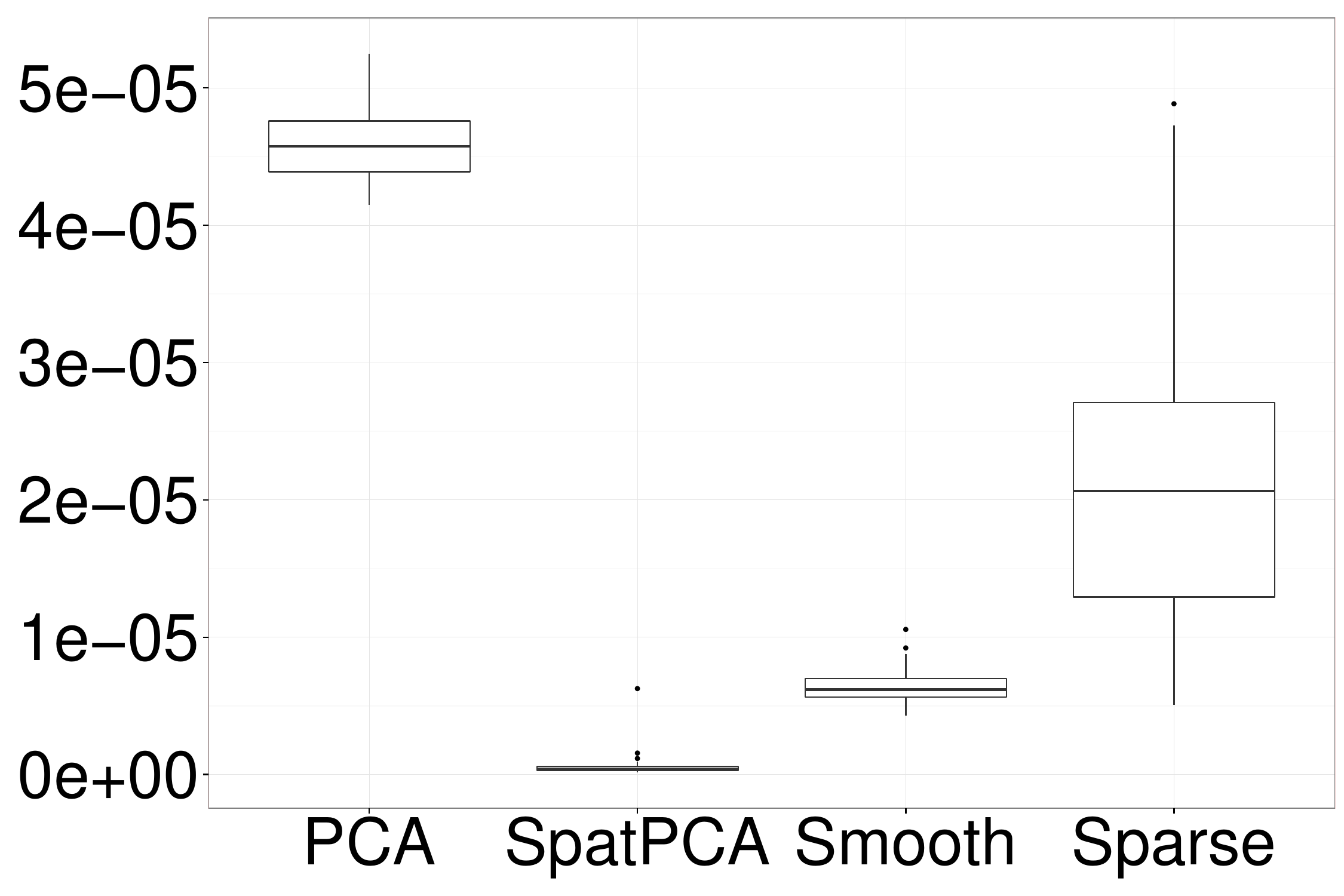} &
\includegraphics[width=0.32\textwidth]{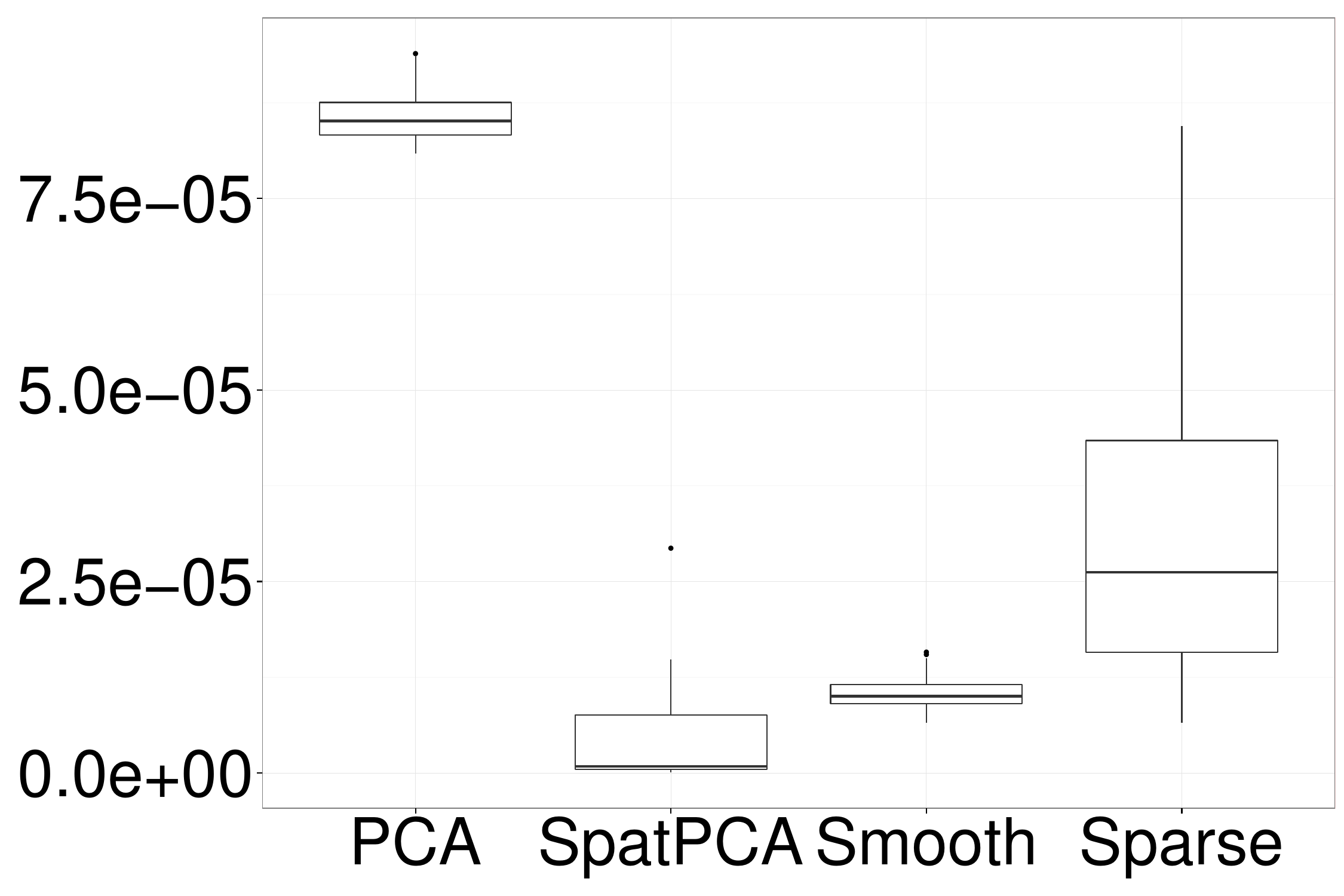} &
\includegraphics[width=0.32\textwidth]{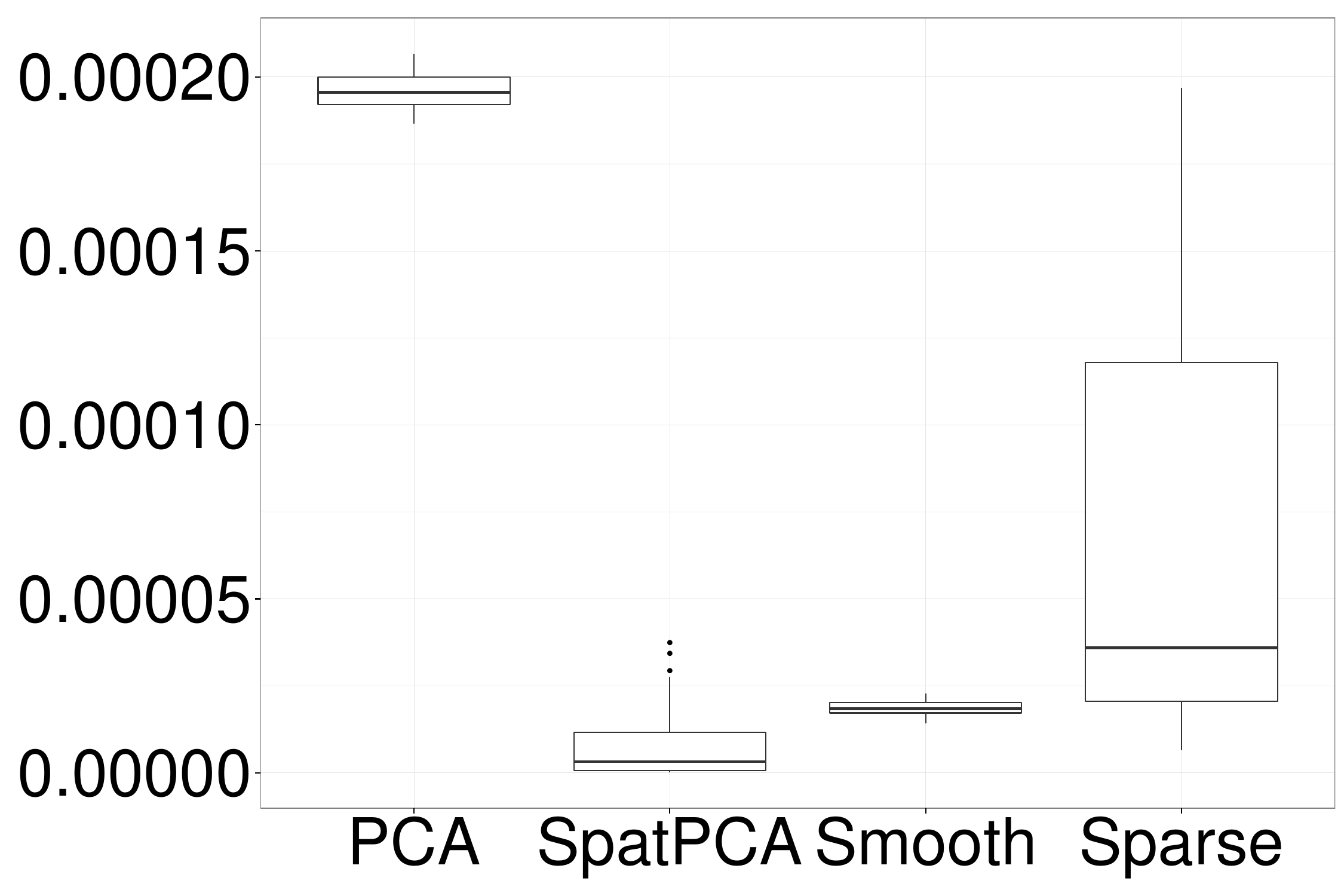} \\[0.4em]
$(\lambda_1,\lambda_2)=(9,4)$, $K=1$ & $(\lambda_1,\lambda_2)=(9,4)$, $K=2$ & $(\lambda_1,\lambda_2)=(9,4)$, $K=5$ \\
\includegraphics[width=0.32\textwidth]{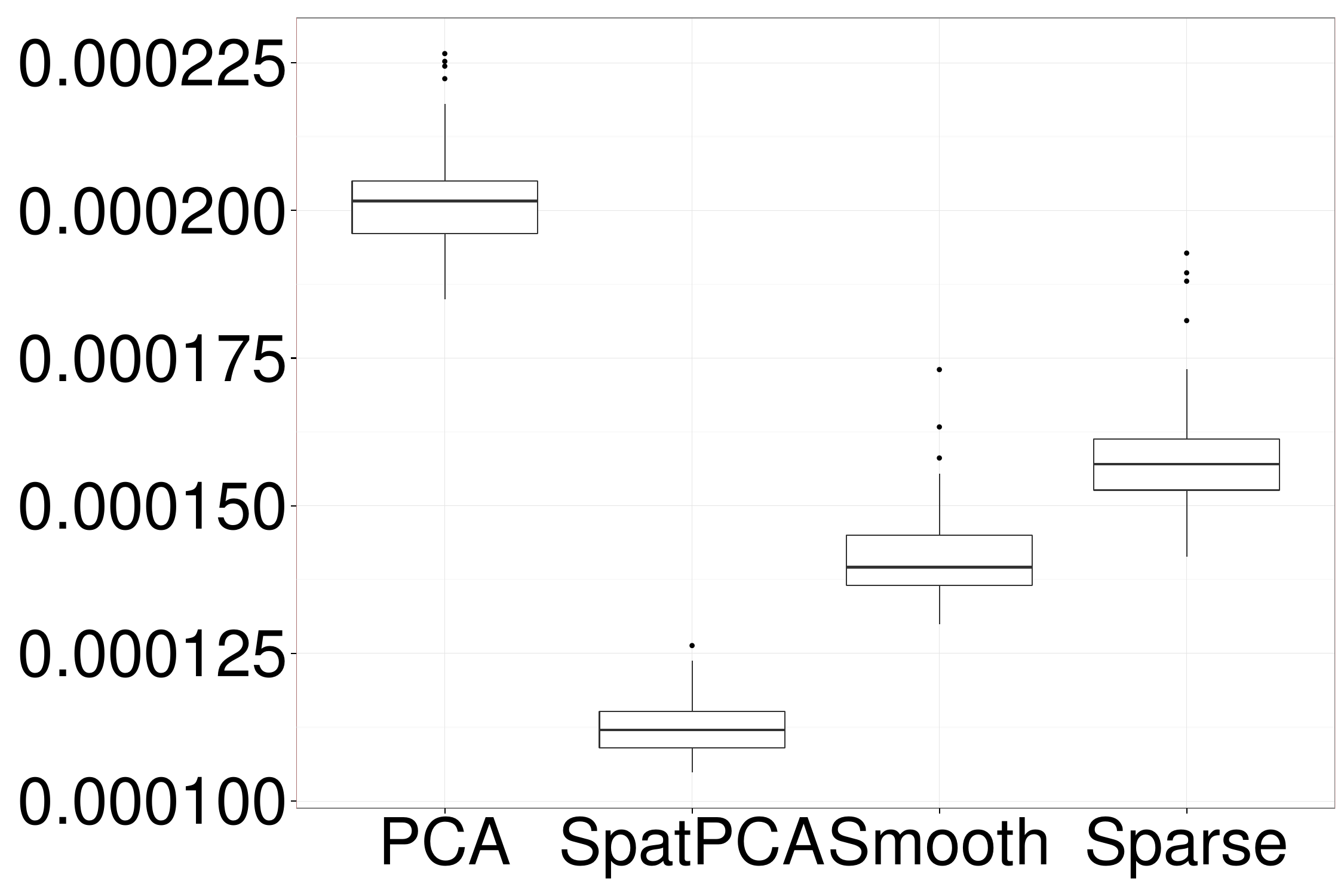} &
\includegraphics[width=0.32\textwidth]{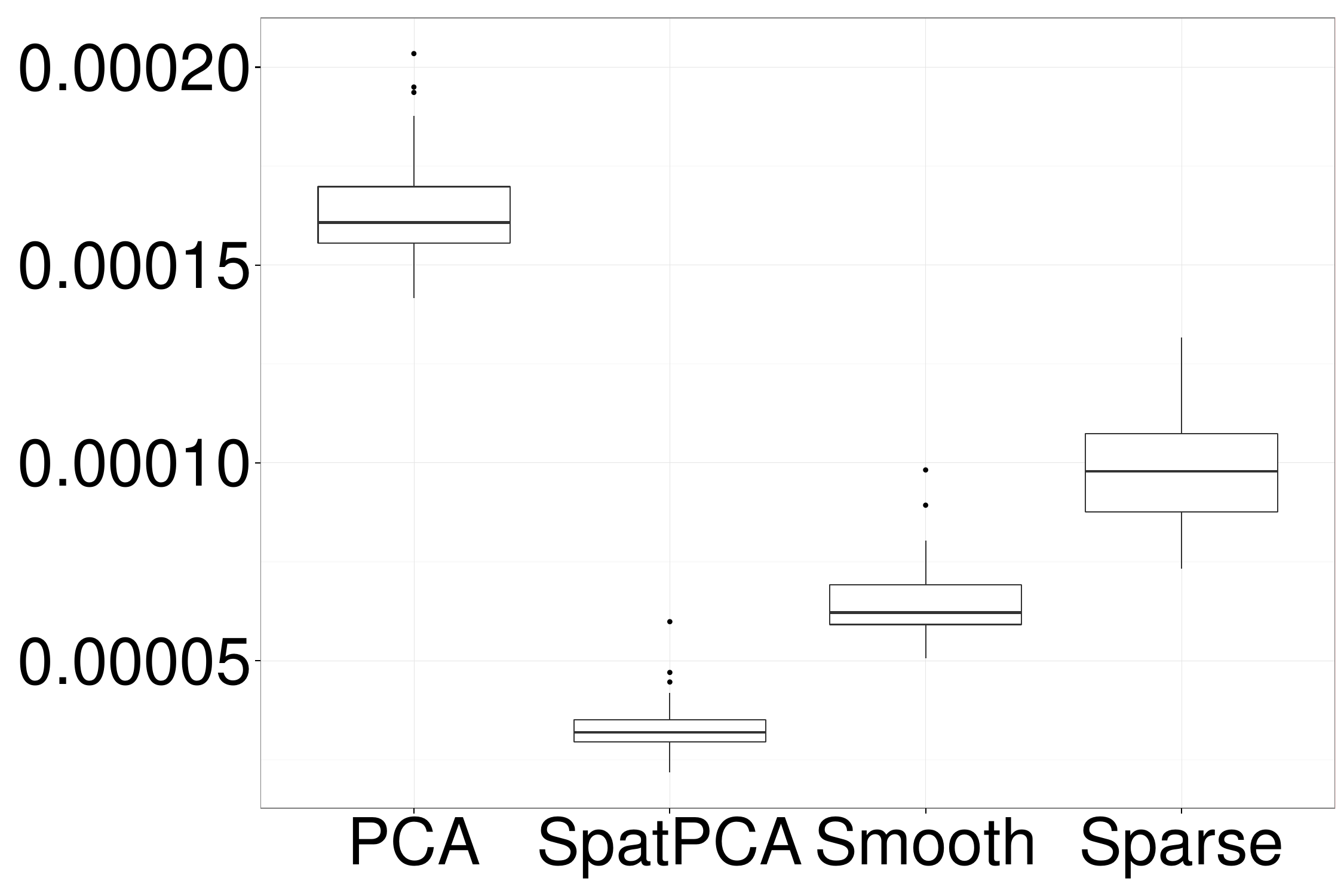} &
\includegraphics[width=0.32\textwidth]{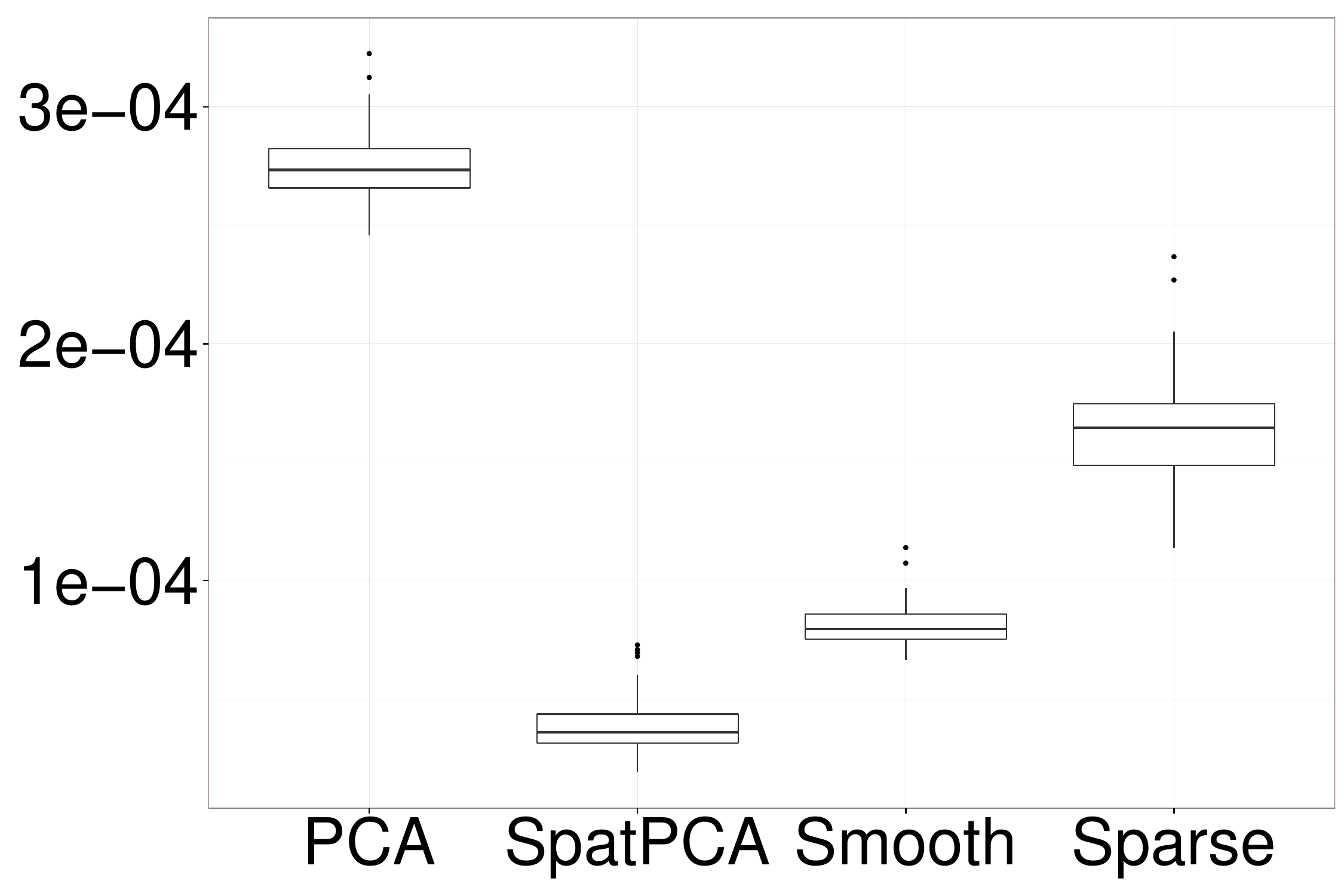}
\end{tabular}}
\caption{Boxplots of average squared estimation errors of (\ref{eq:ch4loss2_sim}) for various methods based on 50 simulation replicates.}
\label{fig:ch4box_d2}
\end{figure}

\subsection{An Application to a Sea Surface Temperature Dataset}
\label{sec:ch4application}

Since the proposed SpatPCA works better with both smoothness and sparseness penalties according to the simulation experiments,
we applied only the proposed SpatPCA with both penalty terms to a sea surface temperature (SST) dataset observed in the Indian Ocean.
The data are monthly averages of SST obtained from the Met Office Marine Data Bank
(available at \url{http://www.metoffice.gov.uk/hadobs/hadisst/}) on $1$ degree latitude by 1 degree longitude ($1 ^{\circ}\times 1 ^{\circ}$)
equiangular grid cells
from January 2001 to December 2010 in a region between latitudes $20^\circ N$ and $20^\circ S$ and between longitudes $39^\circ E$ and $120^\circ E$.
Out of $40\times 81=3{,}240$ grid cells, there are 420 cells on the land where no data are available.
Hence the data we used are observed at $p=2{,}820$ cells and $120$ time points.
We first detrended the SST data by subtracting the SST for a given cell and a given month by the average SST for that cell and that month
over the whole period. Then we randomly decomposed the data into two parts with each part consisting of $60$ time points.
One part was used for training data, while the other part was used for validation purpose.

We applied SpatPCA on the training data with $K$ selected by $\hat{K}$ of \eqref{eq:ch4K.hat}.
Similar to the two-step method described in the one-dimensional experiment,
we selected among 11 values of $\tau_1$ (including $0$, and 10 values from $10^3$
to $10^8$ equally spaced on the log scale) and 31 values of $\tau_2$ (including $0$, and $30$ values from $1$ to $10^3$
equally spaced on the log scale) using 5-fold CV of \eqref{eq:ch4cv}.
Figure~\ref{fig:ch4scree_plot} shows the scree plot of the sample eigenvalues based on the training data.

\begin{figure}[tbhp]
\centering
\includegraphics[width=0.55\textwidth]{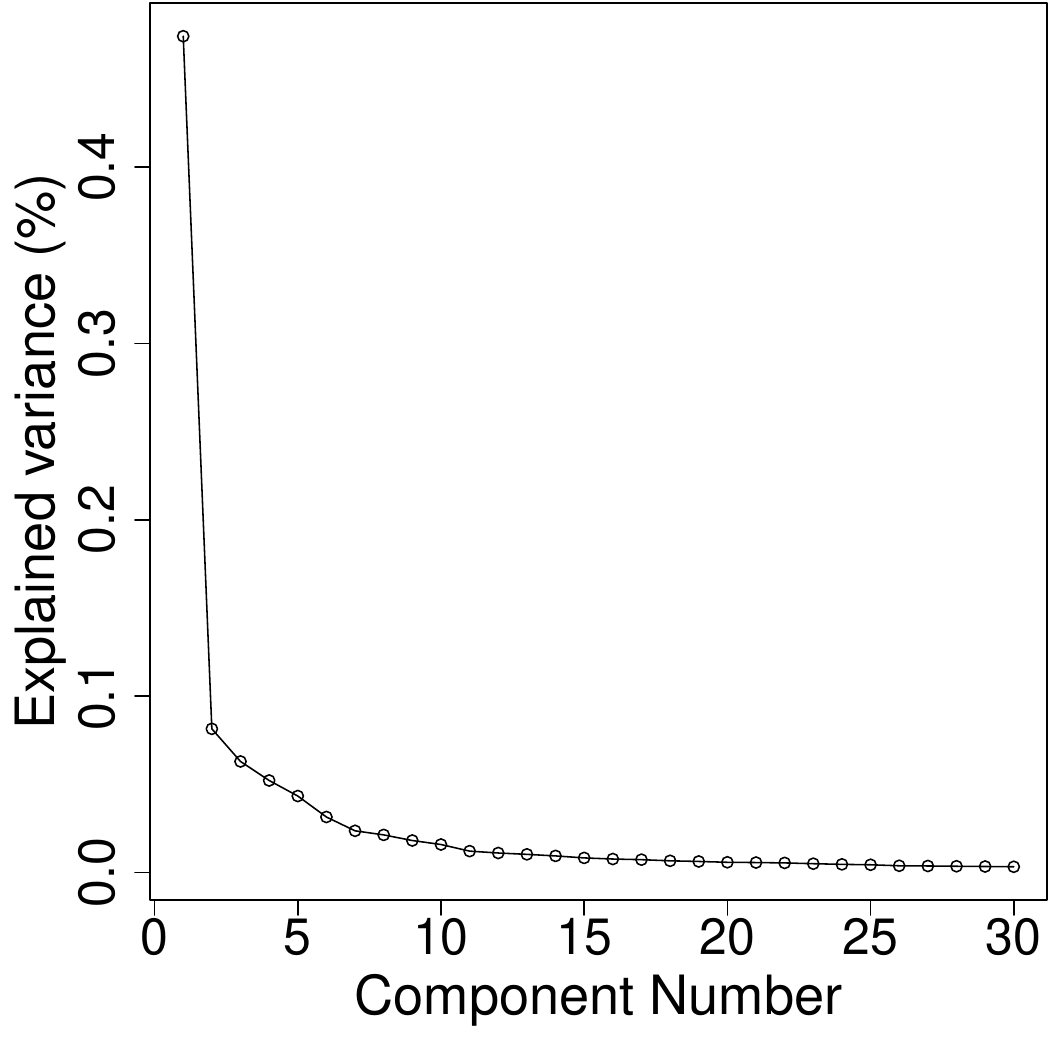}
\caption{Scree plot of sample eigenvalues based on the training data.}
\label{fig:ch4scree_plot}
\end{figure}

\begin{figure}[tbhp]
\centering
\begin{tabular}{@{}c@{}}
\small $\hat{\varphi}_1(\cdot)$ from PCA\hspace{0.35\textwidth}$\hat{\varphi}_1(\cdot)$ from SpatPCA \\
\includegraphics[width=0.95\textwidth]{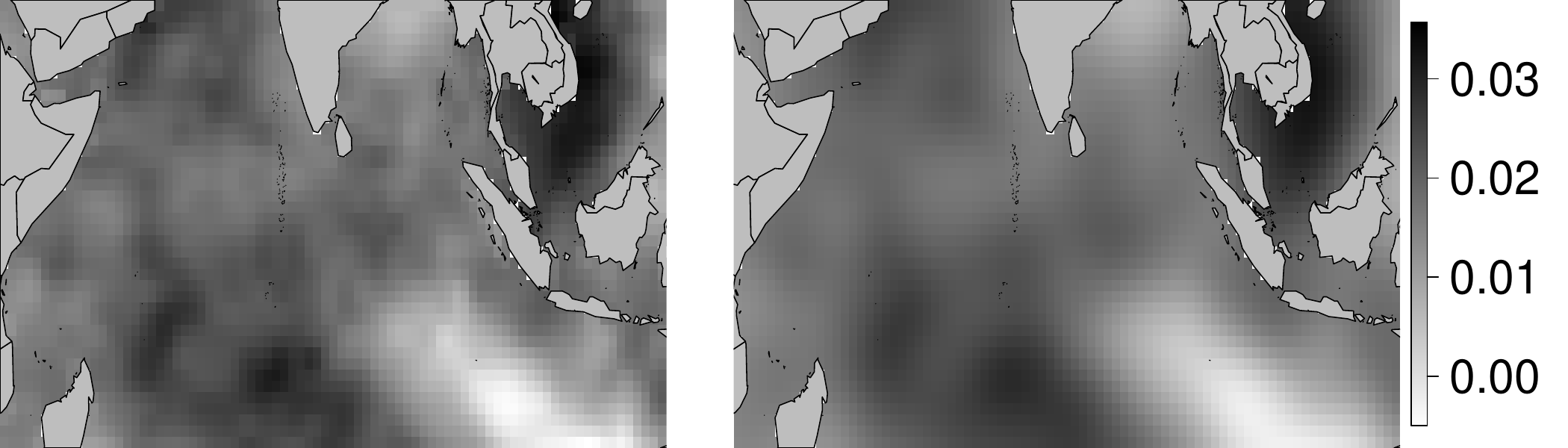} \\[0.6em]
\small $\hat{\varphi}_2(\cdot)$ from PCA\hspace{0.35\textwidth}$\hat{\varphi}_2(\cdot)$ from SpatPCA \\
\includegraphics[width=0.95\textwidth]{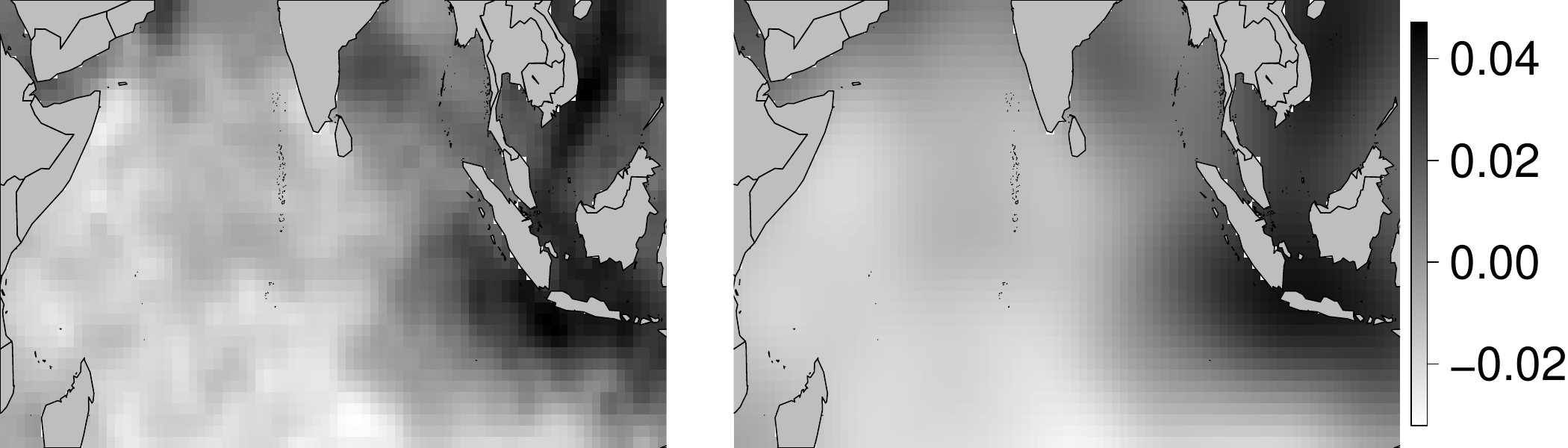} \\[0.6em]
\small $\hat{\varphi}_3(\cdot)$ from PCA\hspace{0.35\textwidth}$\hat{\varphi}_3(\cdot)$ from SpatPCA \\
\includegraphics[width=0.95\textwidth]{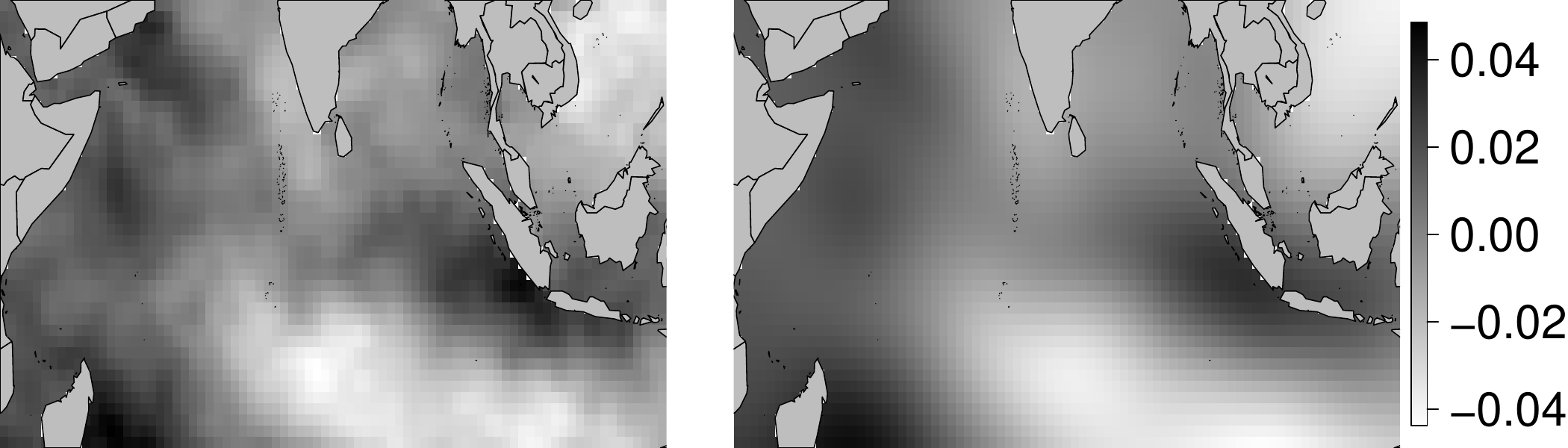}
\end{tabular}
\caption{Estimated eigenimages for PCA and SpatPCA over a region of the Indian Ocean, where the gray regions correspond to the land.}
\label{fig:ch4sst_real}
\end{figure}

As shown in Figure~\ref{fig:ch4cv_plot}, the smallest CV value occurs at $(\tau_1,\tau_2)=(46416,6.7)$.
The first three patterns estimated from PCA and SpatPCA are shown in Figure~\ref{fig:ch4sst_real}.
Both methods identify similar patterns with the ones estimated from SpatPCA
being a bit smoother than those estimated from PCA. The first pattern
is basically the basin-wide mode and the second pattern corresponds to the east-west dipole mode (\citet{sst}).

\begin{figure}[tbp]
\centering
\includegraphics[width=\textwidth]{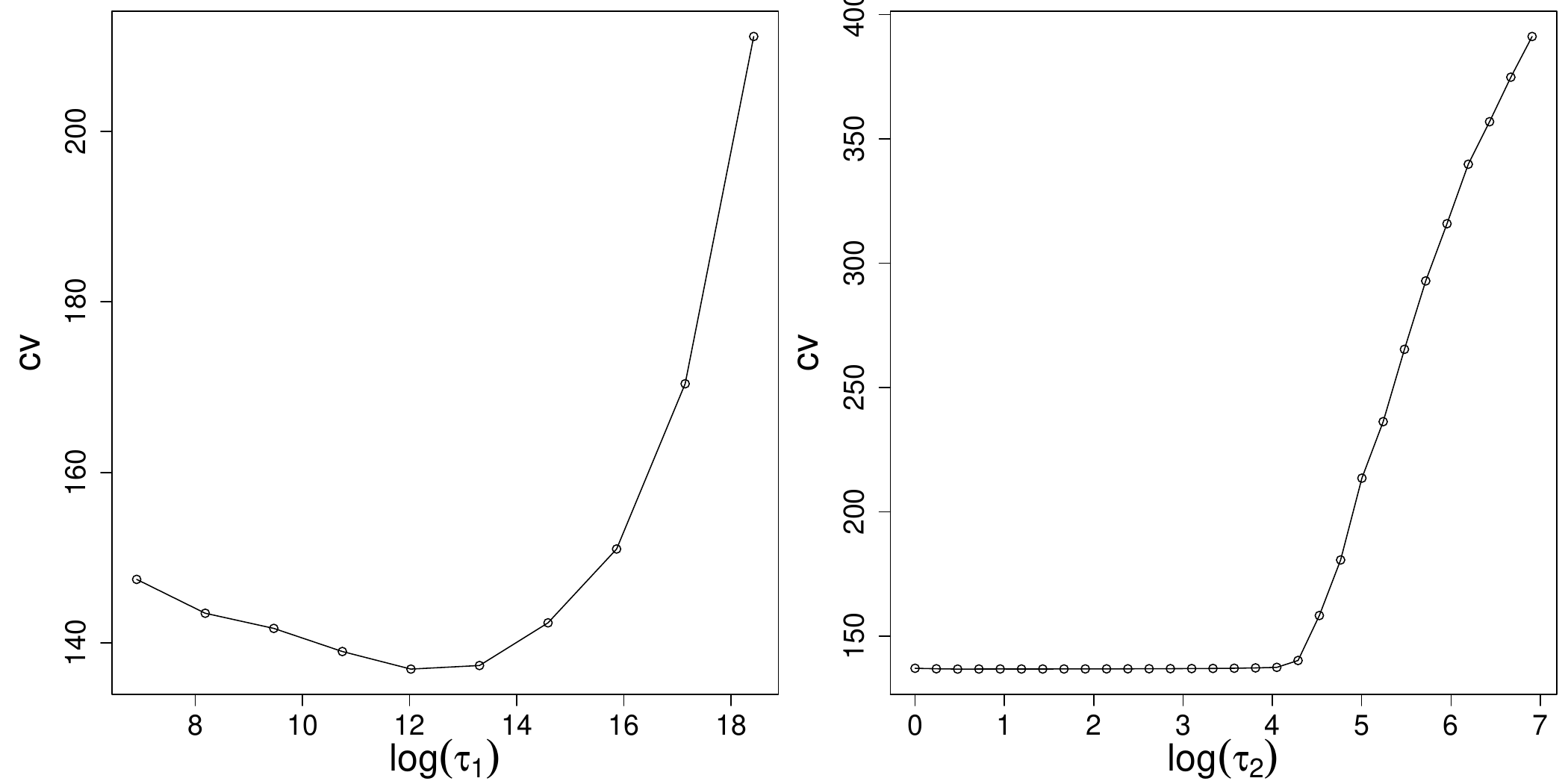}\\
\small (a)\hspace{0.45\textwidth}(b)
\caption{(a) Cross-validation values with respect to $\tau_1$ on the log scale given $\tau_2 = 0$. (b) Cross-validation values with respect to $\tau_2$ on the log scale given selected $\tau_1$ based on (a).}
\label{fig:ch4cv_plot}
\end{figure}

We used the validation data to evaluate the performance between PCA and SpatPCA in terms of the mean squared error (MSE),
$\|\hat{\bm{\Sigma}}-\bm{S}_v\|_F^2/p^2$,
where $\hat{\bm{\Sigma}}$ is a generic estimate of $\bm{\Sigma}$ based on the training data, and
$\bm{S}_v$ is the sample covariance matrix based on the validation data.
For both PCA and SpatPCA, we applied $5$-fold CV of \eqref{eq:ch4cv.gamma} to select  among
11 values of $\gamma$, including $0$, and ten values from $\hat{d}_1/10^3$ to $\hat{d}_1$ equally spaced on the log scale, where $\hat{d}_1$ is the largest eigenvalue of $\hat{\bm{\Phi}}'\bm{S}_t\hat{\bm{\Phi}}$, and $\bm{S}_t$ is the sample covariance matrix obtained from the training data.
The resulting MSE for PCA is $1.05\times 10^{-4}$, which is slightly larger than $1.02\times 10^{-4}$ for SpatPCA.
Figure~\ref{fig:ch4sse} shows the MSEs with respect to various $K$ values for both PCA and SpatPCA.
The results indicate that SpatPCA is not sensitive to the choice of $K$
as long as $K$ is sufficiently large. Our choice of $\hat{K}=6$ for SpatPCA based on \eqref{eq:ch4K.hat} appears to be effective,
and is smaller than $\hat{K}=15$ for PCA.

\begin{figure}[tbp]
\centering
\includegraphics[width=0.55\textwidth]{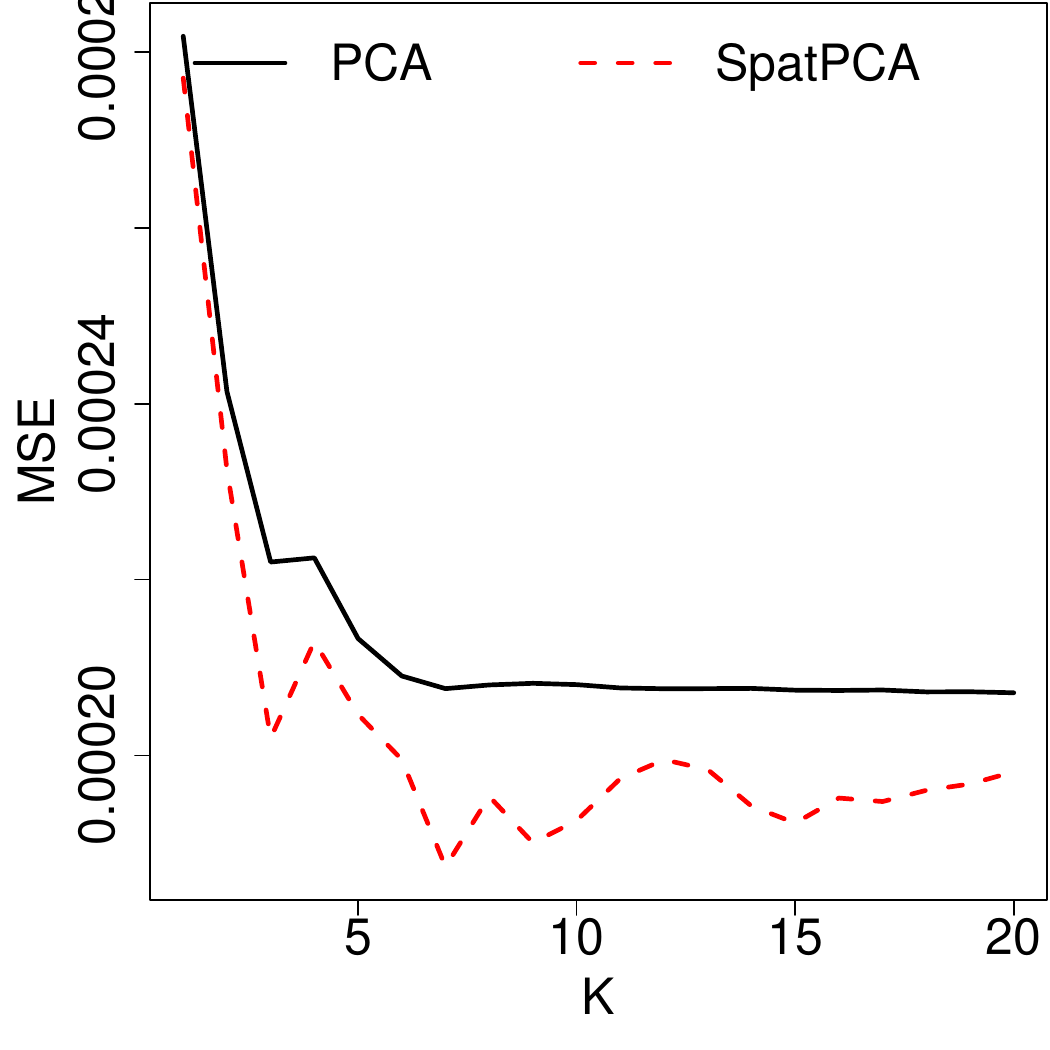}
\caption{Mean of squared errors with respect to $K$ for covariance matrices estimated from PCA and SpatPCA.}
\label{fig:ch4sse}
\end{figure}

\subsection{Two-Dimensional Experiment II}
\label{sec:ch4realsim}

To reflect a real-world situation, we generated data by mimicking the SST dataset
analyzed in the previous subsection, except we applied a larger noise variance. Specifically, we generated data according to (\ref{eq:ch4measurement})
with $K=2$, $\bm{\xi}_i\sim N(\bm{0}, \mathrm{diag}(101.3,16.8))$, $\bm{\epsilon}_i\sim N(\bm{0},\bm{I})$, $n=60$,
and at the same $2{,}820$ locations.
Here $\varphi_1(\cdot)$ and $\varphi_2(\cdot)$ are given by $\hat{\varphi}_1(\cdot)$ and $\hat{\varphi}_2(\cdot)$ estimated by SpatPCA in
Figure~\ref{fig:ch4sst_real}.

We applied the 5-fold CVs of \eqref{eq:ch4cv}, \eqref{eq:ch4cv.gamma} and \eqref{eq:ch4K.hat}
to select the tuning parameters $(\tau_1,\tau_2)$, $\gamma$ and $K$ in the same way as in the previous subsection.
Figure~\ref{fig:ch4sst_realsim} shows the estimates of $\varphi_1(\cdot)$ and $\varphi_2(\cdot)$ for PCA and SpatPCA
based on a randomly generated dataset.
Because we consider a larger noise variance than those in the previous subsection,
the first two patterns estimated from PCA turn out to be very noisy.
In contrast, SpatPCA can still reconstruct the first two patterns very well with little noise.
The results in terms of the loss functions of \eqref{eq:ch4loss_sim} and \eqref{eq:ch4loss2_sim} are summarized in Figure~\ref{fig:ch4loss2_realsim}.
Once again, SpatPCA outperforms PCA by a large margin.

\begin{figure}[tbp]
\centering
\begin{tabular}{@{}c@{}}
\small $\hat{\varphi}_1(\cdot)$ from PCA\hspace{0.35\textwidth}$\hat{\varphi}_1(\cdot)$ from SpatPCA \\
\includegraphics[width=0.95\textwidth]{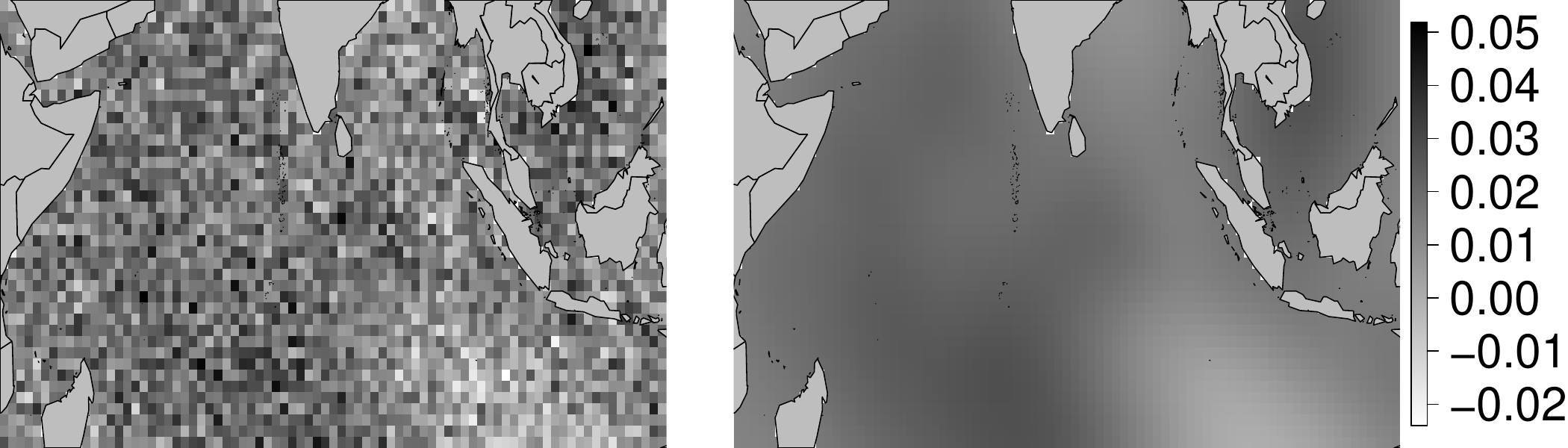} \\[0.6em]
\small $\hat{\varphi}_2(\cdot)$ from PCA\hspace{0.35\textwidth}$\hat{\varphi}_2(\cdot)$ from SpatPCA \\
\includegraphics[width=0.95\textwidth]{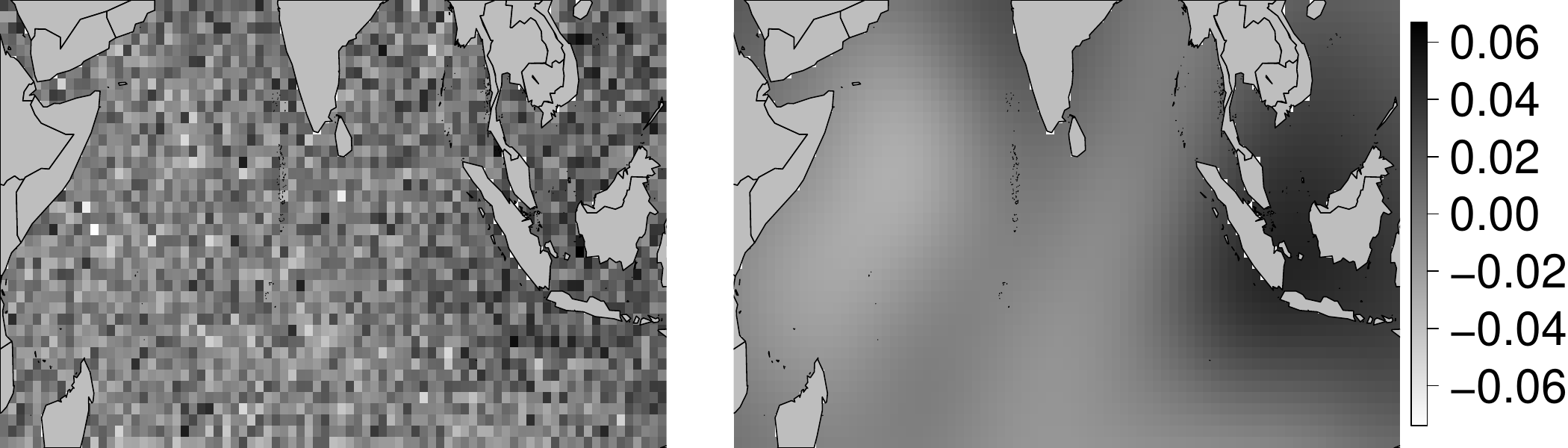}
\end{tabular}
\caption{Estimates of $\varphi_1(\cdot)$ and $\varphi_2(\cdot)$ from
PCA and SpatPCA in the two-dimensional experiment II based on a randomly simulated dataset,
where the areas in gray are the land.}
\label{fig:ch4sst_realsim}
\end{figure}

\begin{figure}[tbhp]
\centering
\includegraphics[width=0.48\textwidth]{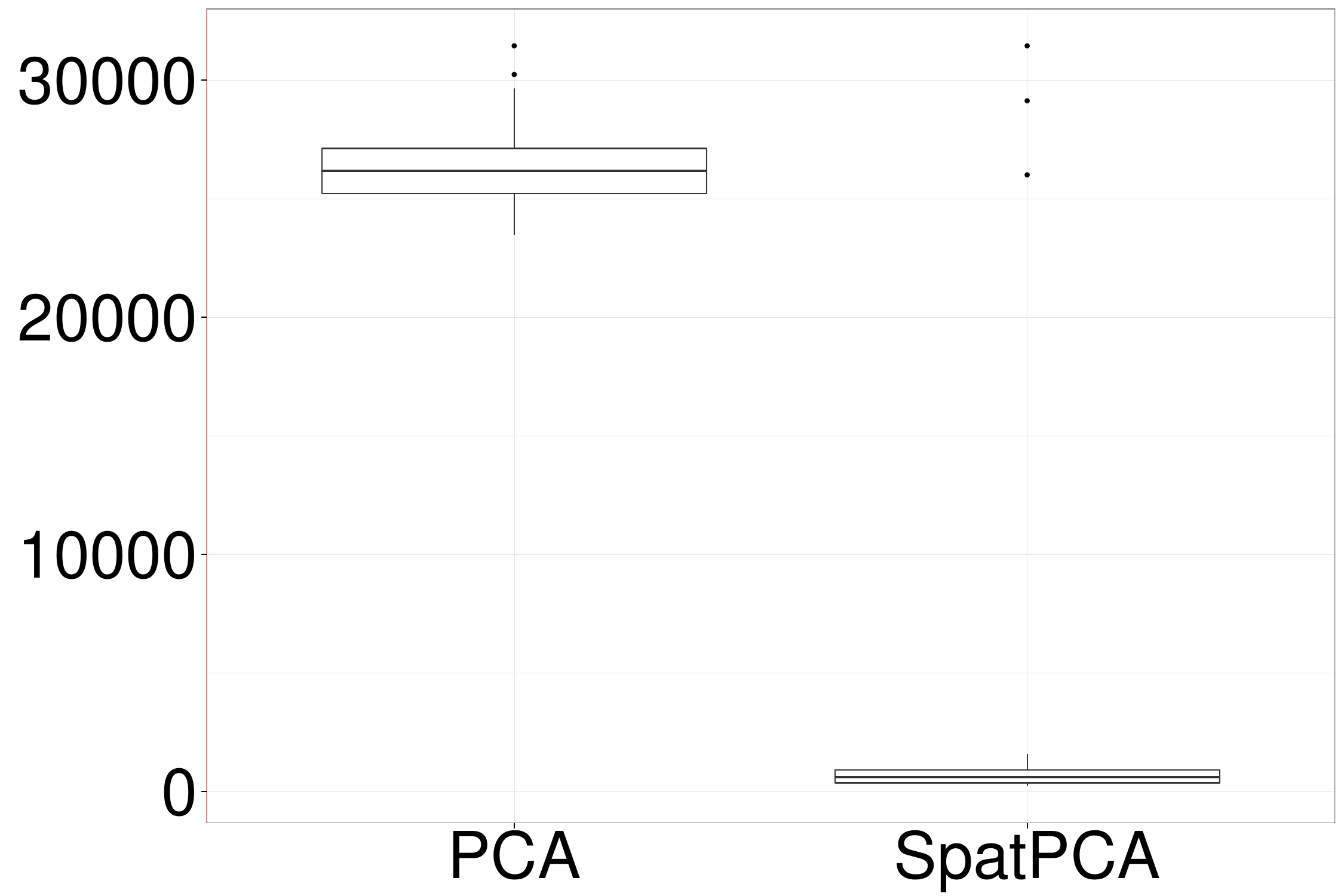}
\includegraphics[width=0.48\textwidth]{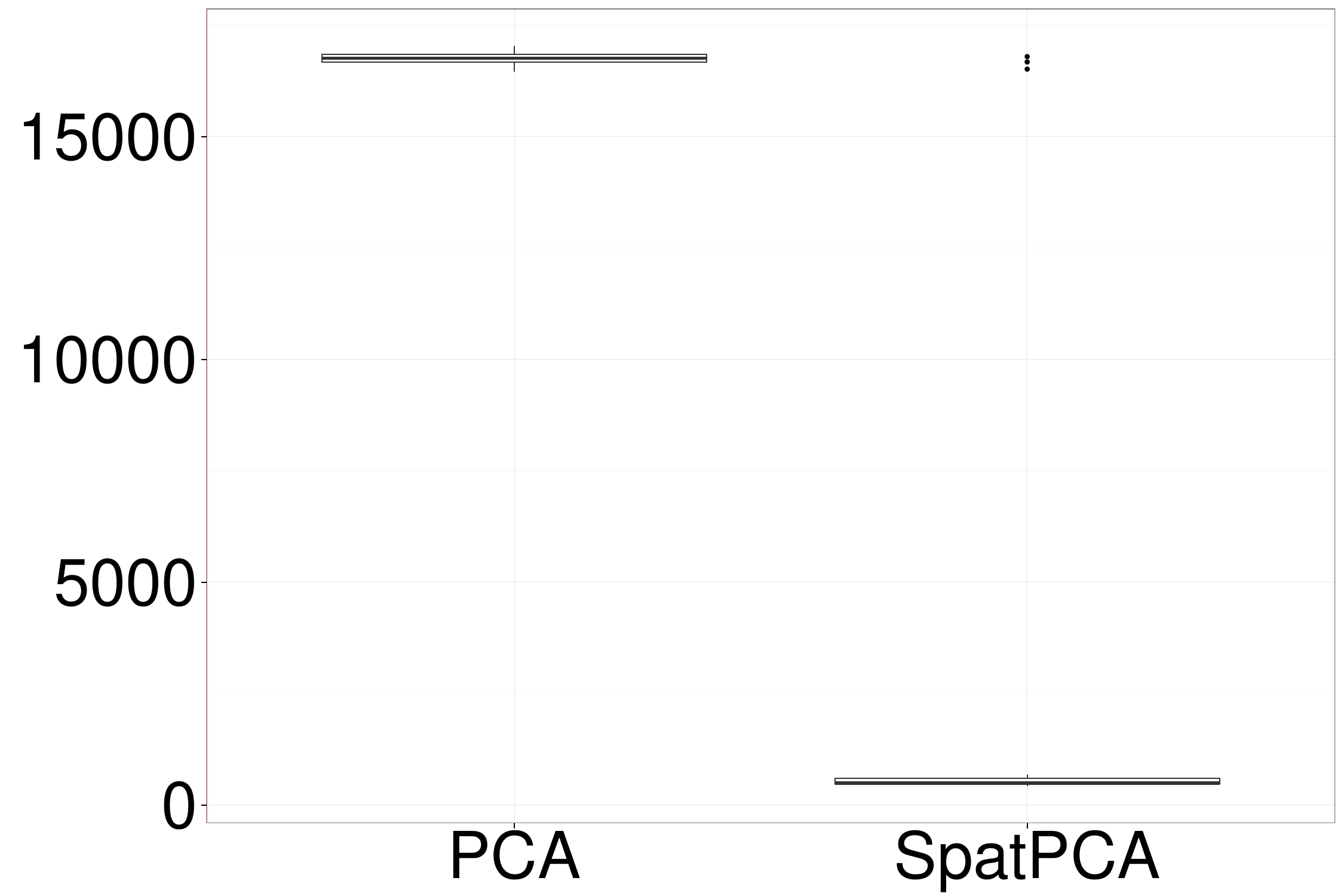}\\
\small{(a)\hspace{0.42\textwidth}(b)}
\caption{Boxplots of loss function values for PCA and SpatPCA in the two-dimensional
experiment II based on 50 simulation replicates: (a) average squared prediction errors of (\ref{eq:ch4loss_sim}); (b) average squared estimation errors of (\ref{eq:ch4loss2_sim}).}
\label{fig:ch4loss2_realsim}
\end{figure}


\chapter{Regularized Spatial Maximum Covariance Analysis}\label{ch:ch5}

\section{Introduction}
Many climate and atmospheric phenomena involve more than one meteorological spatial processes covarying in space. It is of interest to find dominant coupled patterns among these processes. For example, variations of sea surface temperatures (SSTs) in the Indian Ocean may affect precipitations in nearby countries in Africa, particularly over sensitive agricultural regions, and hence threaten the economies and livelihoods of these countries. Consequently, many studies have been conducted on the relationship between the SST and precipitation by analyzing their coupled patterns \citep[e.g.][]{reason2002sensitivity,morioka2012subtropical,omondi2013influence}. A commonly used method is maximum covariance analysis (MCA), which seeks important spatial patterns that explain the maximum amount of covariance between the two processes by using the singular value decomposition (SVD) of the cross-covariance matrix \citep{tucker1958inter}. 

However, the leading coupled patterns obtained by MCA may sometimes be too noisy to be physically interpretable when the signal-to-noise ratio is low. Many approaches have been proposed to improve MCA. For example, \citet{salim2005modelling} and \citet{salim2007model} proposed penalized likelihood approaches using roughness penalties to promote smoothness of the leading coupled patterns in space. However, these methods tend to capture global features but not localized ones. On the other hand, \citet{witten2009penalized} considered canonical correlation analysis with an $L_1$ constraint, and \citet{lee2011sparse} proposed a penalized likelihood method with the SCAD penalty \citep{fan2001variable} to facilitate sparse patterns. However, these methods cannot be applied to continuous spatial domains with data observed at irregularly spaced locations. Additionally, 
all these methods ignore the orthogonal constraints in MCA patterns.

In this chapter, we propose a regularization approach of MCA that incorporates smoothness and localized features in dominant coupled patterns. The proposed method, called spatial MCA (abbreviated as SpatMCA), is applicable to data measured irregularly in space. 
In addition, the resulting estimates can be effectively computed using the alternating direction method of multipliers (ADMM) \citep{admm}.
An R package to carry out SpatMCA is available on the Comprehensive R Archive Network (CRAN).

The remainder of this chapter is organized as follows. In Section~\ref{sec:ch5proposal}, we introduce the proposed SpatMCA method,
including dominant coupled patterns estimation and spatial cross-covariance function estimation.
Our ADMM algorithm for computing the SpatMCA estimate is provided in Section~\ref{sec:ch5algorithm}.
Numerical experiments that illustrate the superiority of SpatMCA and an application to study the relationship between sea surface temperature and precipitation datasets are presented in Section~\ref{sec:ch5numerical}.

\section{The Proposed Method}
\label{sec:ch5proposal}
Consider a sequence of uncorrelated, zero-mean, bivariate $L^2$-continuous spatial processes on spatial domains $D_1 \subset \mathbb{R}^d$ and $D_2 \subset \mathbb{R}^d$,
\[\{(\eta_{1i}(\bm{s}_1),\eta_{2i}(\bm{s}_2)) :\bm{s}_1\in D_1, \bm{s}_2\in D_2\};\quad  {i=1,\dots,n},\]
which have a common spatial covariance function $C_{jk}(\bm{s}_j,\bm{s}_k)=\mathrm{cov}(\eta_{ji}(\bm{s}_j),\eta_{ki}(\bm{s}_k))$; for $j,k=1,2$.
According to \cite{azaiez2015karhunen}, $C_{12}(\bm{s}_1,\bm{s}_2)$ can be decomposed as $C_{12}(\bm{s}_1,\bm{s}_2) = \sum_{k=1}^\infty d_ku_k(\bm{s}_1)v_k(\bm{s}_2)$, where  $\{d_k\}$ are nonnegative singular values with $d_1\geq d_2\geq\cdots$,  and $\{u_k(\cdot)\}$ and $\{v_k(\cdot)\}$ are the two corresponding sets of orthonormal basis functions. The decomposition is similar to the Karhunen-Lo\'{e}ve expansion \citep{karhunen,loeve} for a univariate spatial process. Suppose we observe data  $\bm{Y}_{ji} = (Y_{ji}(\bm{s}_{j1}),\dots,Y_{ji}(\bm{s}_{jp_j}))'$  with added noise $\bm{\epsilon}_{ji} \sim (\bm{0}, \sigma_j\bm{I})$ at the $p_j$ spatial locations $\bm{s}_{j1},\dots,\bm{s}_{jp_j}\in D_j$ for $j=1,2$, according to 
\begin{equation}
\left(\begin{array}{c}
\bm{Y}_{1i}\\ 
\bm{Y}_{2i}
\end{array}\right) =\left(\begin{array}{c}
\bm{\eta}_{1i}\\ 
\bm{\eta}_{2i}
\end{array}\right)+\left(\begin{array}{c}
\bm{\epsilon}_{1i}\\ 
\bm{\epsilon}_{2i}
\end{array}\right);\quad  {i=1,\dots,n},
\label{eq:ch5measurement}
\end{equation}
where $\bm{\eta}_{ji} = (\eta_{ji}(\bm{s}_{j1}),\dots,\eta_{ji}(\bm{s}_{jp_j}))'$, and $\bm{\epsilon}_{1i}, 
\bm{\epsilon}_{2i}$ and $(\bm{\eta}_{1i},\bm{\eta}_{2i})$ are mutually uncorrelated. Assume $d_{K+1} = 0$, and denote the cross-covariance matrix between $\bm{\eta}_{1i}$ and $\bm{\eta}_{2i}$ by $\bm{\Sigma}_{12} =\mathrm{cov}(\bm{\eta}_{1i},\bm{\eta}_{2i})$. Let $\bm{\Sigma}_{12} = \bm{U} \bm{D}\bm{V}'$ be the SVD of $\bm{\Sigma}_{12}$, where  $\bm{D}=\mathrm{diag}(d_1,\dots,d_{K})$, $\bm{U} = {(}\bm{u}_1,\dots,\bm{u}_{K}) $ is a $ p_1\times K$  matrix with the $(k,{i})$-th element $u_k({\bm{s}_{1{i}}})$, and $\bm{V}=(\bm{v}_1,\dots,\bm{v}_{K})$ is a $p_2\times K$ matrix with the $(k,{i})$-th element $v_k({\bm{s}_{2{i}}})$. We aim to identify the first $L\leq K$ dominant spatial coupled patterns $(u_1(\cdot),\dots,u_{{L}}(\cdot))$ and $(v_1(\cdot),\dots,v_{L}(\cdot))$ with large $d_1,\dots,d_{L}$ for processes $\eta_1({\cdot})$ and $\eta_2({\cdot})$, as well as to estimate $C_{12}(\cdot,\cdot)$.

Let $\bm{Y}_j = (\bm{Y}_{j1}, \dots,\bm{Y}_{jn})'$ for $j=1,2$. The sample cross-covariance matrix of $\bm{Y}_1$ and $\bm{Y}_2$ is $\bm{S}_{12} = \bm{Y}'_1\bm{Y}_2/n$. Then the MCA estimates of $\bm{u}_k$ and $\bm{v}_k$ {obtained by the SVD of} $\bm{S}_{12}$ are $\tilde{\bm{u}}_k$ and $\tilde{\bm{v}}_k$, the $k$-th left and right singular vectors of $\bm{S}_{12}$, for $k=1,\dots,K$. Let $\tilde{\bm{U}} = (\tilde{\bm{u}}_1,\dots,\tilde{\bm{u}}_K)$ and $\tilde{\bm{V}} = (\tilde{\bm{v}}_1,\dots,\tilde{\bm{v}}_K)$ be  $p_1\times K$ and $p_2 \times K$ matrices formed by the first $K$ left and right singular vectors {of} $\bm{S}_{12}$. Then $(\tilde{\bm{U}},\tilde{\bm{V}})$ solves the following constrained optimization problem {\citep{svd_trace}}:
\[
\max_{\bm{U},\bm{V}} \mathrm{tr}(\bm{U}' \bm{S}_{12} \bm{V})\quad \mbox{subject to $\bm{U}'\bm{U}=\bm{V}'\bm{V}=\bm{I}_K$}, 
\]
where $\bm{U}=(\bm{u}_1,\dots,\bm{u}_K)$ and $\bm{V}=(\bm{v}_1,\dots,\bm{v}_K)$. However, $(\tilde{\bm{U}},\tilde{\bm{V}})$ may suffer from high estimation variability when $p_1$ or $p_2$ is large, $n$ is small, or $\sigma^2_1$ or $\sigma^2_2$ is large.
Consequently, the patterns of $(\tilde{\bm{U}},\tilde{\bm{V}})$ may be too noisy to be physically interpretable. 
Additionally, for continuous spatial domains $D_1$ and $D_2$, we also need to estimate $(u_k(\bm{s}^*_1), v_k(\bm{s}^*_2))$ at locations $\bm{s}^*_1{\in D_1}$ and $\bm{s}^*_2{\in D_2}$, where data may be {unavailable}.

\subsection{Regularized Spatial MCA}
To reduce high estimation variability of MCA while controlling bias, our main idea is to introduce some spatial structure. We propose a regularization approach by maximizing the following objective function:
\begin{eqnarray}
\mathrm{tr}(\bm{U}' \bm{S}_{12} \bm{V})- \sum_{k=1}^K \left\{{\tau_{1u}J(u_k)+\tau_{2u} \|\bm{u}_k\|_1+\tau_{1v}J(v_k) +\tau_{2v}\|\bm{v}_k\|_1}\right\},
\label{eq:obj1}
\end{eqnarray}	
over $u_1(\cdot),\dots,u_{K}(\cdot)$ and $v_1(\cdot),\dots,v_{K}(\cdot)$, subject to $\bm{U}'\bm{U}=\bm{V}'\bm{V}=\bm{I}_K$ and $\bm{u}'_1\bm{S}_{12}\bm{v}_1\geq\dots\geq\bm{u}'_K\bm{S}_{12}\bm{v}_K$, where
\[
J(u)=\sum_{z_1+\cdots+z_d=2}\int_{\mathcal{R}^d}\left(
\frac{\partial^2 u(\bm{s})}{\partial x_1^{z_1}\dots\partial x_d^{z_d}}\right)^2 d\bm{s},
\]			
is a roughness penalty, $\|\bm{u}_k\|_1 = \sum_{{i}=1}^{p_1}u_k(\bm{s}_{1{i}})$, $\|\bm{v}_k\|_1 = \sum_{{i}=1}^{p_2}v_k(\bm{s}_{2{i}})$, $\bm{s}=(x_1,\dots,x_d)'$, $\tau_{1u}$ and $\tau_{1v }$ are nonnegative smoothness parameters, and $\tau_{2u}$ and $\tau_{2v}$ are nonnegative sparseness parameters. Since the patterns of $u_k(\cdot)$ and $v_k(\cdot)$ could be very different, we allow $\tau_{1u} \neq \tau_{1v}$ and $\tau_{2u} \neq \tau_{2v}$. Note that $J(\cdot)$ is the smoothing spline penalty,  designed to enhance smoothness of $u_k(\cdot)$ and $v_k(\cdot)$, and the $L_1$ Lasso penalty {\citep{lasso}} is applied to seek sparse patterns by shrinking $u_k(\cdot)$ and $v_k(\cdot)$ toward zero. The combination of the smoothness and sparseness penalties was shown by \citet{spatpca} to be effective {in obtaining} smooth and localized patterns for a univariate spatial process. Denote $\hat{u}_1(\cdot),\dots,\hat{u}_K(\cdot)$ and $\hat{v}_1(\cdot),\dots,\hat{v}_K(\cdot)$ as the maximizers of \eqref{eq:obj1}. When $\tau_{1u}$ is larger, $\{\hat{u}_k(\cdot)\}$ become smoother, and {vice versa}. 
When $\tau_{2u}$ is larger, {$\{\hat{u}_k(\cdot)\}$} become more localized by forcing more elements of $\bm{u}_k$ to be zero. Similar results can be applied to $\tau_{1v}$ and $\tau_{2v}$ for {$\{\hat{v}_k(\cdot)\}$}. On the other hand, when $\tau_{1u}=\tau_{2u}=\tau_{1v}=\tau_{2v}=0$, the estimates reduce to the MCA estimates. 

According to the smoothing spline theory {\citep{nonparametric}}, $\hat{u}(\cdot)$ and $\hat{v}(\cdot)$ are natural cubic splines and thin-plate splines for $d=1$ and $d\in{\{2,3\}}$ {with knots at $\{\bm{s}_{11},\dots, \bm{s}_{1p_1}\}$ and $\{\bm{s}_{21},\dots, \bm{s}_{2p_2}\}$, respectively}. Specifically,
\begin{eqnarray}
\hat{u}_k(\bm{s}_1)&={\displaystyle\sum_{i=1}^{p_1}} {a}_{1i} g(\|\bm{s}_1-\bm{s}_{1i}\|)+b_{10}+{\displaystyle\sum_{j=1}^{d}} {b}_{{1j}} x_{1j},\label{eq:basis_u}\\
\hat{v}_k(\bm{s}_2)&={\displaystyle\sum_{i=1}^{p_2}} {a}_{2i} g(\|\bm{s}_2-\bm{s}_{2i}\|)+b_{20}+{\displaystyle\sum_{j=1}^{d}} {b}_{{2j}}x_{2j},\:
\label{eq:basis_v}
\end{eqnarray}

\noindent where $\bm{s}_{{j}}=(x_{{{j}}1},\dots,x_{{{j}}d})'$ for ${{j}}=1,2$,
\[
g(r) = \left\{
\begin{array}{ll}
\displaystyle\frac{1}{16\pi}r^{2}\log{r};  & \mbox{if $d=2$,}\smallskip\\
\displaystyle\frac{\Gamma(d/2-2)}{16\pi^{d/2}}r^{4-d}; & \mbox{if }d=1,3,\\
\end{array}\right.	 
\]
and the coefficients ${\bm{a}_j}=\left({a}_{j1},\dots,{a}_{jp_j}\right)'$
and ${\bm{b}_j}=\left({b}_{j0},b_{j1},\dots,{b}_{jd}\right)'$ for $j=1,2$ satisfy
\[
{\left(\begin{array}{cc}
	\bm{G}_1 & \bm{E}_1 \\
	\bm{E}'_1 & \bm{0} \\
	\end{array}\right) \left(\begin{array}{c}
	{\bm{a}_1}\\
	{\bm{b}_1}
	\end{array}\right)=\left(\begin{array}{c}
	\hat{\bm{u}}_k\\
	\bm{0}
	\end{array}\right)}
\quad
\mbox{and}\quad{
	\left(\begin{array}{cc}
	\bm{G}_2 & \bm{E}_2 \\
	\bm{E}'_2 & \bm{0} \\
	\end{array}\right) \left(\begin{array}{c}
	{\bm{a}_2}\\
	{\bm{b}_2}
	\end{array}\right)=\left(\begin{array}{c}
	\hat{\bm{v}}_k\\
	\bm{0}
	\end{array}\right){.}}
\]
Here $\hat{\bm{u}}_k = (\hat{u}_k(\bm{s}_{11}),\dots,\hat{u}_k(\bm{s}_{1p_1}))'$, $\hat{\bm{v}}_k = (\hat{v}_k(\bm{s}_{21}),\dots,\hat{v}_k(\bm{s}_{2p_2}))'$, $\bm{G}_j$ is a $p_j\times p_j$ matrix with the $(i,{i'})$-th element $g(\|\bm{s}_{{ji}}-\bm{s}_{j{i'}}\|)$,
and $\bm{E}_j$ is a $p_j\times (d+1)$ matrix with the $i$-th row $(1,\bm{s}'_{ji})$ for $j=1,2$.
Therefore, $\hat{u}_k(\cdot)$ and $\hat{v}_k(\cdot)$ in \eqref{eq:basis_u} and \eqref{eq:basis_v} can be expressed in terms of $\hat{\bm{u}}_k$ and $\hat{\bm{v}}_k${,} respectively.

The roughness penalties of $u_k(\cdot)$ and $v_k(\cdot)$ can also be written as
\begin{equation}
\label{eq:ch5smoothness}
J(u_k) =\bm{u}'_k \bm\Omega_1 \bm{u}_k \quad \mbox{and} \quad J(v_k) =\bm{v}'_k \bm\Omega_2 \bm{v}_k,
\end{equation}
where $\bm\Omega_j$ is a known $p_j\times p_j$ matrix determined only by $\bm{s}_{j1},\dots,\bm{s}_{jp_j}$ for $j=1,2$ {\citep{nonparametric}}. Therefore, from (\ref{eq:obj1}) and (\ref{eq:ch5smoothness}), the proposed estimate $(\hat{\bm{U}}_{K,\tau_{1u},\tau_{2u}}, \hat{\bm{V}}_{K,\tau_{1v},\tau_{2v}})$ of $(\bm{U},\bm{V})$ can be simplified by maximizing the following objective function:
\begin{eqnarray}
\label{eq:obj2}
\mathrm{tr}(\bm{U}' \bm{S}_{12} \bm{V})- \sum_{k=1}^K \left\{{\tau_{1u}\bm{u}'_k\bm\Omega_1\bm{u}_k+\tau_{2u} \|\bm{u}_k\|_1+ \tau_{1v}\bm{v}'_k\bm\Omega_2\bm{v}_k+ \tau_{2v} \|\bm{v}_k\|_1}\right\}, 
\end{eqnarray}	
subject to $\bm{U}'\bm{U}=\bm{V}'\bm{V}=\bm{I}_K$ and $\bm{u}'_1\bm{S}_{12}\bm{v}_1\geq\dots\geq\bm{u}'_K\bm{S}_{12}\bm{v}_K$. We call the proposed method based on \eqref{eq:obj2} SpatMCA. Given $(\hat{\bm{U}}_{K,\tau_{1u},\tau_{2u}}$,$\hat{\bm{V}}_{K,\tau_{1v},\tau_{2v}})$, the estimates of $(u_1(\cdot),v_1(\cdot)),\dots,(u_K(\cdot),v_K(\cdot))$ can be directly calculated by \eqref{eq:basis_u} and \eqref{eq:basis_v}.
Note that the SpatMCA estimate of (\ref{eq:obj2}) reduces to a sparse CCA estimate of  \cite{witten2009penalized} 
if {$\mathrm{var}(\bm{Y}_j)=\bm{I}_{p_j}$}, $\bm{\Omega}_j=\bm{I}_{p_j}$ for $j=1,2$, and the orthogonal constraints of $\bm{U}$ and $\bm{V}$ are dropped.
\subsection{Estimation of Cross-Covariance Function}

To estimate $C_{12}(\cdot,\cdot)$, we also have to estimate $\bm{D}$. Given $(\hat{\bm{U}},\hat{\bm{V}})=(\hat{\bm{U}}_{K,\tau_{1u},\tau_{2u}},\hat{\bm{V}}_{K,\tau_{1v},\tau_{2v}})$   with $\hat{\bm{U}} = (\hat{\bm{u}}_1,\dots,\hat{\bm{u}}_K)$ and $\hat{\bm{V}} = (\hat{\bm{v}}_1,\dots,\hat{\bm{v}}_K)$, the proposed estimate of $\bm{D}$ is 
\begin{equation}\label{eq:estimate_d}
\hat{\bm{D}} = \mathop{arg\min}_{d_1,\dots,d_K\geq 0}\|\bm{S}_{12} - \hat{\bm{U}}\bm{D}{\hat{\bm{V}}'}\|^2_F =\mathrm{diag}(\hat{d}_1,\dots,\hat{d}_K),
\end{equation}
where $\hat{d}_k = \max\{\hat{\bm{u}}'_k\bm{S}_{12}\hat{\bm{v}}_k,0\};$ $k=1,\dots,K$, and $\|\bm{M}\|_F=
\Big(\displaystyle\sum_{i,j}m^2_{ij}\Big)^{1/2}$ is the Frobenius norm of a matrix $\bm{M}$. Then, the proposed estimate of $C_{12}(\cdot,\cdot)$ is 
\begin{equation}\label{eq:estimate_crosscov}
\hat{C}_{12}(\bm{s}_1,\bm{s}_2)=\sum_{k=1}^K \hat{d}_k \hat{u}_k(\bm{s}_1)\hat{v}_k(\bm{s}_2).
\end{equation}
\subsection{Tuning Parameter Selection}
An $M$-fold cross-validation (CV) is applied to select the tuning parameters $\tau_{1u}$, $\tau_{2u}$, $\tau_{1v}$ and $\tau_{2v}$. First, we randomly decompose the index set $\{1,\dots,n\}$ into $M$ parts {that are} as close to the same size, $n_M$, as possible. Let $(\bm{Y}^{(m)}_1,\bm{Y}^{(m)}_2)$ be the sub-matrix of $(\bm{Y}_1,\bm{Y}_2)$ corresponding to the $m$-th part.
For $m=1,\dots, M$, we treat $(\bm{Y}^{(m)}_1,\bm{Y}^{(m)}_2)$  as the validation data,
and {we} obtain the estimate $(\hat{\bm{U}}^{(-m)}_{K,\tau_{1u},\tau_{2u}},\hat{\bm{V}}^{(-m)}_{K,\tau_{1v},\tau_{2v}})$ of $(\bm{U},\bm{V})$ for ${\{}\tau_{1u}, \tau_{2u},\tau_{1v},\tau_{2v}{\}}\in\mathcal{A}$
based on the remaining data  $(\bm{Y}^{(-m)}_1,\bm{Y}^{(-m)}_2)$ using the proposed method \eqref{eq:obj2}, where $\mathcal{A}\subset[0,\infty)^4$ is a candidate index set. Then the proposed CV criterion is
\begin{equation}
\label{eq:cv_criterion}
\mathrm{CV}({K,}\tau_{1u},\tau_{2u},\tau_{1v},\tau_{2v})=\frac{1}{M}\sum_{m=1}^M\|\bm{S}^{(m)}_{12} - \hat{\bm{U}}^{(-m)}_{K,\tau_{1u},\tau_{2u}}\hat{\bm{D}}^{(-m)}_{{K,\tau_{1u},\tau_{2u},\tau_{1v},\tau_{2v}}}(\hat{\bm{V}}^{(-m)}_{K,\tau_{1{v}},\tau_{2{v}}})'\|^2_F\:,
\end{equation}
where $\bm{S}^{(m)}_{12} = \left(\bm{X}^{(m)}\right)'\bm{Y}^{(m)}/n_M$, and $\hat{\bm{D}}^{(-m)}_{{K,\tau_{1u},\tau_{2u},\tau_{1v},\tau_{2v}}}$ is the estimate of $\bm{D}$ from \eqref{eq:estimate_d} with $({\hat{\bm{U}},\hat{\bm{V}}})$ replaced by $(\hat{\bm{U}}^{(-m)}_{K,\tau_{1u},\tau_{2u}},\hat{\bm{V}}^{(-m)}_{K,\tau_{1v},\tau_{2v}})$. 

Owing to the high computation cost to select {$\{\tau_{1u}, \tau_{2u},\tau_{1v},\tau_{2v}\}$} simultaneously {for each $K$}, 
we recommend an effective two-step procedure for selecting them. Specifically, we first select $\tau_{1u}$ and $\tau_{1v}$ with $\tau_{2u}=\tau_{2v}=0$ {by}
\begin{equation}\label{eq:cv_max1}
(\hat{\tau}_{1u}(K),\hat{\tau}_{1v}(K))=\displaystyle\mathop{\arg\min}_{\{\tau_{1u}, \tau_{1v}\}\subset[0,\infty)^2}\mathrm{CV}({K,}\tau_{1u},0,\tau_{1v},0),
\end{equation}
and then select $\tau_{2u}$ and $\tau_{2v}$ by
\begin{equation}\label{eq:cv_max2}
(\hat{\tau}_{2u}(K),\hat{\tau}_{2v}(K))=\displaystyle\mathop{\arg\min}_{\{\tau_{2u}, \tau_{2v}\}\subset[0,\infty)^2}\mathrm{CV}(K,\hat{\tau}_{1u}(K),\tau_{2u},\hat{\tau}_{1v}(K),\tau_{2v}).
\end{equation}

{Finally, we select the }rank $K$ of $\bm{U}\bm{D}\bm{V}'$ by computing the CV values of \eqref{eq:cv_criterion} for $K=1,2,\dots$, evaluated at the {four} selected tuning {parameter values} until no further reduction of the CV value is obtained. That is,
\begin{align}
\hat{K}=\min\{&K: \mathrm{CV}\left(K,\hat{\tau}_{1u}(K),\hat{\tau}_{2u}(K),\hat{\tau}_{1v}(K),\hat{\tau}_{2v}(K)\right)\leq\notag\\
~&\mathrm{CV}\left(K+1,\hat{\tau}_{1u}(K+1),\hat{\tau}_{2u}(K+1),\hat{\tau}_{1v}(K+1),\hat{\tau}_{2v}(K+1)\right);K=1,2,\dots\}.
\label{eq:ch5khat}
\end{align}

\section{Computation Algorithm}
\label{sec:ch5algorithm}
Let $\bm{G} = (\bm{U}', \bm{V}')'$ be a $(p_1+ p_2)\times K$ matrix with the $({i},k)$-th element $g_{{i}k}$. The objective function \eqref{eq:obj2} can be rewritten as
\begin{equation}\label{eq:obj3}
\mathrm{tr}( \bm{G}'\bm{\Theta}\bm{G})-\sum_{k=1}^K \left(\tau_{2u} \sum_{{i}=1}^{p_1}|{g}_{{i}k}|+\tau_{2v} \sum_{{i}=p_1+1}^{p_1+p_2}|g_{{i}k}|\right),
\end{equation}
subject to $\bm{U}'\bm{U} = \bm{V}'\bm{V}=\bm{I}_{K}$, where $\bm{\Theta} ={\left(\begin{array}{cc}
	-\tau_{1u}\bm{\Omega}_{1} & \bm{S}_{12}/2 \\
	\bm{S}'_{12}/2 & -\tau_{1v}\bm{\Omega}_{2} \\
	\end{array}\right)}$.  The maximizer of \eqref{eq:obj3}, consisting of the orthogonal constraint and the Lasso penalty, is too complex to solve directly. We adopt the ADMM algorithm \citep[originated by][]{admm_2} by decomposing the constrained optimization problems into small subproblems that can be efficiently handled. The readers are referred to \citet{admm} for more details regarding ADMM.

First, we transform \eqref{eq:obj3} into the following equivalent {form} by adding $(p_1+p_2) \times K$ parameter matrices $\bm{Q}$ and  $\bm{R}${:}
\begin{equation*}\label{admm_opt}
\mathrm{tr}( \bm{G}'\bm{\Theta}\bm{G})-\sum_{k=1}^K \left(\tau_{2u} \sum_{i=1}^{p_1}|{r}_{ik}|+\tau_{2v} \sum_{i=p_1+1}^{p_1+p_2}|r_{ik}|\right),
\end{equation*}
subject to $\bm{Q}_1'\bm{Q}_1 = \bm{Q}_2'\bm{Q}_2=\bm{I}_{K}$, and a new constraint $\bm{G}=\bm{Q}=\bm{R}$, where $r_{ik}$ is the $(i,k)$-th element of $\bm{R}$, $\bm{Q}=(\bm{Q}'_1,\bm{Q}'_2)'$, and $\bm{Q}_1$ and $\bm{Q}_2$ {are} $p_1\times K$ and $p_2\times K$ sub-matrices of $\bm{Q}$ formed by the first $p_1$ and the last $p_2$ rows of $\bm{Q}$, respectively. The resulting augmented Lagrange function is		
\begin{align*}
L(\bm{G}, \bm{R}, \bm{Q}, \bm{\Gamma}_{1},\bm{\Gamma}_{2})=&~\mathrm{tr}( \bm{G}'\bm{\Theta}\bm{G})-\sum_{k=1}^K \left(\tau_{2u} \sum_{{i}=1}^{p_1}|{r}_{{i}k}|+\tau_{2v} \sum_{{i}=p_1+1}^{p_1+p_2}|{r}_{{i}k}|\right)\\
&~-\mathrm{tr}(\bm{\Gamma}'_{1}(\bm{G}-\bm{R}))-\mathrm{tr}(\bm{\Gamma}'_{2}(\bm{G}-\bm{Q}))\\&~-\frac{\zeta}{2}(\|\bm{G}-\bm{R}\|^2_F+\|\bm{G}-\bm{Q}\|^2_F),
\end{align*}subject to $\bm{Q}_1'\bm{Q}_1 = \bm{Q}_2'\bm{Q}_2=\bm{I}_{K}$, where $\bm{\Gamma}_{1}$ and $\bm{\Gamma}_{2}$ are $(p_1+p_2)\times K$ matrices of Lagrange multipliers, and $\zeta \geq 0$ is a penalty parameter to promote convergence. Then the ADMM steps at the $(\ell+1)$-th iteration {have the following} closed formed expressions:
\begin{align}
\bm{G}^{(\ell+1)}
=&~ \mathop{\arg\max}_{\bm{G}}L(\bm{G}, \bm{R}^{(\ell)}, \bm{Q}^{(\ell)}, \bm{\Gamma}_{1}^{(\ell)},\bm{\Gamma}_{2}^{(\ell)})
\notag\\
=&~ \frac{1}{2}(\zeta\bm{I}-\bm{\Theta})^{{-}1}\left(\zeta(\bm{R}^{(\ell)}{+}\bm{Q}^{(\ell)}){-}\bm{\Gamma}^{(\ell)}_{1}{-}\bm{\Gamma}^{(\ell)}_{2}\right),\label{eq:G}\\
\bm{R}^{(\ell+1)}
=&~\mathop{\arg\max}_{\bm{R}}L(\bm{G}^{(\ell+1)}, \bm{R}, \bm{Q}^{(\ell)}, \bm{\Gamma}_{1}^{(\ell)},\bm{\Gamma}_{2}^{(\ell)})
\notag\\
=&~ \left(\frac{1}{\zeta}\mathcal{S}_{\tau_{2}}\left(\zeta{g}^{(\ell+1)}_{{i}k}+{\gamma}_{1{i}k}^{(\ell)}\right)\right)_{(p_1+p_2)\times K}, \label{eq:ch5R}\\
\bm{Q}^{(\ell+1)}
=&~ \mathop{\arg\max}_{\bm{Q:\bm{Q}'\bm{Q}=\bm{I}}}L(\bm{G}^{(\ell+1)}, \bm{R}^{(\ell+1)}, \bm{Q}, \bm{\Gamma}_{1}^{(\ell)},\bm{\Gamma}_{2}^{(\ell)})\notag\\
=&~\left(\bm{F}_1^{(\ell)}\left(\bm{E}_1^{(\ell)}\right)',\bm{F}_2^{(\ell)}\left(\bm{E}_2^{(\ell)}\right)'\right)' ,\label{eq:ch5Q}\\
\bm{\Gamma}^{(\ell+1)}_{1}
=&~ \bm{\Gamma}^{(\ell)}_{1}+ \zeta\left(\bm{G}^{(\ell+1)}-\bm{Q}^{(\ell+1)}\right), \label{eq:ch5Gamma1}\\
\bm{\Gamma}^{(\ell+1)}_{2}
=&~ \bm{\Gamma}^{(\ell) }_{2}+ \zeta\left(\bm{G}^{(\ell+1)}-\bm{R}^{(\ell+1)}\right),\label{eq:ch5Gamma2}
\end{align}
where
\[\displaystyle \mathcal{S}_{\tau_{2}}(\gamma_{1jk})= \left\{
\begin{array}{ll}
\mathrm{sign}(\gamma_{1{i}k})\max(|\gamma_{1{i}k}|-\tau_{2u},0);  & \mbox{if ${i} \leq p_1$,}\smallskip\\
\mathrm{sign}(\gamma_{1{i}k})\max(|\gamma_{1{i}k}|-\tau_{2v},0); & \mbox{{if $i > p_1$}},\\
\end{array}\right.\]
$\gamma_{1{i}k}$ is the $({i},k)$-th element of $\bm{\Gamma}_1$, $\bm{E}_j^{(\ell)}\bm{\Lambda}_j^{(\ell)}\left(\bm{F}_j^{(\ell)}\right)'$ is the SVD of $\zeta\bm{G}_j^{(\ell+1)}+ \bm{\Gamma}^{(\ell)}_{2j}$ for $j=1,2$,  $\bm{G}^{(\ell+1)}_1$ and $\bm{G}^{(\ell+1)}_{21}$ are $p_1\times K$ and $p_2\times K$ sub-matrices of $\bm{G}^{(\ell+1)}$ corresponding to $\bm{U}$ and $\bm{V}$, and  $\bm{\Gamma}^{(\ell+1)}_{21}$ and $\bm{\Gamma}^{(\ell+1)}_{22}$ are $p_1\times K$ and $p_2\times K$ sub-matrices of $\bm{\Gamma}^{(\ell+1)}_2$ corresponding to $\bm{U}$ and $\bm{V}$. Note that $\zeta$ must be chosen large enough to ensure that $\zeta\bm{I}-\bm{\Theta}$ in \eqref{eq:G}
is positive-definite.

\section{Numerical Examples}
\label{sec:ch5numerical}
This section contains several simulation examples in one-dimensional and two-dimensional spatial domains and an application of SpatMCA to a real dataset. We compared the performance of the proposed SpatMCA with three other methods: (1) MCA ($\tau_{1u}=\tau_{1v}=\tau_{2u}=\tau_{2v}=0$);
(2) SpatMCA with the smoothness penalties only ($\tau_{2u}=\tau_{2v}=0$);
(3) SpatMCA with the sparseness penalties only ($\tau_{1u}=\tau_{1v}=0$),
in terms of the following loss function:
\begin{equation}
\label{eq:loss1} 
\mathrm{Loss}(\hat{C}_{12}) =\frac{1}{p_1p_2} \sum_{i=1}^{p_1}\sum_{j=1}^{p_2}\big(\hat{C}_{12}(\bm{s}_{1i},\bm{s}_{2j})-C_{12}(\bm{s}_{1i},\bm{s}_{2j})\big)^2\:.
\end{equation}
Throughout this section, we applied the proposed SpatMCA method and the ADMM algorithm given by \eqref{eq:G}–\eqref{eq:ch5Gamma2} to compute the SpatMCA estimates
with $\zeta$ being ten times the maximum singular value of $\bm{S}_{12}$. Additionally, the stopping criterion for the ADMM algorithm is
\[
\frac{1}{\sqrt{p_1p_2}} \max\left( \|\bm{G}^{(\ell+1)}-\bm{G}^{(\ell)}\|_F,\|\bm{G}^{(\ell+1)}-\bm{R}^{(\ell+1)}\|_F,
\|\bm{G}^{(\ell+1)}-\bm{Q}^{(\ell+1)}\|_F \right)\leq 10^{-4}\:.
\]
\subsection{A One-Dimensional Experiment}\label{sec:1d}
We generated data from \eqref{eq:ch5measurement} with $K=2$, {$d=1$, $n=1000,$}
\[\left(\begin{array}{c}
\bm{\eta}_{1i} \\
\bm{\eta}_{2i} \\
\end{array}\right)\sim N\left(\bm{0}, {\left(\begin{array}{cc}
	\bm{I} & \bm{U}\mathrm{diag}(d_1,d_2)\bm{V}' \\
	\bm{V}\mathrm{diag}(d_1,d_2)\bm{U}' & \bm{I}\\
	\end{array}\right)}\right),\] 
$\bm{\epsilon}_{ji}\sim N(\bm{0},\bm{I})$, ${p_j}=50$, $(\bm{s}_{j1},  \dots,\bm{s}_{jp_j})$ equally spaced in $[-7,7]$, and
\begin{align}
u_1(\bm{s}_1)
=&~ \frac{1}{c_1}\exp(-(x_{11}^2+\cdots+x_{1d}^2)),
\label{eq:u1_sim}\\
v_1(\bm{s}_2)
=&~ \frac{1}{c_2}\exp(-((x_{21}-2)^2+\cdots+(x_{2d}-2)^2)/2),
\label{eq:v1_sim}\\
u_2(\bm{s}_1)
=&~ \frac{1}{c_3}x_{11}\cdots x_{1d}\exp(-(x_1^2+\cdots+x_{1d}^2)),
\label{eq:u2_sim}\\
v_2(\bm{s}_2)
=&~ \frac{1}{c_4}(x_{21}-2)\cdots (x_{2d}-2)\exp(-((x_{21}-2)^2+\cdots+(x_{2d}-2)^2)/2),
\label{eq:v2_sim}
\end{align}
where $\bm{s}_j=(x_{j1},\dots,x_{jd})'$, $c_1$, $c_2$, $c_3$ and $c_4$ are normalization constants such that
$\|\bm{u}_j\|_2=\|\bm{v}_j\|_2=1$ for $j=1,2$. We considered three pairs of $(d_1,d_2)\in\{(1,0), (0.5,0), (1,0.7)\}$, and applied the proposed SpatMCA with $K=\{1,2,5\}$ and $\hat{K}$ selected by \eqref{eq:ch5khat}. For each case, we applied the 5-fold CV of \eqref{eq:ch5khat} to select ${\{}\tau_{1u}, \tau_{1v}, \tau_{2u}, \tau_{2v}{\}}$ among $21$ values of $\tau_{1u}$ and $\tau_{1v}$  (including $0$ and the other 20 values equally spaced on the log scale from $10^{-2}$ to $10$) and $11$ values of $\tau_{2u}$ and $\tau_{2v}$ (including $0$ and the other 10 values equally spaced on the log scale from $10^{-3}$ to $1$).

Figures~\ref{fig:est_u_d1} and \ref{fig:est_v_d1} show the estimates of $u_k(\cdot)$ and $v_k(\cdot)$, respectively, for the four methods based on three different combinations of singular values. Each case contains four estimated functions based on four randomly generated datasets.
Not surprisingly, the MCA estimates considering no spatial structure are very noisy, particularly when the signal-to-noise ratio is small.
Adding only the smoothness penalties (i.e., $\tau_{2u}=\tau_{2v}=0$) reduces noise, but introduces some bias. On the other hand, adding only the sparseness penalties (i.e, $\tau_{1u}=\tau_{1v}=0$) does not reduce much noise, despite that the estimated $\{u_k(\cdot)\}$ and $\{v_k(\cdot)\}$ are forced to be zeros at some locations. Our SpatMCA estimates generally reproduce the targets with little noise for all cases even for the small signal-to-noise ratio, indicating the effectiveness of regularization.

\begin{figure}\centering
	$\hat{u}_1(\cdot)$ based on $K=1$ for $(d_1,d_2)=(1,0)$\\
	\includegraphics[scale=0.39]{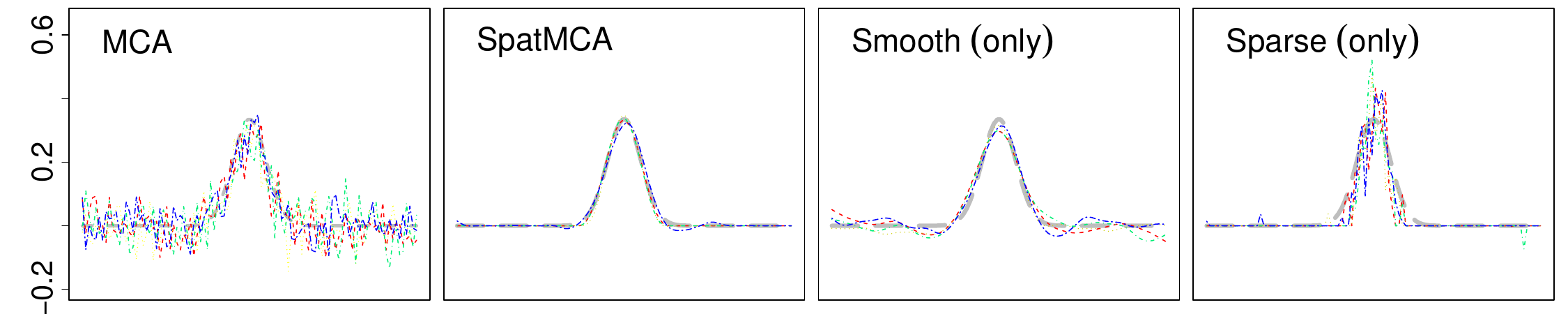}
	$\hat{u}_1(\cdot)$ based on $K=1$ for $(d_1,d_2)=(0.5,0)$\\
	\includegraphics[scale=0.39]{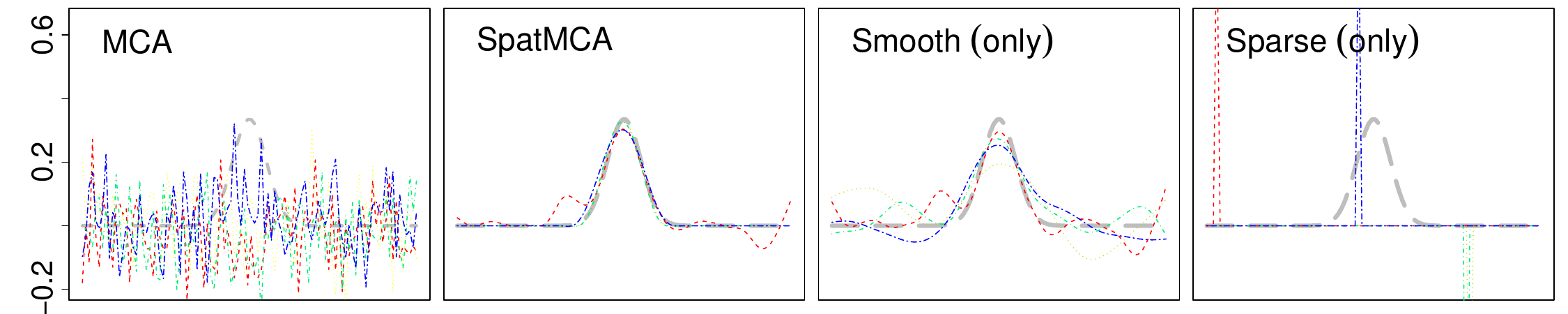}
	$\hat{u}_1(\cdot)$ based on $K=2$ for $(d_1,d_2)=(1,0.7)$\\
	\includegraphics[scale=0.39]{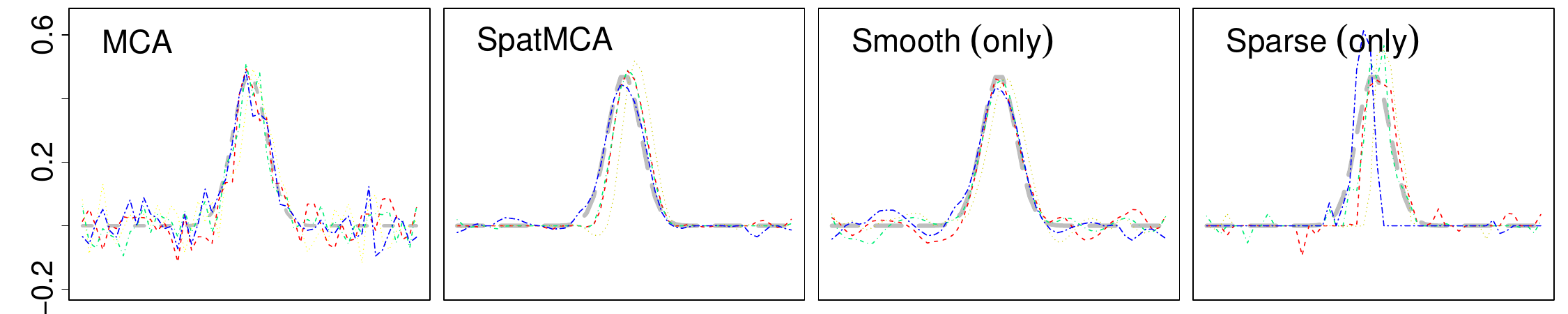}
	$\hat{u}_2(\cdot)$ based on $K=2$ for $(d_1,d_2)=(1,0.7)$\\
	\includegraphics[scale=0.39]{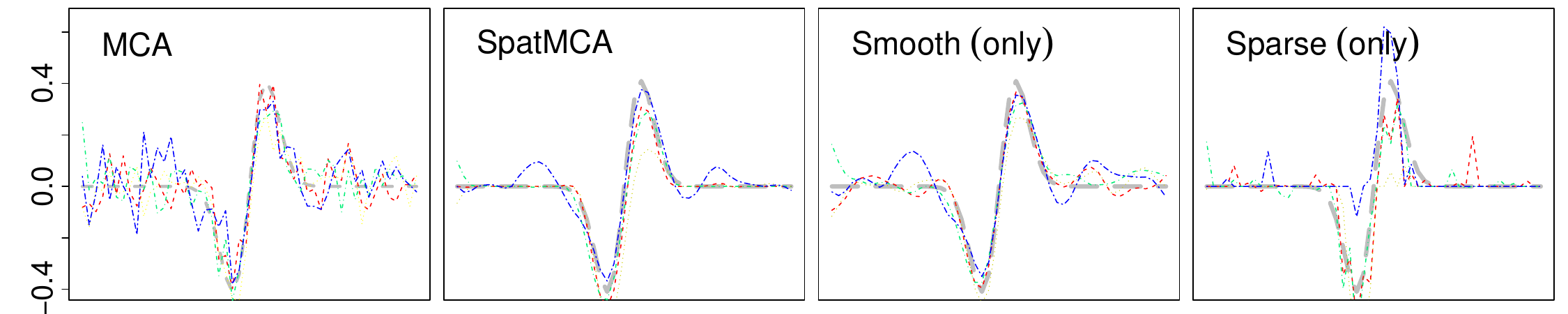}
	\caption{Estimates of $u_1(\cdot)$ and $u_2(\cdot)$ obtained from various methods based on data generated from three different combinations of singular values. Each panel consists of four estimates (in four different line types) corresponding to four randomly generated datasets, where the dash gray lines are the true $u_1(\cdot)$ and $u_2(\cdot)$.}
	\label{fig:est_u_d1}
\end{figure}	

\begin{figure}\centering
	$\hat{v}_1(\cdot)$ based on $K=1$ for $ (d_1,d_2)=(1,0)$\\
	\includegraphics[scale=0.39]{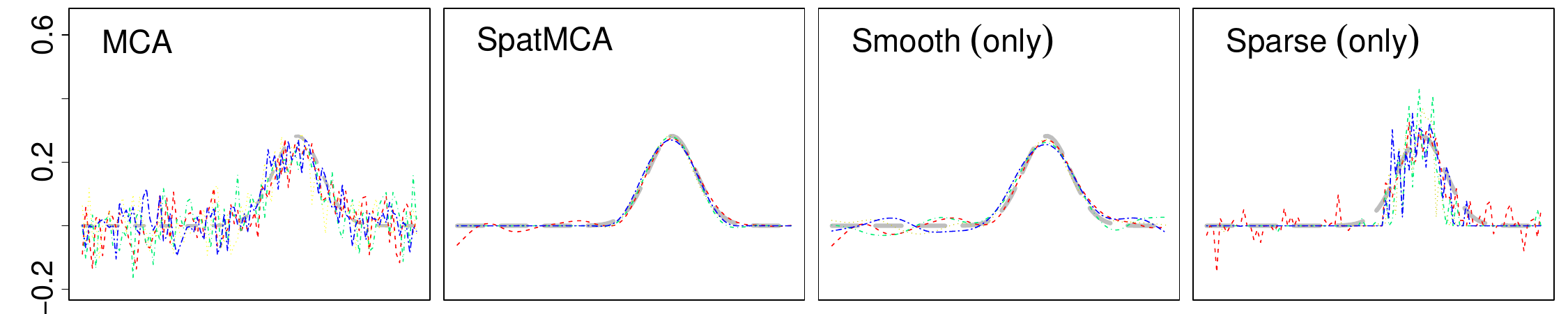}\\
	$\hat{v}_1(\cdot)$ based on $K=1$ for $ (d_1,d_2)=(0.5,0)$\\
	\includegraphics[scale=0.39]{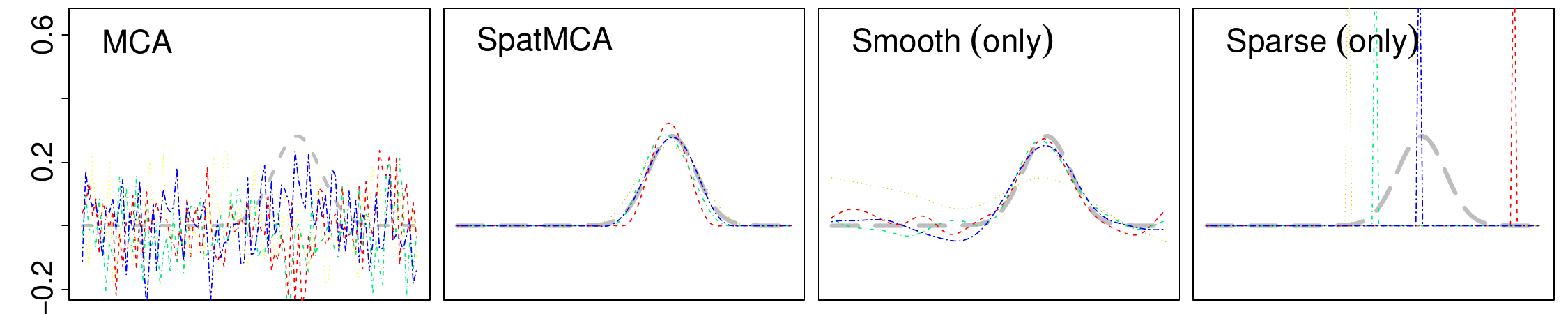}\\
	$\hat{v}_1(\cdot)$ based on $K=2$ for $ (d_1,d_2)=(1,0.7)$\\
	\includegraphics[scale=0.39]{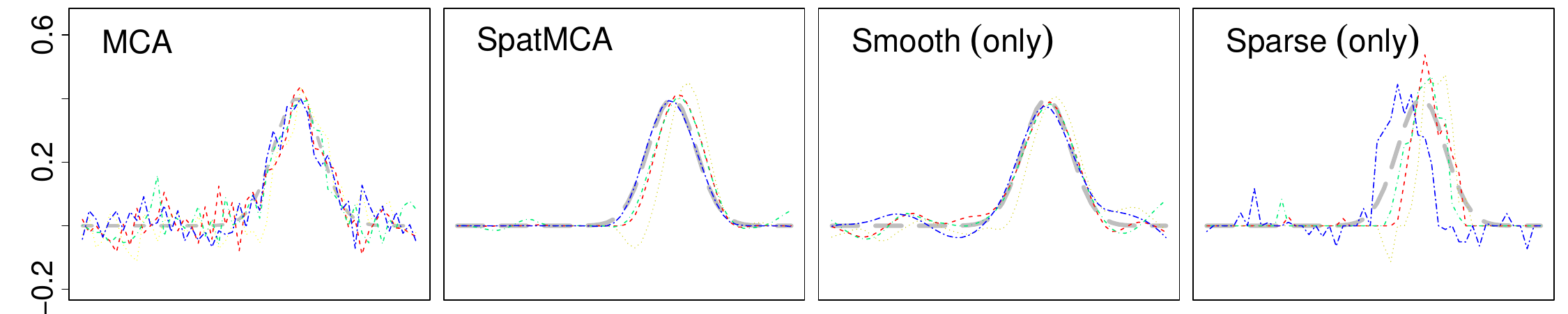}\\
	$\hat{v}_2(\cdot)$ based on $K=2$ for $  (d_1,d_2)=(1,0.7)$\\
	\includegraphics[scale=0.39]{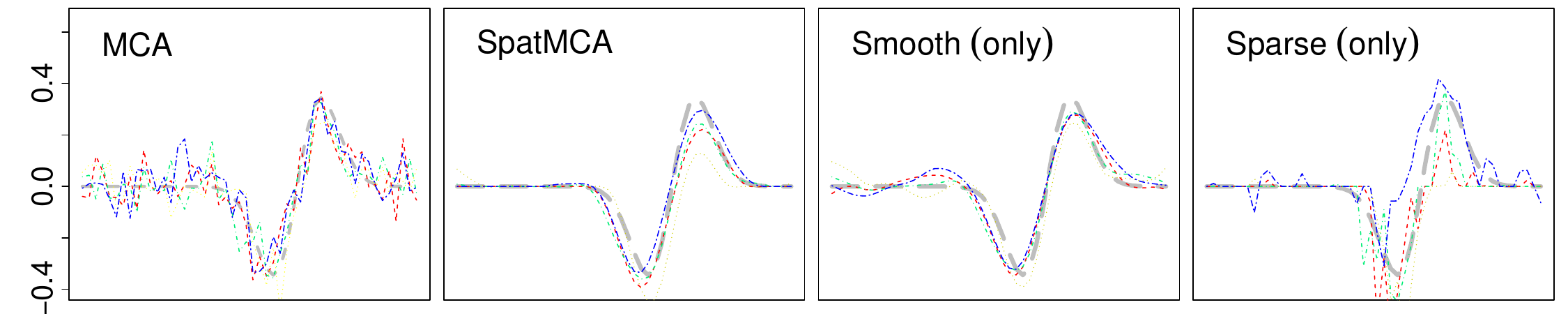}
	\caption{Estimates of $v_1(\cdot)$ and $v_2(\cdot)$ obtained from various methods based on data generated from three different combinations of singular values. Each panel consists of four estimates (in four different line types) corresponding to four randomly generated datasets, where the dash gray lines are the true $v_1(\cdot)$ and $v_2(\cdot)$.}
	\label{fig:est_v_d1}
\end{figure}	
The cross-covariance function estimates for the four methods based on a randomly generated dataset are shown in Figure~\ref{fig:cov_d1}. The proposed SpatMCA can be seen to perform better than the other methods for all cases. Figure~\ref{fig:box_d1_loss} shows boxplots of the four methods in terms of the loss function \eqref{eq:loss1} based on $50$ simulation replicates, which further confirms the superiority of SpatMCA.

\begin{figure}\centering
	$(d_1,d_2)=(1,0)$
	\includegraphics[scale=0.42]{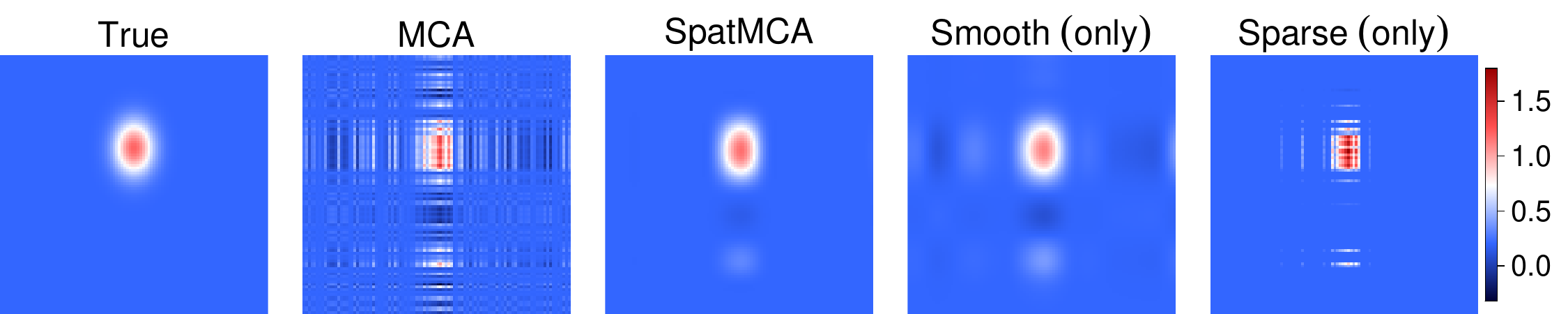}
	$(d_1,d_2)=(0.5)$
	\includegraphics[scale=0.42]{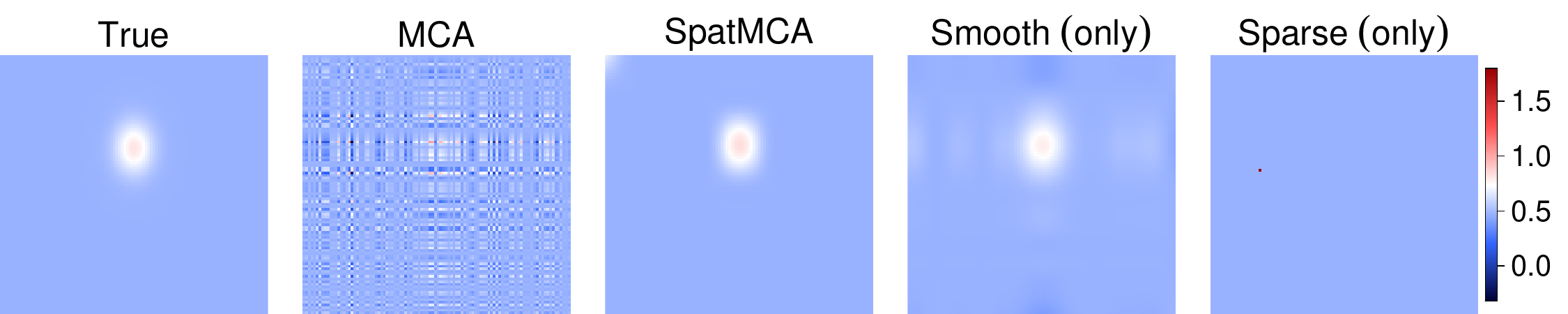}
	$(d_1,d_2)=(1,0.7)$
	\includegraphics[scale=0.42]{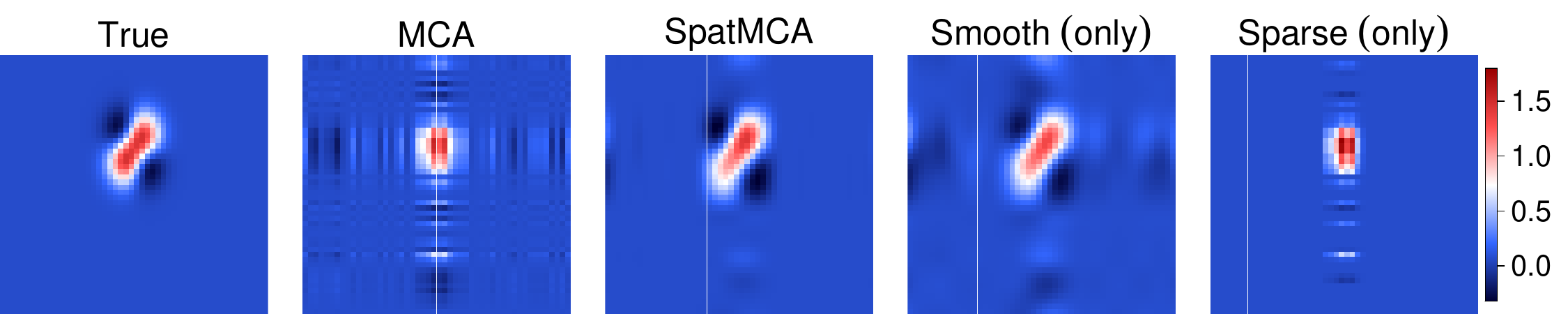}
	\caption{True cross-covariance functions and their estimates obtained from various methods with the rank $\hat{K}$ selected by CV for three different combinations of {singular values}.}
	\label{fig:cov_d1}
\end{figure}

\begin{figure}
	\begin{tabular}{ccc}
		{{$(d_1,d_2)=(1,0)$}, $K=1$}&{{$(d_1,d_2)=(0.5,0)$}, $K=1$}&{{$(d_1,d_2)=(1,0.7)$}, $K=1$}\\
		\includegraphics[scale=0.12]{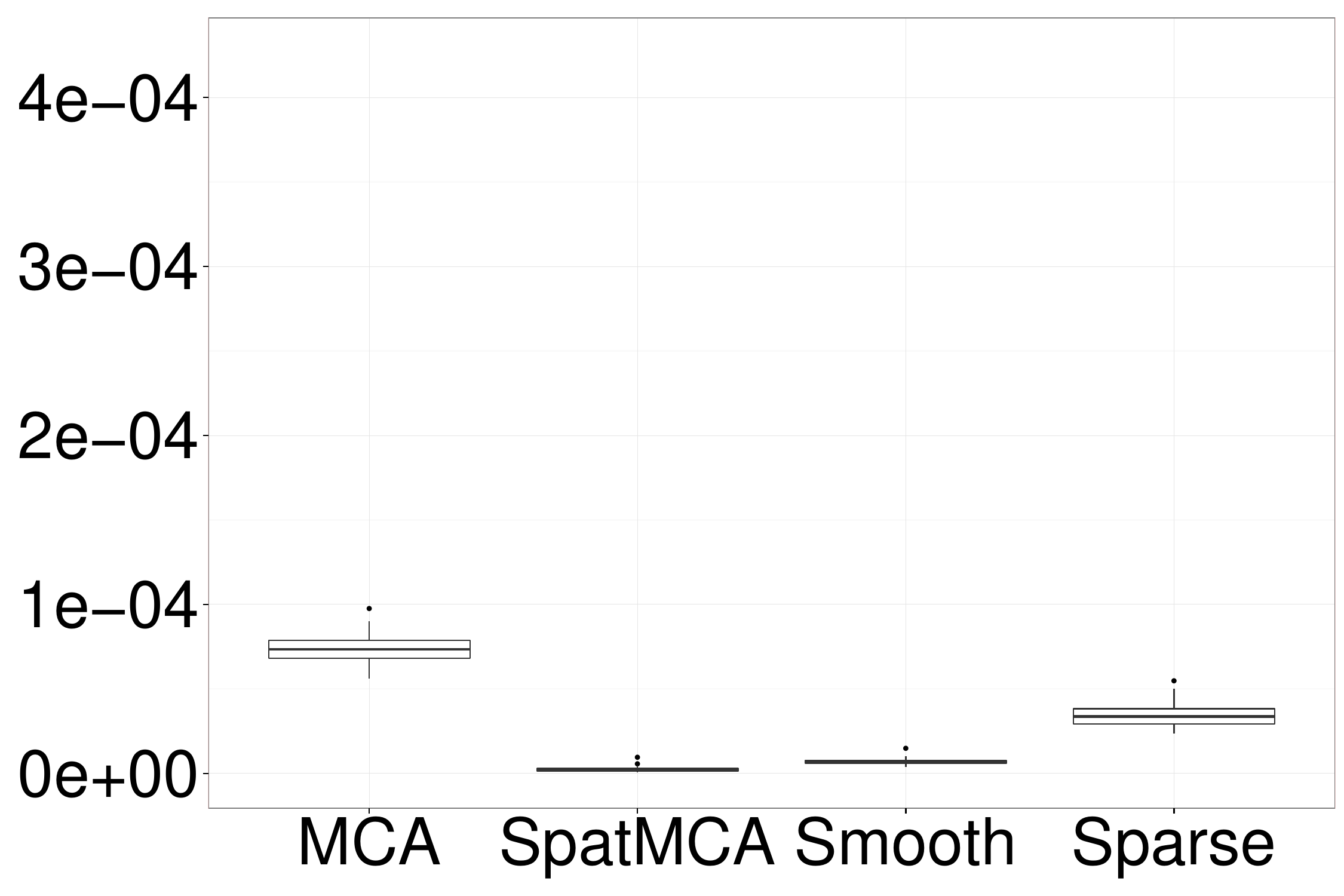}\hspace{4pt}&
		\includegraphics[scale=0.12]{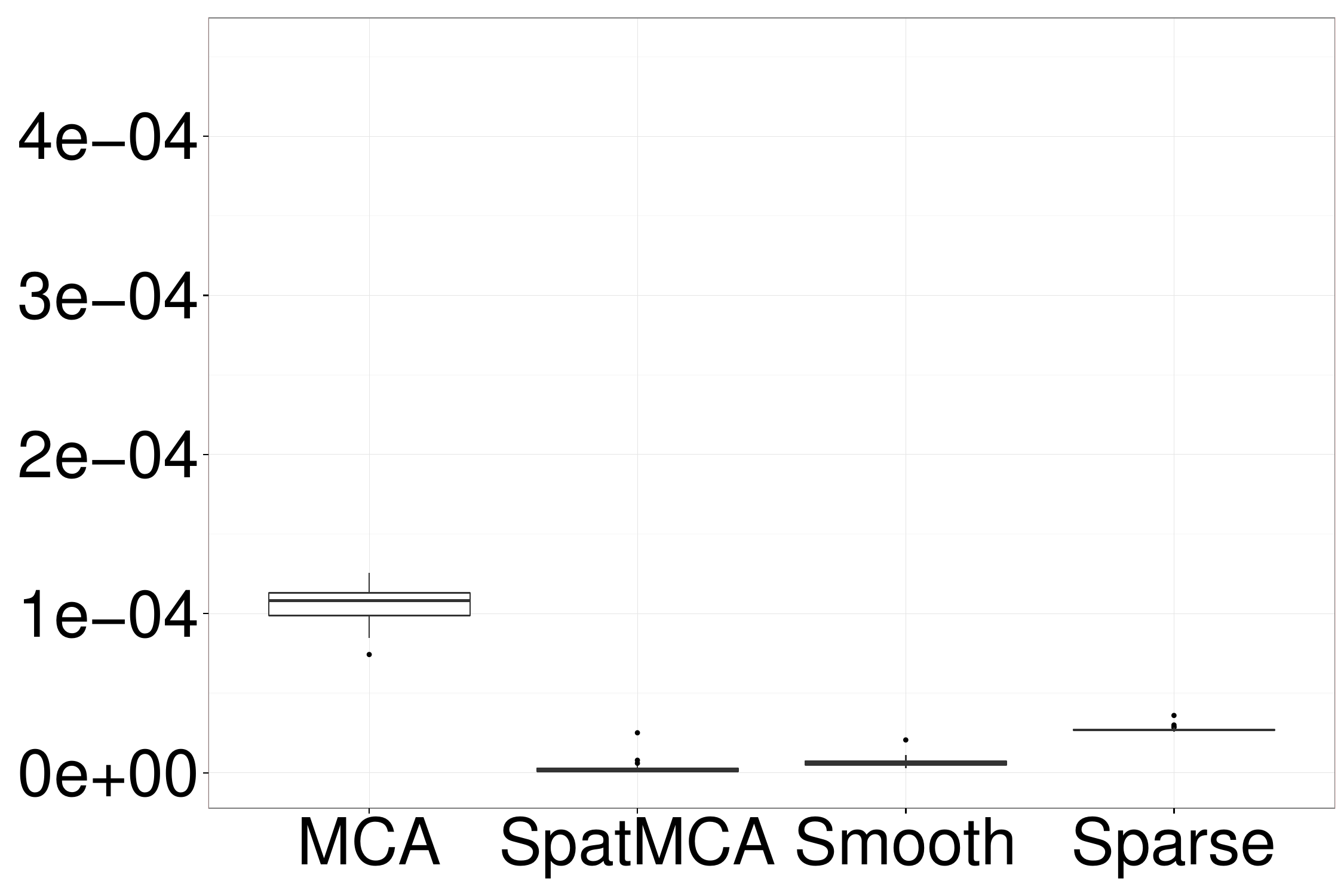}\hspace{4pt}&
		\includegraphics[scale=0.12]{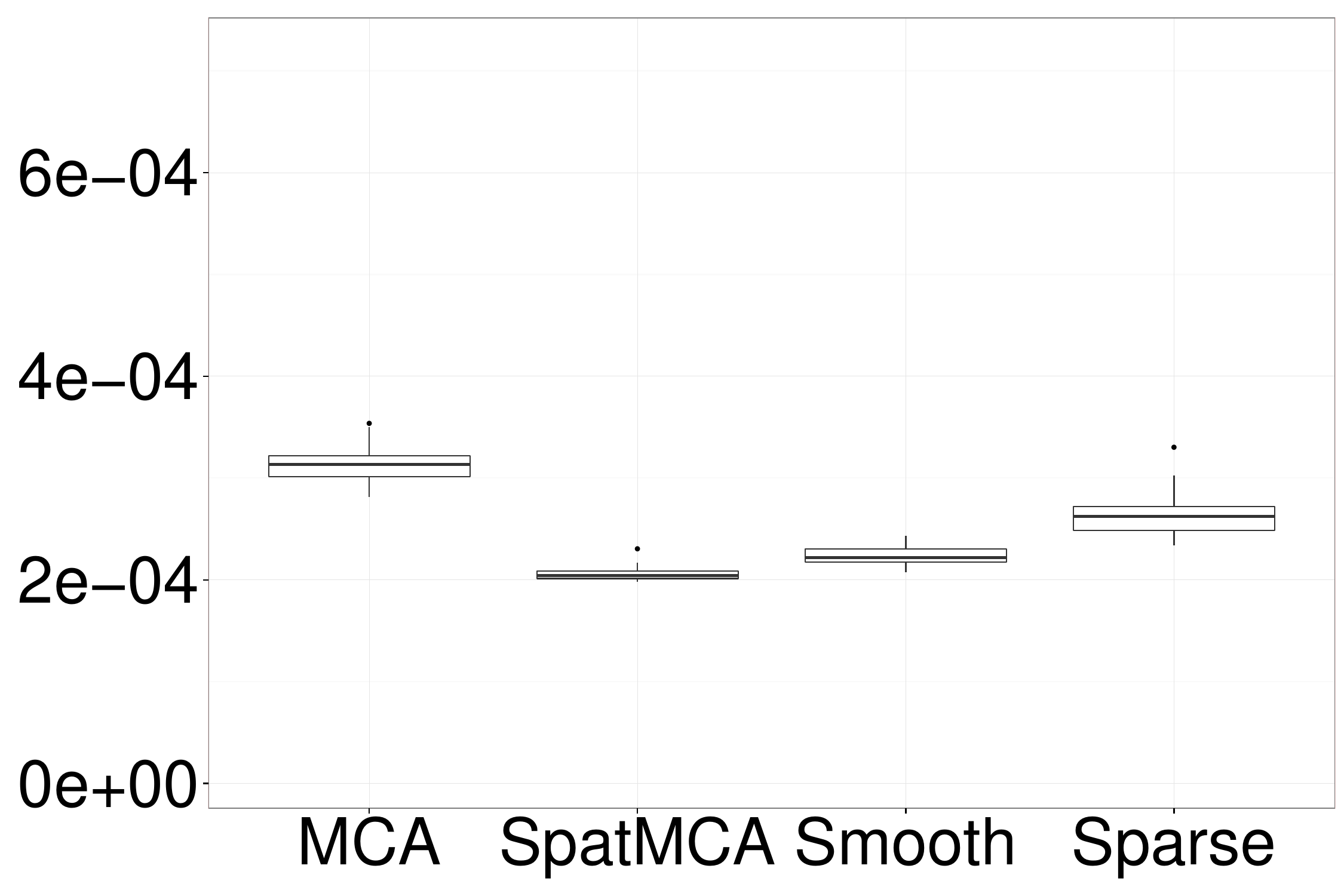}\hspace{4pt}\\
		{{$(d_1,d_2)=(1,0)$}, $K=2$}&{{$(d_1,d_2)=(0.5,0)$}, $K=2$}&{{$(d_1,d_2)=(1,0.7)$}, $K=2$}\\
		\includegraphics[scale=0.12]{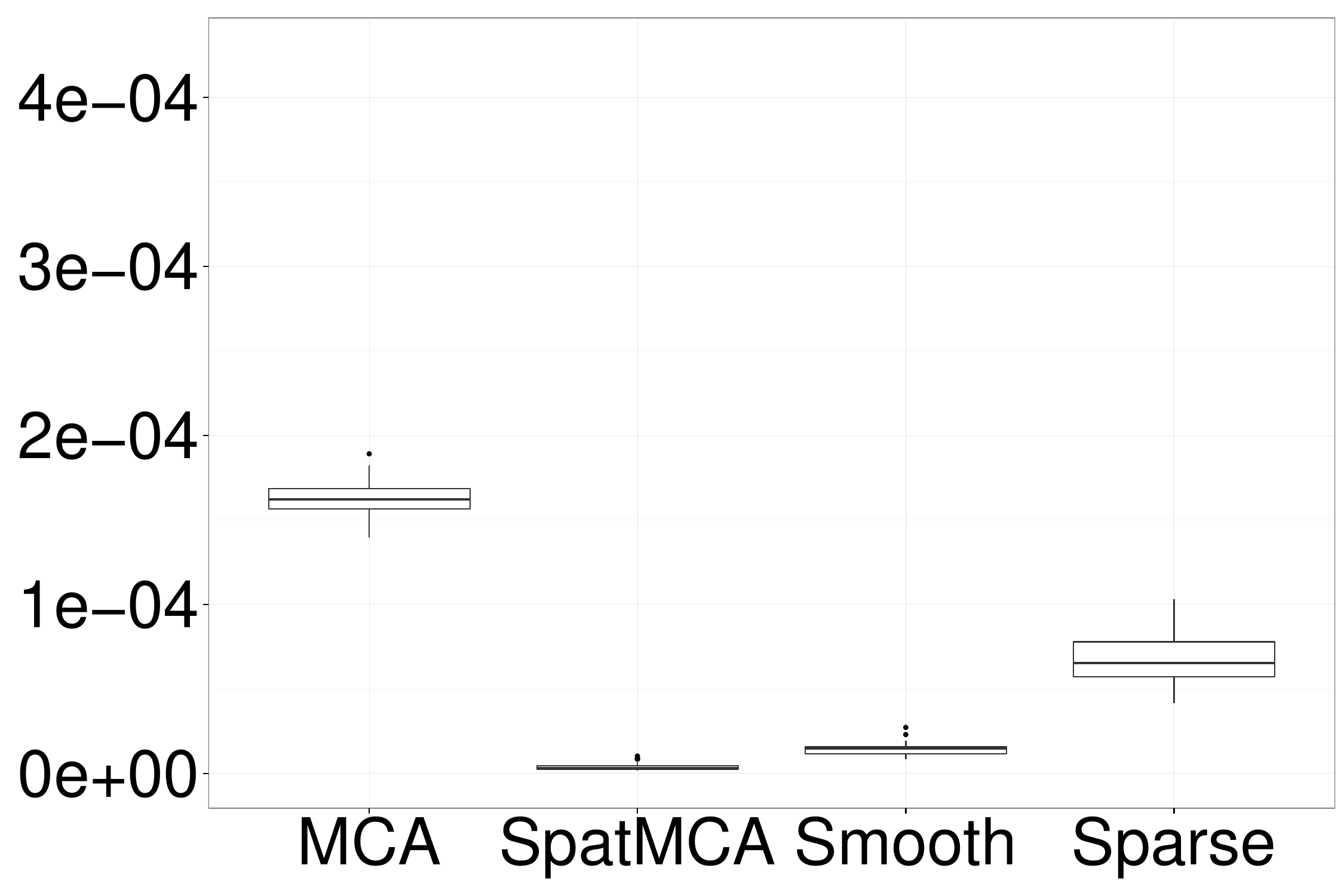}\hspace{4pt}&
		\includegraphics[scale=0.12]{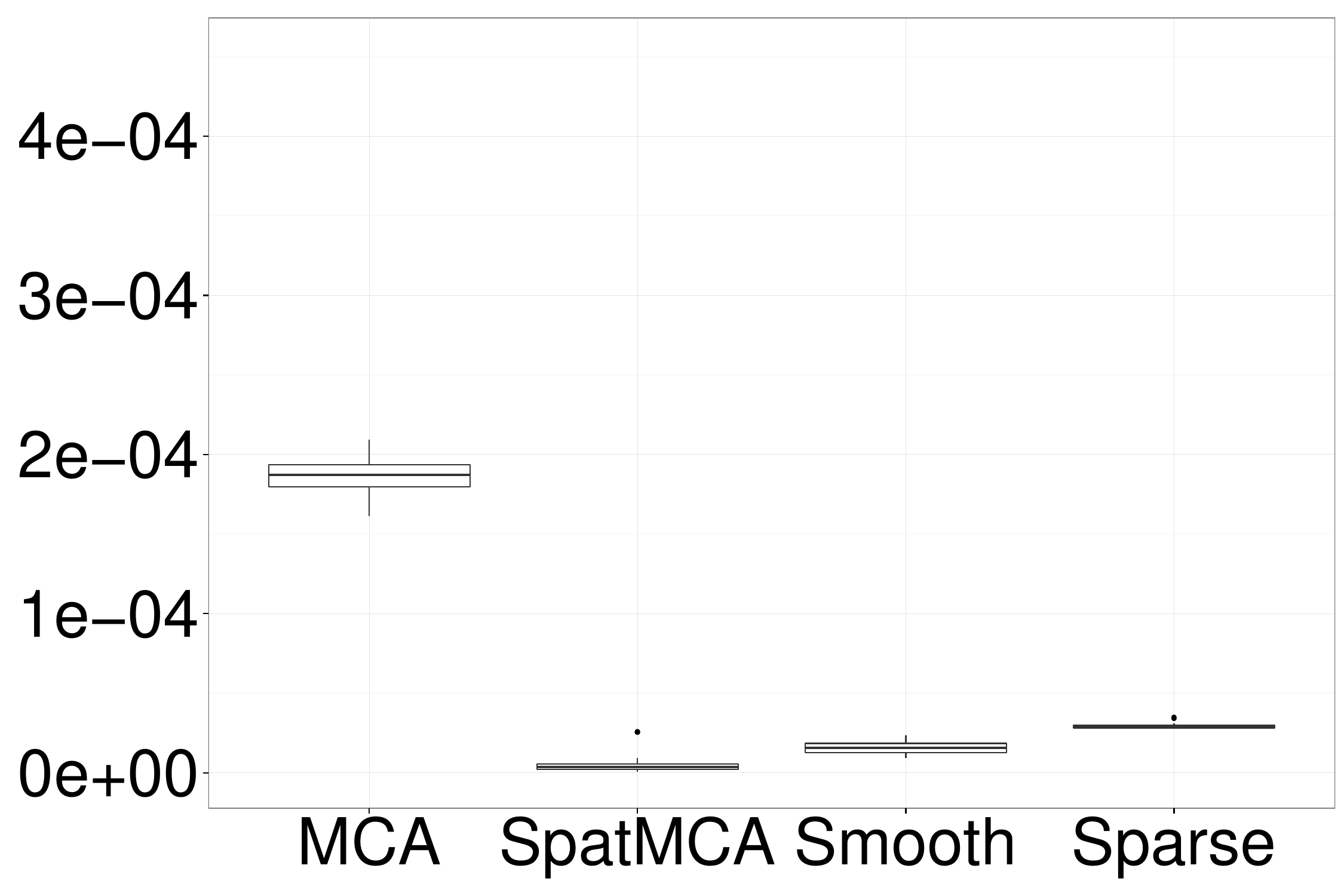}\hspace{4pt}&
		\includegraphics[scale=0.12]{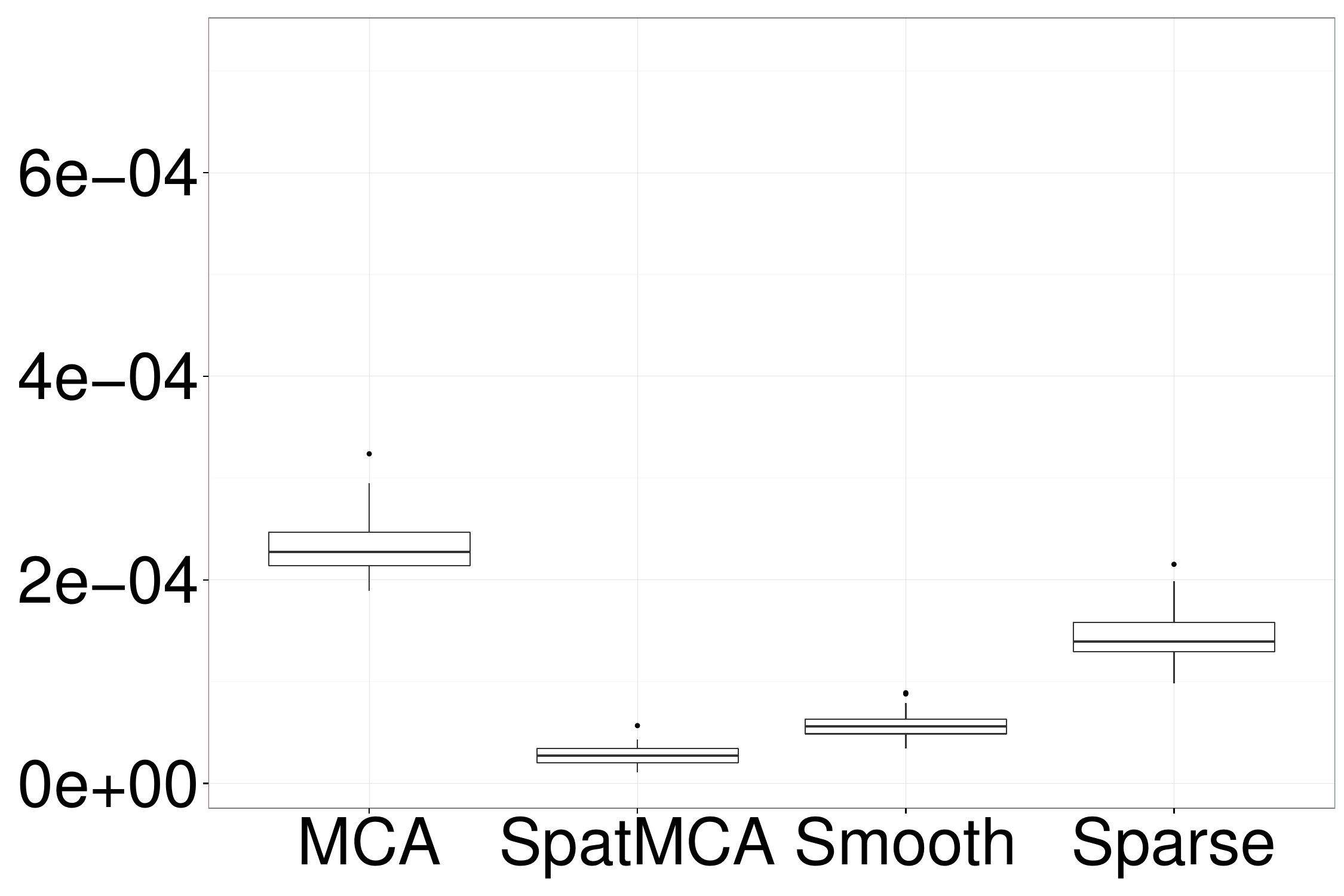}\hspace{4pt}\\
		{{$(d_1,d_2)=(1,0)$}, $K=5$}&{{$(d_1,d_2)=(0.5,0)$}, $K=5$}&{{$(d_1,d_2)=(1,0.7)$}, $K=5$}\\
		\includegraphics[scale=0.12]{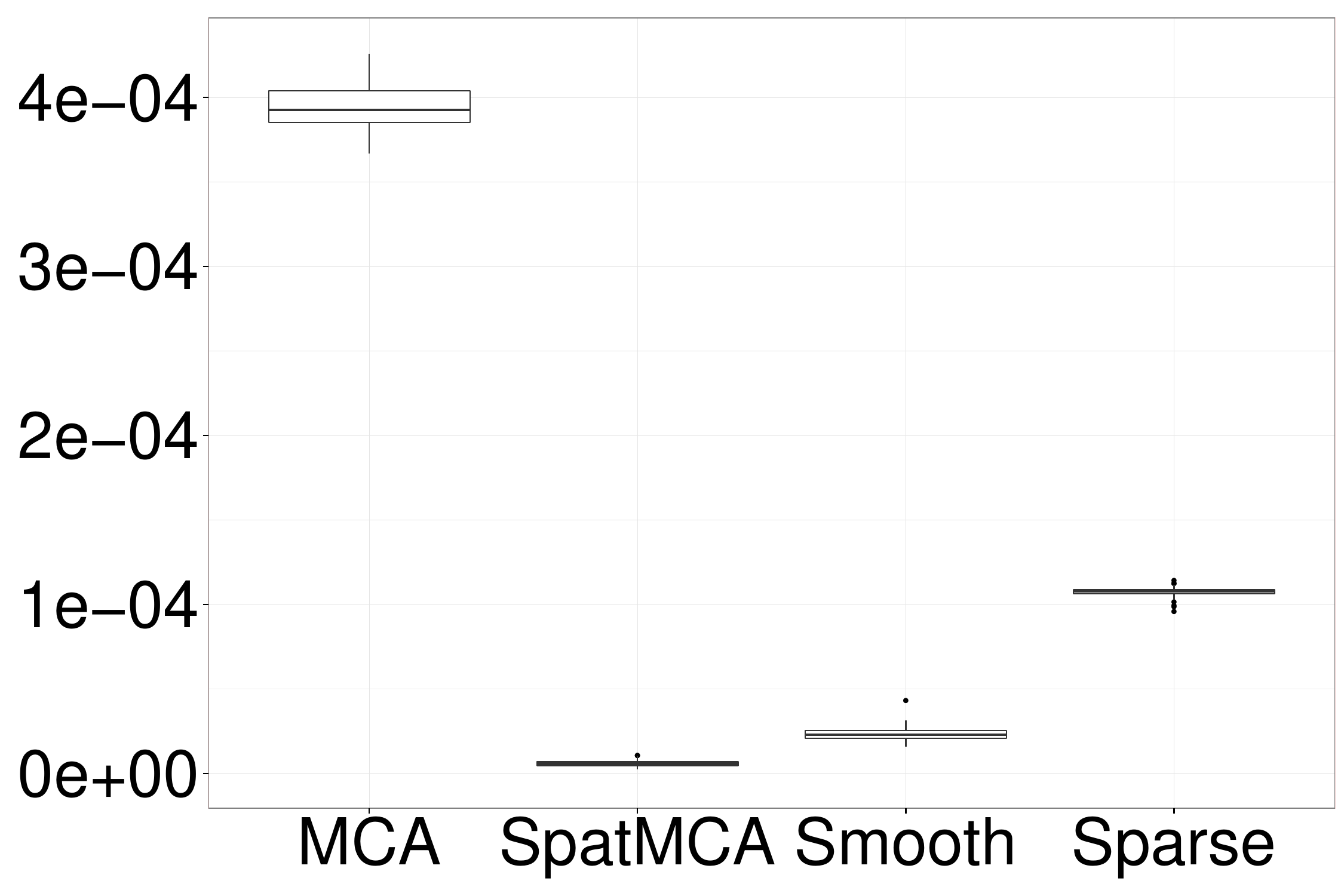}\hspace{4pt}&
		\includegraphics[scale=0.12]{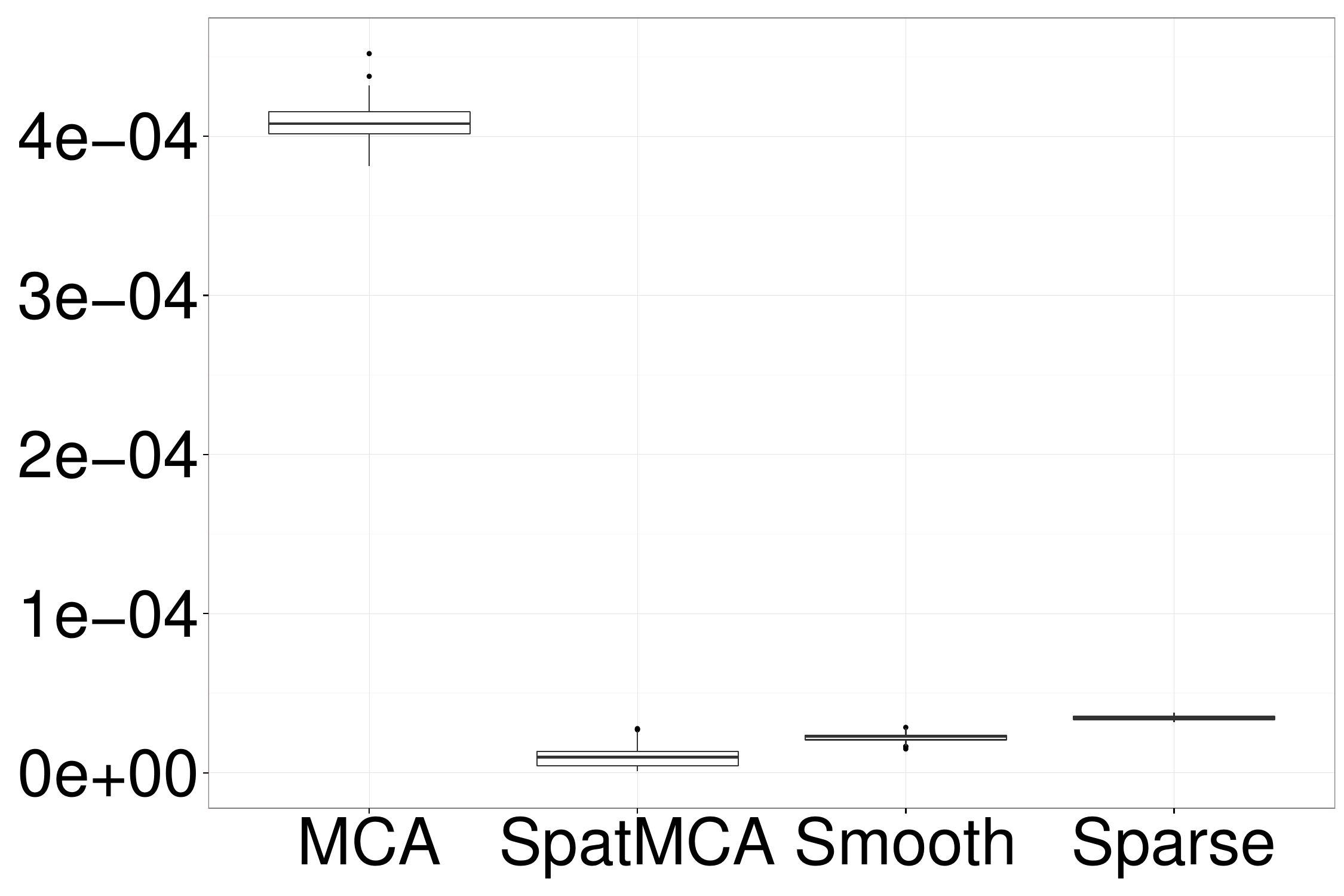}\hspace{4pt}&
		\includegraphics[scale=0.12]{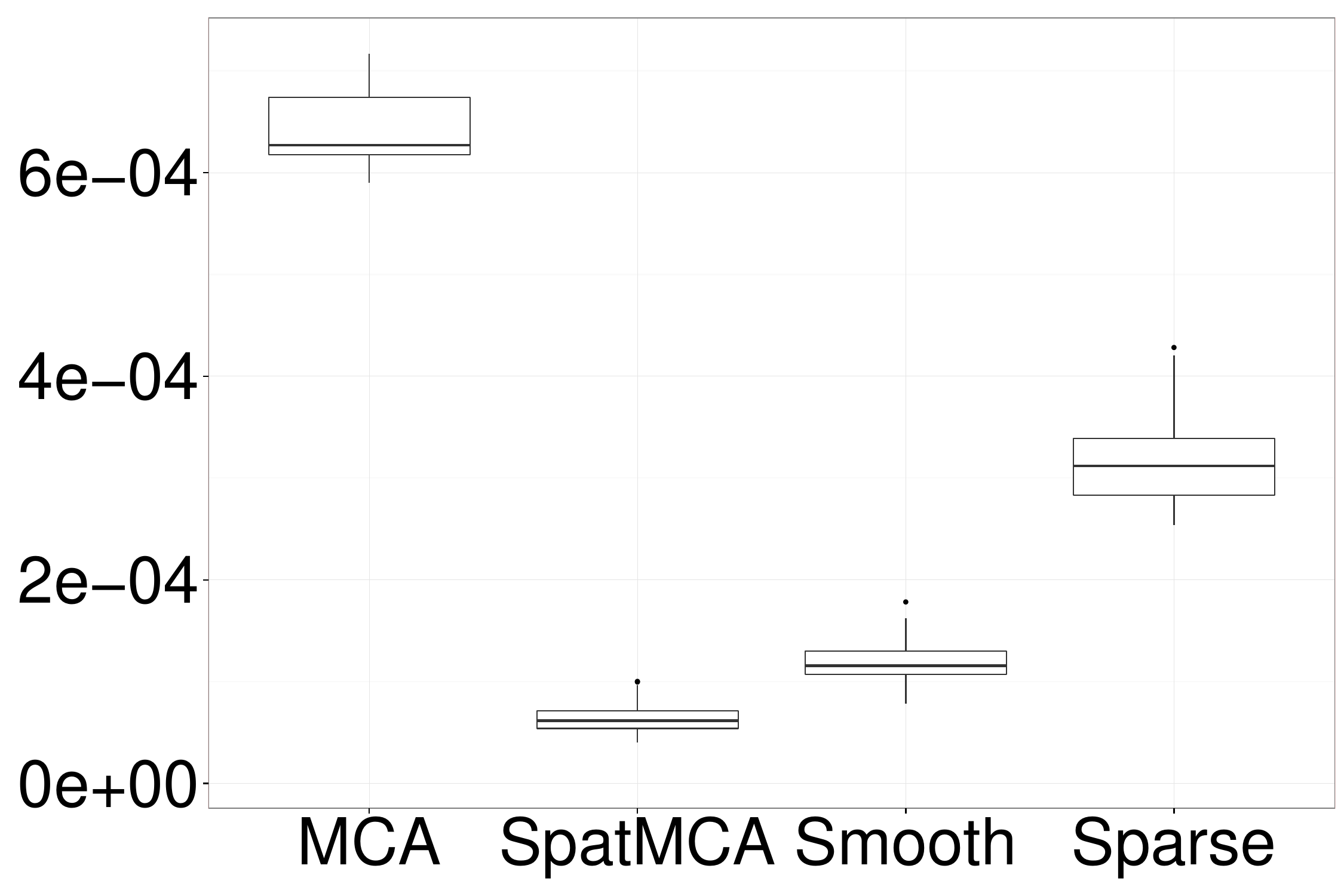}\hspace{4pt}\\
		{{$(d_1,d_2)=(1,0)$}, $K=\hat{K}$}&{{$(d_1,d_2)=(0.5,0)$}, $K=\hat{K}$}&{{$(d_1,d_2)=(1,0.7)$}, $K=\hat{K}$}\\
		\includegraphics[scale=0.12]{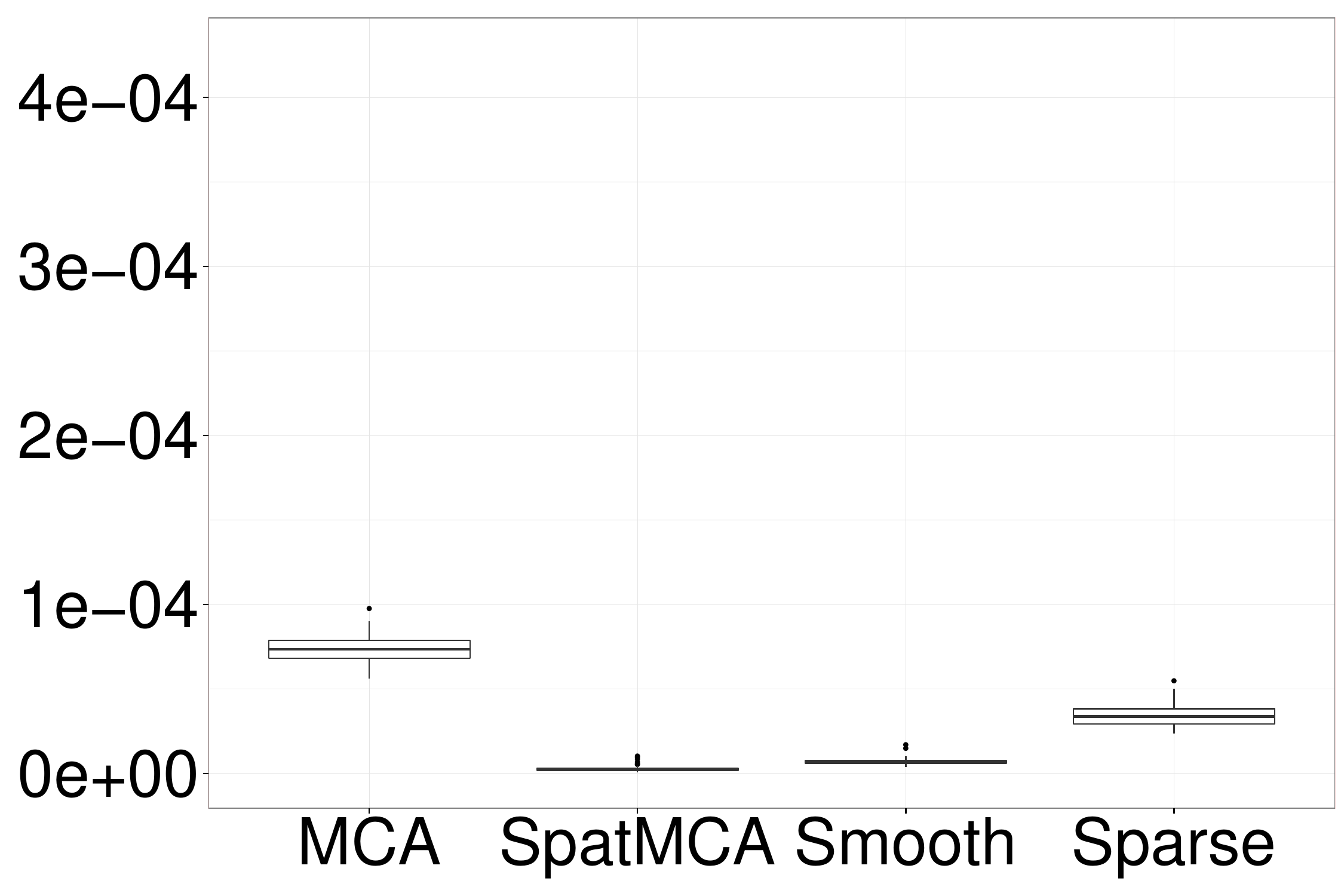}\hspace{4pt}&
		\includegraphics[scale=0.12]{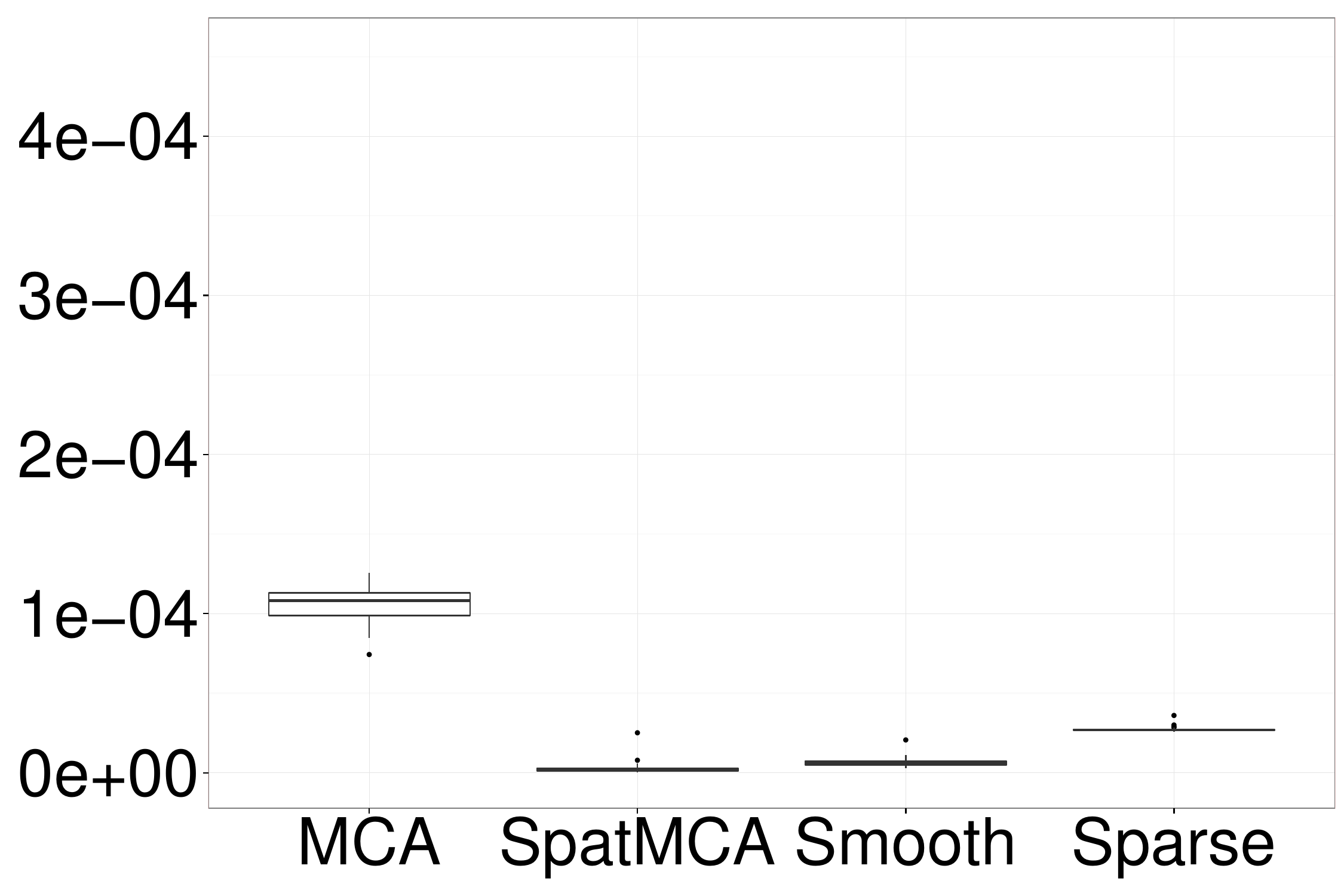}\hspace{4pt}&
		\includegraphics[scale=0.12]{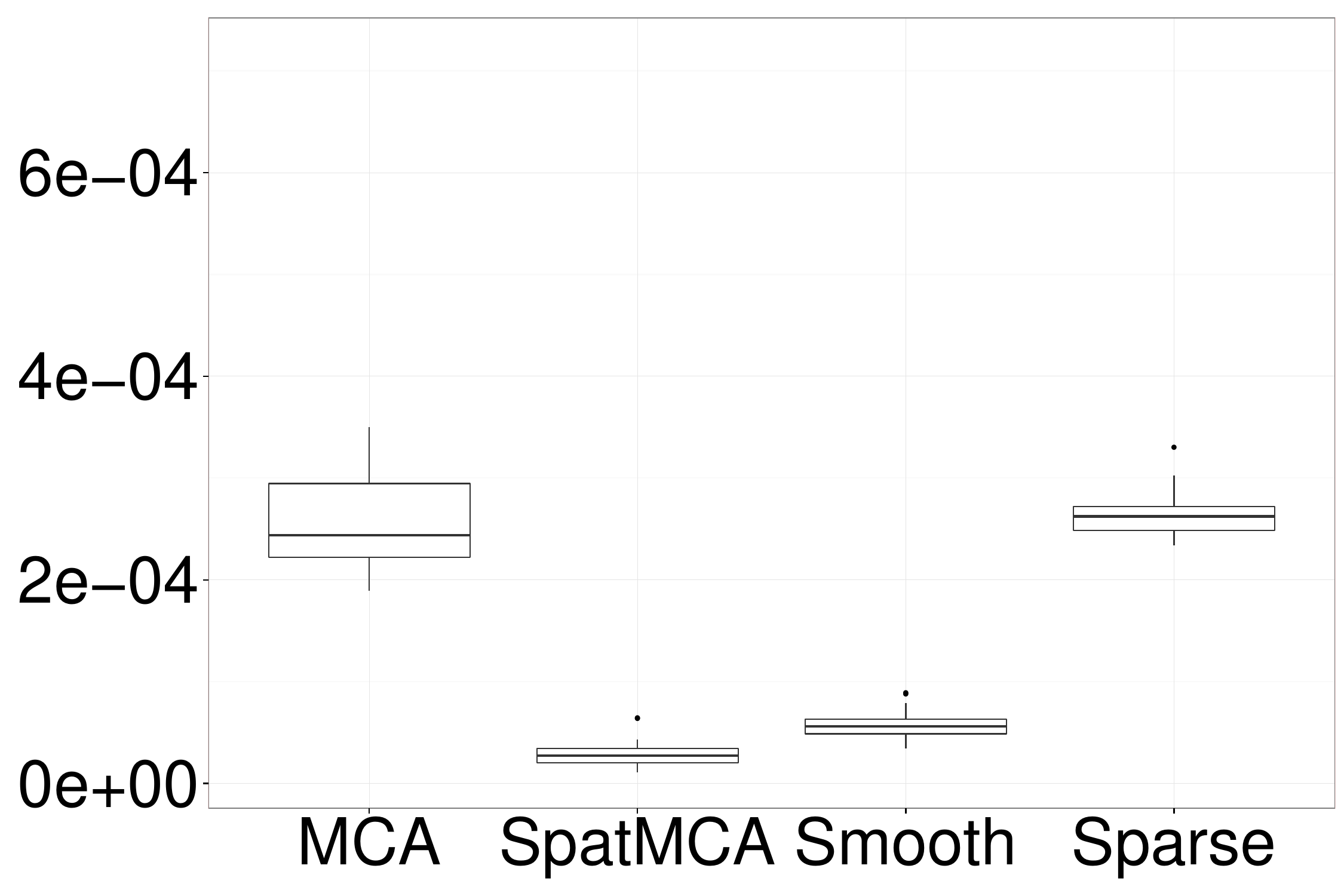}\hspace{4pt}
	\end{tabular}
	\caption{Boxplots of average squared prediction errors of (\ref{eq:loss1}) for various methods in the one-dimensional simulation experiment based on 50 simulation replicates.}
	\label{fig:box_d1_loss}
\end{figure}

\subsection{A Two-Dimensional Experiment}\label{sec:2d}

For a two-dimensional experiment, we generated data according to (\ref{eq:ch5measurement}) with $K=2$, $n=5,000$, $p_1=25^2$, $p_2=20^2$,  \[\left(\begin{array}{c}
\bm{\eta}_{1i} \\
\bm{\eta}_{2i} \\
\end{array}\right)\sim N\left(\bm{0}, \left(\begin{array}{cc}
\bm{I} & \bm{U}\mathrm{diag}(d_1,d_2)\bm{V}' \\
\bm{V}\mathrm{diag}(d_1,d_2)\bm{U}' & \bm{I} \\
\end{array}\right)\right),\] 
$\bm{\epsilon}_{ji}\sim N(\bm{0},\bm{I})$ for $j=1,2$, $(\bm{s}_{11}, \dots,\bm{s}_{1p_1})$ equally spaced in $[-5,5]^2$, and $(\bm{s}_{21}, \dots,\bm{s}_{2p_2})$ equally spaced in $[-7,7]^2$.
Here $u_1(\cdot)$, $v_1(\cdot)$, $u_2(\cdot)$ {and} $v_2(\cdot)$ are given by \eqref{eq:u1_sim}, \eqref{eq:v1_sim}, \eqref{eq:u2_sim} and \eqref{eq:v2_sim} with $d=2$, respectively. We considered three pairs of $(d_1,d_2)\in \{(1,0), (0.5,0), (1,0.7)\}$, and applied the proposed SpatMCA with $K=\{1,2,5\}$ and $\hat{K}$ selected by \eqref{eq:ch5khat}, resulting in $12$ different combinations. Similar to the previous subsection, we applied {the} 5-fold CV of \eqref{eq:ch5khat} to select ${\{}\tau_{1u}, \tau_{1v}, \tau_{2u}, \tau_{2v}{\}}$ among $21$ values of $\tau_{1u}$ and $\tau_{1v}$  (including $0$ and the other 20 values equally spaced on the log scale from $10^{-2}$ to $10$) and $11$ values of $\tau_{2u}$ and $\tau_{2v}$ (including $0$ and the other 10 values equally spaced on the log scale from $10^{-3}$ to $1$). 

Figures~\ref{fig:est_u_d2} and \ref{fig:est_v_d2} show the estimates of $u_k(\cdot)$ and $v_k(\cdot)$, respectively, for the four methods based on randomly selected data generated from three different combinations of singular values. Figure~\ref{fig:box_d2_loss} shows the performance of the four methods in terms of the loss function \eqref{eq:loss1} based on 50 simulation replicates. Similar to the one-dimensional example, SpatMCA outperforms all the other methods in all cases.

\begin{figure}\centering
	{$\hat{u}_1(\cdot)$ based on {$K=1$ for} $(d_1,d_2)=(1,0)$
		\includegraphics[scale=0.39]{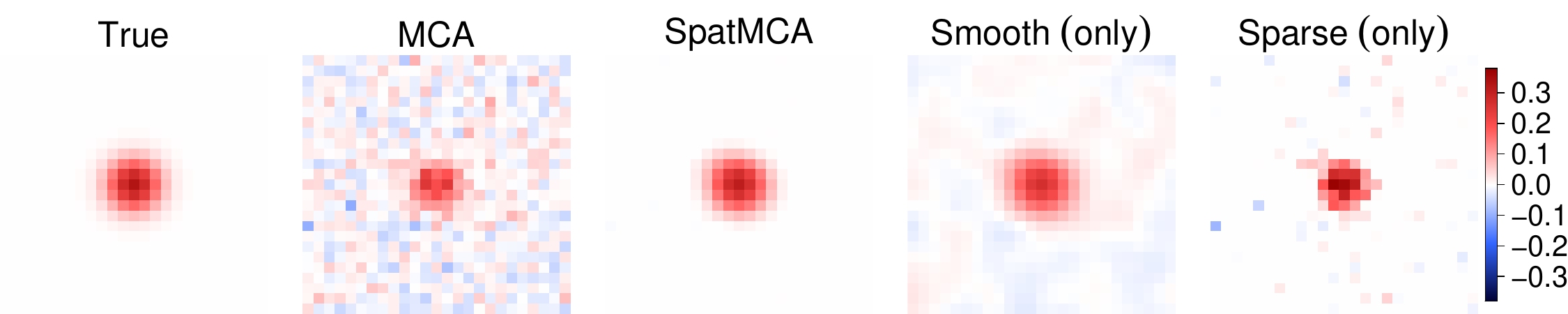}
		$\hat{u}_1(\cdot)$ based on {$K=1$ for} $(d_1,d_2)=(0.5,0)$
		\includegraphics[scale=0.39]{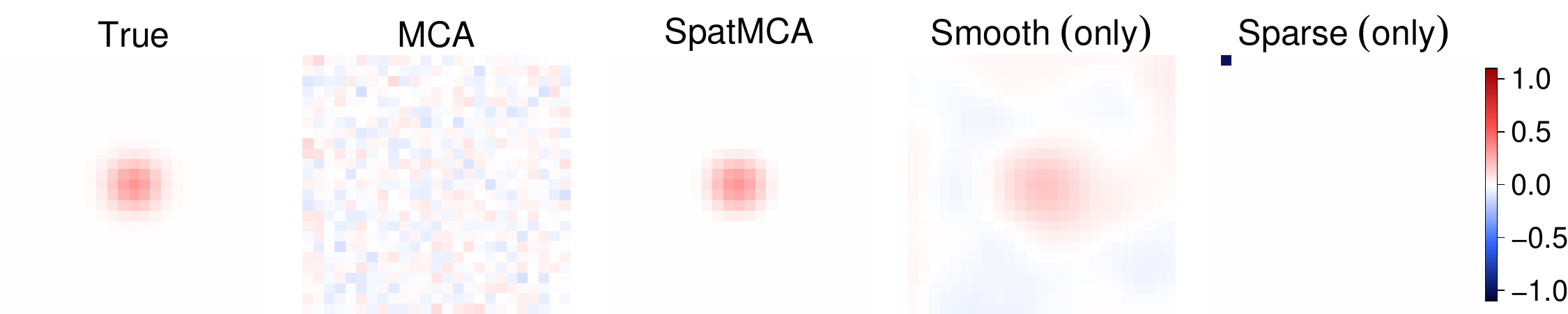}
		$\hat{u}_1(\cdot)$ based on {$K=2$ for} $(d_1,d_2)=(1,0.7)$
		\includegraphics[scale=0.39]{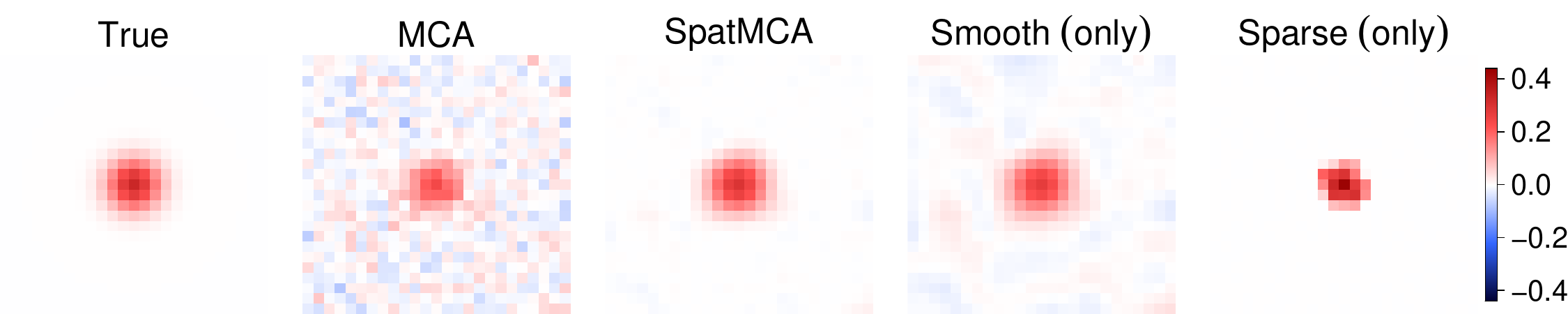}
		$\hat{u}_2(\cdot)$ based on {$K=2$ for} $(d_1,d_2)=(1,0.7)$}
	\includegraphics[scale=0.39]{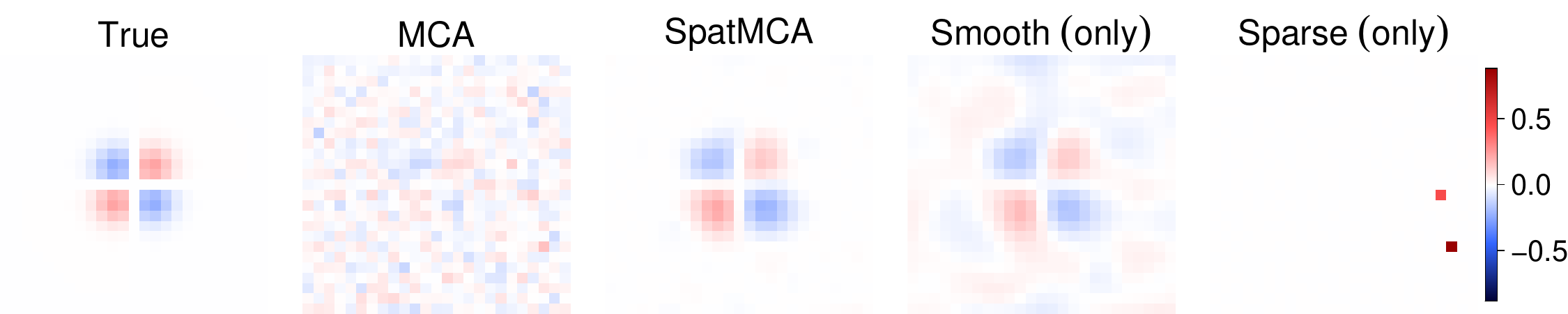}
	\caption{Estimates of $u_1(\cdot)$ and $u_2(\cdot)$ obtained from various methods based on three different combinations of singular values.}
	\label{fig:est_u_d2}
\end{figure}	

\begin{figure}\centering
	$\hat{v}_1(\cdot)$ based on {$K=1$ for} $(d_1,d_2)=(1,0)$
	\includegraphics[scale=0.39]{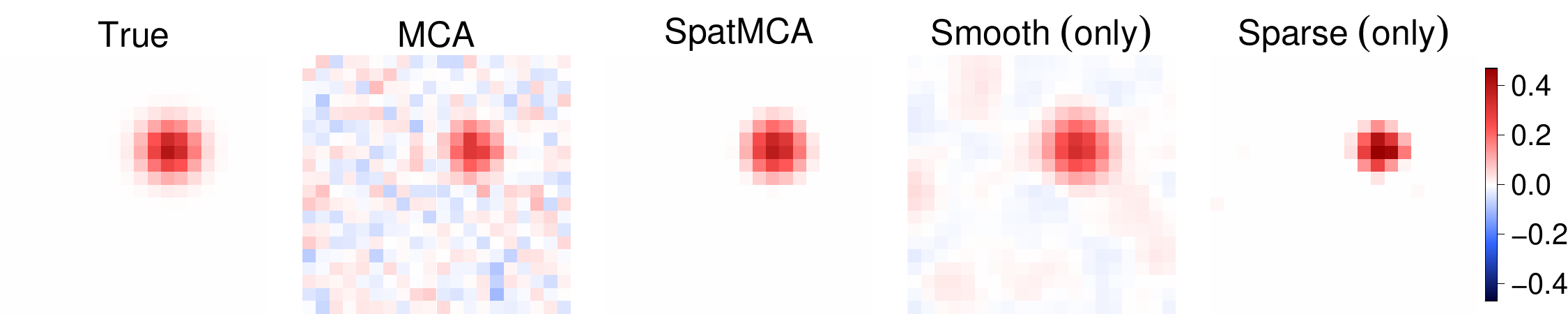}\\
	$\hat{v}_1(\cdot)$ based on {$K=1$ for} $(d_1,d_2)=(0.5,0)$
	\includegraphics[scale=0.39]{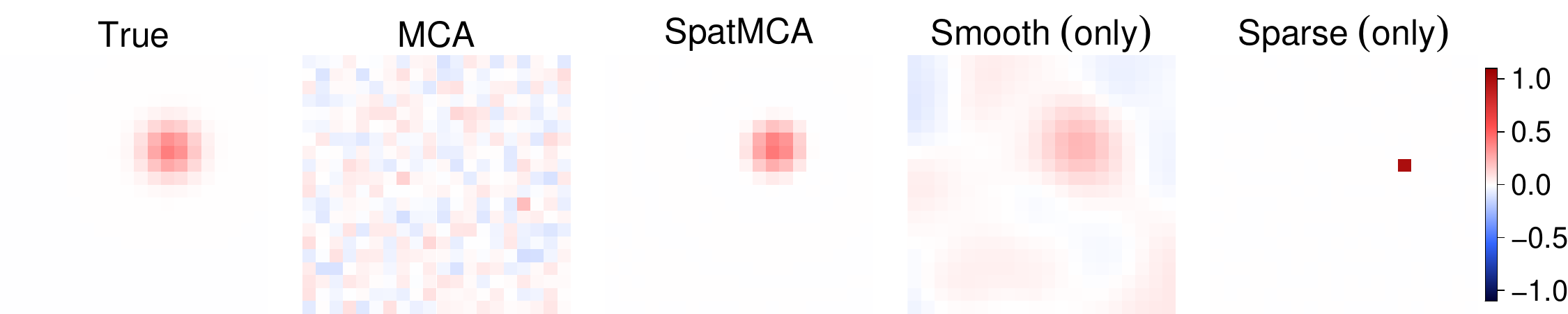}\\
	$\hat{v}_1(\cdot)$ based on {$K=2$ for} $(d_1,d_2)=(1,0.7)$
	\includegraphics[scale=0.39]{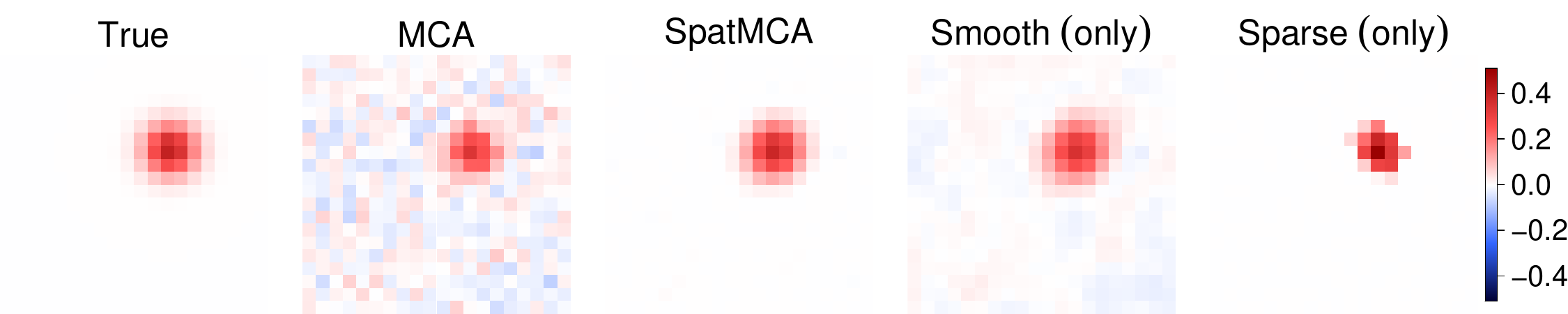}\\
	$\hat{v}_2(\cdot)$ based on {$K=2$ for} $(d_1,d_2)=(1,0.7)$
	\includegraphics[scale=0.39]{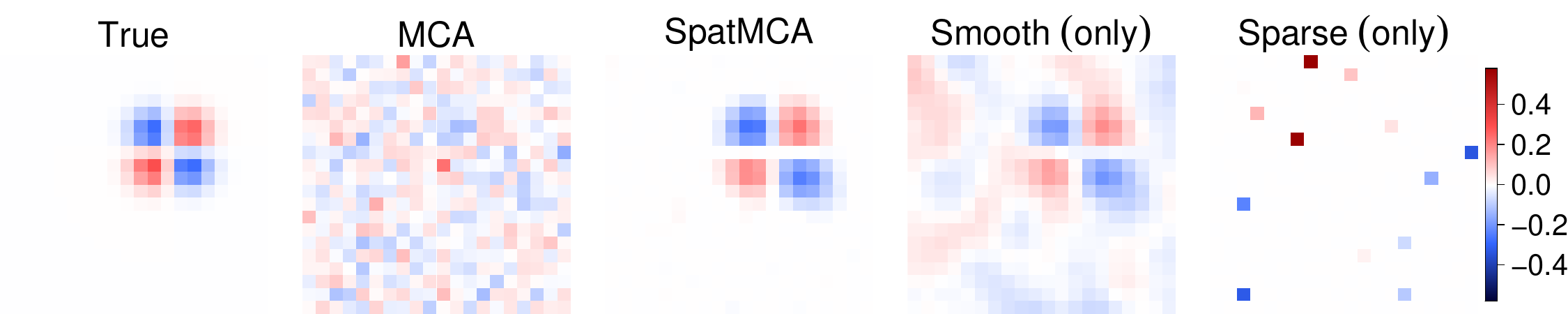}
	\caption{Estimates of $v_1(\cdot)$ and $v_2(\cdot)$ obtained from various methods based on three different combinations of singular values.}
	\label{fig:est_v_d2}
\end{figure}	

\begin{figure}
	\begin{tabular}{ccc}
		{{$(d_1,d_2)=(1,0)$}, $K=1$}&{{$(d_1,d_2)=(0.5,0)$}, $K=1$}&{{$(d_1,d_2)=(1,0.7)$}, $K=1$}\\
		\includegraphics[scale=0.12]{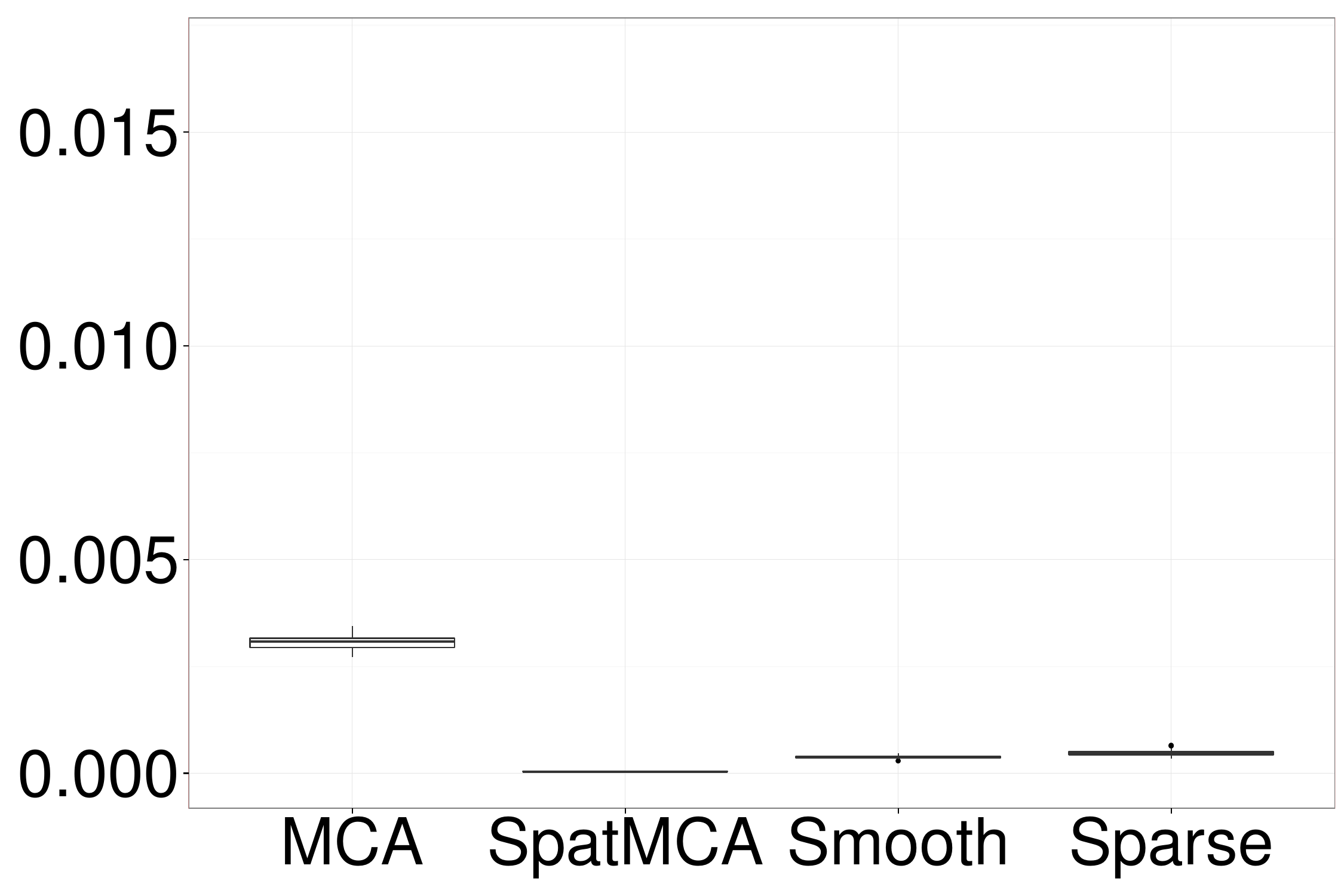}\hspace{4pt}&
		\includegraphics[scale=0.12]{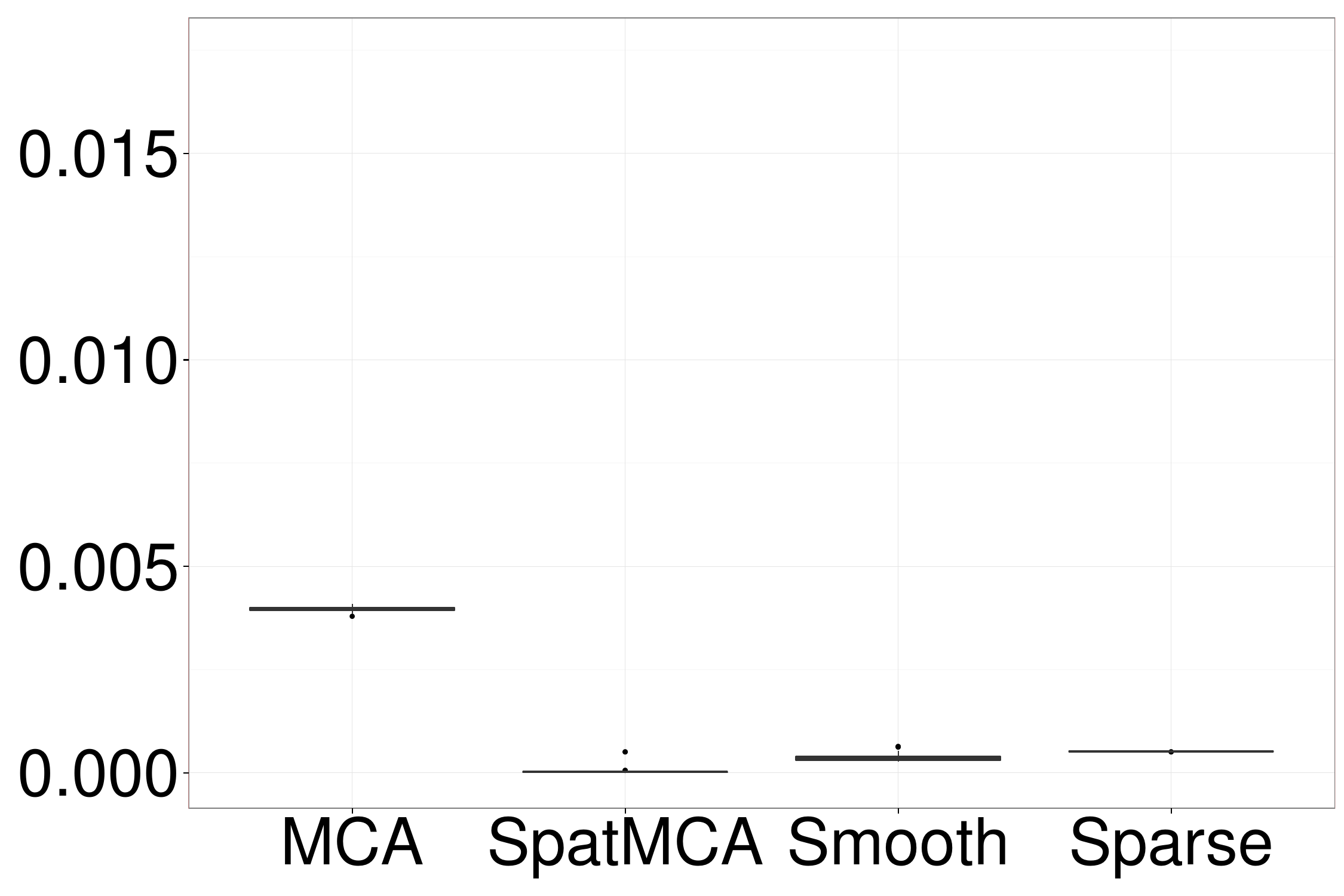}\hspace{4pt}&
		\includegraphics[scale=0.12]{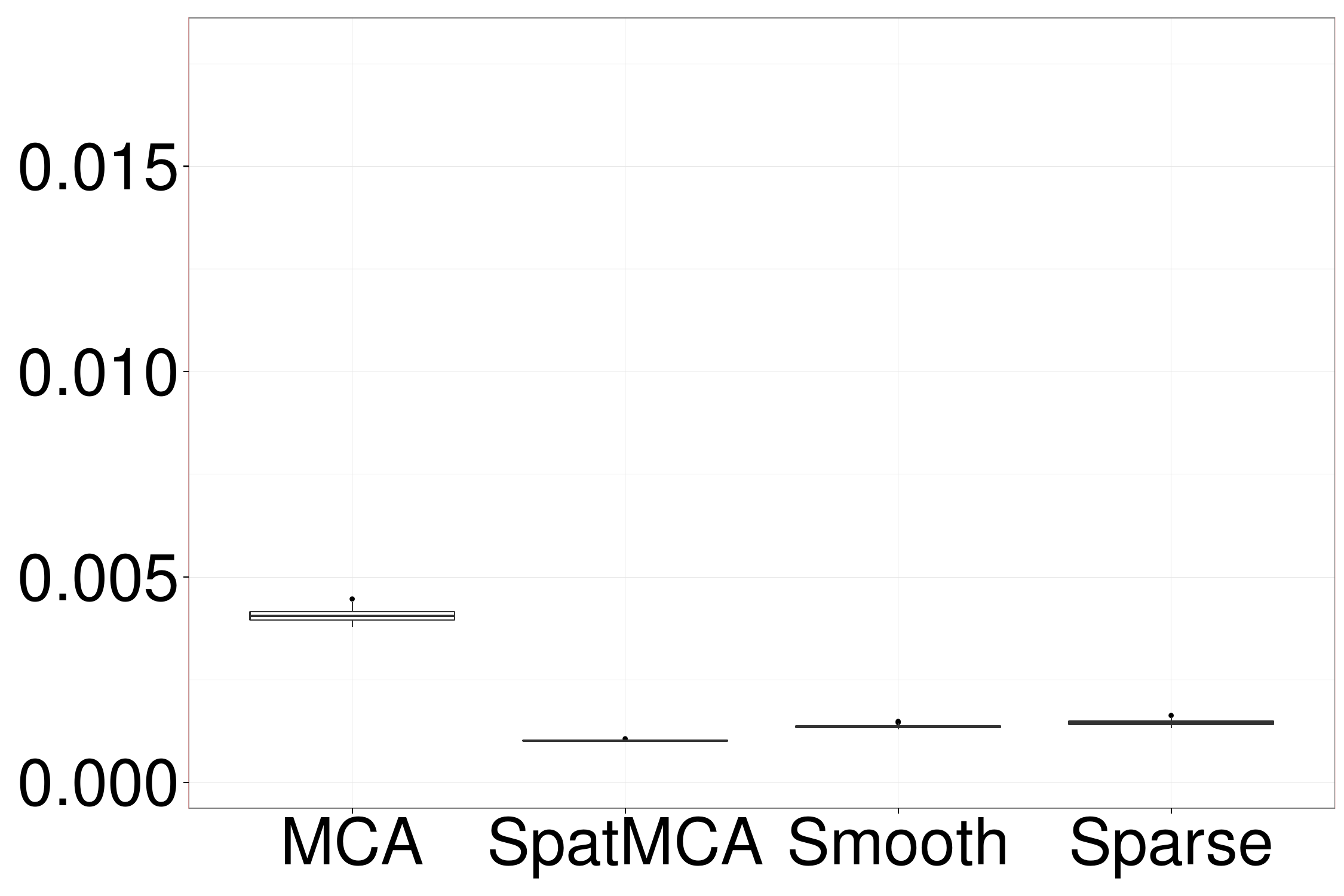}\hspace{4pt}\\
		{{$(d_1,d_2)=(1,0)$}, $K=2$}&{{$(d_1,d_2)=(0.5,0)$}, $K=2$}&{{$(d_1,d_2)=(1,0.7)$}, $K=2$}\\
		\includegraphics[scale=0.12]{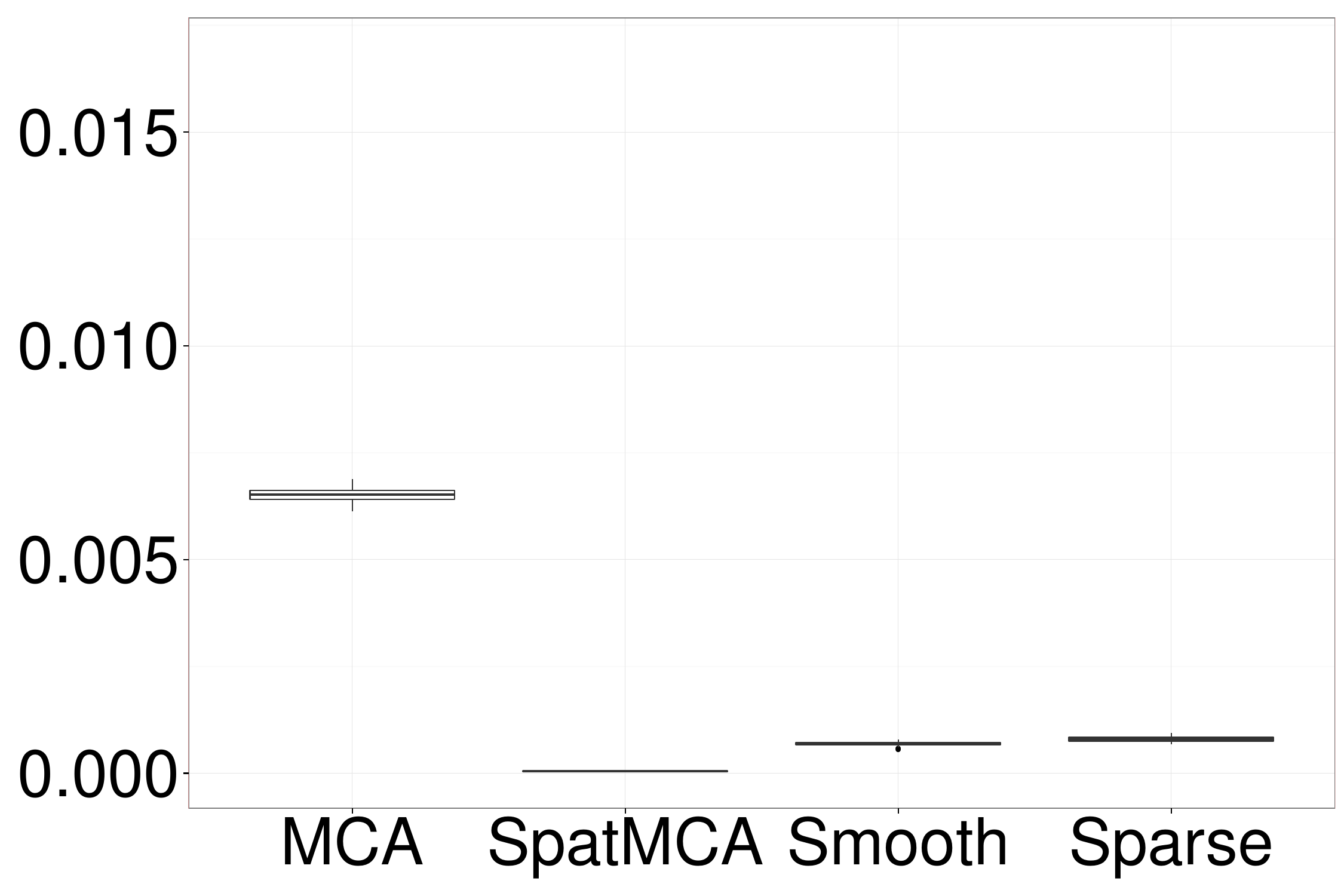}\hspace{4pt}&
		\includegraphics[scale=0.12]{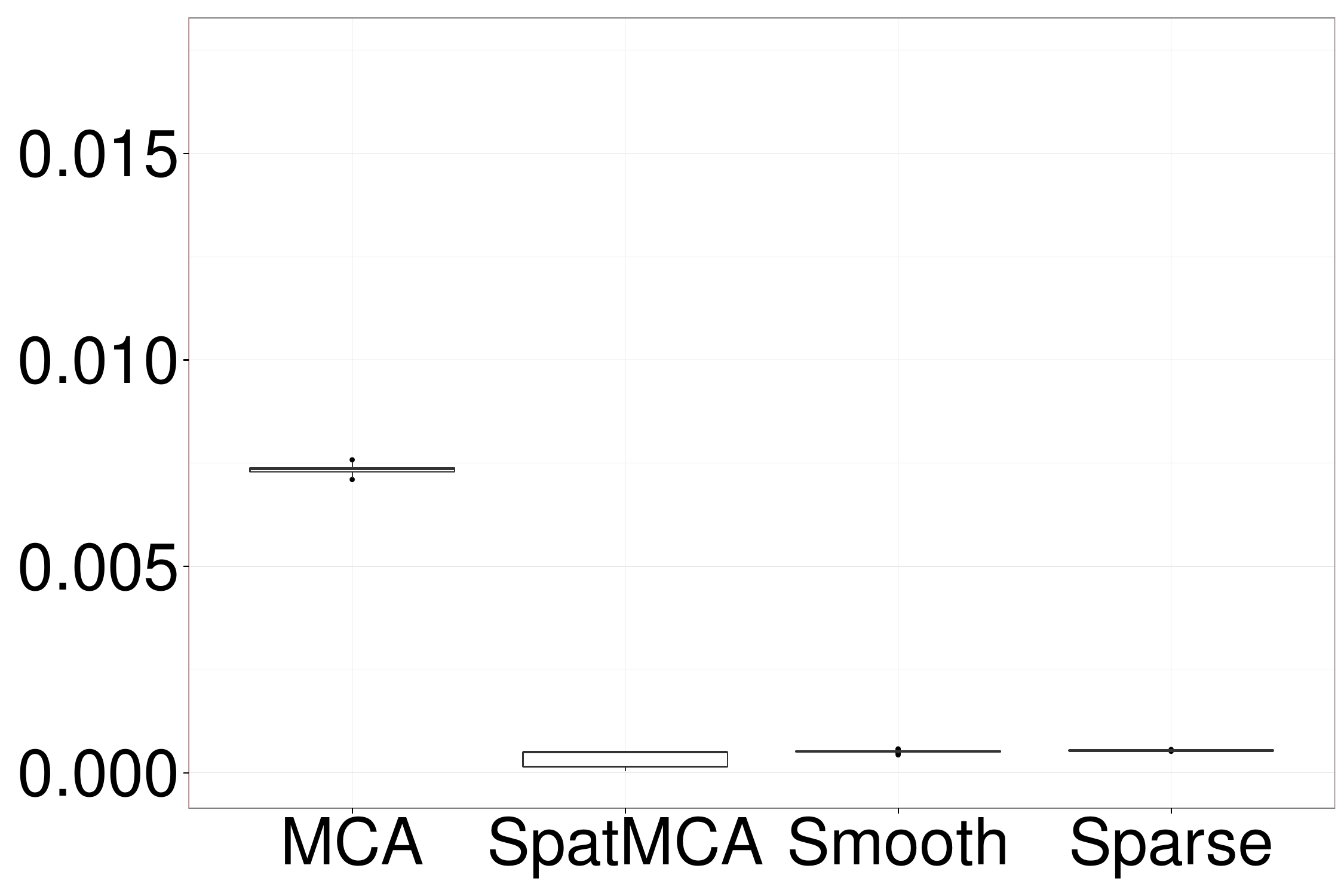}\hspace{4pt}&
		\includegraphics[scale=0.12]{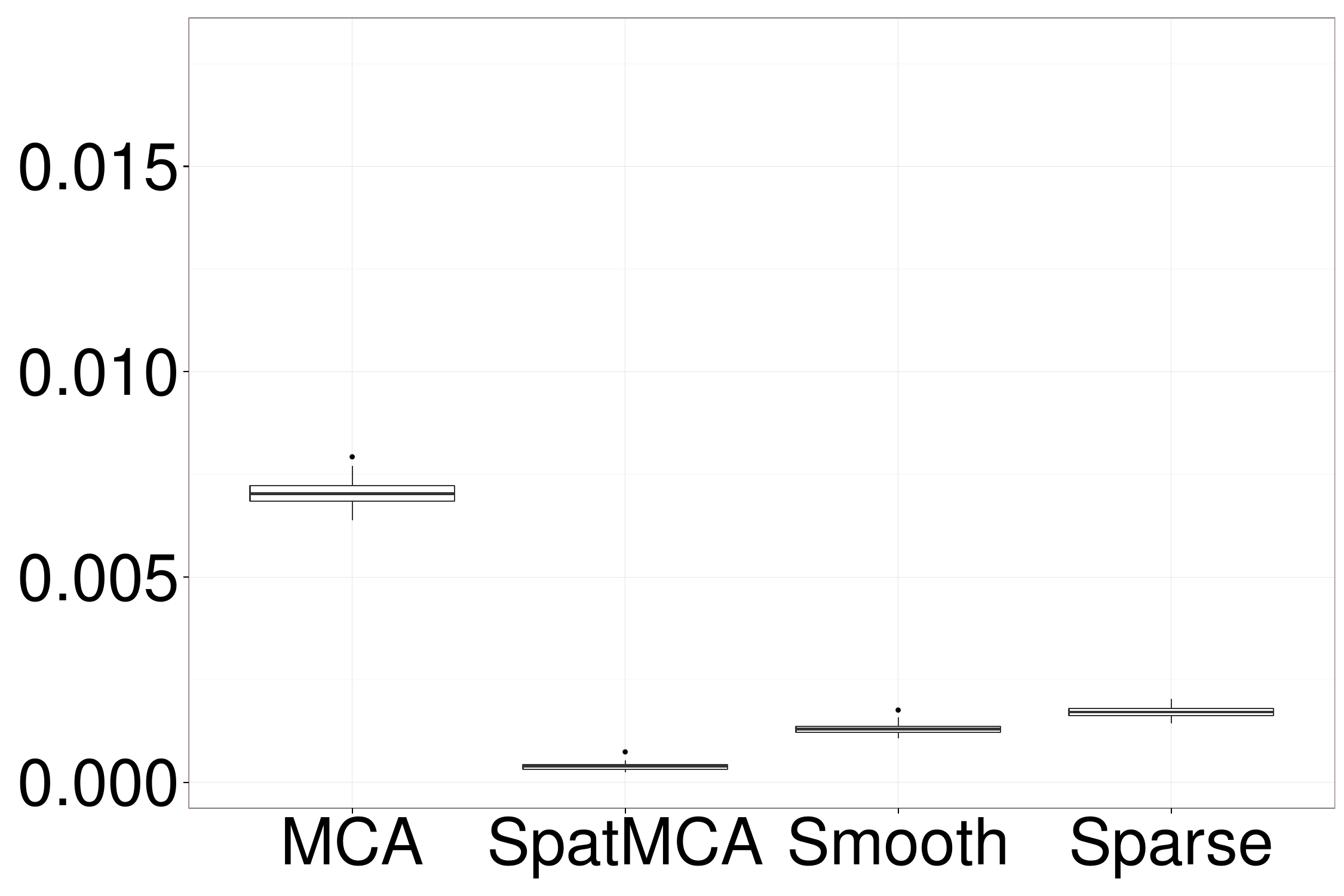}\hspace{4pt}\\
		{{$(d_1,d_2)=(1,0)$}, $K=5$}&{{$(d_1,d_2)=(0.5,0)$}, $K=5$}&{{$(d_1,d_2)=(1,0.7)$}, $K=5$}\\
		\includegraphics[scale=0.12]{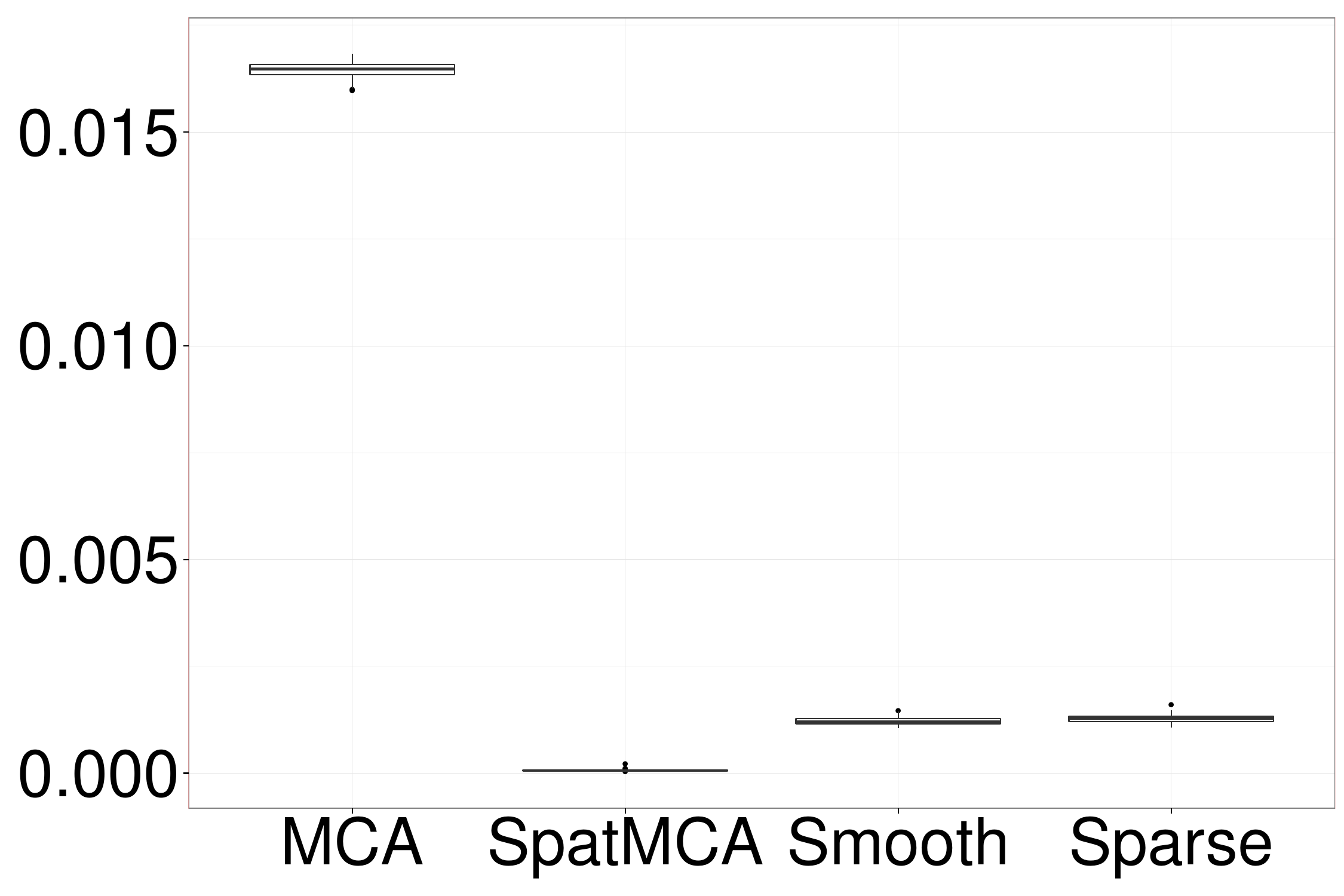}\hspace{4pt}&
		\includegraphics[scale=0.12]{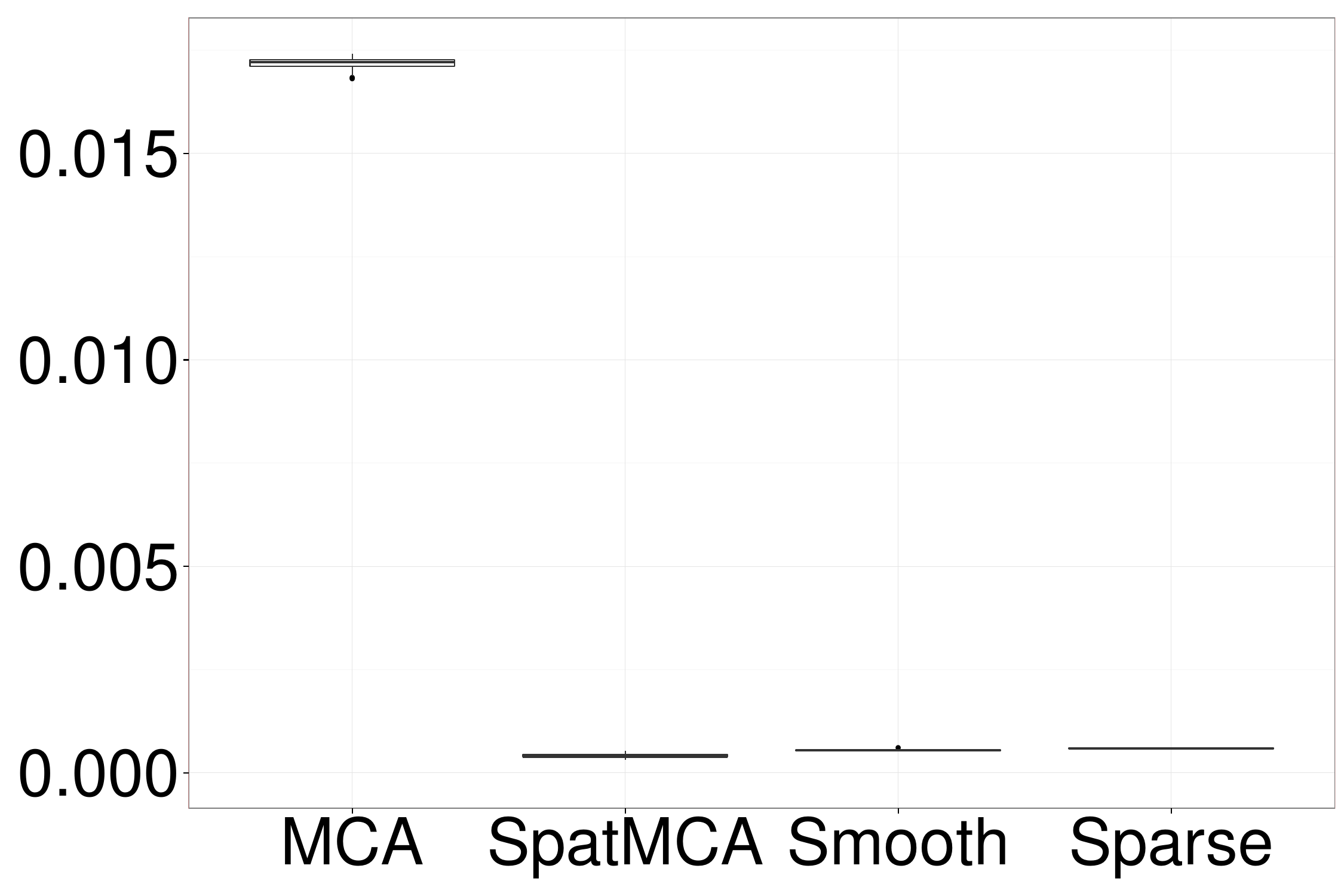}\hspace{4pt}&
		\includegraphics[scale=0.12]{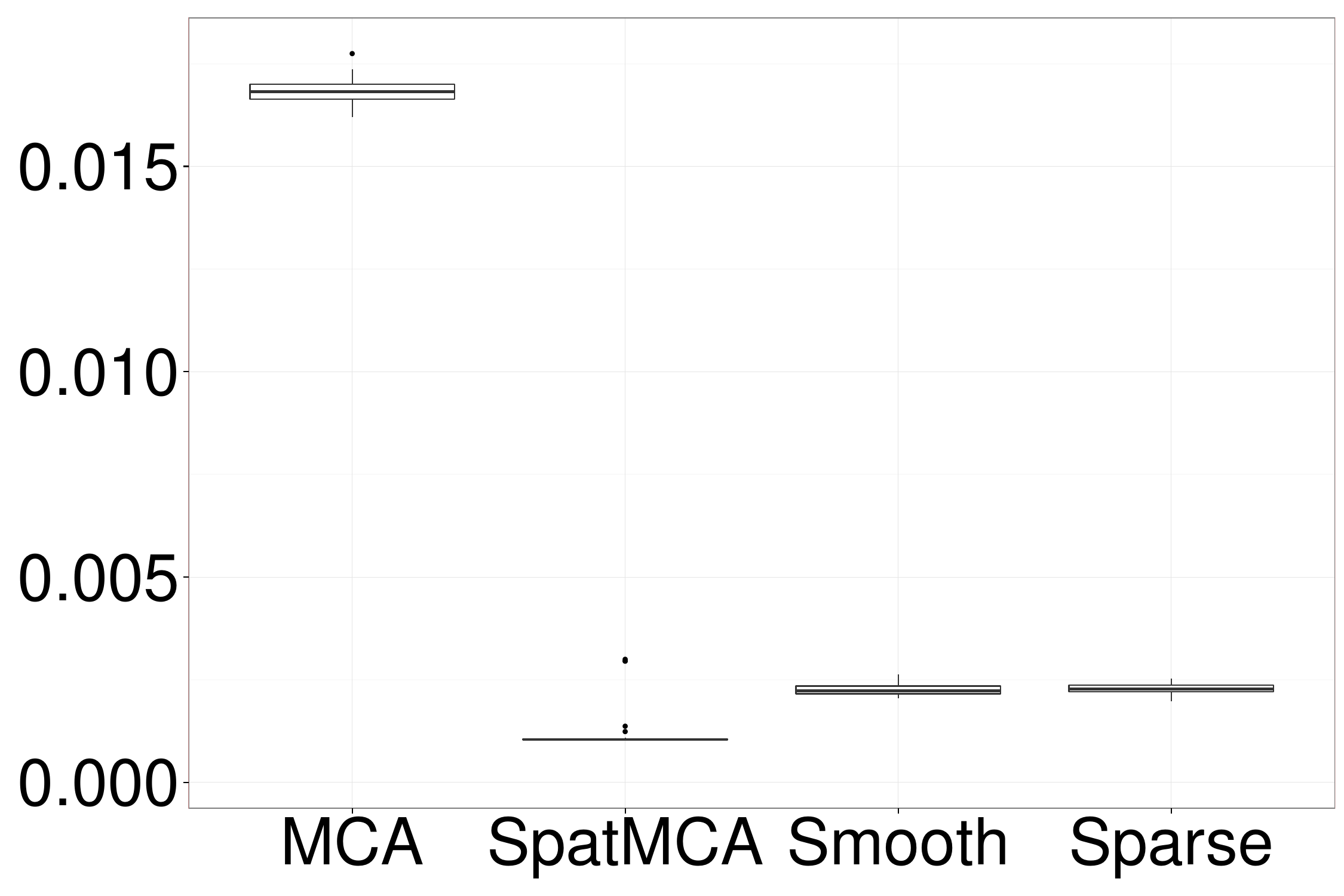}\hspace{4pt}\\
		{{$(d_1,d_2)=(1,0)$}, $K=\hat{K}$}&{{$(d_1,d_2)=(0.5,0)$}, $K=\hat{K}$}&{{$(d_1,d_2)=(1,0.7)$}, $K=\hat{K}$}\\
		\includegraphics[scale=0.12]{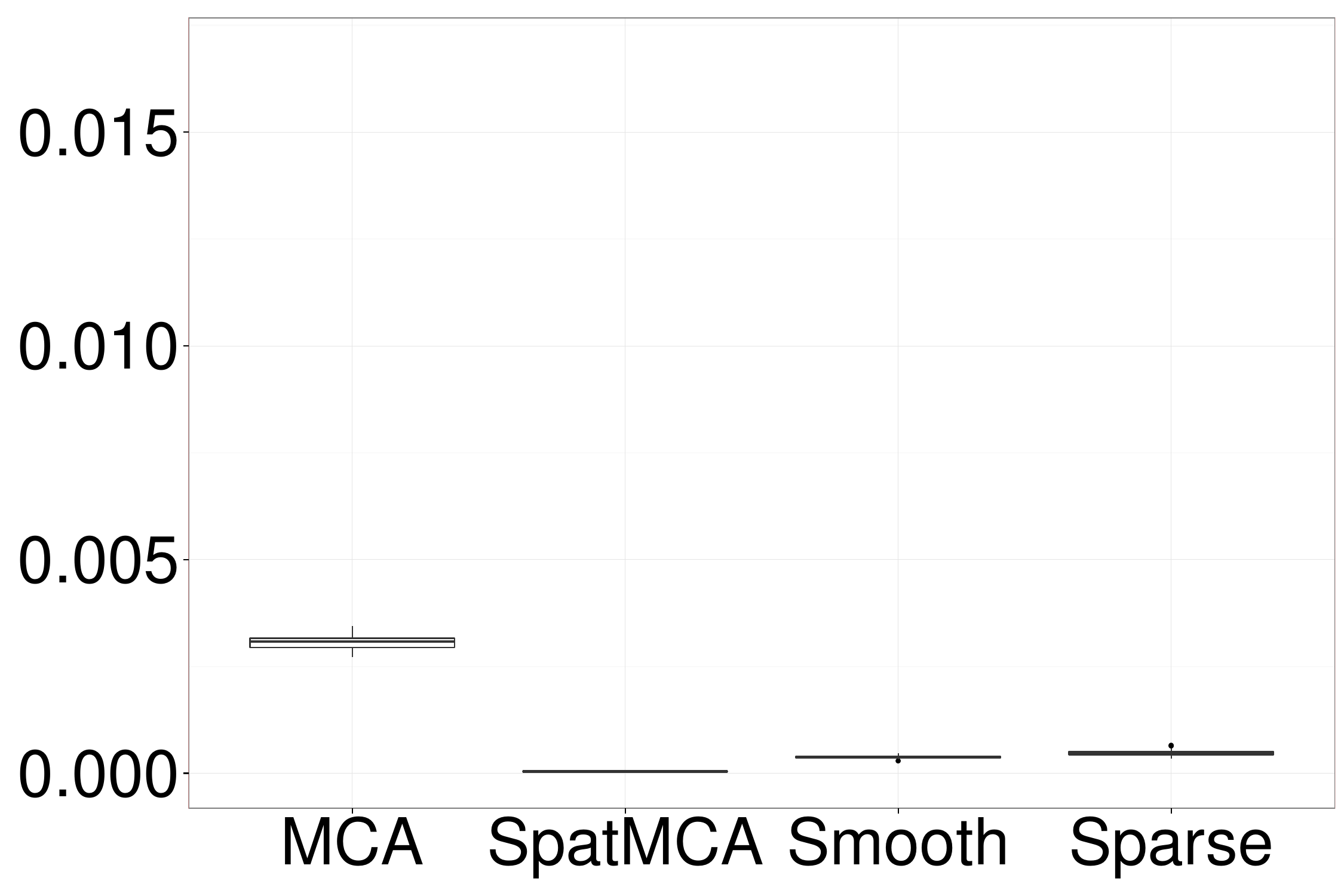}\hspace{4pt}&
		\includegraphics[scale=0.12]{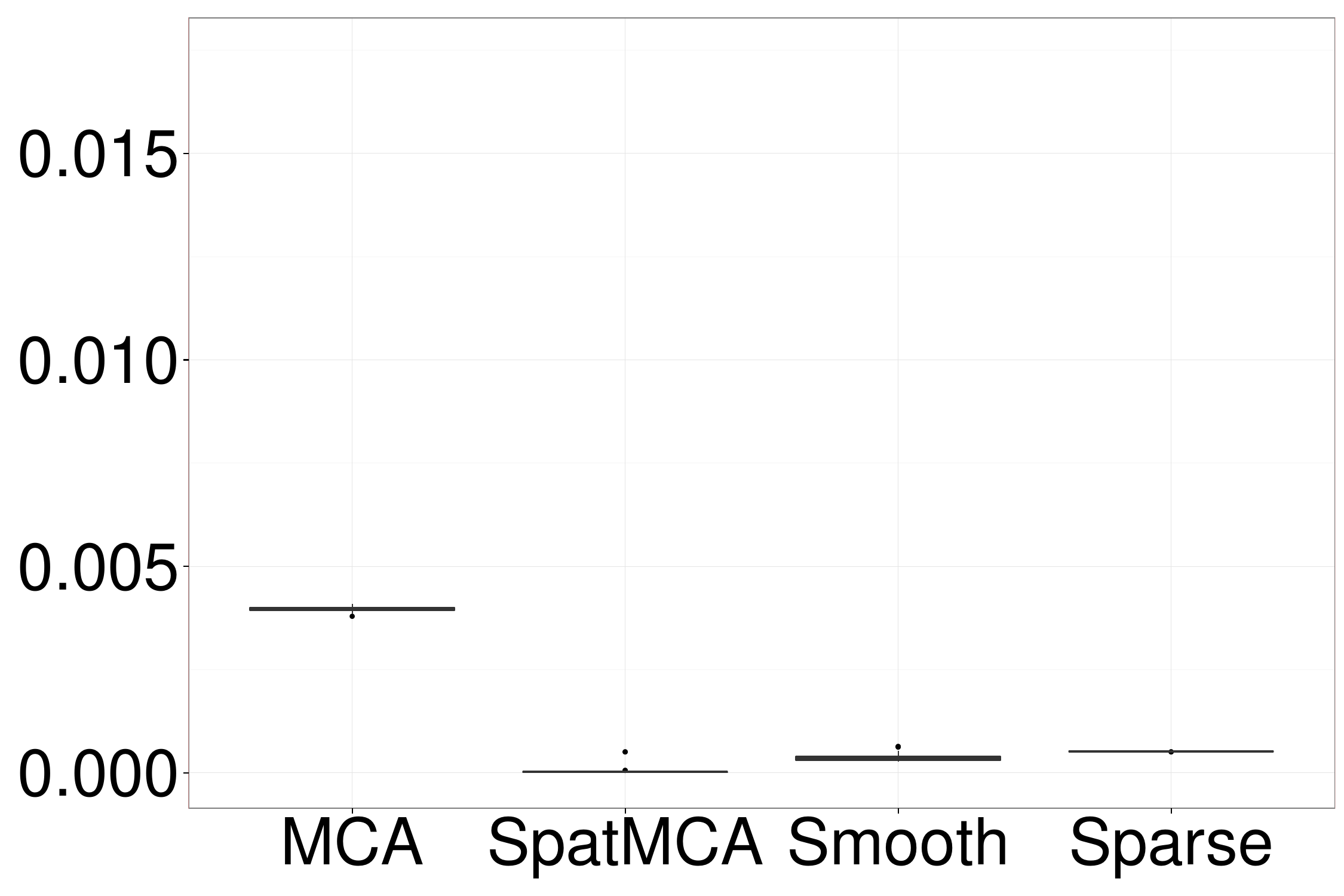}\hspace{4pt}&
		\includegraphics[scale=0.12]{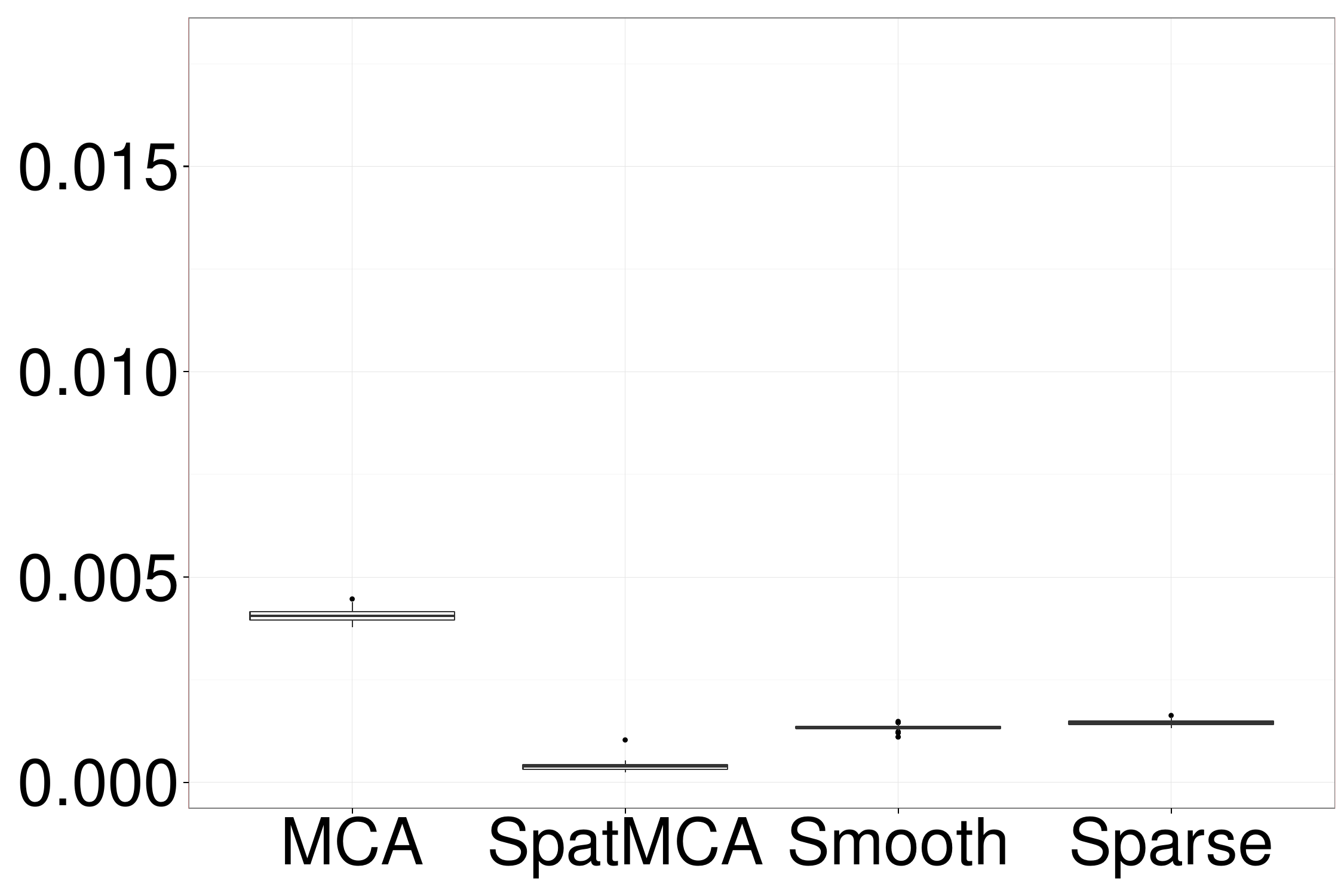}\hspace{4pt}
	\end{tabular}
	\caption{Boxplots of average squared prediction errors of (\ref{eq:loss1}) for various methods in the {two}-dimensional simulation experiment based on 50 simulation replicates.}
	\label{fig:box_d2_loss}
\end{figure}

\subsection{An Application to Sea Surface Temperature and Precipitation Datasets}
We applied the proposed SpatMCA and MCA to investigate how precipitations in eastern Africa are affected by SSTs in the Indian Ocean and compared the differences between the two methods. The SST data are monthly averages (in degree Celsius) provided by the Met Office Marine Data Bank (available at \texttt{http://www.metoffice.gov.uk/hadobs/hadisst/}). The precipitation data are monthly averages (in  mm) provided by the Earth System Research Laboratory, Physical Science Division of the National Oceanic and Atmospheric Administration (available at \texttt{http://www.esrl.noaa.gov/psd/}). Both datasets are on $1$ degree latitude by 1 degree longitude equiangular grid cells. As in \citet{omondi2013influence}, we considered a region of the Indian Ocean between latitudes $20^{\circ}$N and $30^{\circ}$S and between longitudes $20^{\circ}$E and $120^{\circ}$E for the SST dataset, and we chose a region of eastern Africa between $6^{\circ}$N and $12^{\circ}$S and between longitudes $20^{\circ}$E and $42^{\circ}$E for the precipitation dataset. We used the data observed from January 2011 to December 2015. Let $\bm{\eta}_{1i}$ and $\bm{\eta}_{2i}$ be the vectors of \eqref{eq:ch5measurement} corresponding to SST in the Indian Ocean and precipitation in eastern Africa. In this example, $p_1 = 3,591$, $p_2=255$, and $n=60$.

First, the SST data and the precipitation data were detrended by subtracting their individual average for a given cell and a given month. Then, the data were randomly split into two parts as the training data and  the validation data. We applied SpatMCA to the training data with $K$ selected by $\hat{K}$ of $(\ref{eq:ch5khat})$, where 21 values of $\tau_{1u}$ and $\tau_{1v}$ (including $0$ and the other 20 values equally spaced on the log scale from $10^{-1}$ to $10^6$) and 21 values of $\tau_{2u}$ and $\tau_{2v}$ (including $0$ and the other $20$ values equally spaced on the log scale from $10^{-3}$ to $0.5$) were selected by using 5-fold CV of \eqref{eq:cv_max1} and \eqref{eq:cv_max2}. 

The best CV values with respect to $K$ for both methods are shown in Figure~\ref{fig:cv_real}. Clearly, both methods selected $\hat{K} = 1$. Figure~\ref{fig:mca_real} shows the first dominant coupled patterns of  SST and precipitation obtained from SpatMCA and MCA. While both methods produce similar patterns, the SST pattern obtained by MCA is much noisier. Figure~\ref{fig:mca_ts} shows {two} time series of the first maximum covariance variables, $\{\hat{\bm{u}}'_1\bm{Y}_{11},\dots,\hat{\bm{u}}'_1\bm{Y}_{1n}\}$ and $\{\hat{\bm{v}}'_1\bm{Y}_{21},\dots,\hat{\bm{v}}'_1\bm{Y}_{2n}\}$, which are the projections of the training data $(\bm{Y}_{j1},\dots,\bm{Y}_{jn})$ for $j=1,2$, onto $\hat{\bm{u}}_1$ and $\hat{\bm{v}}_1$, respectively. As shown in the figure, the first {maximum covariance variables of} SST and precipitation  are highly correlated. Indeed, the Pearson correlation coefficient between {these} two series is 0.59 for SpatMCA and 0.63 for MCA, showing the {importance} of these patterns. 

We further used the validation data to compare the performance between SpatMCA and MCA in terms of the average squared error (ASE),
${\mathrm{ASE}} = \frac{1}{p_1p_2}\|\bm{S}^{v}_{12} - \hat{\bm{\Sigma}}_{12}\|^2_F$, where $\bm{S}^{v}_{12}$ is the sample cross-covariance matrix of the validation data, 
and $\hat{\bm{\Sigma}}_{12}$ is a generic estimate of $\bm{\Sigma}_{12}$. The resulting ASE for MCA is $2.59\times 10^{-3}$, which is larger than $2.25\times 10^{-3}$ for SpatMCA.
Figure~\ref{fig:mse} shows the ASEs with respect to $K$ for both SpatMCA and MCA, which further demonstrate the superiority of SpatMCA over MCA.
\begin{figure}
	\centering
	\includegraphics[height=5cm, width=6 cm]{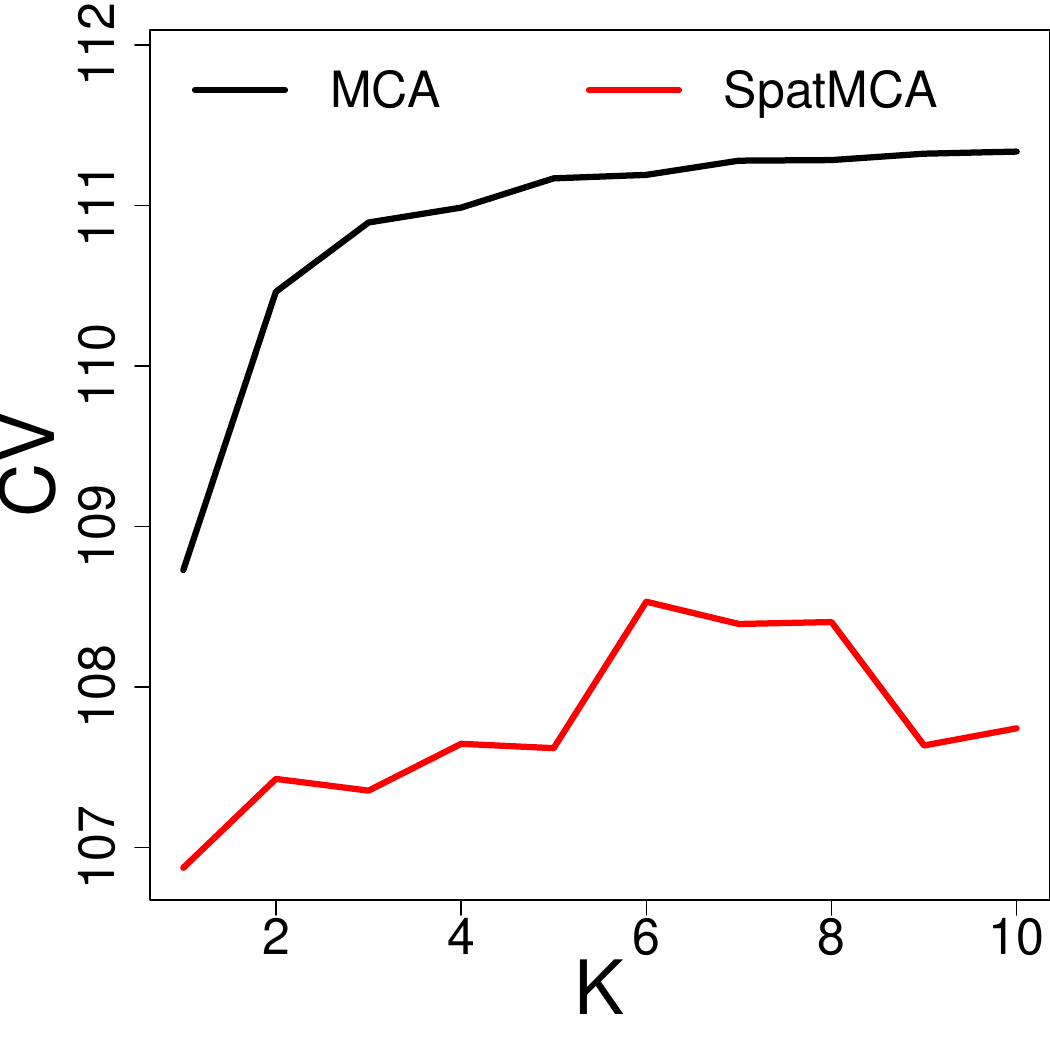}
	\caption{CV values with respect to $K$ for SpatMCA and MCA.}
	\label{fig:cv_real}
\end{figure}

\begin{figure}[tbp]\centering
$\hat{u}_1(\cdot)$ from MCA\hspace{0.28\textwidth}$\hat{u}_1(\cdot)$ from SpatMCA\\[0.3em]
\begin{tabular}{@{}c@{\hspace{0.02\textwidth}}c@{}}
\includegraphics[width=0.48\textwidth]{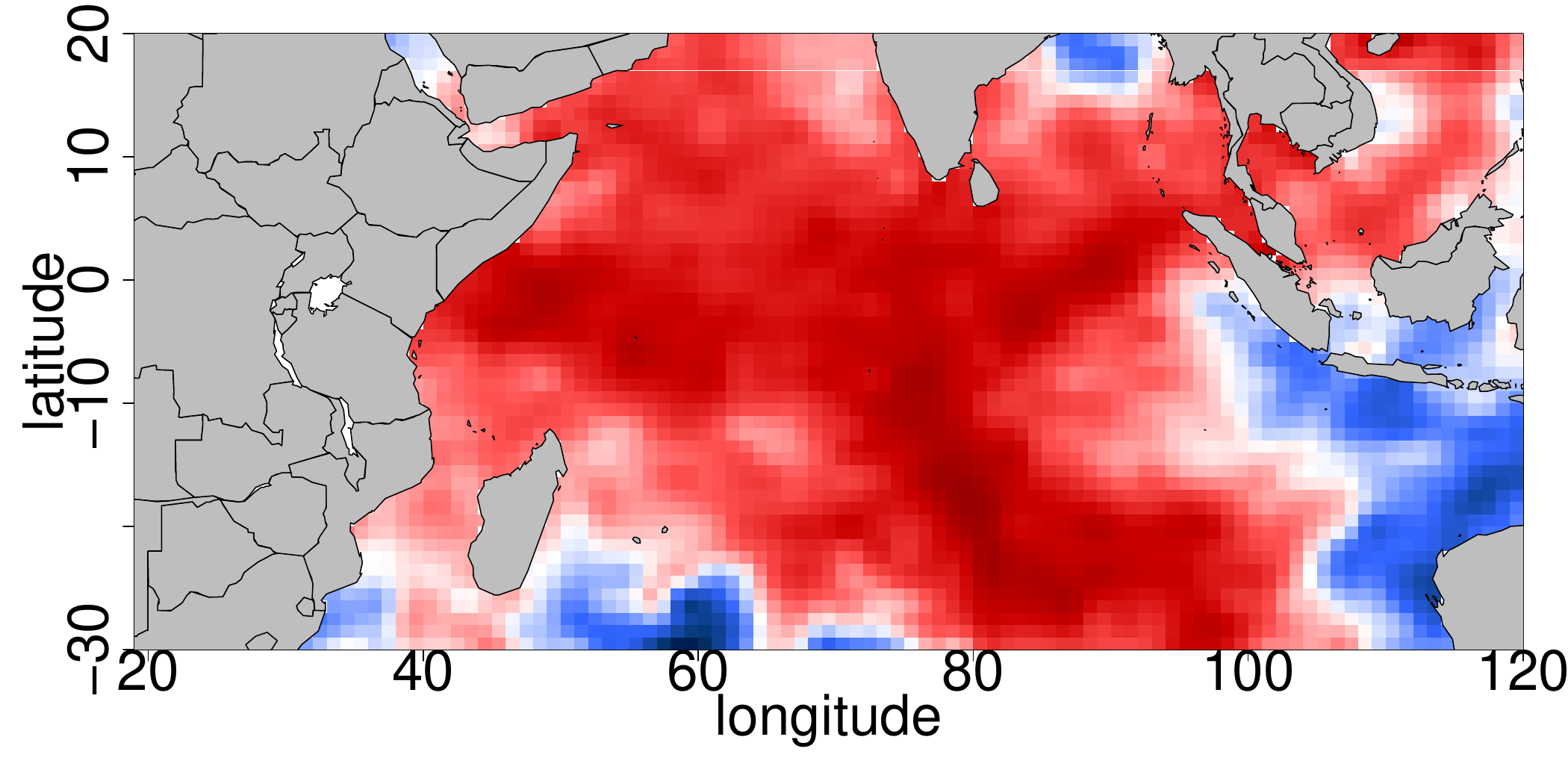} &
\includegraphics[width=0.48\textwidth]{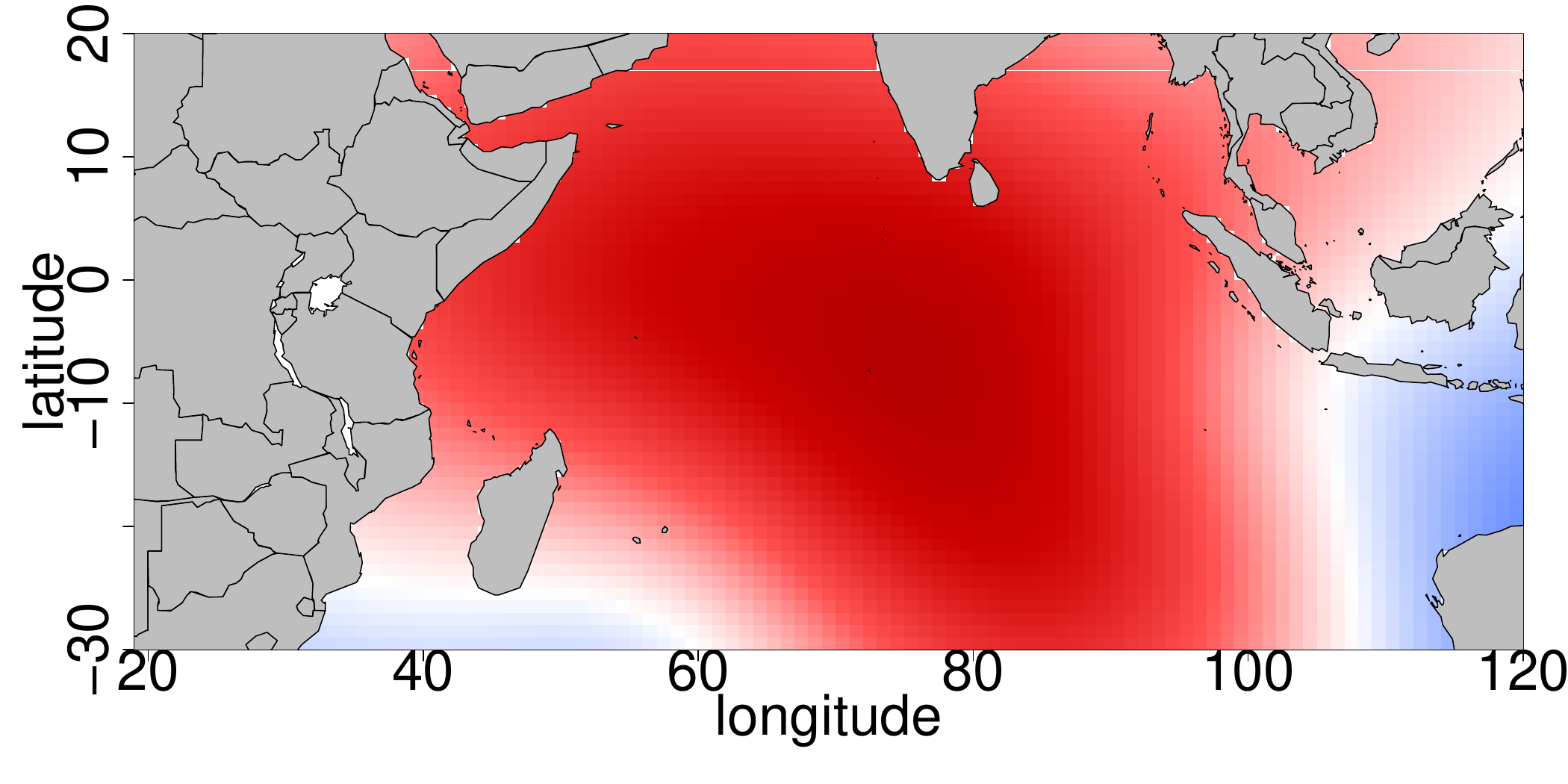} \\[0.2em]
\multicolumn{2}{c}{\includegraphics[width=0.72\textwidth]{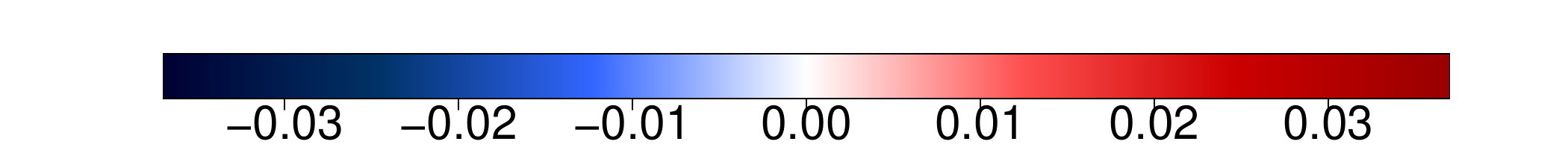}} \\
\end{tabular}\\[0.8em]
$\hat{v}_1(\cdot)$ from MCA\hspace{0.28\textwidth}$\hat{v}_1(\cdot)$ from SpatMCA\\[0.3em]
\begin{tabular}{@{}c@{\hspace{0.02\textwidth}}c@{}}
\includegraphics[width=0.48\textwidth]{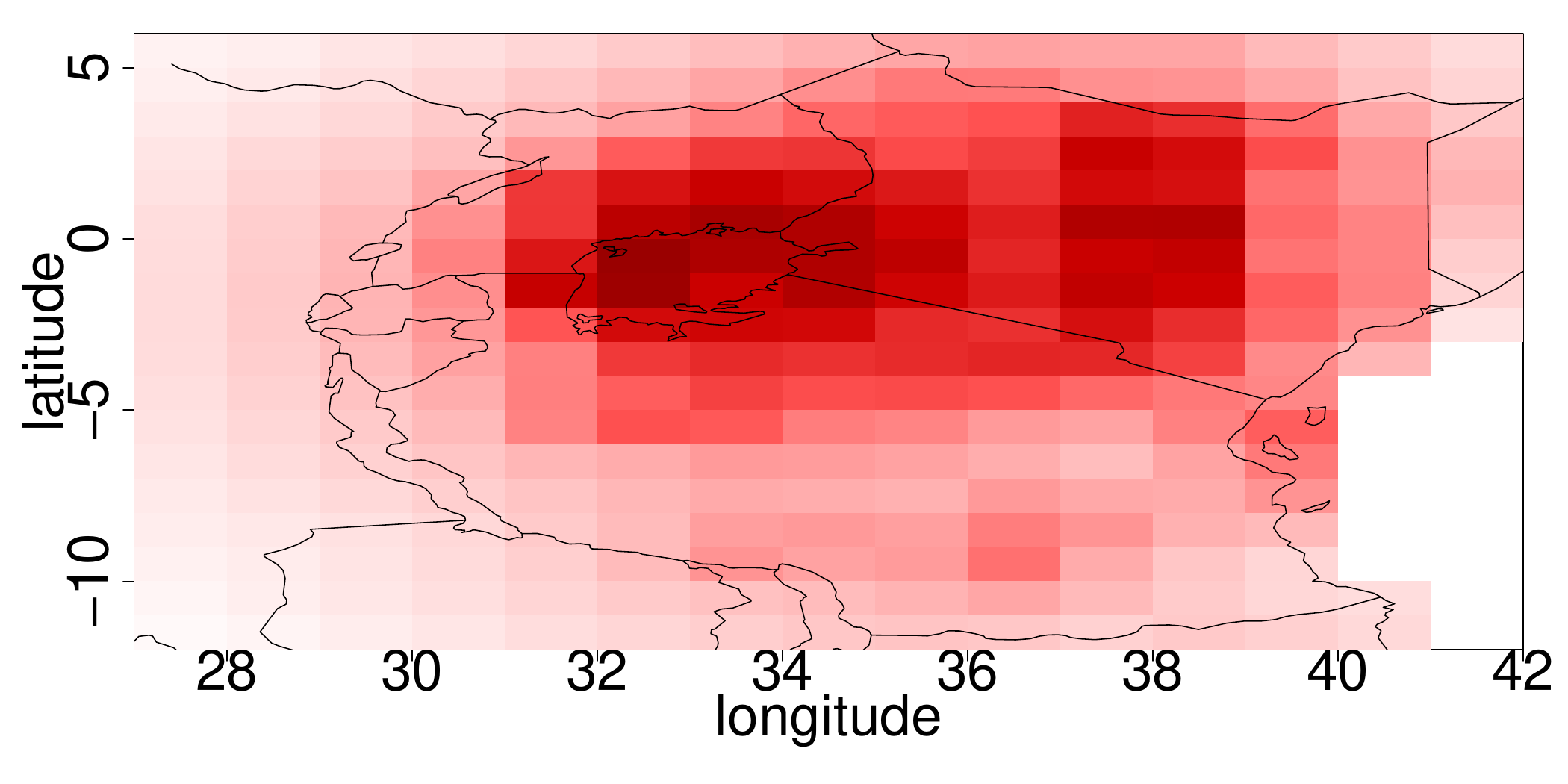} &
\includegraphics[width=0.48\textwidth]{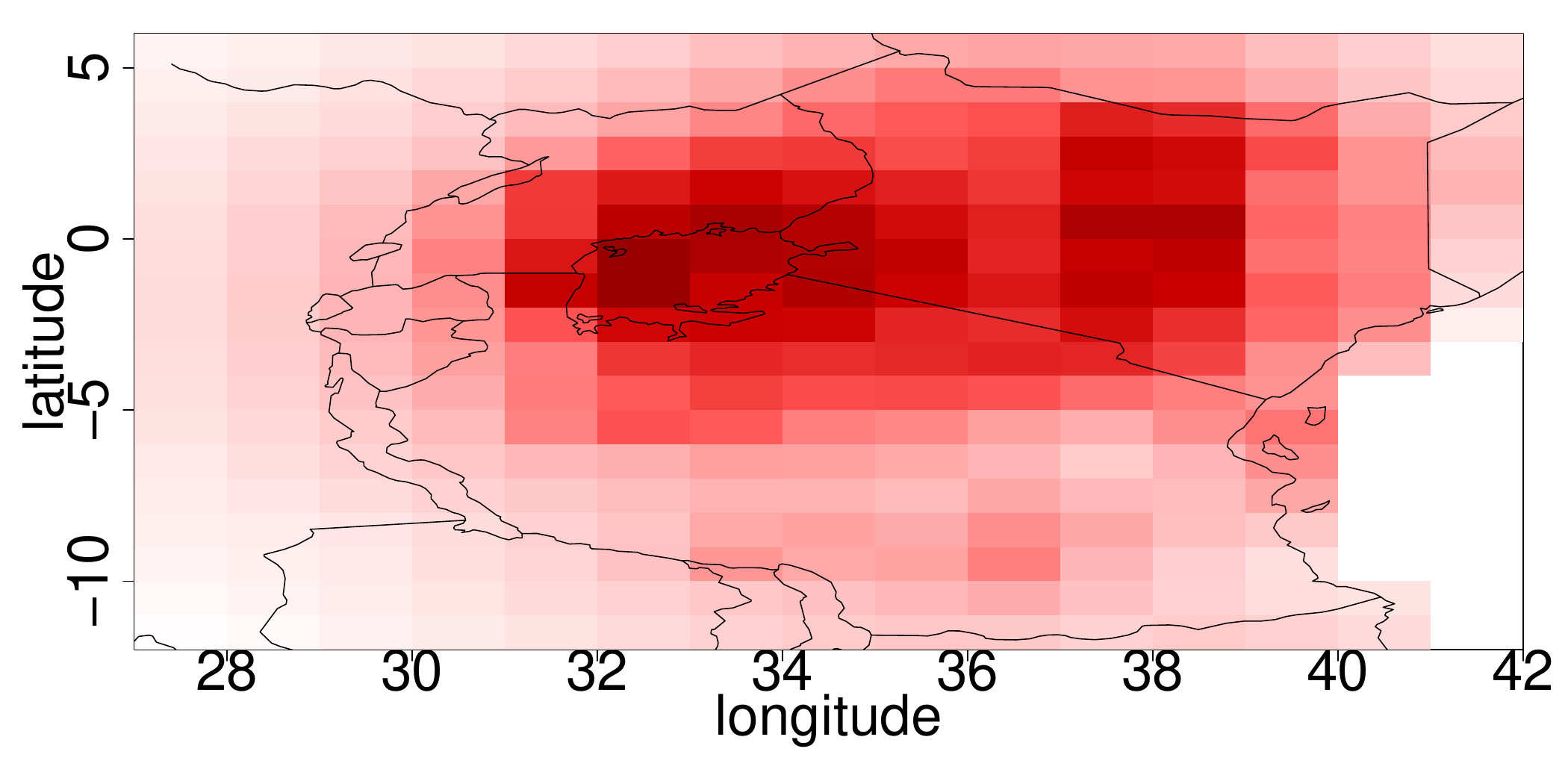} \\[0.2em]
\multicolumn{2}{c}{\includegraphics[width=0.72\textwidth]{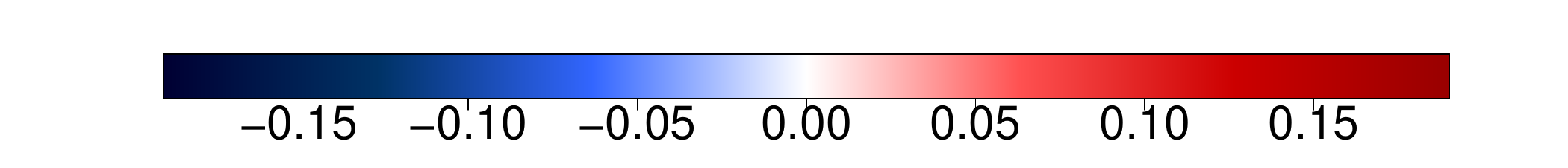}} \\
\end{tabular}
\caption{{Estimated first coupled} patterns of SST ($\hat{u}_1(\cdot)$) and precipitation ($\hat{v}_1(\cdot)$) from MCA and SpatMCA, where the gray regions correspond to the land with no SST data.} 
\label{fig:mca_real}
\end{figure}
\begin{figure}\centering
	
	\begin{tabular}{cc}
		MCA& SpatMCA\\
		\includegraphics[height=6cm, width=7 cm]{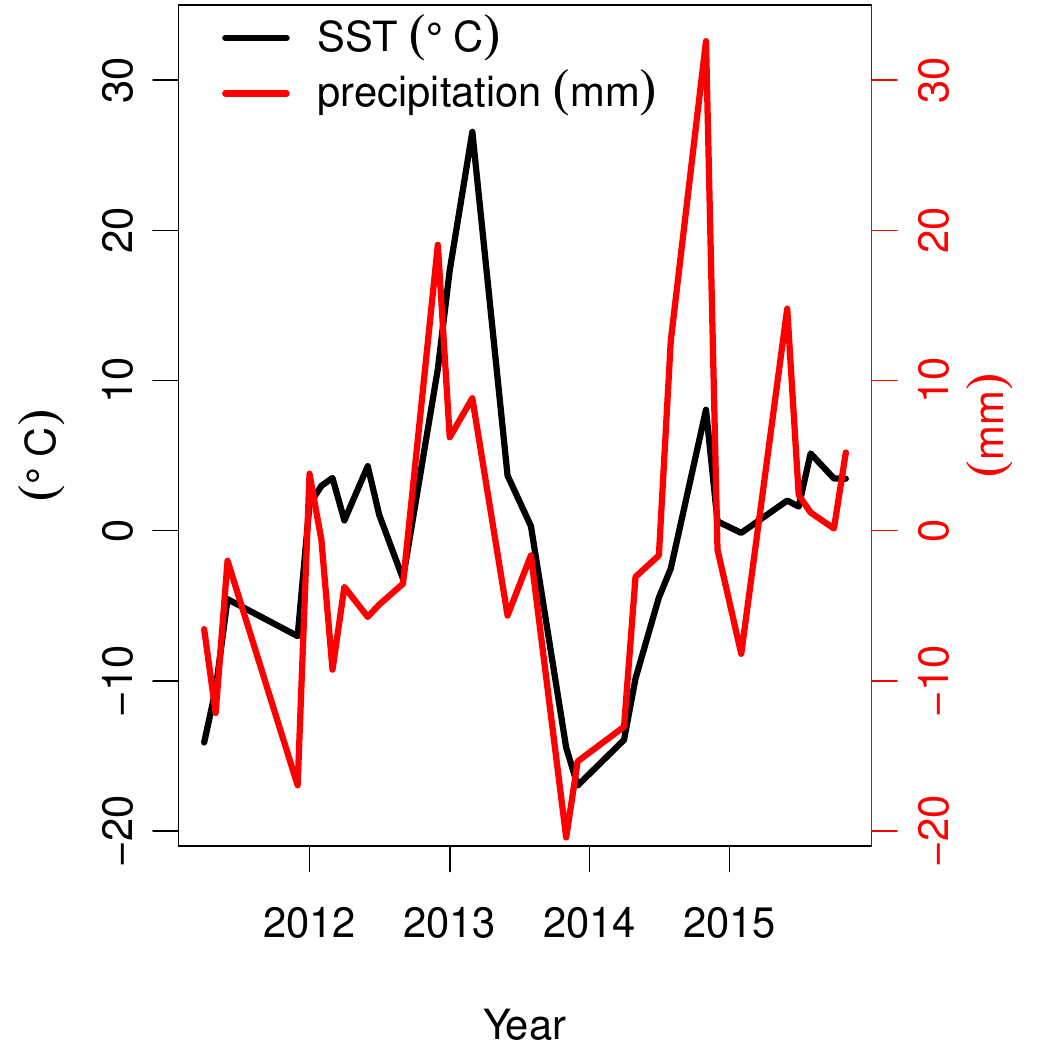}&
		\includegraphics[height=6cm, width=7 cm]{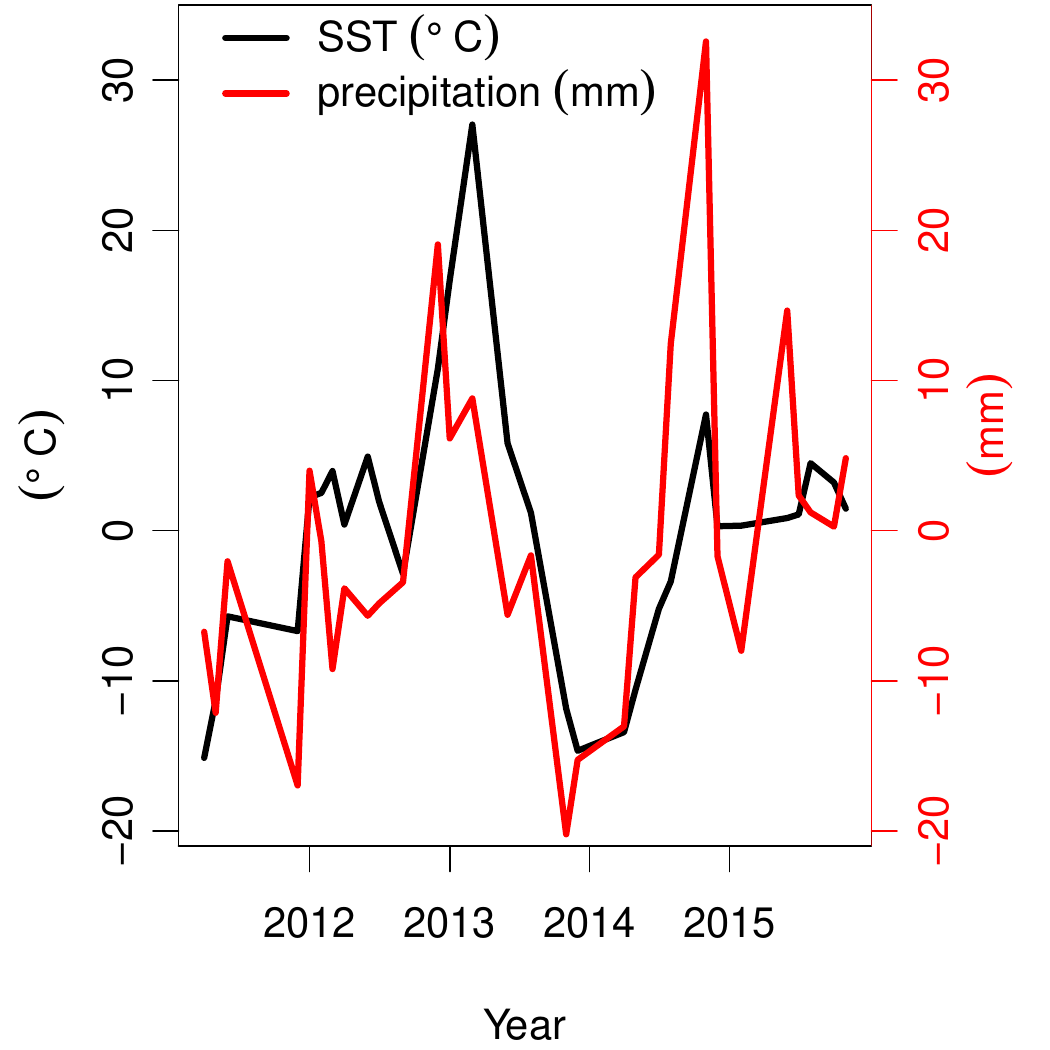}
	\end{tabular}
	\caption{{Time} series of the first  maximum covariance {variables of} SST and precipitation {obtained from MCA and SpatMCA}.}
	\label{fig:mca_ts}
\end{figure}

\begin{figure}
	\centering
	\includegraphics[height=5cm, width=6 cm]{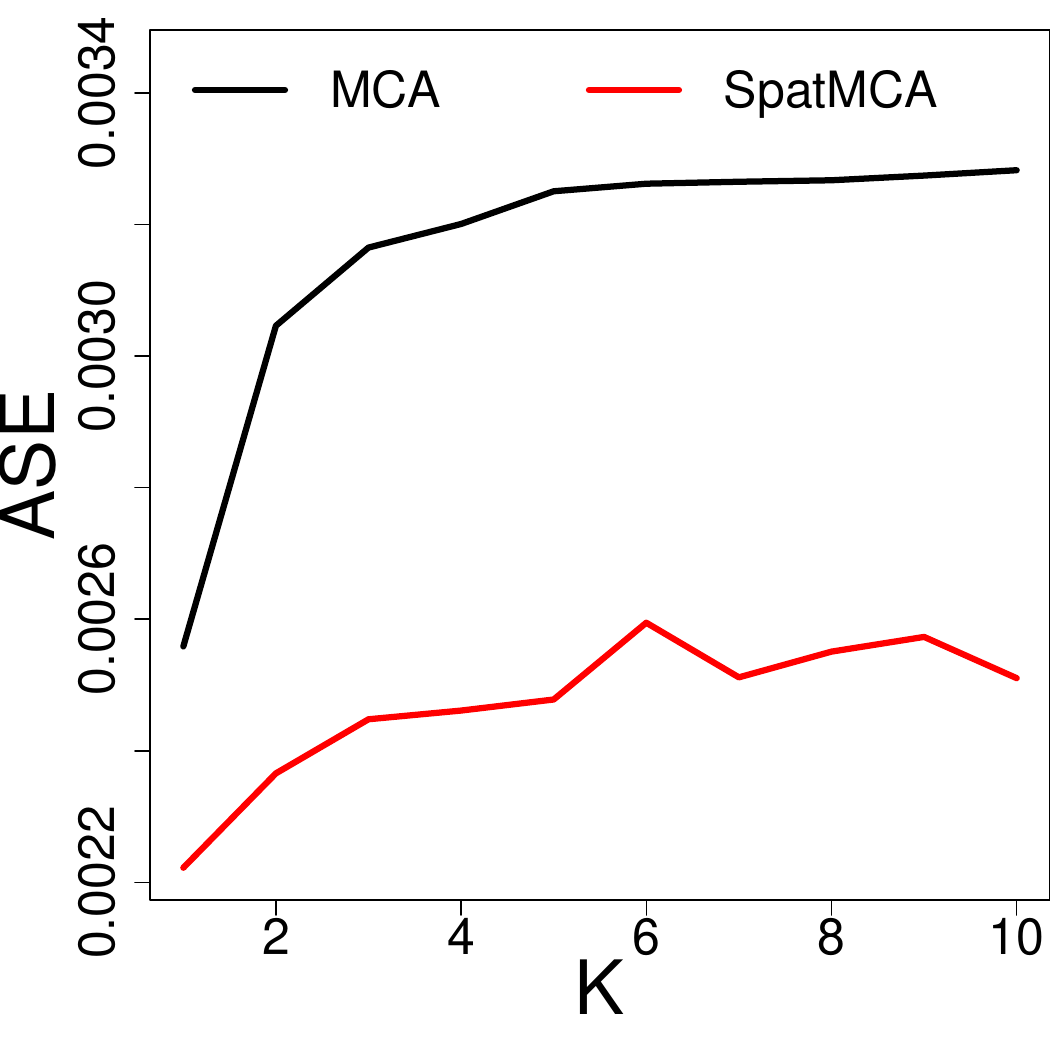}\\
	\caption{Average squared errors of cross-covariance matrix estimates with respect to $K$ for SpatMCA and MCA.}
	\label{fig:mse}
\end{figure}


\chapter{Summary and Further {Developments}}\label{ch:ch6}


\section{Summary}
This thesis was inspired by analyzing the variabilities of climate variables through spatial patterns. {We discovered} that {using} {a} {combination} of smoothness and sparseness penalties can lead to desired patterns that are not only smooth but also localized. The proposed SpatPCA and SpatMCA estimates can effectively enhance the interpretability {of dominant patterns}, even when the signal-to-noise ratio is low. In particular,  SpatPCA can be applied to nonstationary covariance function estimation and spatial prediction {via} fix-rank kriging. {Similarly,} SpatMCA {is} also applicable to nonstationary cross-covariance function estimation, and spatial prediction {via} aggregation cokriging \citep{furrer2011aggregation}. In addition, the {two proposed} ADMM algorithms are efficient and easy to implement. 
\section{Future {Developments}}
\subsection{Spatio-Temporal Models}
In {Section \ref{sec:ch4proposal}, we consider a purely spatial model with repeated measurements by assuming $\bm{\xi}_i\sim(\bm{0}, \bm{\Lambda})$ in \eqref{eq:ch4measurement} to be uncorrelated.} {In practice, it is of interest} to incorporate {both} spatial and temporal dependence structures simultaneously. For example, we may consider {the} following {first order autoregressive (AR) model on the latent variables $\{\bm{\xi}_t\}$:}
\begin{align*}
\bm{Y}_t &= \bm{\Phi}\bm{\xi}_t + \bm{\epsilon}_t,\\
\bm{\xi}_t &= \bm{\Delta}\bm{\xi}_{t-1} + \bm{\nu}_t,
\end{align*}
for $ t=1,\dots,n$, where $\bm{\Delta} = \mathrm{diag}(\delta_1,\dots,\delta_K)$ consists of the AR coefficients with $|\delta_k| <1$ for $k=1,\dots, K$, and $\bm{\nu}_t \sim (\bm{0}, \bm{\Lambda}-\bm{\Delta}\bm{\Lambda}\bm{\Delta}')$ is uncorrelated with $\bm{\xi}_{{t-1}}$ {and $\bm{\epsilon}_t$}.

{We may first estimate $\bm{\Phi}$ {by} the SpatPCA estimate $\hat{\bm{\Phi}}$ in \eqref{eq:ch4objective2}. {Applying} $\mathrm{var}(\bm{Y}_t) = \bm{\Phi}\bm{\Lambda}\bm{\Phi}'+\sigma^2\bm{I}$ and $\mathrm{cov}(\bm{Y}_{t-1},\bm{Y}_{t})= \bm{\Phi}\bm{\Lambda}\bm{\Delta}'\bm{\Phi}'$, we can {then} estimate $\bm{\Lambda}$, $\sigma^2$, and $\bm{\Delta}$ by minimizing}
 	\[
 	\|\bm{S}_0 - \hat{\bm{\Phi}}\bm{\Lambda}\hat{\bm{\Phi}}'-\sigma^2\bm{I}\|^2_F + {\|\bm{S}_1 - \hat{\bm{\Phi}}\bm{\Lambda}{\bm{\Delta}}'\hat{\bm{\Phi}}'\|^2_F},
 	\]
 	where $\bm{S}_0 = \frac{1}{n}\sum_{t=1}^n \bm{Y}'_t\bm{Y}_t$ and {$\bm{S}_1 = \frac{1}{n-1}\sum_{t=2}^{n} \bm{Y}'_{t-1}\bm{Y}_{t}$}. The best linear prediction of {\{$\bm{\xi}_t$\}} may be {computed} using {the} Kalman filter \citep[e.g.,][]{fixedrank}.



\subsection{Testing the Significance of Spatial Patterns}
{It is interesting to know how many patterns $\hat{\varphi}_k(\cdot)$ estimated from \eqref{eq:ch4basis} are significant. This can be done by sequentially testing $\lambda^*_k = 0$ for $k=1,\dots, K$, where $\lambda^*_1\geq\lambda^*_2\geq\cdots\geq\lambda^*_K$ are the first $K$ eigenvalues of  $C_\eta(\cdot,\cdot)$ in \eqref{eq:ch4eta}. A bootstrap test developed by \citet{benko2009common} for testing the equality of eigenvalues can be adapted here.}

\subsection{Testing the Significance of Coupled Patterns}
{Recall from} Section \ref{sec:ch5proposal} {that} the cross-covariance function between $\eta_1(\cdot)$ and  $\eta_2(\cdot)$ is written as $C_{12}(\bm{s}_1,\bm{s}_2) = \sum_{k=1}^\infty d_ku_k(\bm{s}_1)v_k(\bm{s}_2)$ with $d_1\geq d_2\geq\cdots$. It is of interest to {know how many pairs of patterns $(\hat{u}_k(\cdot), \hat{v}_k(\cdot))$  estimated from \eqref{eq:basis_u} and \eqref{eq:basis_v} are significant.} Some methods for testing the significance of canonical correlation coefficients{, using permutation tests \citep{witten2009penalized,witten2009application} and an asymptotic test \citep{yang2015independence}, can be adapted here.}

{For example,  the permutation test of \citet{witten2009penalized} {can be applied for} testing $H_0: d_1=0$ based on the test statistic $\hat{d}_1$ of \eqref{eq:estimate_d} with the corresponding tuning parameters selected by CV of  \eqref{eq:cv_max1} and \eqref{eq:cv_max2}. Alternatively, we can {apply} another approach of {the} permutation test proposed by \citet{witten2009application}, which is more computationally efficient and does not require CV to select the tuning parameters.}

\subsection{Asymptotic Properties}
We {plan} to develop {some} asymptotic properties of the SpatPCA {estimates} $\{\hat{{\varphi}}_k(\cdot)\}$ of \eqref{eq:ch4objective} and SpatMCA {estimates} $\{(\hat{u}_k(\cdot),\hat{v}_k(\cdot))\}$ {of} \eqref{eq:obj1} as follows:
\begin{enumerate} 
 \item {Consistency}: (i) $\hat{{\varphi}}_k(\cdot) {\to} \varphi_k(\cdot)$ {in probability} as $n,p\to \infty$; (ii) {$(\hat{u}_k(\cdot),\hat{v}_k(\cdot)) {\to} (u_k(\cdot),v_k(\cdot))$ in probability} as $n,p_1, {p_2} \to \infty,$ for $k=1,\dots, K$.
 \item Asymptotic distributions {and convergence rates} of $\hat{{\varphi}}_k(\cdot)$, $\hat{u}_k(\cdot)$ and $\hat{v}_k(\cdot)$  as $p,n \to \infty${, for $k=1,\dots, K$.}
   \end{enumerate}
   {Some asymptotic results developed by \citet{hall2006properties}  for functional PCA may be adapted here.}
  \subsection{Dimension Reduction}
  {Our SpatPCA approach provides a {low-dimensional} representation {of the underlying process}. It {can be} applied {not only} for image reconstruction based on noisy and missing data {via} spatial prediction {of} \eqref{eq:ch4loss_sim}, but also for image classification using the first few (estimated) PCs.}


\appendix


\chapter{Proof of Proposition 1}\label{app:appendix_A}

\begin{proof}
First, we prove \eqref{eq:ch4Lambda.hat}. From Corollary 1 of \citet{regularized_covariance}, the minimizer of $h(\bm{\Lambda},\sigma^2)$ given $\sigma^2$ is
\begin{equation}
\hat{\bm{\Lambda}}(\sigma^2)=\hat{\bm{V}} \mathrm{diag}\big((\hat{d}_1-\sigma^2-\gamma)_+,\dots,(\hat{d}_{K}-\sigma^2-\gamma)_+\big)\hat{\bm{V}}'.
\label{eq:Lambda.hat2}
\end{equation}
Hence \eqref{eq:ch4Lambda.hat} is obtained.

Next, we prove \eqref{eq:ch4sigma.hat}. Rewrite the objective function of \eqref{eq:ch4covariance.estimate} as:
\begin{align}
h(\bm{\Lambda}, \sigma^2)
=&~ \frac{1}{2}\|\hat{\bm{\Phi}}\hat{\bm{\Phi}}'\bm{S}\hat{\bm{\Phi}}\hat{\bm{\Phi}}' - \hat{\bm{\Phi}}\bm{\Lambda}\hat{\bm{\Phi}}'
-\sigma^2\bm{I}_p\|^2_F+\frac{1}{2}\|\bm{S} -\hat{\bm{\Phi}}\hat{\bm{\Phi}}'\bm{S}\hat{\bm{\Phi}}\hat{\bm{\Phi}}'\|^2_F \nonumber\\
&~ +\sigma^2\mathrm{tr}( \hat{\bm{\Phi}}\hat{\bm{\Phi}}'\bm{S}\hat{\bm{\Phi}}\hat{\bm{\Phi}}'-\bm{S})+\gamma\|\hat{\bm{\Phi}}
\bm{\Lambda}\hat{\bm{\Phi}}'\|_*.
\label{eq:appendix_obj}
\end{align}

\noindent From \eqref{eq:Lambda.hat2} and \eqref{eq:appendix_obj}, we have
\begin{align*}
h(\hat{\bm{\Lambda}}(\sigma^2), \sigma^2)
=&~ \frac{1}{2}\|\hat{\bm{\Phi}}\hat{\bm{\Phi}}'\bm{S}\hat{\bm{\Phi}}\hat{\bm{\Phi}}' - \hat{\bm{\Phi}}\hat{\bm{\Lambda}}(\sigma^2)\hat{\bm{\Phi}}'
-\sigma^2\bm{I}_p\|^2_F+\gamma\|\hat{\bm{\Phi}}\hat{\bm{\Lambda}}(\sigma^2)\hat{\bm{\Phi}}'\|_*\\
&~+\frac{1}{2}\|\bm{S} -\hat{\bm{\Phi}}\hat{\bm{\Phi}}'\bm{S}\hat{\bm{\Phi}}\hat{\bm{\Phi}}'\|^2_F+
\sigma^2\mathrm{tr}( \hat{\bm{\Phi}}\hat{\bm{\Phi}}'\bm{S}\hat{\bm{\Phi}}\hat{\bm{\Phi}}'-\bm{S})\\
=&~ \frac{1}{2}\sum_{k=1}^{K}\big\{\hat{d}_k^2-(\hat{d}_k-\sigma^2-\gamma)_+^2\big\} +\frac{p}{2}\sigma^4-\sigma^2\mathrm{tr}(\bm{S})
+\frac{1}{2}\|\bm{S} -\hat{\bm{\Phi}}\hat{\bm{\Phi}}'\bm{S}\hat{\bm{\Phi}}\hat{\bm{\Phi}}'\|^2_F.
\end{align*}
\noindent Minimizing $h(\hat{\bm{\Lambda}}(\sigma^2), \sigma^2)$, we obtain
\begin{equation}
\hat{\sigma}^2
=\mathop{\arg\min}_{\sigma^2 \geq 0}\bigg\{p\sigma^4-2\sigma^2\mathrm{tr}(\bm{S})-\sum_{k=1}^{K}
(\hat{d}_k-\sigma^2-\gamma)_+^2\bigg\}.
\label{eq:sigma.hat2}
\end{equation}

\noindent Clearly, if $\hat{d}_1\leq \gamma$, then $\hat{\sigma}^2=\displaystyle\frac{1}{p}\mathrm{tr}(\bm{S})$.
We remain to consider $\hat{d}_1>\gamma$. Let
\[
\hat{L}^*=\max\big\{L:\hat{d}_L-\gamma>\hat{\sigma}^2,\,L=1,\dots,K\big\}.
\]
From \eqref{eq:sigma.hat2}, $\hat{\sigma}^2=\displaystyle\frac{1}{p-\hat{L}^*}
\bigg(\mathrm{tr}(\bm{S})-\sum_{k=1}^{\hat{L}^*} (\hat{d}_k-\gamma)\bigg)$. It suffices to show that $\hat{L}^*=\hat{L}$.
Since $\hat{d}_{\hat{L}^*}-\gamma >\displaystyle\frac{1}{p-\hat{L}^*} \bigg(\mathrm{tr}(\bm{S})-\sum_{k=1}^{\hat{L}^*} (\hat{d}_k-\gamma)\bigg)$,
by the definition of $\hat{L}$, we have $\hat{L}\geq \hat{L}^*$, implying $\hat{d}_{\hat{L}}\geq\hat{d}_{\hat{L}^*}$.
Suppose that $\hat{L}>\hat{L}^*$. It immediately follows from the definition of
$\hat{L}^*$ that $\hat{d}_{\hat{L}}-\gamma\leq\hat{\sigma}^2<\hat{d}_{\hat{L}^*}-\gamma$, which contradicts to $\hat{d}_{\hat{L}}\geq
\hat{d}_{\hat{L}^*}$. Therefore, $\hat{L}=\hat{L}^*$. This completes the proof.
\end{proof}


\bibliographystyle{plainnat}
\bibliography{references}

@article{Shen,
author = {Shen, Haipeng and Huang, Jianhua Z.},
  title = {Sparse Principal Component Analysis via Regularized Low Rank Matrix Approximation},
  journal = {Journal of Multivariate Analysis},
  year = {2008},
  volume = {99},
  pages = {1015--1034}
}

@article{admm,
author = {Boyd, Stephen and Parikh, Neal and Chu, Eric and Peleato, Borja and Eckstein, Jonathan},
  title = {Distributed Optimization and Statistical Learning via the Alternating Direction Method of Multipliers},
  journal = {Foundations and Trends in Machine Learning},
  year = {2011},
  volume = {3},
  pages = {1--122}
}

@article{admm_2,
author = {Gabay, Daniel and Mercier, Bertrand},
  title = {A Dual Algorithm for the Solution of Nonlinear Variational Problems via Finite Element Approximation},
  journal = {Computers and Mathematics with Applications},
  year = {1976},
  volume = {2},
  pages = {17--40}
}

@article{akaike1974new,
author = {Akaike, Hirotugu},
  title = {A New Look at the Statistical Model Identification},
  journal = {IEEE Transactions on Automatic Control},
  year = {1974},
  volume = {19},
  pages = {716--723}
}

@article{azaiez2015karhunen,
Author = {Aza{\"\i}ez, M and Belgacem, F Ben},
	Journal = {Computer Methods in Applied Mechanics and Engineering},
	Pages = {57--72},
	Title = {Karhunen--Lo{\`e}ve's truncation error for bivariate functions},
	Volume = {290},
	Year = {2015}
}

@article{bayes,
author = {Kang, Emily L. and Cressie, Noel},
  title = {Bayesian Inference for the Spatial Random Effects Model},
  journal = {Journal of the American Statistical Association},
  year = {2011},
  volume = {106},
  pages = {972--983}
}

@article{benko2009common,
author = {Benko, Michal and H{\"a}rdle, Wolfgang and Kneip, Alois},
  title = {Common Functional Principal Components},
  journal = {The Annals of Statistics},
  year = {2009},
  volume = {37},
  pages = {1--34}
}

@book{bfda,
author = {Ramsay, J. O. and Silverman, B. W.},
  title = {Functional Data Analysis},
  publisher = {Springer},
  year = {2005},
  edition = {2}
}

@article{bretherton1992intercomparison,
Author = {Bretherton, Christopher S and Smith, Catherine and Wallace, John M},
	Journal = {Journal of climate},
	Number = {6},
	Pages = {541--560},
	Title = {An intercomparison of methods for finding coupled patterns in climate data},
	Volume = {5},
	Year = {1992}
}

@article{chu,
Author = {Chu, Jung-Eun and Ha, Kyung-Ja and Lee, June-Yi and Wang, Bin and Kim, Baek-Min and Chung, Christine E.},
	Journal = {Climate Dynamics},
	Number = {1-2},
	Pages = {535--551},
	Title = {Future Change of the {Indian Ocean} Basin-Wide and Dipole Modes in the {CMIP5}},
	Volume = {43},
	Year = {2014}
}

@article{coord,
author = {Friedman, Jerome and Hastie, Trevor and Tibshirani, Robert},
  title = {Regularization Paths for Generalized Linear Models via Coordinate Descent},
  journal = {Journal of Statistical Software},
  year = {2010},
  volume = {33},
  pages = {1--22}
}

@article{craven1978smoothing,
author = {Craven, Peter and Wahba, Grace},
  title = {Smoothing Noisy Data with Spline Functions},
  journal = {Numerische Mathematik},
  year = {1978},
  volume = {31},
  pages = {377--403}
}

@article{efron2004least,
author = {Efron, Bradley and Hastie, Trevor and Johnstone, Iain and Tibshirani, Robert},
  title = {Least Angle Regression},
  journal = {The Annals of Statistics},
  year = {2004},
  volume = {32},
  pages = {407--499}
}

@article{eofreview,
author = {Hannachi, A. and Jolliffe, I. T. and Stephenson, D. B.},
  title = {Empirical Orthogonal Functions and Related Techniques in Atmospheric Science: A Review},
  journal = {International Journal of Climatology},
  year = {2007},
  volume = {27},
  pages = {1119--1152}
}

@article{fan2001variable,
Author = {Fan, Jianqing and Li, Runze},
	Journal = {Journal of the American statistical Association},
	Number = {456},
	Pages = {1348--1360},
	Publisher = {Taylor \&amp; Francis},
	Title = {Variable selection via nonconcave penalized likelihood and its oracle properties},
	Volume = {96},
	Year = {2001}
}

@article{fda,
author = {Huang, Jianhua Z. and Shen, Haipeng and Buja, Andreas},
  title = {Functional Principal Components Analysis via Penalized Rank One Approximation},
  journal = {Electronic Journal of Statistics},
  year = {2008},
  volume = {2},
  pages = {678--695}
}

@article{fixedrank,
author = {Cressie, Noel and Johannesson, Gardar},
  title = {Fixed Rank Kriging for Very Large Spatial Data Sets},
  journal = {Journal of the Royal Statistical Society: Series B},
  year = {2008},
  volume = {70},
  pages = {209--226}
}

@article{fpca_long,
author = {Yao, Fang and M{\"u}ller, Hans-Georg and Wang, Jane-Ling},
  title = {Functional Data Analysis for Sparse Longitudinal Data},
  journal = {Journal of the American Statistical Association},
  year = {2005},
  volume = {100},
  pages = {577--590}
}

@article{furrer2011aggregation,
author = {Furrer, Reinhard and Genton, Marc G.},
  title = {Aggregation-Cokriging for Highly Multivariate Spatial Data},
  journal = {Biometrika},
  year = {2011},
  volume = {98},
  number = {3},
  pages = {615--631}
}

@article{fused,
author = {Guo, Jian and James, Gareth and Levina, Elizaveta and Michailidis, George and Zhu, Ji},
  title = {Principal Component Analysis with Sparse Fused Loadings},
  journal = {Journal of Computational and Graphical Statistics},
  year = {2010},
  volume = {19},
  pages = {930--946}
}

@article{hall2006properties,
author = {Hall, Peter and M{\"u}ller, Hans-Georg and Wang, Jane-Ling},
  title = {Properties of Principal Component Methods for Functional and Longitudinal Data Analysis},
  journal = {The Annals of Statistics},
  year = {2006},
  volume = {34},
  pages = {1493--1517}
}

@article{hong,
author = {Hong, Zhaoping and Lian, Heng},
  title = {Sparse-Smooth Regularized Singular Value Decomposition},
  journal = {Journal of Multivariate Analysis},
  year = {2013},
  volume = {117},
  pages = {163--174}
}

@article{hotelling1933analysis,
Author = {Hotelling, Harold},
	Journal = {Journal of educational psychology},
	Number = {6},
	Pages = {417},
	Publisher = {Warwick \&amp; York},
	Title = {Analysis of a complex of statistical variables into principal components.},
	Volume = {24},
	Year = {1933}
}

@article{jolliffe1987rotation,
author = {Jolliffe, I. T.},
  title = {Rotation of Principal Components: Some Comments},
  journal = {Journal of Climatology},
  year = {1987},
  volume = {7},
  pages = {507--510}
}

@book{jolliffe2002principal,
author = {Jolliffe, I. T.},
  title = {Principal Component Analysis},
  publisher = {Springer},
  year = {2002},
  edition = {2}
}

@article{karhunen,
Author = {Kari Karhunen},
	Journal = {Annales Academi{\ae} Scientiarum Fennic{\ae} Series A},
	Pages = {1-79},
	Title = {{\"U}ber lineare methoden in der Wahrscheinlichkeitsrechnung},
	Volume = {37},
	Year = {1947}
}

@article{lasso,
author = {Tibshirani, Robert},
  title = {Regression Shrinkage and Selection via the Lasso},
  journal = {Journal of the Royal Statistical Society: Series B},
  year = {1996},
  volume = {58},
  pages = {267--288}
}

@article{jolliffe2002eofs,
author = {Jolliffe, I. T. and Uddin, M. and Vines, S. K.},
  title = {Simplified {EOFs}--Three Alternatives to Rotation},
  journal = {Climate Research},
  year = {2002},
  volume = {20},
  pages = {271--279}
}

@article{lee2011sparse,
Author = {Lee, Woojoo and Lee, Donghwan and Lee, Youngjo and Pawitan, Yudi},
	Journal = {Statistical Applications in Genetics and Molecular Biology},
	Number = {1},
	Title = {Sparse canonical covariance analysis for high-throughput data},
	Volume = {10},
	Year = {2011}
}

@book{loeve,
Author = {Michel Lo{\`e}ve},
	Journal = {1978},
	Publisher = {Springer-Verlag, New York},
	Title = {Probability theory},
	Year = {1978}
}

@book{vonstorch1999analysis,
author = {Von Storch, Hans and Navarra, Antonio},
  title = {Analysis of Climate Variability},
  publisher = {Springer-Verlag},
  address = {Berlin Heidelberg},
  year = {1999}
}

@book{siedler2013ocean,
author = {Siedler, Gerold and Griffies, Stephen M. and Gould, John and Church, John A.},
  title = {Ocean Circulation and Climate: A 21st Century Perspective},
  publisher = {Academic Press},
  year = {2013},
  volume = {103}
}

@article{morioka2012subtropical,
Author = {Morioka, Yushi and Tozuka, Tomoki and Masson, Sebastien and Terray, Pascal and Luo, Jing-Jia and Yamagata, Toshio},
	Journal = {Journal of Climate},
	Number = {12},
	Pages = {4029--4047},
	Title = {Subtropical dipole modes simulated in a coupled general circulation model},
	Volume = {25},
	Year = {2012}
}

@book{nonparametric,
author = {Green, P. J. and Silverman, B. W.},
  title = {Nonparametric Regression and Generalized Linear Models: A Roughness Penalty Approach},
  publisher = {Chapman and Hall},
  year = {1994}
}

@article{omondi2013influence,
Author = {Omondi, P and Awange, JL and Ogallo, LA and Ininda, J and Forootan, E},
	Journal = {Advances in Water Resources},
	Pages = {161--180},
	Publisher = {Elsevier},
	Title = {The influence of low frequency sea surface temperature modes on delineated decadal rainfall zones in Eastern Africa region},
	Volume = {54},
	Year = {2013}
}

@article{pearson1901liii,
Author = {Pearson, Karl},
	Journal = {The London, Edinburgh, and Dublin Philosophical Magazine and Journal of Science},
	Number = {11},
	Pages = {559--572},
	Publisher = {Taylor \&amp; Francis},
	Title = {LIII. On lines and planes of closest fit to systems of points in space},
	Volume = {2},
	Year = {1901}
}

@article{reason2002sensitivity,
Author = {Reason, CJC},
	Journal = {International Journal of Climatology},
	Number = {4},
	Pages = {377--393},
	Publisher = {Wiley Online Library},
	Title = {Sensitivity of the southern African circulation to dipole sea-surface temperature patterns in the south Indian Ocean},
	Volume = {22},
	Year = {2002}
}

@article{regularized_covariance,
author = {Tzeng, ShengLi and Huang, Hsin-Cheng},
  title = {Non-stationary Multivariate Spatial Covariance Estimation via Low-rank Regularization},
  journal = {Statistica Sinica},
  year = {2015},
  volume = {26},
  pages = {151--172}
}

@article{richman1987rotation,
author = {Richman, Michael B.},
  title = {Rotation of Principal Components: A Reply},
  journal = {Journal of Climatology},
  year = {1987},
  volume = {7},
  pages = {511--520}
}

@article{rotate,
author = {Richman, Michael B.},
  title = {Rotation of Principal Components},
  journal = {Journal of Climatology},
  year = {1986},
  volume = {6},
  pages = {293--335}
}

@article{salim2005modelling,
Author = {Salim, Agus and Pawitan, Yudi and Bond, K},
	Journal = {Journal of the Royal Statistical Society, Series C},
	Number = {3},
	Pages = {555-573},
	Title = {Modelling association between two irregularly observed spatiotemporal processes by using maximum covariance analysis},
	Volume = {54},
	Year = {2005}
}

@article{salim2007model,
Author = {Salim, Agus and Pawitan, Yudi},
	Journal = {Journal of agricultural, biological, and environmental statistics},
	Number = {1},
	Pages = {1--24},
	Title = {Model-based maximum covariance analysis for irregularly observed climatological data},
	Volume = {12},
	Year = {2007}
}

@article{schwarz1978estimating,
author = {Schwarz, Gideon},
  title = {Estimating the Dimension of a Model},
  journal = {The Annals of Statistics},
  year = {1978},
  volume = {6},
  pages = {461--464}
}

@article{spatial_pca,
author = {Dem{\v{s}}ar, Ur{\v{s}}ka and Harris, Paul and Brunsdon, Chris and Fotheringham, A. Stewart and McLoone, Sean},
  title = {Principal Component Analysis on Spatial Data: An Overview},
  journal = {Annals of the Association of American Geographers},
  year = {2013},
  volume = {103},
  pages = {106--128}
}

@article{spatpca,
author = {Wang, Wen-Ting and Huang, Hsin-Cheng},
  title = {Regularized Principal Component Analysis for Spatial Data},
  journal = {Journal of Computational and Graphical Statistics},
  year = {2017},
  volume = {26},
  pages = {14--25}
}

@article{spca,
author = {Zou, Hui and Hastie, Trevor and Tibshirani, Robert},
  title = {Sparse Principal Component Analysis},
  journal = {Journal of Computational and Graphical Statistics},
  year = {2006},
  volume = {15},
  pages = {265--286}
}

@article{spca2,
author = {d'Aspremont, Alexandre and Bach, Francis and El Ghaoui, Laurent},
  title = {Optimal Solutions for Sparse Principal Component Analysis},
  journal = {Journal of Machine Learning Research},
  year = {2008},
  volume = {9},
  pages = {1269--1294}
}

@article{spca3,
author = {Lu, Zhaosong and Zhang, Yong},
  title = {An Augmented Lagrangian Approach for Sparse Principal Component Analysis},
  journal = {Mathematical Programming Computation},
  year = {2012},
  volume = {4},
  pages = {151--188}
}

@article{sst,
author = {Deser, Clara and Alexander, Michael A. and Xie, Shang-Ping and Phillips, Adam S.},
  title = {Sea Surface Temperature Variability: Patterns and Mechanisms},
  journal = {Annual Review of Marine Science},
  year = {2009},
  volume = {2},
  pages = {115--143}
}

@article{stein1981estimation,
author = {Stein, Charles M.},
  title = {Estimation of the Mean of a Multivariate Normal Distribution},
  journal = {The Annals of Statistics},
  year = {1981},
  volume = {9},
  pages = {1135--1151}
}

@incollection{svd_trace,
Author = {Lee, Namgil and Cichocki, Andrzej},
	Booktitle = {Advances in Neural Networks--ISNN 2014},
	Pages = {121--130},
	Publisher = {Springer, Switzerland},
	Title = {Big Data Matrix Singular Value Decomposition Based on Low-Rank Tensor Train Decomposition},
	Year = {2014}
}

@article{tucker1958inter,
Author = {Tucker, Ledyard R},
	Journal = {Psychometrika},
	Number = {2},
	Pages = {111--136},
	Title = {An inter-battery method of factor analysis},
	Volume = {23},
	Year = {1958}
}

@article{witten2009application,
Author = {Witten, Daniela M and Tibshirani, Robert J},
	Journal = {Statistical applications in genetics and molecular biology},
	Number = {1},
	Pages = {1-27},
	Title = {Extensions of Sparse Canonical Correlation Analysis with Applications to Genomic Data},
	Volume = {8},
	Year = {2009}
}

@article{witten2009penalized,
Author = {Witten, Daniela M and Tibshirani, Robert and Hastie, Trevor},
	Journal = {Biostatistics},
	Number = {3},
	Pages = {515--534},
	Title = {A penalized matrix decomposition, with applications to sparse principal components and canonical correlation analysis},
	Volume = {10},
	Year = {2009}
}

@article{yang2015independence,
author = {Yang, Yanrong and Pan, Guangming},
  title = {Independence Test for High Dimensional Data Based on Regularized Canonical Correlation Coefficients},
  journal = {The Annals of Statistics},
  year = {2015},
  volume = {43},
  number = {2},
  pages = {467--500}
}

@article{zou2007degrees,
author = {Zou, Hui and Hastie, Trevor and Tibshirani, Robert},
  title = {On the ``Degrees of Freedom'' of the Lasso},
  journal = {The Annals of Statistics},
  year = {2007},
  volume = {35},
  pages = {2173--2192}
}

\end{document}